\keywords{Categorical Quantum Mechanics, ZX-Calculus, Completeness, Clifford+T, Universal}
\tikzstyle{every picture}=[baseline=-0.25em]
\tikzstyle{none}=[inner sep=0mm]
\tikzstyle{zxnode}=[shape=circle, minimum width=.25cm, inner sep=0.5pt, font=\footnotesize, draw=black]
\tikzstyle{gn}=[zxnode,fill=green]
\tikzstyle{rn}=[zxnode,fill=red]
\tikzstyle{H box}=[rectangle,fill=yellow,draw=black,xscale=1,yscale=1,font=\footnotesize,inner sep=1.2pt,minimum width=0.15cm,minimum height=0.15cm]
\tikzstyle{ug}=[regular polygon, regular polygon sides=3, fill=red,draw=black,inner sep = 0pt,minimum width=0.8em]
\tikzstyle{black dot}=[inner sep=0.6mm,minimum width=0pt,minimum height=0pt,fill=black,draw=black,shape=circle]
\tikzstyle{dot}=[black dot]
\tikzstyle{white dot}=[dot,fill=white]
\tikzstyle{zwcross}=[diamond, draw, fill=gray, minimum width=0em, inner sep=1.5pt]
\tikzstyle{arrow}=[decoration={markings,mark=at position 1 with
\tikzstyle{st}=[star,star points = 5, fill=white,draw=black,inner sep = 1.2pt,line width=1.2pt]
\tikzstyle{gnlabel}=[rounded corners=0.2em,fill=green!20,inner sep=0.1em,font=\scriptsize, anchor=west, xshift=-0.2em, yshift=0,opacity=1]
\tikzstyle{uglabel}=[rounded corners=0.2em,fill=red!20,inner sep=0.1em,font=\scriptsize, anchor=west, xshift=0.1em, yshift=-0.2em,opacity=1]%,text opacity=1]
\tikzstyle{none}=[inner sep=0mm]
\tikzstyle{every loop}=[]
\newcommand{
%\InputIfFileExists{#1.tikz}{}{
\input{./figures/.tikz}%} % chktex 27
}[1]{
%\InputIfFileExists{#1.tikz}{}{
\input{./figures/#1.tikz}%} % chktex 27
}
\def\fig{}
\newcommand{\callrule}[2]{\hyperlink{r:#1}{\textnormal{(#2)}}\xspace}
\newcommand{\s}{\callrule{rules}{S}}
\newcommand{\id}{\callrule{rules}{I}}
\newcommand{\cp}{\callrule{rules}{CP}}
\newcommand{\bialg}{\callrule{rules}{B}}
\newcommand{\hd}{\callrule{rules}{HD}}
\newcommand{\h}{\callrule{rules}{H}}
\newcommand{\picom}{\callrule{rules}{K}}
\newcommand{\iv}{\callrule{rules}{IV}}
\newcommand{\zo}{\callrule{rules}{ZO}}
\newcommand{\supp}{\callrule{cliff-T-rules}{SUP}}
\newcommand{\ccom}{\callrule{cliff-T-rules}{C}}
\newcommand{\bw}{\callrule{cliff-T-rules}{BW}}
\newcommand{\e}{\callrule{cliff-T-rules}{E}}
\newcommand{\add}{\callrule{rule-A}{A}}
\newcommand{\td}{\callrule{param-triangle-rules}{TD}}
\newcommand{\ta}{\callrule{param-triangle-rules}{TA}}
\newcommand{\eq}[2][~~]{
#1
\underset{\substack{#2}}{=}
#1
}
\newcommand{\equi}[2][\quad]{
#1
\underset{\substack{#2}}{\iff}
#1
}
\newcommand{\fit}[1]{\resizebox{\columnwidth}{!}{#1}}
\newcommand{\interp}[1]{\left\llbracket#1\right\rrbracket}
\newcommand{\frag}[1]{$\frac{\pi}{#1}$-frag\-ment}
\newcommand{\titlerule}[1]{\noindent
\begin{center}
\raisebox{0.5ex}{\rule{(\textwidth-\widthof{#1})/2}{0.5pt}}#1\raisebox{0.5ex}{\rule{(\textwidth-\widthof{#1})/2}{0.5pt}}
\end{center}}
\newcommand{\annoted}[3]{{\scriptstyle#1}\left\lbrace\mathrlap{\phantom{#3}}\right.\overbrace{#3}^{#2}}
\newcommand{\piq}[1]{\ensuremath{e^{i\frac{#1\pi}{4}}}}
\newcommand{\bra}[1]{\ensuremath{\left\langle #1 \right|}}
\newcommand{\ket}[1]{\ensuremath{\left|  #1 \right\rangle}}
\def\zx{\textnormal{ZX}\xspace}
\def\zxc{\textnormal{ZX$_{\frac{\pi}{2}}$}\xspace}
\def\zxct{\textnormal{ZX$_{\frac{\pi}{4}}$}\xspace}
\def\zxt{\textnormal{ZX$_{\text{T}}$}\xspace}
\def\zw{\textnormal{ZW}\xspace}
\def\zwc{\textnormal{ZW$_\mathbb{C}$}\xspace}
\def\zwh{\textnormal{ZW\texorpdfstring{\!$_\frac{1}{2}$}{(1/2)}}\xspace}
\newcommand{\half}{%
    \begin{tikzpicture}
		\node [style=st] (0) at (0,0) {};
    \end{tikzpicture}
}
\newcommand{\two}{%
    \begin{tikzpicture}
		\draw (0,0) circle (0.25) ;
    \end{tikzpicture}
}
\newcommand{\nmcrossI}[2]{%
    \scalebox{0.8}{\begin{tikzpicture}
	\begin{pgfonlayer}{nodelayer}
		\node [style=none] (0) at (0.25, 0.5) {};
		\node [style=none] (1) at (0.25, -0.5) {};
		\node [style=none] (2) at (0.5, 0.5) {};
		\node [style=none] (3) at (0.5, -0.5) {};
		\node [style=none] (4) at (0, -0.5) {};
		\node [style=none] (5) at (-0.5, -0.5) {};
		\node [style=none] (6) at (-1, -0.5) {};
		\node [style=none] (7) at (-1.5, -0.5) {};
		\node [style=none] (8) at (0, 0.5) {};
		\node [style=none] (9) at (-0.5, 0.5) {};
		\node [style=none] (10) at (-1, 0.5) {};
		\node [style=none] (11) at (-1.5, 0.5) {};
		\node [style=none] (12) at (-1.25, -0.5) {$~\cdots~$};
		\node [style=none] (13) at (-1.25, 0.5) {$~\cdots~$};
		\node [style=none] (14) at (-0.2500002, 0.5) {$~\cdots~$};
		\node [style=none] (15) at (-0.2500002, -0.5) {$~\cdots~$};
		\node [style=none] (16) at (-1.25, 0.75) {$#1$};
		\node [style=none] (17) at (-0.2500002, 0.75) {$#2$};
	\end{pgfonlayer}
	\begin{pgfonlayer}{edgelayer}
		\draw (0.center) to (1.center);
		\draw (2.center) to (3.center);
		\draw [in=90, out=-90, looseness=0.75] (8.center) to (6.center);
		\draw [in=90, out=-90, looseness=0.75] (9.center) to (7.center);
		\draw [in=90, out=-90, looseness=0.75] (11.center) to (5.center);
		\draw [in=90, out=-90, looseness=0.75] (10.center) to (4.center);
	\end{pgfonlayer}
\end{tikzpicture}
}}
\newcommand{\rz}[1]{~\begin{tikzpicture}
	\begin{pgfonlayer}{nodelayer}
		\node [style=gn] (0) at (0, 0.25) {#1};
		\node [style=none] (1) at (0, -0.25) {};
	\end{pgfonlayer}
	\begin{pgfonlayer}{edgelayer}
		\draw [style=none] (0) to (1.center);
	\end{pgfonlayer}
\end{tikzpicture}%
~}
\newcommand{\rx}[1]{~\begin{tikzpicture}
	\begin{pgfonlayer}{nodelayer}
		\node [style=rn] (0) at (0, 0.25) {#1};
		\node [style=none] (1) at (0, -0.25) {};
	\end{pgfonlayer}
	\begin{pgfonlayer}{edgelayer}
		\draw [style=none] (0) to (1.center);
	\end{pgfonlayer}
\end{tikzpicture}%
~}
\pgfplotsset{compat=1.14}
\begin{document}

\bibliographystyle{alpha}

\title[ZX-Calculus Completeness]{Completeness of the ZX-Calculus}
%\titlecomment{Merge of two papers}

\author[E.~Jeandel]{Emmanuel Jeandel}	%required
\address{Universit\'e de Lorraine, CNRS, Inria, LORIA, F 54000 Nancy, France}
\email{\{emmanuel.jeandel, simon.perdrix, renaud.vilmart\}@loria.fr}

\author[S.~Perdrix]{Simon Perdrix}	%optional
\author[R.~Vilmart]{Renaud Vilmart}	%optional

\newcommand{\ropen}[1]{[#1)} % chktex 9
\newcommand{\lopen}[1]{(#1]} % chktex 9

\begin{abstract}
  The ZX-Calculus is a graphical language for diagrammatic reasoning in quantum mechanics and quantum information theory. It comes equipped with an equational presentation. We focus here on a very important property of the language: completeness, which roughly ensures the equational theory captures all of quantum mechanics. We first improve on the known-to-be-complete presentation for the so-called Clifford fragment of the language --- a restriction that is not universal --- by adding some axioms. Thanks to a system of back-and-forth translation between the ZX-Calculus and a third-party complete graphical language, we prove that the provided axiomatisation is complete for the first approximately universal fragment of the language, namely Clifford+T.

  We then prove that the expressive power of this presentation, though aimed at achieving completeness for the aforementioned restriction, extends beyond Clifford+T, to a class of diagrams that we call linear with Clifford+T constants. We use another version of the third-party language --- and an adapted system of back-and-forth translation --- to complete the language for the ZX-Calculus as a whole, that is, with no restriction. We briefly discuss the added axioms, and finally, we provide a complete axiomatisation for an altered version of the language which involves an additional generator, making the presentation simpler.
\end{abstract}

\maketitle

\section*{Introduction}%
\label{sec:intro}

The ZX-Calculus is a powerful graphical language for quantum reasoning and quantum computing introduced by Bob Coecke and Ross Duncan~\cite{interacting}.
The language comes with a way of interpreting any ZX-diagram as a matrix --- called the \emph{standard interpretation}. Two diagrams represent the same quantum evolution when they have the same standard interpretation. The language is also equipped with a set of axioms --- transformation rules --- which are sound, i.e.~they preserve the standard interpretation. Their purpose is to explain how a diagram can be transformed into an equivalent one.

The ZX-calculus has several applications in quantum information processing~\cite{picturing-qp} (e.g.~measurement-based quantum computing~\cite{mbqc,horsman2011quantum,duncan2013mbqc}, quantum codes~\cite{verifying-color-code,duncan2014verifying,chancellor2016coherent,de2017zx}, foundations~\cite{toy-model-graph,duncan2016hopf}), and can be used through the interactive theorem prover Quantomatic~\cite{quanto,kissinger2015quantomatic}. However, the main obstacle to wider use of the ZX-calculus was the absence of a \emph{completeness} result for a \emph{universal} fragment of quantum mechanics --- a restriction of the language that can be used to represent any quantum operator. Completeness would guarantee that, in this fragment, any true property is provable using the ZX-calculus. More precisely, the language would be complete if, given any two diagrams representing the same matrix, one could transform one diagram into the other using the axioms of the language.
Completeness is crucial,  it means in particular that all the fundamental properties of quantum mechanics are captured by the graphical rules.

The ZX-Calculus, as it was initially introduced, has been proved to be incomplete in general~\cite{incompleteness}, and despite the necessary axioms that have since been identified~\cite{supplementarity,cyclo}, the language remained incomplete. However, several fragments of the language have been proved to be complete (\frag{2}~\cite{pi_2-complete}; $\pi$-fragment~\cite{pivoting}; single-qubit \frag{4}~\cite{pi_4-single-qubit}), but none of them are \emph{universal} for quantum mechanics, even approximately. In particular all quantum algorithms expressible in these fragments are efficiently simulable on a classical computer.
%Several fragments -- restrictions -- of the language have been proved to be complete (\frag{2}~\cite{pi_2-complete}; $\pi$-fragment~\cite{pivoting}; single-qubit \frag{4}~\cite{pi_4-single-qubit}), but none of them are \emph{universal} for quantum mechanics, even approximately. In particular all quantum algorithms expressible in these fragments are efficiently simulable on a classical computer.

%We tackle this problem here, by giving the first complete axiomatisation for Clifford+T quantum mechanics. We then extend the result of completeness to a larger fragment of the language, without adding new axioms. We then use this
%As a consequence, most of the attention has been paid to find a complete axiomatisation of the \frag{4} of the ZX-Calculus for the many-qubit Clifford+T quantum mechanics, the simplest approximately universal fragment of quantum mechanics, which is widely used in quantum computing.

We tackle this problem, and present here four main results:
\begin{itemize}
\item A complete axiomatisation for Clifford+T quantum mechanics
\item A complete axiomatisation for the so-called linear diagrams with constants in Clifford+T
\item A complete axiomatisation for unrestricted ZX-Calculus
\item A complete axiomatisation for a version of (unrestricted) ZX-Calculus with additional generator
\end{itemize}

\noindent
To do so, we are going to use another graphical language called the ZW-Calculus, based on the interactions of the GHZ and W states~\cite{ghz-w}. The two versions of the language we are going to exploit --- denoted \zw and \zwc --- have been proven to be complete through the use of normal forms~\cite{zw,Amar}.

\textbf{Completeness for Clifford+T.} The first completeness result will use the first version of the ZW-Calculus. However, its diagrams only represent matrices over $\mathbb{Z}$, and hence it is not approximately universal. We introduce the \zwh-calculus, a simple extension of the ZW-Calculus which remains  complete and in which any matrix over the dyadic rational numbers can be represented. % and still keep its completeness property.
We then introduce two interpretations from the ZX-Calculus  to  the \zwh-Calculus and back. Thanks to these interpretations, we derive the completeness of the \frag 4 of the ZX-Calculus from the completeness of the \zwh-Calculus. Notice that the interpretation of ZX-diagrams (which represent complex matrices)  into \zwh-diagrams (which represent dyadic rationals) requires a non trivial encoding.
Notice also that this approach provides a completion procedure. Roughly speaking each axiom of the \zwh-calculus generates an equation in the ZX-calculus: if this equation is not already provable using the existing axioms of the ZX-calculus one can treat this equality as an new axiom. A great part of the work has been to reduce all these equalities to only two additional axioms for the language.

\textbf{Completeness for linear diagrams with constants in Clifford+T.} We extend the result to a restriction of the ZX-Calculus that contains Clifford+T. In some axioms and derivations, it is customary to consider some parameters as variables. Such a derivation can then be seen as a uniform proof, i.e.~a proof that is valid for any valuation of the variables in the diagrams. In linear diagrams, some angles are considered as variables, and some other are constant. The accepted parameters are then linear combinations of the variables and the constant angles. We show that any equation between two linear diagrams with constants in Clifford+T, that is sound for all valuations of its variables, is provable using the axioms of Clifford+T ZX-Calculus. This result proves very useful for the following results.

\textbf{Completeness for unrestricted ZX-Calculus.} To find a complete axiomatisation for the general ZX-Calculus, we use the same technique as for the Clifford+T completeness (back and forth interpretations), but this time using the \zwc-Calculus which is universal, i.e.~its diagrams can represent any matrix over $\mathbb{C}$ with dimension a power of two. Again, the interpretation from \zwc to ZX provide a range of equations that the latter has to be able to prove. Most of them are derivable thanks to the previous result: the completeness for linear diagrams with constants in Clifford+T. The other ones lead to an additional axiom for the ZX-Calculus. This axiom is ``non-linear'', in the sense that that the parameters on one side of the equality are related in a non-linear manner to those on the other side. This is necessary: the argument for incompleteness~\cite{incompleteness} can be adapted to any linear axiomatisation.

\textbf{Completeness for ZX-Calculus with parametrised triangles.} The proof of completeness for Clifford+T heavily relies on a particular Clifford+T diagram, which performs a non-trivial, non-unitary matrix% $\begin{pmatrix}1&1\\0&1\end{pmatrix}$
. To simplify the proofs, we gave it a syntactic sugar: \resizebox{!}{0.9em}{
%\InputIfFileExists{#1.tikz}{}{
\begin{tikzpicture}
	\begin{pgfonlayer}{nodelayer}
		\node [style=ug, yshift=-1pt] (0) at (0, -0) {};
		\node [style=none] (1) at (0, 0.25) {};
		\node [style=none] (2) at (0, -0.25) {};
	\end{pgfonlayer}
	\begin{pgfonlayer}{edgelayer}
		\draw [style=none] (1.center) to (2.center);
	\end{pgfonlayer}
\end{tikzpicture}%} % chktex 27
}. In order to provide a simplified axiomatisation for universal quantum mechanics, we propose in the last section to add to the language a parametrised version of the triangle, motivated by the behaviour of said matrix %the fact that $\begin{pmatrix}1&x\\0&1\end{pmatrix}\begin{pmatrix}1&y\\0&1\end{pmatrix} = \begin{pmatrix}1&x{+}y\\0&1\end{pmatrix}$
. The introduction of this generator greatly simplifies two axioms at the cost of an additional one: its decomposition.

\textbf{Related works.} After the result for Clifford+T came out, another axiomatisation was found for unrestricted ZX-Calculus~\cite{HNW}, but for an altered version of the ZX-Calculus, that is, a version where two generators were added, namely the triangle --- which we have introduced as a syntactic sugar for a diagram of Clifford+T --- and the $\lambda$-box, used to simplify the translation from ZW to ZX\@. This axiomatisation was proven to be complete using the same method of back and forth interpretation with the \zwc-Calculus. It has a significantly higher amount of rules than the one we obtain for unrestricted ZX-Calculus.

This paper condenses two articles~\cite{JPV,JPV-universal}, and extends them with the notion of parametrised triangle. It is structured as follows. Sections~\ref{sec:zx} and~\ref{sec:zw} present respectively the ZX-Calculi and the ZW-Calculi. Section~\ref{sec:cliff-t} is dedicated to the proof of completeness for Clifford+T, while Section~\ref{sec:lin-diag} extends the result to linear diagrams. Section~\ref{sec:appli-lin-diag} provides methods to simplify the use of the latter result and Section~\ref{sec:gen-zx} makes use of them to show the provided axiomatisation for unrestricted ZX-Calculus is complete. We briefly discuss the axioms we have added so far in Section~\ref{sec:new-axioms}, and finally provide an alternative ZX-Calculus with parametrised triangles in Section~\ref{sec:param-triangle}.

\section{The ZX-Calculus}%
\label{sec:zx}

\subsection{Syntax and Semantics}

In the ZX-Calculus, quantum operators are represented as diagrams.
A ZX-diagram $D:k\to l$ with $k$ inputs and $l$ outputs is generated by:\\
%\begin{center}
%\bgroup
%\def\arraystretch{2.5}
%{\begin{tabular}{|cc|cc|}
%\hline
%$R_Z^{(n,m)}(\alpha):n\to m$ & \tikzfig{gn-alpha} & $R_X^{(n,m)}(\alpha):n\to m$ & \tikzfig{rn-alpha}\\[4ex]\hline
%$H:1\to 1$ & \tikzfig{Hadamard} & $e:0\to 0$ & \tikzfig{empty-diagram}\\\hline
%$\mathbb{I}:1\to 1$ & \tikzfig{single-line} & $\sigma:2\to 2$ & \tikzfig{crossing}\\\hline
%$\epsilon:2\to 0$ & \tikzfig{cup} & $\eta:0\to 2$ & \tikzfig{caps}\\\hline
%%$U:1\to 1$ & \tikzfig{ug-node} &&\\\hline
%\end{tabular}}
%\egroup\\
%where $n,m\in \mathbb{N}$ and $\alpha \in \mathbb{R}$. The generator $e$ is the empty diagram.
%\end{center}
\begin{minipage}{\columnwidth}
\titlerule{ ZX-Generators }
\[R_Z^{(n,m)}(\alpha):n\to m\quad\boxed{
%\InputIfFileExists{#1.tikz}{}{
\input{./figures/gn-alpha.tikz}%} % chktex 27
} \qquad\qquad R_X^{(n,m)}(\alpha):n\to m\quad\boxed{
%\InputIfFileExists{#1.tikz}{}{
\input{./figures/rn-alpha.tikz}%} % chktex 27
}\]
\end{minipage}
\[H:1\to 1\quad\boxed{~~
%\InputIfFileExists{#1.tikz}{}{
\begin{tikzpicture}
	\begin{pgfonlayer}{nodelayer}
		\node [style={H box}] (0) at (0, 0) {};
		\node [style=none] (1) at (0, 0.5) {};
		\node [style=none] (2) at (0, -0.5) {};
	\end{pgfonlayer}
	\begin{pgfonlayer}{edgelayer}
		\draw (2.center) to (1.center);
	\end{pgfonlayer}
\end{tikzpicture}
%} % chktex 27
~~}\qquad\qquad e:0\to 0 \quad\boxed{\phantom{
%\InputIfFileExists{#1.tikz}{}{
\input{./figures/empty-diagram.tikz}%} % chktex 27
}}\qquad\qquad\mathbb{I}:1\to 1\quad\boxed{~~
%\InputIfFileExists{#1.tikz}{}{
\begin{tikzpicture}
	\begin{pgfonlayer}{nodelayer}
		\node [style=none] (0) at (0, 0.2499999) {};
		\node [style=none] (1) at (0, -0.2499999) {};
	\end{pgfonlayer}
	\begin{pgfonlayer}{edgelayer}
		\draw (0.center) to (1.center);
	\end{pgfonlayer}
\end{tikzpicture}%} % chktex 27
~~} \qquad\qquad\sigma:2\to 2\quad\boxed{
%\InputIfFileExists{#1.tikz}{}{
\input{./figures/crossing.tikz}%} % chktex 27
}\]
\begin{minipage}{\columnwidth}
\[\epsilon:2\to 0\quad\boxed{
%\InputIfFileExists{#1.tikz}{}{
\begin{tikzpicture}
	\begin{pgfonlayer}{nodelayer}
		\node [style=none] (0) at (-0.2500001, 0.2500001) {};
		\node [style=none] (1) at (0.2500001, 0.2500001) {};
	\end{pgfonlayer}
	\begin{pgfonlayer}{edgelayer}
		\draw [bend right=90, looseness=1.75] (0.center) to (1.center);
	\end{pgfonlayer}
\end{tikzpicture}%} % chktex 27
}\qquad\qquad\eta:0\to 2\quad\boxed{
%\InputIfFileExists{#1.tikz}{}{
\begin{tikzpicture}
	\begin{pgfonlayer}{nodelayer}
		\node [style=none] (0) at (-0.2500001, -0) {};
		\node [style=none] (1) at (0.2500001, -0) {};
	\end{pgfonlayer}
	\begin{pgfonlayer}{edgelayer}
		\draw [bend left=90, looseness=1.75] (0.center) to (1.center);
	\end{pgfonlayer}
\end{tikzpicture}%} % chktex 27
}\]
\rule{\columnwidth}{0.5pt}
\end{minipage}

\vspace{0.2cm}
and the two compositions:
\begin{itemize}
\item Spacial Composition: for any $D_1:a\to b$ and $D_2:c\to d$, $D_1\otimes D_2:a+c\to b+d$ consists in placing $D_1$ and $D_2$ side by side, $D_2$ on the right of $D_1$.
\item Sequential Composition: for any $D_1:a\to b$ and $D_2:b\to c$, $D_2\circ D_1:a\to c$ consists in placing $D_1$ on the top of $D_2$, connecting the outputs of $D_1$ to the inputs of $D_2$.\footnote{
In order to be simpler, the present article eludes part of the underlying formalism (strict compact closed symmetric monoidal categories), which makes the language unambiguous. For more information, see e.g.~\cite{selinger2010survey}.}
\end{itemize}

\noindent
The standard interpretation of the ZX-diagrams associates to any diagram $D:n\to m$ a linear map $\interp{D}:\mathbb{C}^{2^n}\to\mathbb{C}^{2^m}$ inductively defined as follows:%, using Dirac notations, an equivalent, matrix-based description of the semantics is given in the appendix:\\
\vspace{1em}
\titlerule{$\interp{.}$}
\[ \interp{D_1\otimes D_2}:=\interp{D_1}\otimes\interp{D_2} \qquad
\interp{D_2\circ D_1}:=\interp{D_2}\circ\interp{D_1}\]
\[\interp{
%\InputIfFileExists{#1.tikz}{}{
\input{./figures/empty-diagram.tikz}%} % chktex 27
~}:=\begin{pmatrix}
1
\end{pmatrix} \qquad
\interp{~
%\InputIfFileExists{#1.tikz}{}{
%} % chktex 27
~~}:= \begin{pmatrix}
1 & 0 \\ 0 & 1\end{pmatrix}\qquad
\interp{~
%\InputIfFileExists{#1.tikz}{}{
%} % chktex 27
~}:= \frac{1}{\sqrt{2}}\begin{pmatrix}1 & 1\\1 & -1\end{pmatrix}\]
\[\interp{
%\InputIfFileExists{#1.tikz}{}{
\input{./figures/crossing.tikz}%} % chktex 27
}:= \begin{pmatrix}
1&0&0&0\\
0&0&1&0\\
0&1&0&0\\
0&0&0&1
\end{pmatrix} \qquad
\interp{\raisebox{-0.25em}{$
%\InputIfFileExists{#1.tikz}{}{
%} % chktex 27
$}}:= \begin{pmatrix}
1&0&0&1
\end{pmatrix} \qquad
\interp{\raisebox{-0.35em}{$
%\InputIfFileExists{#1.tikz}{}{
%} % chktex 27
$}}:= \begin{pmatrix}
1\\0\\0\\1
\end{pmatrix}\]
For any $\alpha\in\mathbb{R}$, $\interp{
    \begin{tikzpicture}
	\begin{pgfonlayer}{nodelayer}
		\node [style=gn] (0) at (0, -0) {$\alpha$};
	\end{pgfonlayer}
    \end{tikzpicture}
}
:=\begin{pmatrix}1+e^{i\alpha}\end{pmatrix}$, and  for any $n,m\geq 0$ such that $n+m>0$:
\[
%\interp{\begin{tikzpicture}
%	\begin{pgfonlayer}{nodelayer}
%		\node [style=gn] (0) at (0, -0) {$\alpha$};
%	\end{pgfonlayer}
%\end{tikzpicture}}:=\begin{pmatrix}1+e^{i\alpha}\end{pmatrix} \qquad
\interp{
%\InputIfFileExists{#1.tikz}{}{
\input{./figures/gn-alpha.tikz}%} % chktex 27
}:=
\annoted{2^m}{2^n}{\begin{pmatrix}
  1 & 0 & \cdots & 0 & 0 \\
  0 & 0 & \cdots & 0 & 0 \\
  \vdots & \vdots & \ddots & \vdots & \vdots \\
  0 & 0 & \cdots & 0 & 0 \\
  0 & 0 & \cdots & 0 & e^{i\alpha}
 \end{pmatrix}}
%~~\begin{pmatrix}n+m>0\end{pmatrix}
\]
\begin{minipage}{\columnwidth}
\[\interp{
%\InputIfFileExists{#1.tikz}{}{
\input{./figures/rn-alpha.tikz}%} % chktex 27
}:=\interp{~
%\InputIfFileExists{#1.tikz}{}{
%} % chktex 27
~}^{\otimes m}\circ \interp{
%\InputIfFileExists{#1.tikz}{}{
\input{./figures/gn-alpha.tikz}%} % chktex 27
}\circ \interp{~
%\InputIfFileExists{#1.tikz}{}{
%} % chktex 27
~}^{\otimes n}\]
$\left(\text{where }M^{\otimes 0}=\begin{pmatrix}1\end{pmatrix}\text{ and }M^{\otimes k}=M\otimes M^{\otimes k-1}\text{ for any }k\in \mathbb{N}^*\right)$.
\vspace{0.2em}
\end{minipage}
\rule{\columnwidth}{0.5pt}

To simplify, the red and green nodes will be represented empty when holding a 0 angle:
\[ 
%\InputIfFileExists{#1.tikz}{}{
\input{./figures/gn-empty-is-gn-zero.tikz}%} % chktex 27
 \qquad\text{and}\qquad 
%\InputIfFileExists{#1.tikz}{}{
\input{./figures/rn-empty-is-rn-zero.tikz}%} % chktex 27
 \]

The ZX-Calculus is \emph{universal}, i.e.~for any complex matrix (with dimensions some powers of 2), there exists a diagram that represents it~\cite{interacting}.

\subsection{Fragments and Axiomatisations}

The language comes equipped with an equational theory, or axiomatisation: a set of diagram transformations that can be applied locally while preserving the semantics.
When two diagrams $D_1$ and $D_2$ can be turned into one another using the rules of the ZX-Calculus, we will write $\zx\vdash D_1=D_2$. The fact that the rules can be applied locally is formalised by:
\[\zx\vdash D_1=D_2 \implies \left\lbrace \begin{array}{l}
\zx\vdash D_1\circ D = D_2\circ D\\
\zx\vdash D\circ D_1 = D\circ D_2\\
\zx\vdash D_1\otimes D = D_2\otimes D\\
\zx\vdash D\otimes D_1 = D\otimes D_2\\
\end{array}\right.\]
whenever $D$ has the right type.

Of course, the rules are chosen so that they are sound:
\[\zx\vdash D_1=D_2 \implies \interp{D_1}=\interp{D_2}\]
\textbf{Completeness} is the converse of soundness.
%The converse of soundness is called completeness.
The property of completeness is essential: it implies that the language captures all the properties of quantum mechanics, and also that we can do as much with the graphical language as with matrices and linear algebra, for handling qubits.

The problem of completeness is hard to tackle, plus the parameters in the general ZX-Calculus can take their values in $\mathbb{R}$, which implies handling an uncountable set of angles. It can then be preferred to restrict the language to some finite set of values for the parameters, and study the completeness for these specific restrictions. Hence the ZX actually has several axiomatisations, each of them, while sharing some fundamental axioms, are devoted to specific fragments.

A set of fundamental axioms that are present in \emph{all} axiomatisations of the qubit ZX-Calculus are the ones that fall under the paradigm ``only topology matters''~\cite{interacting}. The language is such that any graph isomorphism (with fixed inputs and outputs) preserves the semantics, so we can bend the wires, and move nodes around as we please, so for instance we have the following equalities:

\[{
%\InputIfFileExists{#1.tikz}{}{
\input{./figures/bent-wire.tikz}%} % chktex 27
}\]
\[{
%\InputIfFileExists{#1.tikz}{}{
\input{./figures/bent-wire-2.tikz}%} % chktex 27
}\]

In the following we review some restrictions of the language, and their state-of-the-art axiomatisation.

\subsubsection{Clifford}

We call the Clifford fragment, or the \frag2 of the language, the restriction of ZX-Calculus where the nodes $R_Z$ and $R_X$ have their parameters in $\frac{\pi}{2}\mathbb{Z}$, i.e.~when all the angles in the diagrams are multiples of $\frac{\pi}{2}$. This corresponds to the Clifford fragment of quantum circuits. It is a fragment of prime interest for it is used in error correction schemes~\cite{Gottesman_2006}, as well as a resource for computation in the model of measurement-based quantum computing~\cite{Raussendorf_2003}. However, the fragment is not universal i.e.\ its diagrams cannot approach all quantum evolutions. It even has been shown that Clifford circuits and diagrams can be efficiently simulated by a classical computer~\cite{clifford-not-universal}. However, an axiomatisation is known to be complete for the Clifford fragment of the ZX-Calculus~\cite{pi_2-complete}, and is given in Figure~\ref{fig:ZX-Clifford-rules}. Notice that this axiomatisation is a simplified version of the original one~\cite{simplified-stabilizer} and that one could simplify further by removing the rule \picom, which happens to be derivable from the others in the \frag2. However, since it is not derivable for arbitrary angles, it will be used in the next axiomatisations.

\begin{figure*}[!htb]
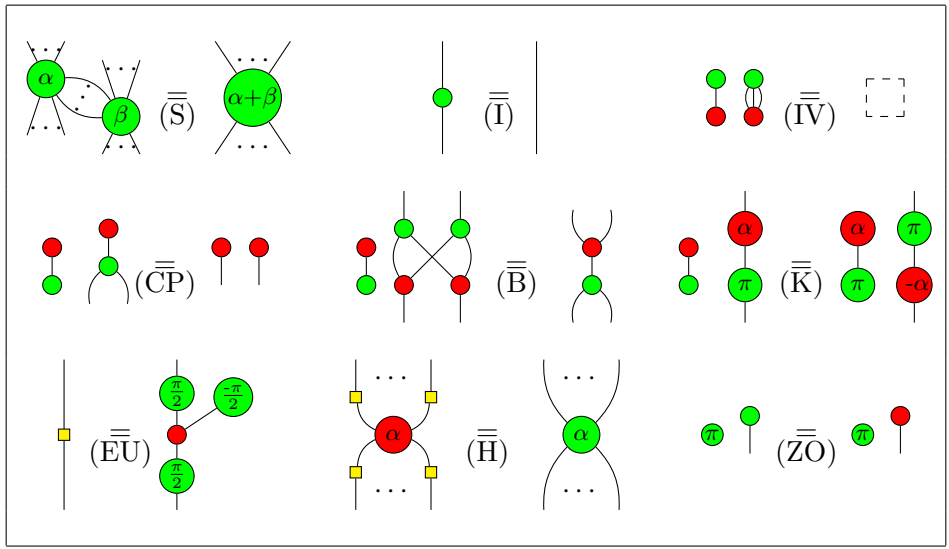

 \centering
 \hypertarget{r:rules}{}
 \begin{tabular}{|c@{$\qquad$}c@{$\qquad$}c|}
   \hline
   && \\
   
%\InputIfFileExists{#1.tikz}{}{
\input{./figures/spider-1.tikz}%} % chktex 27
& 
%\InputIfFileExists{#1.tikz}{}{
\input{./figures/s2-simple.tikz}%} % chktex 27
& 
%\InputIfFileExists{#1.tikz}{}{
\input{./figures/inverse-label.tikz}%} % chktex 27
\\
   && \\
   
%\InputIfFileExists{#1.tikz}{}{
\input{./figures/b1s.tikz}%} % chktex 27
& 
%\InputIfFileExists{#1.tikz}{}{
\input{./figures/b2s.tikz}%} % chktex 27
& 
%\InputIfFileExists{#1.tikz}{}{
\input{./figures/k2s.tikz}%} % chktex 27
\\
   && \\
   
%\InputIfFileExists{#1.tikz}{}{
\input{./figures/euler-decomp-scalar-free.tikz}%} % chktex 27
&  
%\InputIfFileExists{#1.tikz}{}{
\input{./figures/h2.tikz}%} % chktex 27
& 
%\InputIfFileExists{#1.tikz}{}{
\input{./figures/ZO.tikz}%} % chktex 27
\\
   && \\
   \hline
  \end{tabular}
 \caption[]{Set of rules for the Clifford ZX-Calculus with scalars. All of these rules also hold when flipped upside-down, or with the colours red and green swapped. The right-hand side of (E) is an empty diagram. (\ldots) denote zero or more wires, while (\protect\rotatebox{45}{\raisebox{-0.4em}{$\cdots$}}) denote one or more wires.% $\alpha,\beta,\gamma\in\pg$.
 }%
 \label{fig:ZX-Clifford-rules}
\end{figure*}

\subsubsection{Real Stabiliser}

A retriction of the Clifford fragment is the so-called real stabiliser, or the $\pi$-fragment, where all the angles are in $\pi\mathbb{Z}$. This fragment exactly represents graph states, and also has a complete axiomatisation~\cite{pivoting}.

\subsubsection{Clifford+T}

While the Clifford fragment is known to be efficiently simulable and not universal, it actually only needs a little boost to get there. Indeed, the $T$ gate, identified as $R_Z(\frac{\pi}{4})$ in the ZX-Calculus, is enough to bring approximate universality to the Clifford fragment~\cite{toffoli}. This new restriction is called Clifford+T.
%TODO : peut-etre retirer frivoulousness

Although the axiomatisation for Clifford presented in Figure~\ref{fig:ZX-Clifford-rules} is enough to make the 1-qubit Clifford+T fragment complete~\cite{pi_4-single-qubit}, it is not enough to get the result for the many-qubit case. In particular, two equations of the \frag4 --- \supp and \e in the following --- were shown to be unprovable~\cite{supplementarity,cyclo}, and thus were added to the axiomatisation, but the question of the completeness of the resulting set of axioms remained an open problem. One of the main results of the present article is to provide an axiomatisation that builds on the previous one, and to show that it is complete for the fragment. The additional axioms for Clifford+T are given in Figure~\ref{fig:ZX-CliffordT-rules}.

\begin{figure*}[!htb]
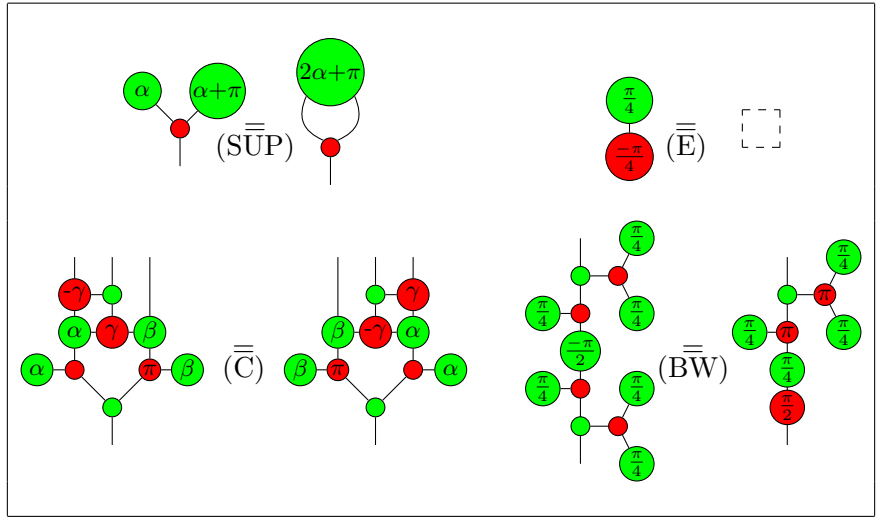

 \centering
 \hypertarget{r:cliff-T-rules}{}
 \begin{tabular}{|c@{$\qquad$}c|}
   \hline
   & \\
   
%\InputIfFileExists{#1.tikz}{}{
\input{./figures/former-supp.tikz}%} % chktex 27
&
%\InputIfFileExists{#1.tikz}{}{
\input{./figures/bicolor_pi_4_eq_empty.tikz}%} % chktex 27
\\
   & \\
   
%\InputIfFileExists{#1.tikz}{}{
\input{./figures/commutation-of-controls-general-simplified.tikz}%} % chktex 27
&
%\InputIfFileExists{#1.tikz}{}{
\input{./figures/BW-simplified.tikz}%} % chktex 27
\\
   & \\
   \hline
  \end{tabular}
 \caption[]{Additional rules for the Clifford+T fragment. Together with $\zxc\setminus\{\iv,\zo\}$, they form the axiomatisation $\zxct$.
 }%
 \label{fig:ZX-CliffordT-rules}
\end{figure*}

\begin{thm}%
\label{thm:cliff-T}
For any diagrams $D_1$ and $D_2$ of the \frag4 of the ZX-Calculus:
\[\interp{D_1}=\interp{D_2} \equi{} \zxct \vdash D_1=D_2\]
\end{thm}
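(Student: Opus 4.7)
The plan is to establish completeness by a back-and-forth translation with the $\zwh$-calculus, which the authors introduce as a complete equational presentation whose diagrams represent matrices with dyadic rational entries. Soundness is routine---each axiom in $\zxct$ preserves the standard interpretation by direct calculation---so only the converse direction needs work.

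First I would define two semantics-preserving interpretations, $[\cdot]_{X\to W}$ from the $\frac{\pi}{4}$-fragment of ZX into $\zwh$, and $[\cdot]_{W\to X}$ going back. The subtle point is that $\frac{\pi}{4}$-ZX-diagrams represent matrices with entries in $\mathbb{Z}[\frac{1}{\sqrt{2}},i]$, whose elements are not dyadic rationals, so a ZX-diagram cannot be translated to a single $\zwh$-diagram representing the same matrix. The remedy is to fix a $\mathbb{Q}$-basis of this ring, for instance $\{1,\frac{1}{\sqrt{2}},i,\frac{i}{\sqrt{2}}\}$, and encode each scalar entry as the vector of its four dyadic-rational coordinates, carried by extra wires in the $\zwh$-diagram. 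Designing the translation of each ZX-generator so that sequential and tensor composition translate cleanly is the main bookkeeping step.

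Next I would prove two lemmas. The round-trip lemma states that for any $\frac{\pi}{4}$-ZX-diagram $D$, $\zxct \vdash [[D]_{X\to W}]_{W\to X} = D$; it goes by induction on generators, using the spider rule and the topological axioms to collapse the double translation. The axiom-translation lemma states that for every axiom $E_1 = E_2$ of $\zwh$, $\zxct \vdash [E_1]_{W\to X} = [E_2]_{W\to X}$; by induction on derivations, together with congruence of $\zxct$ with respect to $\otimes$ and $\circ$, this strengthens to the statement that $\zwh \vdash E_1 = E_2$ implies $\zxct \vdash [E_1]_{W\to X} = [E_2]_{W\to X}$. Completeness then follows in one line: if $\interp{D_1}=\interp{D_2}$, then by semantic preservation $\interp{[D_1]_{X\to W}}=\interp{[D_2]_{X\to W}}$, so by completeness of $\zwh$ we obtain $\zwh \vdash [D_1]_{X\to W} = [D_2]_{X\to W}$, the axiom-translation lemma transports this to $\zxct \vdash [[D_1]_{X\to W}]_{W\to X} = [[D_2]_{X\to W}]_{W\to X}$, and two applications of the round-trip lemma conclude $\zxct \vdash D_1 = D_2$.

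The main obstacle is the axiom-translation lemma. Each axiom of $\zwh$, once translated through $[\cdot]_{W\to X}$, yields a concrete ZX-equation that must be derived inside $\zxct$. Many reduce to manipulations already available in Figure~\ref{fig:ZX-Clifford-rules}, but a small residue genuinely requires the additional rules of Figure~\ref{fig:ZX-CliffordT-rules}; this is in fact precisely how those rules are motivated, by taking the raw list of translated $\zwh$-axioms, reducing it modulo the Clifford presentation, and extracting a minimal generating subset to adjoin. The triangle notation for the specific non-unitary Clifford+T map introduced earlier is the syntactic device that keeps these long derivations tractable.
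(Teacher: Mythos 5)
Your proposal follows the paper's method exactly: encode the $\frac{\pi}{4}$-fragment of ZX into $\zwh$ via a degree-four $\mathbb{Q}$-algebra homomorphism whose coordinates are carried on extra control wires, translate back, show that $[.]_{W\to X}$ sends derivable $\zwh$-equations to derivable $\zxct$-equations, and close with a round-trip. Your choice of basis $\{1,\frac{1}{\sqrt2},i,\frac{i}{\sqrt2}\}$ differs cosmetically from the paper's $\{1,\piq{},\piq{2},\piq{3}\}$ (whose multiplication-by-$\piq{}$ matrix is the companion matrix of $X^4+1$), but either works.

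One step as you have written it would fail: the round-trip lemma ``$\zxct \vdash [[D]_{X\to W}]_{W\to X} = D$'' is ill-typed. Because the encoding threads the four dyadic coordinates through two extra control wires, the diagram $[[D]_{X\to W}]_{W\to X}$ has $n+2$ inputs and $m+2$ outputs while $D$ has only $n$ and $m$, so they cannot be equated directly. What actually holds (Proposition~\ref{prop:left-inverse-ZX} and Corollary~\ref{cor:composition}) is that after composing the two control wires with a fixed encoder state $\omega = (1,\piq{},\piq2,\piq3)^T$ on top and decoder effect $e_1 = (1,0,0,0)$ on the bottom, one recovers $D$; this is the diagrammatic incarnation of the left inverse $x \mapsto x\circ\omega$ of the homomorphism $\psi$. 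Your final chain of implications still goes through once the round-trip lemma is stated in this corrected form, since the same fixed encoder and decoder are applied to both $[[D_1]_{X\to W}]_{W\to X}$ and $[[D_2]_{X\to W}]_{W\to X}$.
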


It is also to be noted that a complete axiomatisation has been proposed for the 2-qubit unitaries in Clifford+T~\cite{2-qubits-zx}, although the equality between some pairs of diagrams of the \frag4 can only be achieved by getting out of the fragment. That is, proving $D_1=D_2$ can sometimes only be achieved by proving $D_1=D_3=D_2$, but where $D_3$ is not in the fragment, and has angles that can even be irrational multiples of $\pi$.

\subsubsection{Linear Diagrams}

Linear diagrams are a restriction of the language that stand out from the other fragments, for they are not simply obtained by restricting the angles to be some multiples of a fraction of $\pi$.

In linear diagrams, some angles are considered as variables, and the accepted parameters in green and red nodes are affine combinations of those ($c+\sum n_i\alpha_i$ where $\alpha_i$ are variables and $c$ is constant). Two diagrams are supposed to be equal when they are equal for all valuations of said variables. More details will be provided in Section~\ref{sec:lin-diag}, but we are going to prove the completeness of linear diagrams with constants in $\frac{\pi}{4}\mathbb{Z}$, which will prove very useful for the next completeness result.

\begin{thm}%
\label{thm:lin-diag}
For any diagrams $D_1(\vec\alpha)$ and $D_2(\vec\alpha)$ linear in $\vec\alpha$ with constants in $\frac{\pi}{4}\mathbb{Z}$:
\[\forall \vec{\alpha} \in \mathbb R,~\interp{D_1(\vec\alpha)}=\interp{D_2(\vec\alpha)} \equi{} \forall \vec{\alpha} \in \mathbb R,~\zxct \vdash D_1(\vec\alpha)=D_2(\vec\alpha)\]
\end{thm}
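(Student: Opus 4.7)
The $(\Leftarrow)$ direction is immediate since every rule of $\zxct$ is sound for arbitrary real angles, so a uniform proof in particular yields a valid equation for each $\vec\alpha$. For $(\Rightarrow)$, the plan is to bootstrap from Theorem~\ref{thm:cliff-T} by isolating the variable content of the diagrams in a canonical way.

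First I would use spider fusion \s to normalise the angles. Any green or red spider carrying an angle of the form $c + \sum_i k_i \alpha_i$ with $c \in \frac{\pi}{4}\mathbb{Z}$ and $k_i \in \mathbb{Z}$ can be split into a chain consisting of one constant-$c$ spider together with, for each $i$, $|k_i|$ elementary spiders of angle $\pm\alpha_i$. This rewrite is carried out identically in $D_1$ and $D_2$ and is uniformly provable in $\vec\alpha$ since \s holds for any angle. After this step the only non-Clifford+T content of either diagram is a collection of bare $R_Z(\alpha_i)$ or $R_X(\alpha_i)$ spiders with integer coefficient $\pm 1$.

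Next I would concentrate, for each variable $\alpha_i$, all of its occurrences onto a single designated spider. Using the bialgebra rule \bialg, the copy rule \cp, and the Hadamard/colour-change rule \h, an isolated spider of angle $\alpha_i$ can be rewritten as the contraction of a shared ``source'' spider of angle $\alpha_i$ with a Clifford+T phase-gadget plumbing. Repeating this, each of $D_1(\vec\alpha)$ and $D_2(\vec\alpha)$ is transformed, provably in $\zxct$, into a shape
\[ D_j(\vec\alpha) = G_j \circ \bigl(R_Z(\alpha_1) \otimes \cdots \otimes R_Z(\alpha_n) \otimes \mathbb{I}\bigr) \circ G_j', \]
with $G_j, G_j'$ ground Clifford+T diagrams. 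The hypothesis $\forall \vec\alpha,\ \interp{D_1(\vec\alpha)} = \interp{D_2(\vec\alpha)}$ then translates, by varying $\vec\alpha$ over the torus, into the semantic equality of the Clifford+T diagrams obtained by opening each $\alpha_i$-wire into a free input/output. A direct application of Theorem~\ref{thm:cliff-T} to these opened ground diagrams produces a proof of $G_1 = G_2$ and $G_1' = G_2'$ in $\zxct$, and re-gluing the parametrised gadget yields a proof of $D_1(\vec\alpha) = D_2(\vec\alpha)$ valid for every $\vec\alpha$.

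The principal obstacle is the second step: constructing the variable-concentration gadget and proving in $\zxct$ that its insertion is derivable without reference to the specific value of $\vec\alpha$. This is delicate because the route chosen for $\alpha_i$ must be provably equivalent to each scattered occurrence independently of $\vec\alpha$; the newly added axioms \supp and \e, which allow one to manipulate generic phase structure, are expected to be essential here. Secondary obstacles are a careful definition of what ``uniformly provable in $\vec\alpha$'' means at the level of derivations and a verification that every step above respects this notion.
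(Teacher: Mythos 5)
Your first reduction step is essentially the paper's Proposition~\ref{prop:var2inp} (split compound angles by spider fusion, unify colours, handle signs), but the second step has a genuine gap, and the third step inherits it.

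You propose to concentrate \emph{all} occurrences of each variable $\alpha_i$ onto a \emph{single} designated spider, producing $D_j(\vec\alpha) = G_j \circ (R_Z(\alpha_1)\otimes\cdots\otimes R_Z(\alpha_n)\otimes\mathbb{I})\circ G_j'$ with $G_j,G_j'$ ground. This is impossible already at the semantic level: the map $\alpha\mapsto \interp{G\circ (R_Z(\alpha)\otimes \mathbb I)\circ G'}$ is an affine function of $e^{i\alpha}$, so its matrix entries lie in the $\mathbb{C}$-span of $\{1,e^{i\alpha}\}$; but a diagram in which $\alpha$ genuinely occurs $r$ times (e.g. $R_Z(\alpha)^{\otimes r}$, interpretation $\mathrm{diag}(1,e^{i\alpha},\dots,e^{ir\alpha})$) produces $e^{ir\alpha}$. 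So the best you can do after the first step is what the paper achieves: pull out $\theta_r(\alpha)=\bigl(R_Z^{(0,1)}(\alpha)\bigr)^{\otimes r}$, with $r$ the \emph{multiplicity} of $\alpha$, not a single copy.

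Once you keep $r$ copies, your step of ``opening each $\alpha_i$-wire into a free input/output'' and applying Theorem~\ref{thm:cliff-T} to the ground diagrams also fails for $r>1$: the vectors $\interp{\theta_r(\alpha)}$ as $\alpha$ ranges over $\mathbb R$ span only an $(r+1)$-dimensional subspace $S_r$ of $\mathbb C^{2^r}$, so $\forall\alpha,\ \interp{D_1'\circ\theta_r(\alpha)}=\interp{D_2'\circ\theta_r(\alpha)}$ does \emph{not} imply $\interp{D_1'}=\interp{D_2'}$. This is precisely the obstacle the paper overcomes by introducing a Clifford+T diagram $P_r$ whose interpretation is a projector onto $S_r$: one shows (Lemma~\ref{lem:alphas-on-M}) $\zxct\vdash P_r\circ\theta_r(\alpha)=\theta_r(\alpha)$ and (Lemma~\ref{lem:equivalence-Pk}) that the uniform equation is equivalent to $\interp{D_1'\circ P_r}=\interp{D_2'\circ P_r}$, a \emph{single} ground equation in the $\frac\pi4$-fragment, to which Theorem~\ref{thm:cliff-T} does apply; the derivation for each $\alpha$ then comes back by composing with $\theta_r(\alpha)$ and eliminating $P_r$ via the first lemma. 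The projector $P_r$ is the missing idea, and without it neither the concentration nor the wire-opening in your argument can be made to work.
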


Then, Theorem~\ref{thm:cliff-T} for Clifford+T diagrams can be seen as a particular case of Theorem~\ref{thm:lin-diag}, where the number of variables is $0$.

\subsubsection{General ZX-Calculus}

Even though the axiomatisation $\zxct$ makes an approxiamtely universal fragment of the ZX-Calculus complete, we know that the said axiomatisation does not make the general ZX-Calculus complete. Indeed, an argument provided in~\cite{incompleteness} holds for all the axiomatisations we have provided thus far. However, the last main result of the paper shows that only one additional axiom is needed to get the property for the language with no restriction.

\begin{figure*}[!htb]
 \centering
 \hypertarget{r:rule-A}{}
 \begin{tabular}{|c|}
   \hline\\
   
%\InputIfFileExists{#1.tikz}{}{
\input{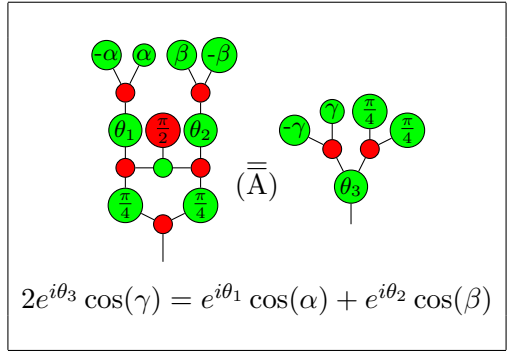}%} % chktex 27
\\\\
   \hline
  \end{tabular}
 \caption[]{Additional rule for the general ZX-Calculus.}%
 \label{fig:ZX-non-linear-rule}
\end{figure*}

\begin{thm}%
\label{thm:gen-zx}
For any diagrams $D_1$ and $D_2$ of the ZX-Calculus:
\[\interp{D_1}=\interp{D_2} \equi{} \left(\zxct+\add\right) \vdash D_1=D_2\]
%TODO : Completeness of the general ZX
\end{thm}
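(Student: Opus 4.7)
The plan is to mimic the strategy used for Clifford+T completeness, but with the universal \zwc-Calculus playing the role of the third-party intermediary. Since \zwc is known to be complete and can express any complex matrix with power-of-two dimensions, I would construct a pair of interpretations $\Phi : \zx \to \zwc$ and $\Psi : \zwc \to \zx$, both preserving the standard semantics $\interp{\cdot}$, and then transport equalities from \zwc back into the ZX-Calculus.

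After defining $\Phi$ and $\Psi$ on the generators and extending them functorially along $\otimes$ and $\circ$, two facts must be established: (i) the round-trip $\Psi \circ \Phi$ is provably the identity, i.e.\ $(\zxct+\add) \vdash \Psi(\Phi(D)) = D$ for every ZX diagram $D$; and (ii) every axiom of \zwc, pushed through $\Psi$, becomes a provable ZX equation. Both reduce to finite case analyses---one equality per ZX generator for (i), one per \zwc axiom for (ii). Once these are in hand, completeness follows by the usual relay: if $\interp{D_1}=\interp{D_2}$ then $\interp{\Phi(D_1)}=\interp{\Phi(D_2)}$; by completeness of \zwc, $\zwc \vdash \Phi(D_1)=\Phi(D_2)$; applying $\Psi$ to each step and invoking (ii) yields $(\zxct+\add) \vdash \Psi(\Phi(D_1))=\Psi(\Phi(D_2))$; finally (i) closes the loop to give $(\zxct+\add) \vdash D_1 = D_2$.

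The hard part is step (ii). The \zwc generators carry arbitrary complex parameters $r\in\mathbb{C}$, so their images under $\Psi$ necessarily involve real angles depending non-linearly---via modulus and argument, or via inverse trigonometric functions---on those parameters. Most \zwc axioms, once translated, still yield ZX equations whose parameters on both sides are \emph{affine} in a shared family of variables with constants in $\tfrac{\pi}{4}\mathbb{Z}$, and these can be discharged directly by Theorem \ref{thm:lin-diag}. However, the incompleteness argument of \cite{incompleteness} adapts to rule out any purely linear axiomatisation of the full language, so at least one translated axiom must genuinely involve a non-linear relationship between its parameters. The real work is therefore to choose $\Psi$ carefully so that the non-linear content collapses into a single residual equation, and to verify that \add of Figure \ref{fig:ZX-non-linear-rule} is precisely what is needed to derive it. Checking that no further non-linear rule is required---and in particular that every \zwc axiom not already handled by Theorem \ref{thm:lin-diag} reduces to \add together with the earlier calculus---is the technical heart of the proof.
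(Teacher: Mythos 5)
Your proposal follows the paper's proof strategy exactly: define semantics-preserving interpretations $[.]_W:\zx\to\zwc$ and $[.]_X:\zwc\to\zx$, prove the round-trip identity $(\zxct+\add)\vdash[[D]_W]_X=D$ (Proposition~\ref{prop:double-interpretation-equivalence-1}) and the transport of $\zwc$-axioms (Proposition~\ref{prop:X-is-homomorphism}), and relay completeness through $\zwc$, discharging the translated $\zwc$-axioms via Theorem~\ref{thm:lin-diag} except for the single non-linear residue handled by \add. Your identification of where the non-linearity enters (the encoding of complex parameters through modulus/argument and inverse trigonometric functions, matching the paper's $2^n\cos(\beta)e^{i\theta}$ decomposition) and of the incompleteness argument as the reason a non-linear axiom cannot be avoided are both on point.
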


\subsubsection{ZX-Calculus with Parametrised Triangles}

This language, presented in Section~\ref{sec:param-triangle} shows an altered version of the ZX-Calculus, an additional generator with a parameter is introduced. It builds from the realisation that a particular diagram of the Clifford+T fragment proved so useful (and expressive) that it got a syntactic sugar: the so-called triangle. This node was introduced as a generator in~\cite{HNW} and later on in~\cite{zx-toffoli}. We propose here to make it even more expressive by giving it a parameter. This leads to a new complete axiomatisation for universal quantum computing.

\section{The ZW-Calculi}%
\label{sec:zw}

The proofs of completeness provided in the following heavily rely on the completeness of a third-party language called the ZW-Calculus. It was introduced as the GHZ/W calculus by Coecke and Kissinger in~\cite{ghz-w}, as a graphical language devoted to describing the interactions of the GHZ and W states, which constitute the only two classes of entanglement between three qubits, and has then been shown to be well suited for describing fermionic quantum computation.

It has been completed in~\cite{zw}, the diagrams allowing then to represent matrices over the integers $\mathbb{Z}$. We will denote this first version of the language \zw, but we will not directly use it as it is. Instead, we will use a slightly altered version, that we denote \zwh, where it can represent any matrix over dyadic rationals. The language has then been extended to allow the completeness for matrices over any ring in~\cite{HNW}, hence a fortiori, over $\mathbb{C}$. We will denote this last version of the language \zwc.

\subsection{For Integer Matrices}

We will present here the expanded version of the ZW-Calculus, i.e.~the version where all the black and white nodes have a degree $\leq 3$. To stay consistent with the previous definition of the ZX-Calculus, we will assume that the diagrams are to be read from top to bottom. The ZW-Calculus has the following finite set of generators:
\[T_e=\left\lbrace

%\InputIfFileExists{#1.tikz}{}{
\begin{tikzpicture}
	\begin{pgfonlayer}{nodelayer}
		\node [style={white dot}] (0) at (0, -0) {};
		\node [style=none] (1) at (0, 0.5) {};
		\node [style=none] (2) at (0, -0.5) {};
	\end{pgfonlayer}
	\begin{pgfonlayer}{edgelayer}
		\draw (1.center) to (2.center);
	\end{pgfonlayer}
\end{tikzpicture}%} % chktex 27
~,~
%\InputIfFileExists{#1.tikz}{}{
\input{./figures/Z-2-1.tikz}%} % chktex 27
~,~
%\InputIfFileExists{#1.tikz}{}{
\begin{tikzpicture}
	\begin{pgfonlayer}{nodelayer}
		\node [style=dot] (0) at (0, -0) {};
		\node [style=none] (1) at (0, 0.5) {};
		\node [style=none] (2) at (0, -0.5) {};
	\end{pgfonlayer}
	\begin{pgfonlayer}{edgelayer}
		\draw (1.center) to (2.center);
	\end{pgfonlayer}
\end{tikzpicture}%} % chktex 27
~,~
%\InputIfFileExists{#1.tikz}{}{
\begin{tikzpicture}
	\begin{pgfonlayer}{nodelayer}
		\node [style=dot] (0) at (0, 0) {};
		\node [style=none] (1) at (0, 0.5000001) {};
		\node [style=none] (2) at (-0.25, -0.5) {};
		\node [style=none] (3) at (0.25, -0.5) {};
	\end{pgfonlayer}
	\begin{pgfonlayer}{edgelayer}
		\draw (0) to (2.center);
		\draw (3.center) to (0);
		\draw (0) to (1.center);
	\end{pgfonlayer}
\end{tikzpicture}%} % chktex 27
~,\scalebox{1}{
%\InputIfFileExists{#1.tikz}{}{
%} % chktex 27
}~,\raisebox{-0.3em}{
%\InputIfFileExists{#1.tikz}{}{
%} % chktex 27
}~,\raisebox{-0.4em}{
%\InputIfFileExists{#1.tikz}{}{
%} % chktex 27
}~,~
%\InputIfFileExists{#1.tikz}{}{
\input{./figures/crossing.tikz}%} % chktex 27
~,~
%\InputIfFileExists{#1.tikz}{}{
\input{./figures/zw-cross.tikz}%} % chktex 27
~,~
%\InputIfFileExists{#1.tikz}{}{
\input{./figures/empty-diagram.tikz}%} % chktex 27
~
\right\rbrace\]
and diagrams are created thanks to the same two --- spacial and sequential --- compositions as ZX\@.

As for the ZX-Calculus, we define a standard interpretation, that associates to any diagram of the ZW-Calculus $D$ with $n$ inputs and $m$ outputs, a linear map $\interp{D}:\mathbb{Z}^{2^n}\to\mathbb{Z}^{2^m}$, inductively defined as:\\
\begin{minipage}{\columnwidth}
\titlerule{$\interp{.}$}
\[ \interp{D_1\otimes D_2}:=\interp{D_1}\otimes\interp{D_2} \qquad
\interp{D_2\circ D_1}:=\interp{D_2}\circ\interp{D_1}\]
\end{minipage}
\[\interp{
%\InputIfFileExists{#1.tikz}{}{
\input{./figures/empty-diagram.tikz}%} % chktex 27
~}:=\begin{pmatrix}1\end{pmatrix} \quad
\interp{~
%\InputIfFileExists{#1.tikz}{}{
%} % chktex 27
~~}:= \begin{pmatrix}
1 & 0 \\ 0 & 1\end{pmatrix}\qquad
\interp{\raisebox{-0.4em}{$
%\InputIfFileExists{#1.tikz}{}{
%} % chktex 27
$}}:= \begin{pmatrix}
1\\0\\0\\1
\end{pmatrix}\qquad
\interp{\raisebox{-0.3em}{$
%\InputIfFileExists{#1.tikz}{}{
%} % chktex 27
$}}:= \begin{pmatrix}
1&0&0&1
\end{pmatrix}\]
\[
\interp{
%\InputIfFileExists{#1.tikz}{}{
\input{./figures/crossing.tikz}%} % chktex 27
}:= \begin{pmatrix}
1&0&0&0\\
0&0&1&0\\
0&1&0&0\\
0&0&0&1
\end{pmatrix} \qquad
\interp{
%\InputIfFileExists{#1.tikz}{}{
\input{./figures/zw-cross.tikz}%} % chktex 27
}:= \begin{pmatrix}
1&0&0&0\\
0&0&1&0\\
0&1&0&0\\
0&0&0&-1
\end{pmatrix}\qquad
\interp{
%\InputIfFileExists{#1.tikz}{}{
%} % chktex 27
}:= \begin{pmatrix}
0&1\\1&0
\end{pmatrix}\]
\begin{minipage}{\columnwidth}
\[
\interp{
%\InputIfFileExists{#1.tikz}{}{
%} % chktex 27
}:= \begin{pmatrix}
0&1\\1&0\\1&0\\0&0
\end{pmatrix}\qquad
\interp{
%\InputIfFileExists{#1.tikz}{}{
%} % chktex 27
}:= \begin{pmatrix}
1&0\\0&-1
\end{pmatrix} \qquad
\interp{
%\InputIfFileExists{#1.tikz}{}{
\input{./figures/Z-2-1.tikz}%} % chktex 27
}:= \begin{pmatrix}
1&0&0&0\\0&0&0&-1
\end{pmatrix}
\]
\rule{\columnwidth}{0.5pt}
\end{minipage}
\vspace{0.2em}

\noindent This map is obviously different from the one of the ZX-Calculus --- the domain is different --- but we will use the same notation.

%\begin{rem}
%The symbols used for the generators have be altered from the original ZW-Calculus in order to make it more compatible with the ZX-Calculus.
%\end{rem}

\begin{lemC}[\cite{zw}]%
\label{lem:zw-universal}
ZW-Diagrams are universal for matrices of $\mathbb{Z}^{2^n}\times\mathbb{Z}^{2^m}$:
\[\forall A\in \mathbb{Z}^{2^n}\times\mathbb{Z}^{2^m},~~\exists D:n\to m,~~ \interp{D}=A\]
\end{lemC}

%\subsection{Calculus}

%\newcommand{\tikzfigcd}[1]{\tikzfig{#1\iftoggle{braids}{}{-no-braid}}}
\begin{figure*}[!htb]
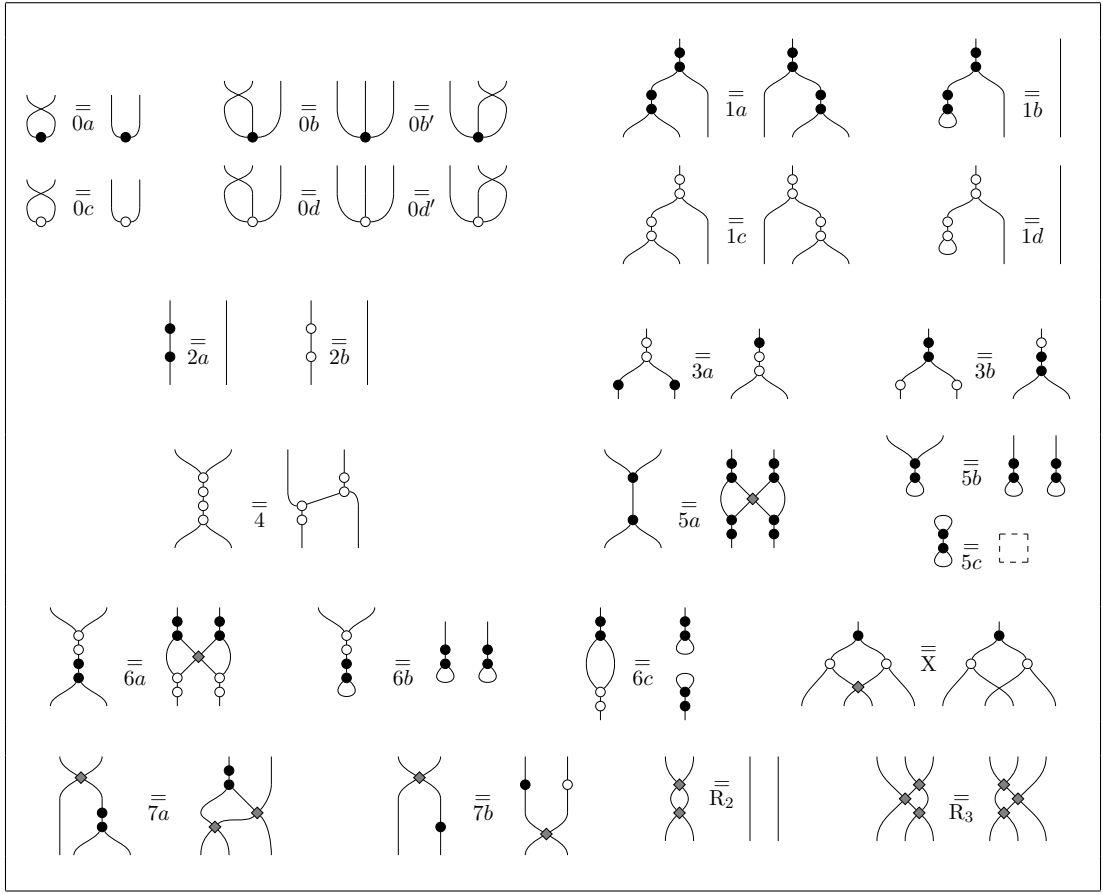

\def\scale{0.75}
\centering
\begin{tabular}{|ccc|}
\hline
&&\\
\scalebox{\scale}{
%\InputIfFileExists{#1.tikz}{}{
\input{./figures/ZW-rule-0-no-param.tikz}%} % chktex 27
} & $\quad$ & \scalebox{\scale}{
%\InputIfFileExists{#1.tikz}{}{
\input{./figures/ZW-rule-1-no-param.tikz}%} % chktex 27
} \\
&&\\
\scalebox{\scale}{
%\InputIfFileExists{#1.tikz}{}{
\input{./figures/ZW-rule-2-no-param.tikz}%} % chktex 27
} && \scalebox{\scale}{
%\InputIfFileExists{#1.tikz}{}{
\input{./figures/ZW-rule-3-no-param.tikz}%} % chktex 27
} \\
&&\\
 \scalebox{\scale}{
%\InputIfFileExists{#1.tikz}{}{
\input{./figures/ZW-rule-4-no-param.tikz}%} % chktex 27
} && \scalebox{\scale}{
%\InputIfFileExists{#1.tikz}{}{
\input{./figures/ZW-rule-5-no-braid.tikz}%} % chktex 27
} \\
 &&\\
 \multicolumn{3}{|c|}{\scalebox{\scale}{
%\InputIfFileExists{#1.tikz}{}{
\input{./figures/ZW-rule-6-no-param.tikz}%} % chktex 27
} $\qquad$ \scalebox{\scale}{
%\InputIfFileExists{#1.tikz}{}{
\input{./figures/ZW-rule-X-no-braid.tikz}%} % chktex 27
}}\\
 &&\\
 \multicolumn{3}{|c|}{
 \scalebox{\scale}{
%\InputIfFileExists{#1.tikz}{}{
\input{./figures/ZW-rule-7-no-param.tikz}%} % chktex 27
} $\qquad$ \scalebox{\scale}{
%\InputIfFileExists{#1.tikz}{}{
\input{./figures/reidmeister-3.tikz}%} % chktex 27
}} \\
 &&\\
\hline
\end{tabular}
\caption{Set of rules for the ZW-Calculus.}%
 \label{fig:ZW_rules}
\end{figure*}

The ZW-Calculus comes with a \emph{complete} set of rules ZW that is given in Figure~\ref{fig:ZW_rules}.
Here again, the paradigm \emph{Only Topology Matters} applies except for 
%\InputIfFileExists{#1.tikz}{}{
\input{./figures/zw-cross.tikz}%} % chktex 27
. It gives sense to nodes that are not directly given in $T_e$, e.g.:
\[
%\InputIfFileExists{#1.tikz}{}{
\input{./figures/zw-bent-wire-example.tikz}%} % chktex 27
\]
For 
%\InputIfFileExists{#1.tikz}{}{
\input{./figures/zw-cross.tikz}%} % chktex 27
, the order of inputs and outputs is important. However, it is invariant under cyclic permutations, as suggested by the shape of the node: 
%\InputIfFileExists{#1.tikz}{}{
\input{./figures/zw-cross-cycle.tikz}%} % chktex 27
.

All these rules are sound. We use the same notation $\vdash$ as defined in Section~\ref{sec:zx}, and we can still apply the rewrite rules to subdiagrams. %still have:
%\[ (\zw\vdash D_1=D_2)\implies \left\lbrace\begin{array}{ccc}
%(\zw\vdash D_1\circ D = D_2\circ D) & \land & (\zw\vdash D\circ D_1 = D\circ D_2)\\
%(\zw\vdash D_1\otimes D = D_2\otimes D) & \land & (\zw\vdash D\otimes D_1 = D\otimes D_2)
%\end{array}\right. \]
In the following we may use the shortcuts:
\begin{center}

%\InputIfFileExists{#1.tikz}{}{
\input{./figures/black-dot-0-1.tikz}%} % chktex 27
 and\hspace{1em} 
%\InputIfFileExists{#1.tikz}{}{
\input{./figures/white-dot-0-1.tikz}%} % chktex 27
 % chktex 8
\end{center}

The main interest of this language is that it  has been proved to be complete.
\begin{thmC}[\cite{zw}]
For any two \zw-diagrams $D_1$ and $D_2$:
\[\interp{D_1}=\interp{D_2} \iff \zw\vdash D_1=D_2\]
\end{thmC}

\subsection{For Dyadic Matrices}

We define an extension of the ZW-Calculus by adding a new node that represents the $1\times1$ matrix $\begin{pmatrix}\frac{1}{2}\end{pmatrix}$ and binding it to the calculus with an additional rule.

\begin{defi}
We define the \zwh-Calculus as the extension of the ZW-Calculus such as:
\[\left\lbrace
\begin{array}{l}
 T_{1/2}=T_e\cup\{\half\}\\
\zwh=\zw\cup\left\lbrace
%\InputIfFileExists{#1.tikz}{}{
\input{./figures/additional-ZW-rule.tikz}%} % chktex 27
~\right\rbrace
\end{array}
\right.\]
The standard interpretation of a diagram $D:n\to m$ is now a matrix
$\interp{D}:\mathbb{D}^{2^n}\to\mathbb{D}^{2^m}$ over the
ring $\mathbb{D} = \mathbb{Z}[1/2]$ of dyadic rationals and is given by
the standard interpretation of the ZW-Calculus extended with $\interp{\half}:=\begin{pmatrix}\frac{1}{2}\end{pmatrix}$.
\end{defi}

\begin{prop}%
  \label{prop:zwcomplete}
  The \zwh is sound and complete:
  For two diagrams $D_1, D_2$ of the \zwh-Calculus,
  \[\interp{D_1} = \interp{D_2}\iff\zwh \vdash D_1 = D_2\]
\end{prop}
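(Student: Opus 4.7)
Soundness is immediate from soundness of $\zw$ together with a direct check that the added rule of $\zwh$ preserves the standard interpretation (both sides of the rule denote the same dyadic scalar). The real work is completeness, which I would reduce to the already-stated $\zw$-completeness result. The strategy has three ingredients: isolate all occurrences of $\half$ as a single scalar, turn the resulting equation into an identity between integer matrices, apply $\zw$-completeness, then cancel the halves using the new axiom.

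The first step is a normal-form lemma: \emph{every $\zwh$-diagram $D : n \to m$ is provably equal in $\zwh$ to $\half^{\otimes k} \otimes D'$ for some $k \in \mathbb{N}$ and some pure $\zw$-diagram $D' : n \to m$.} Since $\half$ is a $0 \to 0$ diagram (a scalar), the only-topology-matters paradigm already built into $\zw$ is enough to slide every $\half$ out to the boundary of the diagram and to gather all such nodes into a single power $\half^{\otimes k}$. No axiom beyond those inherited from $\zw$ is needed for this step; it is purely structural.

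Now assume $\interp{D_1} = \interp{D_2}$ and, by the previous step, write $D_i = \half^{\otimes k_i} \otimes D_i'$ with $D_i'$ pure $\zw$-diagrams. The hypothesis becomes $2^{k_2}\interp{D_1'} = 2^{k_1}\interp{D_2'}$, an equality between \emph{integer} matrices. By Lemma~\ref{lem:zw-universal} there is a $\zw$-diagram representing the scalar $2$, namely $\two$, and tensoring by $\two^{\otimes k}$ implements multiplication of the underlying matrix by $2^k$. Hence $\interp{\two^{\otimes k_2} \otimes D_1'} = \interp{\two^{\otimes k_1} \otimes D_2'}$, so by completeness of $\zw$ one obtains $\zw \vdash \two^{\otimes k_2} \otimes D_1' = \two^{\otimes k_1} \otimes D_2'$. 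Tensoring both sides with $\half^{\otimes(k_1+k_2)}$ and using the new axiom of $\zwh$ (which is precisely the rule relating $\half$ and $\two$) to cancel each $\half \otimes \two$ pair yields $\zwh \vdash D_1 = D_2$.

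The main obstacle is verifying the normal-form lemma in full detail — checking that scalars can indeed be freely relocated using only the $\zw$-rules — and, tied to this, justifying rigorously that the added axiom gives a genuine mutual cancellation between $\half$ and the $\zw$-scalar $\two$. If the added axiom is not literally $\half \otimes \two = \text{empty}$, one first derives this identity as a preliminary lemma from the stated axiom (using $\zw$-completeness applied to the trivial scalar equation $\tfrac{1}{2}\cdot 2 = 1$). Once this cancellation is established, the reduction above completes the proof.
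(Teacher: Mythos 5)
Your proposal is correct and takes essentially the same approach as the paper: decompose each $\zwh$-diagram as a tensor of $\half$-scalars with a pure $\zw$-diagram, clear denominators by tensoring with $\two$, apply $\zw$-completeness, and cancel using the added axiom. The paper's bookkeeping is marginally lighter (it assumes WLOG $n_1 \le n_2$ and multiplies only one side by $\two^{\otimes(n_2-n_1)}$ rather than symmetrically padding both sides), but the argument is identical in substance.
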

\begin{proof}
  Soundness is obvious.

Now let $D_1$ and $D_2$ be two diagrams of the \zwh-Calculus such that $\interp{D_1}=\interp{D_2}$.
We can rewrite $D_1$ and $D_2$ as $D_i = d_i\otimes(\half)^{\otimes   n_i}$ for some integers $n_i$ and diagrams $d_i$ of the ZW-Calculus that do not use the $\half$ symbol.

From the new introduced rule, we get that  $\zwh\vdash d_i=D_i\otimes \left(\two\right)^{\otimes n_i}$. W.l.o.g.\ assume $n_1\leq n_2$. Then $\interp{d_1\otimes \left(\two\right)^{\otimes n_2-n_1}}=2^{n_2-n_1}\interp{d_1}=2^{n_2}\interp{D_1} = \interp{d_2}$. Since $d_1$ and $d_2$ are ZW-diagrams and have the same interpretation, thanks to the completeness of the ZW-Calculus, $\zwh\vdash d_1\otimes \left(\two\right)^{\otimes n_2-n_1}=d_2$, which means $\zwh\vdash D_1=D_2$ by applying $n_2$ times the new rule on both sides of the equality.
\end{proof}

We can also precisely characterise the expressive power of this new extension.

\begin{prop}%
\label{prop:zwh-universal}
\zwh-Diagrams are universal for matrices of $\mathbb{D}^{2^n}\times\mathbb{D}^{2^m}$:
\[\forall A\in \mathbb{D}^{2^n}\times\mathbb{D}^{2^m},~~\exists D\in \zwh,~~ \interp{D}=A\]
\end{prop}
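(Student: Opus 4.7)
The plan is to reduce universality over dyadic matrices to the known universality over integer matrices (Lemma~\ref{lem:zw-universal}), using the new scalar generator $\half$ to absorb the denominators.

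First, given a matrix $A \in \mathbb{D}^{2^n} \times \mathbb{D}^{2^m}$, I would observe that, since it has only finitely many entries and each lies in $\mathbb{Z}[1/2]$, there exists a uniform exponent $k \in \mathbb{N}$ (for instance, the maximum of the exponents of $2$ in the denominators of the entries of $A$) such that the matrix $A' := 2^k A$ has all its entries in $\mathbb{Z}$. So $A' \in \mathbb{Z}^{2^n} \times \mathbb{Z}^{2^m}$.

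Next, I would invoke Lemma~\ref{lem:zw-universal} to obtain a \zw-diagram $D' : n \to m$ such that $\interp{D'} = A'$. Since \zw is a sub-calculus of \zwh with the same standard interpretation on pure ZW-generators, $D'$ is also a \zwh-diagram with $\interp{D'} = A'$.

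Finally, I would set $D := D' \otimes (\half)^{\otimes k}$, which is a well-formed \zwh-diagram of type $n \to m$ since each $\half$ is a scalar (of type $0 \to 0$). By functoriality of $\interp{.}$ with respect to $\otimes$ and the definition $\interp{\half} = (\frac{1}{2})$, we get
\[ \interp{D} = \interp{D'} \otimes \left(\tfrac{1}{2}\right)^{\otimes k} = A' \cdot \tfrac{1}{2^k} = A. \]
This yields the result. There is no real obstacle here beyond noting the common-denominator step; everything else is a direct consequence of Lemma~\ref{lem:zw-universal} and the definition of $\interp{\half}$.
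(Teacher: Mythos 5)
Your proof is correct and follows exactly the same approach as the paper's: factor out a power of two to get an integer matrix, apply the universality of the ZW-calculus (Lemma~\ref{lem:zw-universal}), and tensor with the appropriate number of $\half$ scalars. The only difference is that you are a bit more explicit about the existence of the common exponent and use a fresh letter $k$ for it, which is actually cleaner than the paper's reuse of $n$.
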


\begin{proof}
Let $A\in \mathbb{D}^{2^n}\times\mathbb{D}^{2^m}$. There exists $n\in\mathbb{N}$ and $A'\in \mathbb{Z}^{2^n}\times\mathbb{Z}^{2^m}$ such that $A=\frac{1}{2^n}A'$. By Lemma~\ref{lem:zw-universal}, there exists a ZW-diagram $D_{A'}$ such that $\interp{D_{A'}}=A'$. $D_{A'}$ is also a \zwh-diagram. Hence, we can define $D_A:=D_{A'}\otimes(\half)^{\otimes n}$ and $\interp{D_A} = \interp{\half}^n\interp{D_{A'}} = \frac{1}{2^n}A'=A$.
\end{proof}

\subsection{For Complex Matrices}

The \zwc-Calculus~\cite{HNW} differs from the initial language in that the white nodes now bear a parameter: 
%\InputIfFileExists{#1.tikz}{}{
\input{./figures/Z-param.tikz}%} % chktex 27
 where $r$ is a complex number:
\[\interp{
%\InputIfFileExists{#1.tikz}{}{
\input{./figures/Z-param.tikz}%} % chktex 27
} := \annoted{2^m}{2^n}{\begin{pmatrix}
  1 & 0 & \cdots & 0 & 0 \\
  0 & 0 & \cdots & 0 & 0 \\
  \vdots & \vdots & \ddots & \vdots & \vdots \\
  0 & 0 & \cdots & 0 & 0 \\
  0 & 0 & \cdots & 0 & r
 \end{pmatrix}}\]
It is important to notice that the white node in the initial ZW-Calculus would now be represented by a white node with parameter $-1$. This more expressive language comes equipped with a \emph{complete} set of axioms, some of which can already be found in the presentation of the ZW-Calculus. The axiomatisation is given in Figure~\ref{fig:ZWC_rules}.

\begin{thmC}[\cite{HNW}]
For any two \zwc-diagrams $D_1$ and $D_2$:
\[\interp{D_1}=\interp{D_2} \iff \zwc\vdash D_1=D_2\]
\end{thmC}

\begin{figure*}[!htb]
\def\scale{0.75}
\centering
\begin{tabular}{|ccc|}
\hline
&&\\
\scalebox{\scale}{
%\InputIfFileExists{#1.tikz}{}{
\input{./figures/ZW-rule-0-no-braid.tikz}%} % chktex 27
} & $\quad$ & \scalebox{\scale}{
%\InputIfFileExists{#1.tikz}{}{
\input{./figures/ZW-rule-1.tikz}%} % chktex 27
} \\
&&\\
\scalebox{\scale}{
%\InputIfFileExists{#1.tikz}{}{
\input{./figures/ZW-rule-2-no-braid.tikz}%} % chktex 27
} && \scalebox{\scale}{
%\InputIfFileExists{#1.tikz}{}{
\input{./figures/ZW-rule-3.tikz}%} % chktex 27
} \\
&&\\
 \scalebox{\scale}{
%\InputIfFileExists{#1.tikz}{}{
\input{./figures/ZW-rule-4.tikz}%} % chktex 27
} && \scalebox{\scale}{
%\InputIfFileExists{#1.tikz}{}{
\input{./figures/ZW-rule-5-no-braid.tikz}%} % chktex 27
} \\
 &&\\
 \multicolumn{3}{|c|}{\scalebox{\scale}{
%\InputIfFileExists{#1.tikz}{}{
\input{./figures/ZW-rule-6-no-braid.tikz}%} % chktex 27
} $\qquad$ \scalebox{\scale}{
%\InputIfFileExists{#1.tikz}{}{
\input{./figures/ZW-rule-X-no-braid.tikz}%} % chktex 27
}}\\
 &&\\
 \multicolumn{3}{|c|}{
 \scalebox{\scale}{
%\InputIfFileExists{#1.tikz}{}{
\input{./figures/ZW-rule-7-no-braid.tikz}%} % chktex 27
} $\qquad$ \scalebox{\scale}{
%\InputIfFileExists{#1.tikz}{}{
\input{./figures/reidmeister-3.tikz}%} % chktex 27
}} \\
 &&\\
\hline
\end{tabular}
\caption{Set of rules for the \zwc-Calculus.}%
 \label{fig:ZWC_rules}
\end{figure*}

\section{Completeness for Clifford+T ZX-Calculus}%
\label{sec:cliff-t}

This section is devoted to proving Theorem~\ref{thm:cliff-T}, that is, the completeness of the ZX-Calculus for the \frag4 when equipped with the rules $\zxct$. The proof method is the following:
\begin{itemize}
\item Provide an interpretation $[.]_W$ from the \frag4 of the ZX-Calculus to the \zwh-Calculus, where the semantics undergoes a well-defined homomorphism with left inverse.
\item Provide an interpretation $[.]_X$ from the \zwh-Calculus to the \frag4 of the ZX-Calculus that respects the semantics.
\item Prove that all the axioms of the \zwh-Calculus can be derived using $\zxct$ after application of $[.]_X$.
\item Prove that any diagram $D$ of the \frag4 can be recovered from $\left[[D]_W\right]_X$.
\end{itemize}

\subsection{Encoding the Clifford+T ZX-Calculus into ZW\texorpdfstring{\!$\mathbf{_{\frac{1}{2}}}$}{(1/2)}}

Our goal here is to send ZX-diagrams in the \frag4 into the \zwh-Calculus. The main obstacle is that the former represent matrices over $\mathbb{D}\left[\piq{}\right]$, while the latter only represent matrices over $\mathbb{D}$. %So, we aim at encoding the

\subsubsection{From ${\mathbb{Q}[\piq{}]}$ to ${\mathbb{Q}}$}
All results used in the next two sections are standard in field
theory, see e.g.~\cite{roman}.
Let $R \subseteq \mathbb{C}$ be a (commutative) ring and $\alpha \in \mathbb{C}$.
By $R[\alpha]$ we denote the smallest subring of $\mathbb{C}$ that contains both $R$ and $\alpha$.

Of primary importance will be the ring $\mathbb{Q}[\piq{}]$, as all terms of the $\pi/4$ fragment of the ZX-Calculus have interpretations as matrices in this ring.
This is clear for all terms except possibly for $\sqrt{2}$, but $\sqrt{2} = \piq{} - \piq{3}$.

If $\alpha$ is algebraic, it is well known that $\mathbb{Q}[\alpha]$ is a field. When $F \subseteq F'$ are two fields, $F'$ can be seen as a vector space (actually an algebra) over $F$. Its dimension is denoted $[F':F]$ and we say that $F'$ is an extension of $F$ of degree $[F':F]$.
In the specific case of $\mathbb{Q}[\alpha]$, its dimension over $\mathbb{Q}$ is exactly the degree of the minimal polynomial over $\mathbb{Q}$ of $\alpha$.
Notice that the minimal polynomial of a $n-$th primitive root of the
unity is $\phi(n)$ where $\phi$ is Euler's totient function.

In our case, $\piq{}$ is a eighth primitive root of the unity, so that $\mathbb{Q}[\piq{}]$ is a vector space of dimension $4$, one basis being given by $1,\piq{},\piq2, \piq3$.
In particular:
\begin{prop}
  Every element of $\mathbb{Q}[\piq{}]$ can be written in a unique way
 $a + b \piq{} + c \piq{2} + d \piq{3}$ for some rationals numbers $a,b,c,d$.
\end{prop}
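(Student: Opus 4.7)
The plan is to recognise this as a direct application of the standard theory of simple algebraic extensions that the preceding paragraph just recalled. First, I would identify $\piq{} = e^{i\pi/4}$ as a primitive $8$-th root of unity. Its minimal polynomial over $\mathbb{Q}$ is therefore the $8$-th cyclotomic polynomial $\Phi_8(X) = X^4 + 1$, which has degree $\phi(8) = 4$ and is irreducible over $\mathbb{Q}$ (a classical fact).

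Next I would invoke the general result that for an algebraic element $\alpha$, the ring $\mathbb{Q}[\alpha]$ is a field whose dimension as a $\mathbb{Q}$-vector space equals the degree of the minimal polynomial of $\alpha$. Applying this with $\alpha = \piq{}$ gives $[\mathbb{Q}[\piq{}] : \mathbb{Q}] = 4$.

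It remains to check that $\{1, \piq{}, \piq{2}, \piq{3}\}$ is specifically a basis. For existence of the decomposition, any element of $\mathbb{Q}[\piq{}]$ is $P(\piq{})$ for some $P \in \mathbb{Q}[X]$; performing Euclidean division by $X^4+1$ gives $P = Q \cdot (X^4+1) + R$ with $\deg R < 4$, so $P(\piq{}) = R(\piq{})$, which is of the claimed form. For uniqueness, suppose $a + b\piq{} + c\piq{2} + d\piq{3} = 0$ with $(a,b,c,d) \in \mathbb{Q}^4$; this says $\piq{}$ is a root of a polynomial of degree $< 4$, contradicting the minimality of $X^4+1$ unless $a=b=c=d=0$. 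Equivalently, the four vectors span a $4$-dimensional space and are therefore linearly independent.

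There is no real obstacle here: the whole argument is a packaging of classical facts about cyclotomic polynomials and simple algebraic extensions. The only point warranting care is justifying that $X^4+1$ is the \emph{minimal} polynomial, i.e.\ that it is both annihilated by $\piq{}$ (direct calculation) and irreducible over $\mathbb{Q}$ (standard irreducibility of $\Phi_n$).
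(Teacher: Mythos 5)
Your proof is correct and follows exactly the route the paper takes: it presents this proposition as an immediate consequence of the preceding discussion on simple algebraic extensions, observing that $\piq{}$ is a primitive eighth root of unity with minimal polynomial $\Phi_8(X)=X^4+1$ of degree $\phi(8)=4$, so $\{1,\piq{},\piq2,\piq3\}$ is a $\mathbb{Q}$-basis of $\mathbb{Q}[\piq{}]$. You merely spell out the existence/uniqueness argument via Euclidean division and minimality, which the paper leaves implicit.
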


For $x \in \mathbb{Q}[\piq{}]$, let $\psi(x)$ be the function defined by  $\psi(x) = y \mapsto x y$. For each $x$, $\psi(x)$ is a linear map and therefore can be given by a $4\times 4$ matrix in the basis $\{\piq3,\piq2, \piq{}, 1\}$.
$\psi(1)$ is of course the identity matrix and
\[
\psi(\piq{}) = M = \begin{pmatrix}
  0 & 1 & 0 & 0 \\
  0 & 0 & 1 & 0 \\
  0 & 0 & 0 & 1 \\
  -1 & 0 & 0 & 0 \\
  \end{pmatrix}
\]
Notice that $M^t$ is the companion matrix of the polynomial $X^4+1$ which characterises $\piq{}$ as an algebraic number.

\begin{prop}%
  \label{prop:homomorphism}
  The map:
  \[  \psi: a+b\piq{} +c\piq2 + d\piq3 \mapsto a I_4 + b M + c M^2 +d M ^3\]
  is a  homomorphism of $\mathbb{Q}$-algebras from  $\mathbb{Q}[\piq{}]$ to $M_4(\mathbb{Q})$
\end{prop}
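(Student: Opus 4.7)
The plan is to view $\psi$ as the regular representation of the $\mathbb{Q}$-algebra $\mathbb{Q}[\piq{}]$ with respect to the ordered basis $\{\piq{3},\piq{2},\piq{},1\}$: for each $x\in\mathbb{Q}[\piq{}]$, take $\psi(x)$ to be the matrix of the $\mathbb{Q}$-linear map $L_x : y\mapsto xy$ in this basis. This viewpoint is built into the definition of $\psi$ appearing in the paragraph preceding the statement, so the explicit formula $\psi(a+b\piq{}+c\piq2+d\piq3)=aI_4+bM+cM^2+dM^3$ needs only to be reconciled with this intrinsic description at the end.

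Under the regular-representation viewpoint, the homomorphism property is essentially free. The identities $L_{x+y}=L_x+L_y$, $L_{xy}=L_x\circ L_y$ and $L_1=\mathrm{id}$ hold by the ring axioms of $\mathbb{Q}[\piq{}]$, and translate under the chosen basis to $\psi(x+y)=\psi(x)+\psi(y)$, $\psi(xy)=\psi(x)\psi(y)$ and $\psi(1)=I_4$. Scalar linearity in $\mathbb{Q}$ comes from the same observation applied to $L_{\lambda x}=\lambda L_x$. So $\psi$ is a $\mathbb{Q}$-algebra homomorphism as soon as one knows the regular representation is one, which is standard.

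It then remains to identify this regular representation with the explicit polynomial expression in $M$. By $\mathbb{Q}$-linearity in $x$ it suffices to check the formula on the basis elements $1,\piq{},\piq2,\piq3$. For $x=1$ the formula gives $I_4$, matching $\psi(1)$. For $x=\piq{}$ a direct calculation of $L_{\piq{}}$ on the basis yields $\piq{}\cdot\piq3=-1$, $\piq{}\cdot\piq2=\piq3$, $\piq{}\cdot\piq{}=\piq2$, $\piq{}\cdot 1=\piq{}$, and reading off the resulting coordinates in the ordered basis $\{\piq3,\piq2,\piq{},1\}$ reproduces exactly the displayed matrix $M$. Finally, applying the homomorphism property twice gives $\psi(\piq{k})=\psi(\piq{})^k=M^k$ for $k=2,3$, completing the match.

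The only step that requires any computation is the verification $\psi(\piq{})=M$, where the chosen ordering of the basis is precisely what makes $M$ appear as the companion-type matrix of $X^4+1$ (equivalently, $M^4=-I_4$, reflecting the minimal polynomial of $\piq{}$). No real obstacle arises; the proof is essentially a bookkeeping check once the regular-representation perspective is adopted.
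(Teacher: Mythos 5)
Your proof is correct and matches the paper's implicit argument: the paper defines $\psi(x)$ as the matrix of left-multiplication by $x$ in the ordered basis $\{\piq3,\piq2,\piq{},1\}$, which automatically yields a $\mathbb{Q}$-algebra homomorphism, and then the computation $\psi(\piq{})=M$ together with the homomorphism property gives the displayed formula. Your write-up simply makes the regular-representation reasoning explicit where the paper leaves it implicit.
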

This homomorphism has a left-inverse. Indeed, let
\[
\omega = \begin{pmatrix}
  1 \\
  \piq{} \\
  \piq2 \\
  \piq3 \\
\end{pmatrix}
\]
Then $\psi(x)\omega = x\omega$ for all $x\in \mathbb{Q}[\piq{}]$, so the left inverse of $\psi$ is given by $x\mapsto x\circ\omega$.

With this morphism, we can see elements of $\mathbb{Q}[\piq{}]$ as matrices over $\mathbb{Q}$.

Of course we can do the same with matrices over $\mathbb{Q}[\piq{}]$.

\begin{defi}
  Define:
  \[  \psi: A+B\piq{} +C\piq2 + D\piq3 \mapsto A\otimes  I_4 + B\otimes M + C\otimes M^2 +D\otimes M ^3\]
  $\psi$ is injective and maps a matrix over $\mathbb{Q}[\piq{}]$ of dimension $n\times m$ to a matrix over $\mathbb{Q}$ of dimension $4n\times 4m$.
\end{defi}
We use the same notation  $\psi$ as before, as the definitions are equivalent for one-by-one matrices (i.e.\ scalars).

It is easy to see that Proposition~\ref{prop:homomorphism} holds for the extended $\psi$ in the sense that $\psi(qA) = q\psi(A)$ for $q$ rational, $\psi(A+B) = \psi(A) + \psi(B)$, $\psi(AB) = \psi(A)\psi(B)$ whenever this makes sense.

Notice however that $\psi(A\otimes B)$ is not $\psi(A) \otimes \psi(B)$.

As before, $\psi$ has a left inverse, as evidenced by:
\begin{prop}%
\label{prop:left-inverse}
  For all matrices $X$ of dimension $n \times m$, $ \psi(X)(I_m \otimes \omega) = X \otimes \omega$
\end{prop}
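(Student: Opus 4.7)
The plan is to lift the scalar identity $\psi(x)\omega = x\omega$, which was just established for $x \in \mathbb{Q}[\piq{}]$, to the matrix setting by combining bilinearity with the mixed-product property $(P \otimes Q)(R \otimes S) = (PR) \otimes (QS)$ of the Kronecker product. First I would decompose $X = A_0 + A_1 \piq{} + A_2 \piq{2} + A_3 \piq{3}$ with each $A_k$ an $n \times m$ matrix over $\mathbb{Q}$, so that by definition $\psi(X) = \sum_{k=0}^{3} A_k \otimes M^k$, and similarly $X \otimes \omega = \sum_{k=0}^{3} A_k \otimes (\piq{k}\omega)$, obtained by pulling the scalar $\piq{k}$ into the right factor of the tensor product.

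Next I would apply the mixed-product identity term by term with $P = A_k$, $Q = M^k$, $R = I_m$, $S = \omega$, which gives $(A_k \otimes M^k)(I_m \otimes \omega) = A_k \otimes (M^k \omega)$. The scalar version of the statement, applied to $x = \piq{k}$, says precisely that $M^k \omega = \psi(\piq{k})\omega = \piq{k}\omega$. Substituting this into the sum yields
\[\psi(X)(I_m \otimes \omega) = \sum_{k=0}^{3} A_k \otimes (M^k \omega) = \sum_{k=0}^{3} A_k \otimes (\piq{k}\omega) = X \otimes \omega,\]
which is the desired equality.

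I expect no real obstacle here beyond keeping track of the dimensions and the blockwise reading of the Kronecker products: $\psi(X)$ is $4n \times 4m$, the operand $I_m \otimes \omega$ is $4m \times m$, so the product lives in $\mathbb{Q}^{4n \times m}$, which is indeed the same shape as $X \otimes \omega$. Once the scalar identity and the mixed-product property are in hand the argument is a one-line reduction, so I would present it compactly rather than via an entrywise computation.
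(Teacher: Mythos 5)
Your proof is correct, and it follows exactly the route the paper implicitly leaves to the reader: the paper states this proposition without proof (``as before, $\psi$ has a left inverse, as evidenced by...''), the needed ingredients being the unique decomposition $X = \sum_k A_k \piq{k}$, the definition $\psi(X) = \sum_k A_k \otimes M^k$, the mixed-product rule for the Kronecker product, and the previously established scalar identity $\psi(x)\omega = x\omega$ specialised to $x = \piq{k}$. Your dimension check ($4n\times 4m$ times $4m\times m$ yielding $4n\times m$, matching $X\otimes\omega$) is also the right sanity check; there is no gap.
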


While it is true that all coefficients of the standard interpretation of the $\pi/4$ fragment are in $\mathbb{Q}[\piq{}]$, we can be more precise.

Let $\mathbb{D} = \mathbb{Z}[1/2]$ be the set of all dyadic rational numbers, i.e.\ rational numbers of the form $p/2^n$.

It is easy to see that any element of $\mathbb{D}[\piq{}]$ can be written in a unique way
$a + b \piq{} + c \piq2 + d \piq3$ for some dyadic rational numbers $a,b,c,d$. (It is NOT a consequence of the similar statement for $\mathbb{Q}$. We have to use here the additional property that $\piq{}$ is not only an algebraic number, but also an algebraic \emph{integer}).

Then it is clear that actually all coefficients of the $\pi/4$ fragment of the ZX-Calculus are in $\mathbb{D}[\piq{}]$.
As $\mathbb{D} \subset \mathbb{Q}$ all we said before still holds, and we actually obtain with $\psi$ a map from matrices over $\mathbb{D}[\piq{}]$ to matrices over $\mathbb{D}$.

We have now provided a way to see a matrix over $\mathbb{D}[\piq{}]$ as a matrix over $\mathbb{D}$. This paves the way to an interpretation from $\zxct$ to $\zwh$.

\begin{rem}
After all the hassle of the current section, and of the one to come, one can legitimately ask: Why not directly use a version of the ZW-Calculus for the ring $\mathbb{D}[\piq{}]$ ($\zw_{\mathbb{D}[\piq{}]}$), since the ZW-Calculus can be adapted to any ring? Indeed, we could have used this version of the ZW-Calculus. Basically, doing so, the interpretation $[.]_W$ becomes easier, while $[.]_X$ becomes a bit less straightforward.
However, our aim is to provide an axiomatisation that is the simplest possible for the Clifford+T ZX-Calculus. Hence, since the hard part is really to prove that all the axioms of the \zwh are derivable in ZX, it is natural to begin with the simplest ZW axiomatisation possible, hence \zwh and not $\zw_{\mathbb{D}[\piq{}]}$.
\end{rem}

\subsubsection{The Interpretation ${[.]_W}$}

As announced, the interpretation $[.]_W$ maps any ZX-diagram of the \frag4 to a \zwh-diagram, while the semantics undergoes the homomorphism $\psi$ defined above. It is recursively defined as:\\
\begin{minipage}{\columnwidth}
\titlerule{$[.]_W$}
%\begin{multicols}{3}
\[ 
%\InputIfFileExists{#1.tikz}{}{
\input{./figures/empty-diagram.tikz}%} % chktex 27
~~\mapsto~~ 
%\InputIfFileExists{#1.tikz}{}{
%} % chktex 27
\quad
%\InputIfFileExists{#1.tikz}{}{
%} % chktex 27
 \qquad\quad
 
%\InputIfFileExists{#1.tikz}{}{
%} % chktex 27
~~\mapsto~~ 
%\InputIfFileExists{#1.tikz}{}{
%} % chktex 27
\quad
%\InputIfFileExists{#1.tikz}{}{
%} % chktex 27
\quad
%\InputIfFileExists{#1.tikz}{}{
%} % chktex 27
 \qquad\quad
 
%\InputIfFileExists{#1.tikz}{}{
%} % chktex 27
~~\mapsto~~ 
%\InputIfFileExists{#1.tikz}{}{
%} % chktex 27
~~
%\InputIfFileExists{#1.tikz}{}{
%} % chktex 27
\quad
%\InputIfFileExists{#1.tikz}{}{
%} % chktex 27
\qquad\quad
 
%\InputIfFileExists{#1.tikz}{}{
%} % chktex 27
~~\mapsto~~ 
%\InputIfFileExists{#1.tikz}{}{
%} % chktex 27
~~\raisebox{0.5em}{
%\InputIfFileExists{#1.tikz}{}{
%} % chktex 27

%\InputIfFileExists{#1.tikz}{}{
%} % chktex 27
} \]
\end{minipage}
\[
%\InputIfFileExists{#1.tikz}{}{
\input{./figures/crossing.tikz}%} % chktex 27
~~\mapsto~~ 
%\InputIfFileExists{#1.tikz}{}{
\input{./figures/crossing.tikz}%} % chktex 27
\quad
%\InputIfFileExists{#1.tikz}{}{
%} % chktex 27
\quad
%\InputIfFileExists{#1.tikz}{}{
%} % chktex 27
\qquad\qquad

%\InputIfFileExists{#1.tikz}{}{
%} % chktex 27
~~\mapsto~~
%\InputIfFileExists{#1.tikz}{}{
\input{./figures/hadamard-interpretation-2-no-braid.tikz}%} % chktex 27
\qquad\qquad

%\InputIfFileExists{#1.tikz}{}{
\begin{tikzpicture}
	\begin{pgfonlayer}{nodelayer}
		\node [style=gn] (0) at (0, -0) {$\frac{\pi}{4}$};
		\node [style=none] (1) at (0, 0.5000001) {};
		\node [style=none] (2) at (0, -0.5000001) {};
	\end{pgfonlayer}
	\begin{pgfonlayer}{edgelayer}
		\draw (1.center) to (2.center);
	\end{pgfonlayer}
\end{tikzpicture}%} % chktex 27
~~\mapsto~~
%\InputIfFileExists{#1.tikz}{}{
\input{./figures/gn-pi_4-interpretation-2-no-braid.tikz}%} % chktex 27
\]
%\end{multicols}

\begin{align*}
\forall D_1&:n\to n',~\forall D_2:m\to m':\\
& D_1\circ D_2 \mapsto [D_1]_W\circ [D_2]_W\qquad(\text{if } m'=n)\\
& D_1\otimes D_2 \mapsto\left( \mathbb{I}^{\otimes n'}\otimes [D_2]_W\right)\circ\left(\nmcrossI{m}{n'}\right)\circ \left(\mathbb{I}^{\otimes m}\otimes [D_1]_W\right)\circ\left(\nmcrossI{n}{m}\right)
\end{align*}

\[
%\InputIfFileExists{#1.tikz}{}{
\input{./figures/gn-kpi_4.tikz}%} % chktex 27
\mapsto \left(
%\InputIfFileExists{#1.tikz}{}{
\input{./figures/gn-0-1-m-interpretation.tikz}%} % chktex 27
\right)\circ\left(\left[
%\InputIfFileExists{#1.tikz}{}{
%} % chktex 27
\right]_W\right)^k\circ \left(
%\InputIfFileExists{#1.tikz}{}{
\input{./figures/gn-0-n-1-interpretation.tikz}%} % chktex 27
\right)\]
\begin{minipage}{\columnwidth}
\[
%\InputIfFileExists{#1.tikz}{}{
\input{./figures/rn-kpi_4.tikz}%} % chktex 27
\mapsto \left[\left(
%\InputIfFileExists{#1.tikz}{}{
%} % chktex 27
\right)^{\otimes m}\right]_W\circ \left[
%\InputIfFileExists{#1.tikz}{}{
\input{./figures/gn-kpi_4.tikz}%} % chktex 27
\right]_W \circ \left[\left(
%\InputIfFileExists{#1.tikz}{}{
%} % chktex 27
\right)^{\otimes n}\right]_W\]
\vspace{0.2em}
\end{minipage}
\rule{\columnwidth}{0.5pt}

\noindent The interpretation is made so that the two wires on the right act as ``control wires''. They are the ones that bear the encoding $\psi$, and as Proposition~\ref{prop:left-inverse-ZX}, they are the ones the ``decoder'' will be applied to. In particular, the tensor product can be understood graphically as:
%The interpretation of the spacial composition $\otimes$ might seem a tad cryptical. It is in fact a way of putting ``side-by-side'' the interpretations of $D_1$ and $D_2$, while at the same time making them share the two wires on the right. We can see it as:
\[ D_1\otimes D_2 \mapsto 
%\InputIfFileExists{#1.tikz}{}{
\input{./figures/interp-of-tensor-product-visualised-2.tikz}%} % chktex 27
 \]
%The two wires on the right act as ``control wires''. They are the ones that bear the encoding $\psi$, and as Proposition~\ref{prop:left-inverse-ZX}, they are the ones the ``decoder'' will be applied to.
It order for the tensor product to make sense, $D_1$ and $D_2$ should be able to commute on the control wires. This property is provided by the completeness of the ZW-Calculus, since it is semantically true.

One can check that $\interp{\left[
%\InputIfFileExists{#1.tikz}{}{
%} % chktex 27
\right]_W} = \psi\left( \interp{
%\InputIfFileExists{#1.tikz}{}{
%} % chktex 27
} \right) = \frac{1}{2}\begin{pmatrix}1&1\\1&-1\end{pmatrix}\otimes (M-M^3)$ and $\interp{\left[
%\InputIfFileExists{#1.tikz}{}{
%} % chktex 27
\right]_W} = \psi\left( \interp{
%\InputIfFileExists{#1.tikz}{}{
%} % chktex 27
} \right) = \begin{pmatrix}I_4 & 0 \\ 0 & M\end{pmatrix}$. More generally:

    \begin{prop}%
      \label{prop:pxwpsi}
      Let $D$ be a diagram of the Clifford+T ZX-Calculus. Then
      \[\interp{[D]_W}=\psi(\interp{D})\]

      In particular, if $\interp{D_1} = \interp{D_2}$ then
      $\interp{[D_1]_W} = \interp{[D_2]_W}$
    \end{prop}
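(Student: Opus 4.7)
The natural approach is a structural induction on the ZX-diagram $D$ in the Clifford+T fragment. Three families of cases must be addressed: the generators, the two compositions (sequential and tensor), and the general Z/X-spiders $R_Z^{(n,m)}(k\pi/4)$ and $R_X^{(n,m)}(k\pi/4)$, which the definition of $[.]_W$ reduces to iterated compositions of the generators. The last clause of the proposition is immediate from the main equation together with the fact that $\psi$ depends only on its input.

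For the generators, I would proceed by direct computation. The topological generators (empty, identity, cup, cap, swap) all have interpretations that are permutation matrices with entries in $\{0,1\}$, and the definition of $[.]_W$ just appends two identity wires that play the role of the control register carrying the encoding: since $\psi(P) = P \otimes I_4$ for any such $P$, the equality $\interp{[e]_W} = \psi(\interp{e})$ is checked by inspection. The two nontrivial base cases are the Hadamard and the $\pi/4$ Z-rotation. For the Hadamard we expand the interpretation and use the key identity $\frac{1}{\sqrt2} = \piq{} - \piq{3}$, so that $\interp{[H]_W} = \frac{1}{2}\begin{pmatrix}1 & 1 \\ 1 & -1\end{pmatrix}\otimes(M - M^3)$, which matches $\psi$ applied to the Hadamard matrix. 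For the $\pi/4$ rotation, the definition gives a block matrix whose nontrivial block is $M$, again matching $\psi$.

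Sequential composition is clean: by the homomorphism property of $\psi$ (Proposition~\ref{prop:homomorphism}) and functoriality of $\interp{.}$,
\[
\interp{[D_1 \circ D_2]_W} = \interp{[D_1]_W}\,\interp{[D_2]_W} = \psi(\interp{D_1})\,\psi(\interp{D_2}) = \psi(\interp{D_1 \circ D_2}).
\]
The hard step, and the main obstacle, is the tensor product. The complication is that $\psi$ does not preserve $\otimes$: the encoding appends a single 4-dimensional register, which must be shared rather than duplicated when two diagrams are placed side by side. The construction in the definition of $[D_1 \otimes D_2]_W$ uses crossings to route the control wires so that $[D_1]_W$ and $[D_2]_W$ both act on the same control register, sequentially rather than in parallel. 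To verify the equation, I would factor $A \otimes B = (A \otimes I_{m'})(I_n \otimes B)$ on the ZX side and carefully track, on the ZW side, the action of the crossings $\nmcrossI{m}{n'}$ and $\nmcrossI{n}{m}$ on the control wires, showing that what remains after this shuffle is exactly $\psi(A) (I \otimes \psi(B))$ after reindexing, which by a routine computation equals $\psi(A \otimes B)$.

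Finally, the general spiders are defined via sequential composition of the $\pi/4$ spider with itself and with the $0$-angle Z-spiders realising the copy/merge behaviour, so their cases reduce to sequential composition plus base cases already handled; the X-spider case reduces to the Z-spider case by the definition via Hadamard conjugation on each leg.
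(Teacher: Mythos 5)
Your proposal follows the same route as the paper: structural induction, with the composition case handled by the homomorphism property of $\psi$, and the tensor case singled out as the only subtle point because $\psi$ is not a monoidal homomorphism. The paper makes your ``routine computation'' explicit by writing $\interp{D_i}$ in the unique dyadic form $\sum_{k=0}^3 A_k\,e^{ik\pi/4}$ and observing that the crossings route both factors onto the shared four-dimensional control register, giving $\bigl(\sum_l I\otimes B_l\otimes M^l\bigr)\circ\bigl(\sum_k A_k\otimes I\otimes M^k\bigr)=\sum_{k,l}A_k\otimes B_l\otimes M^{k+l}=\psi\bigl(\interp{D_1\otimes D_2}\bigr)$; your shorthand $\psi(A)(I\otimes\psi(B))$ does not literally type-check, since the embeddings of $\psi(A)$ and $\psi(B)$ into the three-register space $(\mathrm{sys}_A,\mathrm{sys}_B,\mathrm{control})$ require interleaving rather than a simple tensor with $I$, but your qualifier ``after reindexing'' shows you have the right picture and the correct computation is the one just described. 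This is the same argument, just stated more informally.
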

    The proof is a straightforward induction using the fact
    that $\psi$ is an homomorphism. Slight care has to be taken to
    treat the case of $D_1 \otimes D_2$:\\
Suppose $\interp{D_1}=\sum\limits_{k=0}^3A_k e^{i\frac{k\pi}{4}}$ and $\interp{D_2}=\sum\limits_{k=0}^3B_k e^{i\frac{k\pi}{4}}$ are their \emph{unique} decomposition, and that $\interp{[D_i]_W}=\psi(\interp{D_i})$. Then:
\begin{align*}
\interp{[D_1\otimes D_2]_W} &= \left(I\otimes \psi(\interp{D_1})\right)\circ \interp{\scalebox{0.7}{\nmcrossI{}{}~}}\circ \left(I\otimes \sum\limits_{k=0}^3A_k\otimes M^k\right)\circ \interp{\scalebox{0.7}{\nmcrossI{}{}~}}\\
&= \left(\sum\limits_{l=0}^3I\otimes B_l \otimes M^l\right)\circ \left(\sum\limits_{k=0}^3A_k \otimes I\otimes M^k\right)
= \sum_{k,l}A_k\otimes B_l \otimes M^{k+l}\\
&= \psi\left(\sum_{k,l}(A_k\otimes B_l) e^{i\frac{(k+l)\pi}{4}}\right) = \psi(\interp{D_1\otimes D_2})
\end{align*}

\subsection{From ZW\texorpdfstring{\!$_\frac{1}{2}$}{(1/2)} to Clifford+T ZX-Diagrams}%
\label{subsec:zwh-to-clifft}

We define here an interpretation $[.]_X$ that transforms any diagram of the \zwh-Calculus into a Clifford+T ZX-diagram, which is easy to do since $\mathbb{D}\subset\mathbb{D}[\piq{}]$:

\noindent
\begin{minipage}{\columnwidth}
\titlerule{$[.]_X$}
%$\hfill \tikzfig{empty-diagram}~~\mapsto~~\tikzfig{empty-diagram}\hfill \half~~\mapsto~~\tikzfig{half-ZX}\hfill$\\
%\vspace{-2em}
%\begin{multicols}{3}
\[ 
%\InputIfFileExists{#1.tikz}{}{
\input{./figures/empty-diagram.tikz}%} % chktex 27
 \quad\mapsto\quad 
%\InputIfFileExists{#1.tikz}{}{
\input{./figures/empty-diagram.tikz}%} % chktex 27
\qquad\qquad

%\InputIfFileExists{#1.tikz}{}{
%} % chktex 27
 \quad\mapsto\quad 
%\InputIfFileExists{#1.tikz}{}{
%} % chktex 27
\qquad\qquad

%\InputIfFileExists{#1.tikz}{}{
%} % chktex 27
 \quad \raisebox{0.3em}{$\mapsto$} \quad 
%\InputIfFileExists{#1.tikz}{}{
%} % chktex 27
\qquad\qquad

%\InputIfFileExists{#1.tikz}{}{
%} % chktex 27
 \quad\raisebox{0.3em}{$\mapsto$}\quad 
%\InputIfFileExists{#1.tikz}{}{
%} % chktex 27
\]

\end{minipage}
\[
%\InputIfFileExists{#1.tikz}{}{
\input{./figures/ZW-to-ZX-braid-no-braid.tikz}%} % chktex 27
\qquad\qquad

%\InputIfFileExists{#1.tikz}{}{
\input{./figures/ZW-to-ZX-cross-no-braid.tikz}%} % chktex 27
\qquad\qquad
 \half \quad\mapsto\quad 
%\InputIfFileExists{#1.tikz}{}{
\input{./figures/half-ZX.tikz}%} % chktex 27
 \qquad\qquad

%\InputIfFileExists{#1.tikz}{}{
\input{./figures/ZW-to-ZX-white-dot-1-1.tikz}%} % chktex 27
\]
\[
%\InputIfFileExists{#1.tikz}{}{
\input{./figures/ZW-to-ZX-white-dot-2-1.tikz}%} % chktex 27
\qquad\qquad

%\InputIfFileExists{#1.tikz}{}{
\input{./figures/ZW-to-ZX-dot-1-1.tikz}%} % chktex 27
\qquad\qquad

%\InputIfFileExists{#1.tikz}{}{
\input{./figures/ZW-to-ZX-dot-1-2.tikz}%} % chktex 27
\]
%\end{multicols}
%\vspace{-3em}

\noindent\begin{minipage}{\columnwidth}
\[D_1\circ D_2\mapsto [D_1]_X\circ[D_2]_X\qquad D_1\otimes D_2\mapsto [D_1]_X\otimes [D_2]_X\]
\vspace{0.2em}
\end{minipage}
\rule{\columnwidth}{0.5pt}

\begin{prop}%
  \label{prop:interpwx}
  Let $D$ be a diagram of the \zwh calculus.
  Then $\interp{[D]_X}=\interp{D}$
\end{prop}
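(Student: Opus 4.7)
The plan is a direct structural induction on the \zwh-diagram $D$. There are two inductive cases (one for each composition) and one base case for each generator in $T_{1/2}$.

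\textbf{Inductive step.} By definition, $[\,\cdot\,]_X$ commutes with both compositions: $[D_1\circ D_2]_X=[D_1]_X\circ[D_2]_X$ and $[D_1\otimes D_2]_X=[D_1]_X\otimes[D_2]_X$. The standard interpretations of both the \zwh-calculus and the ZX-calculus are monoidal functors (i.e.\ distribute over $\circ$ and $\otimes$). Hence for $\star\in\{\circ,\otimes\}$ the induction hypothesis propagates immediately:
\[\interp{[D_1\star D_2]_X}\;=\;\interp{[D_1]_X}\star\interp{[D_2]_X}\;=\;\interp{D_1}\star\interp{D_2}\;=\;\interp{D_1\star D_2}.\]
So both inductive cases come for free.

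\textbf{Base cases.} It remains to check $\interp{[g]_X}=\interp{g}$ for each generator $g\in T_{1/2}$. For the purely topological generators -- the empty diagram, the identity wire, the cup, the cap, and the symmetric crossing -- the interpretation $[\,\cdot\,]_X$ sends each to its ZX namesake, and the two tables of semantics listed in Sections~\ref{sec:zx} and~\ref{sec:zw} literally coincide on these, so nothing has to be done. For the remaining generators -- the scalar \half, the signed ``zw-cross'', the two white $Z$-nodes, and the two black $W$-nodes -- each case reduces to a finite matrix computation: evaluate the ZX standard interpretation on the small Clifford$+T$ diagram produced by $[\,\cdot\,]_X$, and compare with the fixed matrix prescribed for the corresponding ZW generator.

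\textbf{Main obstacle.} Conceptually there is none: the statement is purely semantic and no axiom of \zxct{} is invoked anywhere in the proof. The only cases worth more than a glance are the $W$-nodes, whose prescribed matrices are not of spider shape. The translation $[\,\cdot\,]_X$ has been designed precisely so that when one expands the caps, Hadamards, and red/green spiders, the resulting $2\times 2$ (respectively $4\times 2$) matrix agrees on the nose with the target. The signed crossing and the $Z$-nodes reduce similarly to short computations (a $\pi$ phase or a green $\pi$-spider), and the scalar \half{} is handled by one explicit evaluation of a tiny Clifford$+T$ diagram. The entire argument therefore consists of a uniform functorial induction with a handful of bounded-size matrix verifications at the leaves.
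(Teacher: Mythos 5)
Your proposal is correct and is exactly the proof the paper has in mind: the paper simply states ``The proof is by induction on $D$,'' and your argument spells out that induction -- functoriality of both $[\,\cdot\,]_X$ and $\interp{\,\cdot\,}$ handles the compositional cases, while each generator in $T_{1/2}$ is checked by a direct matrix computation.
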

The proof is by induction on $D$.

This interpretation $[.]_X$ from the ZW-Calculus to the ZX-Calculus is pretty straightforward, except for the three-legged black node. Indeed, the two languages express two different kinds of interactions, but at the same time they share a family of generators (the ZW white node and the ZX green node are essentially the same).
This is where some syntactic sugar can come in handy.
\begin{defi}%
\label{def:triangle}
We define the ``triangle node'' as:
\[
%\InputIfFileExists{#1.tikz}{}{
\input{./figures/ug-decomp.tikz}%} % chktex 27
\]
\end{defi}
One can check that $\interp{~
%\InputIfFileExists{#1.tikz}{}{
\begin{tikzpicture}
	\begin{pgfonlayer}{nodelayer}
		\node [style=ug] (0) at (0, -0) {};
		\node [style=none] (1) at (0, 0.5000001) {};
		\node [style=none] (2) at (0, -0.5000001) {};
		\node [style=none] (3) at (0, -0.7499999) {};
		\node [style=none] (4) at (0, 0.7500001) {};
	\end{pgfonlayer}
	\begin{pgfonlayer}{edgelayer}
		\draw (1.center) to (2.center);
	\end{pgfonlayer}
\end{tikzpicture}%} % chktex 27
~} = \begin{pmatrix}1&1\\0&1\end{pmatrix}$. Then the interpretation of the three-legged black dot is simplified using \s:
\[
%\InputIfFileExists{#1.tikz}{}{
\input{./figures/ZW-to-ZX-dot-1-2-simplified.tikz}%} % chktex 27
\]
%as is the rule \bw (see Lemma~\ref{lem:not-ug-is-symmetrical}):
%%\[\tikzfig{ug-and-not-W-commute-simplified}\quad \textnormal{(C2')}\]
%\[\tikzfig{not-ug-is-symmetrical}\]

This shortcut will be very useful in the technical proof of the completeness of the language for Clifford+T with the set of rules of $\zxct$.

\begin{prop}%
\label{prop:rules-preserved}
The interpretation $[.]_X$ preserves all the rules of the \zwh-Calculus:
\[ \zwh\vdash D_1=D_2 \quad\implies\quad \zxct\vdash [D_1]_X= [D_2]_X \]
\end{prop}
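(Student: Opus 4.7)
The plan is to proceed by case analysis on the rules of the \zwh-Calculus. Because $[.]_X$ is defined inductively and respects both compositions, it suffices, for each axiom of \zwh, to translate both sides via $[.]_X$ and to exhibit a derivation of the resulting equality in $\zxct$. The axioms split naturally into three groups: the ``pure white-node'' rules of Figure~\ref{fig:ZW_rules}, the rules involving the three-legged black node, and the additional dyadic rule of \zwh.

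Before attacking the case analysis, I would build a small library of derived equalities concerning the triangle node of Definition~\ref{def:triangle}: how it is conjugated by Hadamards, how it commutes (or fails to commute) with $R_Z(\alpha)$ and $R_X(\alpha)$ on each of its legs, how two triangles compose in sequence or in parallel, and its behaviour under cup/cap transposition. Many of these identities are folklore in the Clifford+T ZX-Calculus, and they encode most of the algebraic content that the W-node packages in \zw. In particular, the ZW rules expressing ``associativity'' and ``commutativity'' of the W-node will translate, on the ZX side, into analogous identities about nested triangles; I would establish these first using \s, \bialg, \picom together with the Euler decomposition axiom.

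For the rules involving only white nodes, crossings and cups/caps, the translations land in the green-spider fragment of ZX, and \s together with the basic Clifford axioms (\bialg, \picom, \h) handles them almost directly, since the ZX green spider and the ZW white node are essentially the same generator. The more delicate cases are the rules mixing the black node with white nodes, as well as the rules governing the diamond crossing node: after applying $[.]_X$ they become non-trivial ZX-equalities involving several triangles, and I expect \supp, \e, \ccom and especially \bw to be crucial. This is the main obstacle of the proof, because the combinatorics of triangles under ZX-rewriting is substantially less transparent than that of spiders, and several equalities may require long sequences of rewrites going through auxiliary diagrams that do not obviously resemble either side.

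Finally, the dyadic rule of \zwh (binding $\half$ and \two) becomes, under $[.]_X$, an equality between two scalar Clifford+T diagrams representing the same complex number. Since $\zxct$ already contains enough machinery to normalise scalar diagrams (via \iv, \zo, repeated applications of \s and the Euler decomposition), this last case reduces to a finite scalar computation. Once every \zwh axiom has been translated and verified in $\zxct$, the proposition follows by induction on the derivation in \zwh, using the fact that $[.]_X$ commutes with the two compositions and therefore preserves the congruence generated by the rules.
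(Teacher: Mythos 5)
Your plan matches the paper's proof essentially line for line: proceed by case analysis over the \zwh axioms (compositionality of $[.]_X$ reducing the task to the axiom schemas), build a preparatory library of derived identities about the triangle node driven by \s, \bialg, \picom, \hd and the four Clifford+T rules, and handle the extra dyadic rule as a scalar computation. One error worth flagging: you cite \iv and \zo as available in $\zxct$ for normalising scalars, but the caption of Figure~\ref{fig:ZX-CliffordT-rules} explicitly removes both rules from $\zxct$, so they cannot be invoked. The paper instead first derives the diagrammatic form of $\sqrt{2}\cdot\sqrt{2}=2$ (Lemma~\ref{lem:2-is-sqrt2-squared}) from \id, \cp and \h alone and uses it to recover Clifford completeness under the remaining axioms; your outline goes through once this substitution is made, but as written it leans on rules that $\zxct$ does not contain.
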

The proof, quite technical, is in Appendix at Section~\ref{prf:rules-preserved}.

\subsection{Composing the Interpretations}

To finish the proof it remains to compose the two interpretations and check that we can recover the initial diagram after the decoding part (Proposition~\ref{prop:left-inverse}), where:
\[\omega = \begin{pmatrix}1\\\piq{}\\ \piq2\\ \piq3\end{pmatrix} =\interp{\scalebox{0.8}{
%\InputIfFileExists{#1.tikz}{}{
\input{./figures/theta.tikz}%} % chktex 27
}}\qquad\text{and}\qquad
e_1=\begin{pmatrix}1&0&0&0\end{pmatrix} = \interp{\scalebox{0.8}{
%\InputIfFileExists{#1.tikz}{}{
\input{./figures/projector-1-0-0-0.tikz}%} % chktex 27
}}\]

\begin{prop}%
\label{prop:left-inverse-ZX}
We can recover any Clifford+T ZX-diagram $D$ from its image under the composition of the two interpretations:
\begin{align*}
\zxct\vdash D =\left(\hspace{-0.2em}\scalebox{0.8}{
%\InputIfFileExists{#1.tikz}{}{
\input{./figures/bottom-composition.tikz}%} % chktex 27
}\hspace{-0.3em}\right)\circ\left[[D]_W\right]_X\circ\left(\hspace{-0.2em}\scalebox{0.8}{
%\InputIfFileExists{#1.tikz}{}{
\input{./figures/top-composition.tikz}%} % chktex 27
}\hspace{-0.3em}\right)
\end{align*}
\end{prop}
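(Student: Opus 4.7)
The plan is to reduce the claim to an auxiliary identity better suited to structural induction, namely
\[\zxct \vdash \bigl[[D]_W\bigr]_X \circ (I_n \otimes \omega_{\text{ZX}}) = D \otimes \omega_{\text{ZX}},\]
where $\omega_{\text{ZX}}$ is the ZX-diagram representing $\omega$ and $D\colon n \to m$. Once this is in hand, the proposition follows by post-composing with $I_m \otimes e_1$ and checking that $\zxct \vdash e_1 \circ \omega = 1$ as a scalar, a direct Clifford+T scalar computation. Propositions~\ref{prop:pxwpsi}, \ref{prop:interpwx} and \ref{prop:left-inverse} together already ensure that both sides have the same semantics, but I cannot invoke soundness to close the proof because completeness of $\zxct$ is precisely the theorem being aimed at; everything must be carried out syntactically.

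For the base cases, I would verify the auxiliary identity on each generator of the \frag{4}. The purely topological generators---identity wire, swap, cup, cap, empty diagram---are essentially free: $[\,\cdot\,]_W$ carries the two control wires past them unchanged, so after translating back with $[\,\cdot\,]_X$ the identity collapses to bifunctoriality of the tensor product. The two cases with real content are the Hadamard and the green $R_Z^{(1,1)}(k\pi/4)$ spider (the latter reduced to $k=1$ via the recursive clause of $[\,\cdot\,]_W$), where I would unfold the explicit ZW-diagrams, translate them back with $[\,\cdot\,]_X$, and reduce the composite against $I \otimes \omega_{\text{ZX}}$ using the axioms \s, \h, \bw, \e, \supp and \ccom.

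The sequential-composition inductive step is routine. Functoriality of both interpretations with respect to $\circ$ gives $\bigl[[D_1 \circ D_2]_W\bigr]_X = \bigl[[D_1]_W\bigr]_X \circ \bigl[[D_2]_W\bigr]_X$; the induction hypothesis applied to $D_2$ rewrites the lower factor as $D_2 \otimes \omega_{\text{ZX}}$; bifunctoriality slides $\omega_{\text{ZX}}$ past the regular wires of $D_1$; and a second application of the induction hypothesis to $D_1$ concludes the case, using only topological moves beyond the two IH invocations.

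The main obstacle will be the spatial-composition step. The recursive clause for $[D_1 \otimes D_2]_W$ introduces explicit crossings $\nmcrossI{\cdot}{\cdot}$ that thread what would be two separate pairs of control wires for $[D_1]_W$ and $[D_2]_W$ into a single shared pair, diagrammatically encoding the failure of the naive identity $\psi(A\otimes B) = \psi(A)\otimes\psi(B)$. After translating with $[\,\cdot\,]_X$, I would peel off the outer crossings topologically, apply the induction hypothesis first to the $[D_2]_W$-factor on one copy of $\omega_{\text{ZX}}$, then to the $[D_1]_W$-factor on another copy, and finally merge the two intermediate $\omega$-states into the single output $\omega$. This fusion is the delicate piece, and I expect it to boil down to a copying-style identity for $\omega$ that splits an $\omega$-wire through an intermediate $\omega\!\cdot\!e_1$-pair; such an identity should be derivable from the Clifford+T axioms, using essentially the same scalar-level computation that underlies $e_1 \circ \omega = 1$. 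Once this step is secured, the induction closes and the proposition follows via the first paragraph.
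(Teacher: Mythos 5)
Your overall strategy matches the paper's exactly: prove $\zxct \vdash [[D]_W]_X \circ (\mathbb{I}^{\otimes n}\otimes\omega) = D\otimes\omega$ by induction on $D$, then post-compose with $\mathbb{I}^{\otimes m}\otimes e_1$ and discharge the scalar $e_1\circ\omega$ (the paper uses Lemmas~\ref{lem:bicolor-0-alpha} and \ref{lem:inverse} for this last step, which is indeed a direct Clifford+T scalar computation). The base cases for the topological generators, Hadamard, and $R_Z^{(1,1)}(\pi/4)$, and the sequential-composition inductive step, are all correctly described.

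However, your treatment of the spatial-composition step contains a genuine misreading. There are never two $\omega$-states to fuse, and no ``copying-style identity for $\omega$'' is required. The recursive clause for $[D_1\otimes D_2]_W$ threads one shared pair of control wires \emph{sequentially}: the crossing routes them first through $[D_1]_W$, and then a second crossing routes the outputs of those same wires through $[D_2]_W$. Plugging a single $\omega$ on those wires, you slide it topologically past the first crossing, apply the IH to $D_1$ to obtain $D_1\otimes\omega$, slide the resulting $\omega$ past the next crossing, and apply the IH to $D_2$ to obtain $D_1\otimes D_2\otimes\omega$. That is the whole step -- two IH invocations separated by purely topological bookkeeping, with no intermediate duplication or fusion of $\omega$. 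Looking for the merging identity you anticipate would be chasing a phantom; the paper's five-equality chain for this case does exactly the sequential threading just described.
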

\noindent The proof is in Appendix at Section~\ref{prf:left-inverse-ZX}.

\begin{cor}%
\label{cor:composition}
  If $\zxct \vdash \left[[D_1]_W\right]_X = \left[[D_2]_W\right]_X$
  then $\zxct \vdash D_1 = D_2$.
\end{cor}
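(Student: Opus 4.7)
The plan is to derive this corollary as a direct application of Proposition \ref{prop:left-inverse-ZX}, which gives, for every Clifford+T ZX-diagram $D$, a $\zxct$-derivable equality expressing $D$ as a sandwich of $\left[[D]_W\right]_X$ between fixed ``encoding'' and ``decoding'' diagrams (independent of $D$ itself, apart from a suitable number of parallel wires dictated by the types of inputs/outputs of $D$).

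First, I would apply Proposition \ref{prop:left-inverse-ZX} to $D_1$ to obtain $\zxct \vdash D_1 = B \circ \left[[D_1]_W\right]_X \circ T$, where $B$ and $T$ abbreviate the bottom and top composition diagrams (the decoder and encoder). Similarly, applying the same proposition to $D_2$ yields $\zxct \vdash D_2 = B \circ \left[[D_2]_W\right]_X \circ T$. Both statements require that $D_1$ and $D_2$ share the same numbers of inputs and outputs, which is automatic because the hypothesis $\zxct \vdash \left[[D_1]_W\right]_X = \left[[D_2]_W\right]_X$ is only well-typed if the two sides have identical type; unpacking $[.]_W$ and $[.]_X$, this forces $D_1$ and $D_2$ to have the same type as well, so the same $B$ and $T$ may be placed around either.

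Next, I would use the congruence of the $\vdash$ relation with respect to sequential composition: since by hypothesis $\zxct \vdash \left[[D_1]_W\right]_X = \left[[D_2]_W\right]_X$, pre-composing with $T$ on the top and post-composing with $B$ on the bottom yields $\zxct \vdash B \circ \left[[D_1]_W\right]_X \circ T = B \circ \left[[D_2]_W\right]_X \circ T$. Chaining this with the two expressions obtained from Proposition \ref{prop:left-inverse-ZX} gives the desired $\zxct \vdash D_1 = D_2$ by transitivity.

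There is essentially no obstacle: once Proposition \ref{prop:left-inverse-ZX} is in hand, the corollary reduces to congruence plus transitivity, which are built into the definition of $\vdash$. The only minor point of care is matching the widths of the encoder/decoder diagrams to the arities of $D_1$ and $D_2$ so that the sequential compositions are well-typed, but this is immediate from the hypothesis.
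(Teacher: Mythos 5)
Your proof is correct and follows the paper's intended route: the corollary is stated without an explicit proof precisely because it is an immediate consequence of Proposition~\ref{prop:left-inverse-ZX} via congruence of $\vdash$ under sequential composition and transitivity, which is exactly what you spell out.
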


We can now prove Theorem~\ref{thm:cliff-T}:

\begin{proof}[Proof of Thm.~\ref{thm:cliff-T}]
Let $D_1$ and $D_2$ be two Clifford+T ZX-diagrams, such that $\interp{D_1}=\interp{D_2}$. By Proposition~\ref{prop:pxwpsi}, $\interp{[D_1]_W}=\interp{[D_2]_W}$. Since $[D_i]_W$ are \zwh-diagrams, and since the \zwh-Calculus is complete, $\zwh\vdash [D_1]_W=[D_2]_W$. By Proposition~\ref{prop:rules-preserved}, $\zxct\vdash \left[[D_1]_W\right]_X = \left[[D_2]_W\right]_X$. Finally, by Corollary~\ref{cor:composition}, $\zxct \vdash D_1 = D_2$.
\end{proof}

\subsection{Expressive Power of the Clifford+T ZX-Diagrams}%
\label{section:exp}

Since the unitary matrices over $\mathbb{D}[\piq{}]$ are representable with Clifford+T circuits~\cite{SelingerGiles}, so are they with \frag4 diagrams. Using the fact that the $\zwh$-Diagrams can represent any matrix of $\mathbb{D}^{2^n}\times\mathbb{D}^{2^m}$ (Proposition~\ref{prop:zwh-universal}), we actually show that any matrix over $\mathbb{D}[\piq{}]$ can be represented by a \zxct-diagram:

\begin{prop}%
\label{prop:universality}
The \frag{4} of the ZX-Calculus represents exactly  matrices over $\mathbb{D}[\piq{}]$:
\[\forall A\in \mathbb{D}[\piq{}]^{2^n\times 2^m},~~\exists D\in \zxct,~~ \interp{D}=A\]
\end{prop}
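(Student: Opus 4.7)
The plan is to assemble the earlier results into a short sandwich construction. Given an arbitrary $A \in \mathbb{D}[\piq{}]^{2^n \times 2^m}$, I would first pass to its encoding $\psi(A)$. Because $\psi$ multiplies each dimension by $4 = 2^2$, this is a dyadic-rational matrix of size $2^{n+2} \times 2^{m+2}$. By universality of the $\zwh$-Calculus (Proposition~\ref{prop:zwh-universal}), there then exists a $\zwh$-diagram $D'$ with $\interp{D'} = \psi(A)$.

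Next, I would push $D'$ through the interpretation $[\cdot]_X$. Proposition~\ref{prop:interpwx} guarantees that $[D']_X$ is a Clifford+T ZX-diagram with $\interp{[D']_X} = \psi(A)$. The only remaining issue is that this diagram represents $\psi(A)$ rather than $A$, so a ``decoding'' step is required.

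Finally, I would recover $A$ by sandwiching $[D']_X$ between the two decoding diagrams appearing in Proposition~\ref{prop:left-inverse-ZX}: tensor on the input side with the diagram whose interpretation is $\omega$, and on the output side with the one whose interpretation is $e_1$. Explicitly, set
\[D := \left(\mathbb{I}^{\otimes n} \otimes \tikzfig{projector-1-0-0-0}\right) \circ [D']_X \circ \left(\mathbb{I}^{\otimes m} \otimes \tikzfig{theta}\right).\]
By Proposition~\ref{prop:left-inverse}, $\psi(A)(I_{2^m} \otimes \omega) = A \otimes \omega$, and composing on the left with $I_{2^n} \otimes e_1$ then yields $A \otimes (e_1 \omega) = A$ (using $e_1 \omega = 1$). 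Hence $D \in \zxct$ and $\interp{D} = A$, as desired. There is no real obstacle in this argument; the only things to verify are that $\psi(A)$ has dimensions that are powers of two (immediate) and that the decoding diagrams $\theta$ and the projector already lie in the Clifford+T fragment (routine, since both have appeared in the sandwich of Proposition~\ref{prop:left-inverse-ZX}).
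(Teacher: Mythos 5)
Your proposal is correct and follows essentially the same route as the paper: encode $A$ as the dyadic matrix $\psi(A)$, realise it as a $\zwh$-diagram via Proposition~\ref{prop:zwh-universal}, transport it to a Clifford+T diagram via $[\cdot]_X$ using Proposition~\ref{prop:interpwx}, and decode by composing with $\theta$ (for $\omega$) and the projector (for $e_1$) on the two auxiliary wires. The only difference is cosmetic — you spell out the algebraic verification $e_1(\psi(A)(I\otimes\omega)) = e_1(A\otimes\omega) = A$ via Proposition~\ref{prop:left-inverse}, which the paper leaves implicit.
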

\begin{proof}
Let $A\in \mathbb{D}[\piq{}]^{2^n\times 2^m}$. We define $A'=\psi(A)\in\mathbb{D}^{2^{n+2}\times 2^{m+2}}$. Since \zwh-diagrams are universal for matrices over dyadic rationals: $\exists D\in \zwh,~\interp{D}=A'$. Since $[.]_X$ preserves the semantics, we can define a ZX-diagram of the \frag{4} $D'=[D]_X$ such that $\interp{D'}=A'$.\\
%Now, notice that $\omega = \begin{pmatrix}1\\\piq{}\\ \piq2\\ \piq3\end{pmatrix} =\interp{\scalebox{0.8}{\tikzfig{theta}}}$, and $e_1=\begin{pmatrix}1&0&0&0\end{pmatrix} = \interp{\scalebox{0.8}{\tikzfig{projector-1-0-0-0}}}$, so if we apply the second diagram at the two bottom right wires, and the first state on the two top right wires of $D'$, we end up with $D''$ such that $\interp{D''}=A$. Indeed:
%\begin{align*}
%\interp{D''} &= (I\otimes
%e_1)\circ\interp{D'}\circ\left(I\otimes\omega\right)
%= (I\otimes e_1)\circ\psi(A)\circ\left(I\otimes\omega\right)\\
%&=(I\otimes e_1)\circ(A\otimes\omega)
%=A\otimes\left(e_1\circ\omega\right)
%= A
%\end{align*}
Now, by plugging \scalebox{0.8}{
%\InputIfFileExists{#1.tikz}{}{
\input{./figures/theta.tikz}%} % chktex 27
} at the top and \scalebox{0.8}{
%\InputIfFileExists{#1.tikz}{}{
\input{./figures/projector-1-0-0-0.tikz}%} % chktex 27
} at the bottom of the two rightmost wires of $D'$, we get the diagram $D''$ which is such that $\interp{D''}=A$. % chktex 8
\end{proof}

\section{Completeness for Linear Diagrams with Clifford+T Constants}%
\label{sec:lin-diag}

\subsection{Variables and Constants}

It is customary to view some angles in the ZX-diagrams as variables, in order to prove families of equalities. For instance, the rule \s displays two variables $\alpha$ and $\beta$, and potentially gives an infinite number of equalities. Notice that in the axioms of the ZX-calculus, the variables are used in a linear way, reflecting the phase group structure.

\begin{defi} %Given $C\subseteq [0, 2\pi)$,
A ZX-diagram is linear in $\alpha_1, \ldots, \alpha_k$ with constants in $C\subseteq \mathbb R$, if it is generated by $R_Z^{(n,m)}(E)$, $R_X^{(n,m)}(E)$, $H$, $e$, $\mathbb I$, $\sigma$, $\epsilon$, $\eta$, and the spacial and sequential compositions, where $n,m\in \mathbb  N$, and $E$ is an affine combination of $\alpha_i$ with coefficients in $\mathbb{Z}$ and constants in $C$, i.e.~of the form $\sum_{i} n_i \alpha_i+c$, with $n_i\in \mathbb Z$ and $c\in C$.
\end{defi}

Notice that all the diagrams in Figures~\ref{fig:ZX-Clifford-rules} and~\ref{fig:ZX-CliffordT-rules} are linear in $\alpha, \beta, \gamma$ with constants in $\frac \pi 4 \mathbb Z$. A diagram linear in $\alpha_1, \ldots,  \alpha_k$ is denoted $D(\alpha_1, \ldots, \alpha_k)$, or more compactly $D(\vec{\alpha})$ with $\vec{\alpha} = \alpha_1, \ldots, \alpha_k$. Obviously, if $D(\alpha)$ is a diagram linear in $\alpha$, $D(\pi/2)$ denotes the ZX-diagram where all occurrences of $\alpha$ are replaced by $\pi/2$.

%We now want to prove Theorem~\ref{thm:lin-diag}

\subsection{From variables to inputs}
%The variables involved in a

We show in this section that, given an equation involving diagrams linear in some variable $\alpha$, the variables can be \emph{extracted}, splitting  the diagrams into two parts: a collection of points (nodes with parameter $\alpha$) and a constant diagram independent of the variables. % as described in Proposition \label{prop:var2inp}

First we define the multiplicity of a variable in an equation:

\begin{defi}
For any $D_1(\vec{\alpha}), D_2(\vec{\alpha}): n\to m$ two ZX-diagrams linear in $\vec{\alpha}$, the multiplicity of $\alpha_1$ in the equation $D_1(\vec{\alpha}) = D_2(\vec{\alpha})$ is defined as:
\[\mu_{\alpha_1} = \max_{i\in \{1,2\}}\left(\mu^+_{\alpha_1}(D_i(\vec{\alpha}))\right)  + \max_{i\in \{1,2\}}\left(\mu^-_{\alpha_1}(D_i(\vec{\alpha}))\right)\]%\mu_\alpha^-(D_i(\alpha))+\mu^-_\alpha(D_{1-i}(\alpha))\right)$$
 where
$\mu^+_{\alpha_1}(D)$ (resp. $\mu^-_{\alpha_1}(D)$) is the number of occurrences of $\alpha_1$ (resp. -$\alpha_1$) in $D$,  inductively defined as   \\
$\mu^+_{\alpha_1}(R_Z^{(n,m)}(\ell\alpha_1+E(\alpha_2\cdots \alpha_n)))=\mu^+_{\alpha_1}(R_X^{(n,m)}(\ell\alpha_1+E(\alpha_2\cdots \alpha_n)))=\begin{cases} \ell &\text{if $\ell>0$}\\0&\text{otherwise}\end{cases}$\\
$\mu^-_{\alpha_1}(R_Z^{(n,m)}(\ell\alpha_1+E(\alpha_2\cdots \alpha_n)))=\mu^-_{\alpha_1}(R_X^{(n,m)}(\ell\alpha_1+E(\alpha_2\cdots \alpha_n)))=\begin{cases} -\ell&\text{if $\ell<0$}\\0&\text{otherwise}\end{cases}$\\
$\mu^\pm_{\alpha_1}(D\otimes D') = \mu^\pm_{\alpha_1}(D\circ D') = \mu^\pm_{\alpha_1}(D)+\mu^\pm_{\alpha_1}(D')$\\
$\mu^\pm_{\alpha_1}(H)=\mu^\pm_{\alpha_1}(e)=\mu^\pm_{\alpha_1}(\mathbb I)=\mu^\pm_{\alpha_1}(\sigma)=\mu^\pm_{\alpha_1}(\epsilon)=\mu^\pm_{\alpha_1}(\eta)=0$

%$\mu^+_\alpha(D\otimes D') = \mu^+_\alpha(D\circ D') = \mu^+_\alpha(D)+\mu^+_\alpha(D')$\\
%$\mu^+_\alpha(R_Z^{(n,m)}(n\alpha+c))=\mu^+_\alpha(R_X^{(n,m)}(n\alpha+c))=\begin{cases} n_i&\text{if $n_i>0$}\\0&\text{otherwise}\end{cases}$\\
%$\mu^+_\alpha(H)=\mu^+_\alpha(e)=\mu^+_\alpha(\mathbb I)=\mu^+_\alpha(\sigma)=\mu^+_\alpha(\epsilon)=\mu^+_\alpha(\eta)=0$\\
%$\mu^-_\alpha(D\otimes D') = \mu^-_\alpha(D\circ D') = \mu^-_\alpha(D)+\mu^-_\alpha(D')$\\
%$\mu^-_\alpha(R_Z^{(n,m)}(n\alpha+c))=\mu^-_\alpha(R_X^{(n,m)}(n\alpha+c))=\begin{cases} -n_i&\text{if $n_i<0$}\\0&\text{otherwise}\end{cases}$\\
%$\mu^-_\alpha(H)=\mu^-_\alpha(e)=\mu^-_\alpha(\mathbb I)=\mu^-_\alpha(\sigma)=\mu^-_\alpha(\epsilon)=\mu^-_\alpha(\eta)=0$\\
\end{defi}

For instance, consider the following equation:
\[
%\InputIfFileExists{#1.tikz}{}{
\input{./figures/big-scalar-equation.tikz}%} % chktex 27
\]
The multiplicity of $\alpha$ is $\mu_\alpha = 2$ and $\beta$'s is $\mu_\beta = 3$.

\begin{prop}\label{prop:var2inp}
For any $D_1(\alpha), D_2(\alpha): n\to m$ two ZX-diagrams linear in $\alpha$ with constants in $\frac \pi 4$ $\mathbb Z$, there exist  $D_1', D'_2:r\to n+m$ two ZX-diagrams with angles multiple of $\frac \pi 4$ such that, for any $\alpha \in \mathbb R$, the equivalence
\begin{equation}
D_1(\alpha)= D_2(\alpha) \equi{} D_1'\circ \theta_r(\alpha) =  D_2'\circ \theta_r(\alpha)
\end{equation}
is provable using the axioms of $\zxct$, where $r$ is the multiplicity of $\alpha$ in $D_1(\alpha) = D_2(\alpha)$, and $\theta_r(\alpha)= \left(R_Z^{(0,1)}(\alpha)\right)^{\otimes r}$.\\%=\tikzfig{theta_r-alpha}$. \\
Pictorially:
\def\fig{theta_r-alpha-on-diagrams}
\[\input{./figures/\fig/\fig_00.tikz}\eq{}\input{./figures/\fig/\fig_01.tikz}\equi{}\input{./figures/\fig/\fig_02.tikz}\eq{}\input{./figures/\fig/\fig_03.tikz}\]
\end{prop}

\begin{proof}The proof consists in transforming the equation $D_1(\alpha)=D_2(\alpha)$ into the equivalent equation $D_1'\circ \theta_r(\alpha) =  D_1'\circ \theta_r(\alpha)$ using axioms of the ZX-calculus. This transformation involves 6 steps:
\begin{itemize}[label={--}]
    \item
    \emph{Turn inputs into outputs.} First, each input can be bent to an output using $\eta$:
\def\fig{thm1-equivalence}\[\input{./figures/\fig/\fig_00.tikz}\eq{}\input{./figures/\fig/\fig_01.tikz}\equi{}\input{./figures/\fig/\fig_02.tikz}\eq{}\input{./figures/\fig/\fig_03.tikz}\]

    \item
    \emph{Make the red spiders green.} All red spiders $R^{(k,l)}_X(n\alpha+c)$ are transformed into green spiders using the axioms \s and \h:
\def\fig{h-on-red-spiders}
\[\input{./figures/\fig/\fig_00.tikz}\eq{}\input{./figures/\fig/\fig_01.tikz}\]

    \item
    \emph{Expending spiders.} All spiders $R_Z(n\alpha +c)$ are expended using \s so that all the occurrences of $\alpha$ are either \rz{$\alpha$} or \rz{-$\alpha$}:%E.g. TBC ($-2\alpha+\pi/4$)\[[[DESSIN]]\]
\def\fig{S1-on-n-alpha-plus-c-arxiv}
\[\input{./figures/\fig/\fig_00.tikz}\eq{}\input{./figures/\fig/\fig_01.tikz}\]

    \item
    \emph{Changing the sign.} Using \picom all occurrences of \rz{-$\alpha$}
 are replaced as follows: $\rz{-$\alpha$} \mapsto
%\InputIfFileExists{#1.tikz}{}{
\input{./figures/Rz-0-1-minus-alpha.tikz}%} % chktex 27
$. Notice that this rule is not applied recursively, which would not terminate. After this step all the original $-\alpha$ have been replaced by an $\alpha$ and as many scalars 
%\InputIfFileExists{#1.tikz}{}{
\input{./figures/scalar-e-pow-minus-i-alpha.tikz}%} % chktex 27
 have been created. So far, we have shown:
\def\fig{thm1-equivalence-one-var}
\begin{align*}
\input{./figures/\fig/\fig_00.tikz}\eq{}\input{./figures/\fig/\fig_01.tikz}\equi{}
\input{./figures/\fig/\fig_06.tikz}\eq[]{}\input{./figures/\fig/\fig_07.tikz}
\end{align*}

    \item
    \emph{(Re)moving scalars.} The scalar 
%\InputIfFileExists{#1.tikz}{}{
\input{./figures/scalar-e-pow-i-alpha.tikz}%} % chktex 27
 has an inverse for $\otimes$, which is 
%\InputIfFileExists{#1.tikz}{}{
\input{./figures/scalar-e-pow-minus-i-alpha.tikz}%} % chktex 27
 (see Lemmas~\ref{lem:multiplying-global-phases},~\ref{lem:bicolor-0-alpha} and~\ref{lem:inverse}). This has for consequence: % chktex 36
\begin{itemize}
\item $\zxct\vdash \scalebox{0.8}{
%\InputIfFileExists{#1.tikz}{}{
\input{./figures/scalar-e-pow-minus-i-alpha.tikz}%} % chktex 27
}D_1=D_2 \equi{}\zxct\vdash D_1=\scalebox{0.8}{
%\InputIfFileExists{#1.tikz}{}{
\input{./figures/scalar-e-pow-i-alpha.tikz}%} % chktex 27
}D_2$
\item $\zxct\vdash \scalebox{0.8}{
%\InputIfFileExists{#1.tikz}{}{
\input{./figures/scalar-e-pow-i-alpha.tikz}%} % chktex 27
}D_1=\scalebox{0.8}{
%\InputIfFileExists{#1.tikz}{}{
\input{./figures/scalar-e-pow-i-alpha.tikz}%} % chktex 27
}D_2 \equi{} \zxct\vdash D_1=D_2$
\end{itemize}
The scalars 
%\InputIfFileExists{#1.tikz}{}{
\input{./figures/scalar-e-pow-minus-i-alpha.tikz}%} % chktex 27
 are eliminated by adding $\overset{-}{\mu}\substack{\max\\\alpha\ \ }:=\max\left(\mu^-_\alpha(D_1),\mu^-_\alpha(D_2)\right)$ times the scalar 
%\InputIfFileExists{#1.tikz}{}{
\input{./figures/scalar-e-pow-i-alpha.tikz}%} % chktex 27
 on both sides, then simplifying when we have a scalar and its inverse.
\begin{align*}
\equi{}\input{./figures/\fig/\fig_08.tikz}\eq{}\input{./figures/\fig/\fig_09.tikz}
\end{align*}

    \item
    \emph{Balancing the variables.} At this step the number of occurrences of $\alpha$ might be different on both sides of the equation. Indeed, one can check that the side of $D_i$ has $\mu^+_\alpha(D_i)+\overset{-}{\mu}\substack{\max\\\alpha\ \ }$ occurrences of $\alpha$. One can then use the simple equation 
%\InputIfFileExists{#1.tikz}{}{
\input{./figures/inverse-alpha.tikz}%} % chktex 27
 (whose proof uses Lemmas~\ref{lem:bicolor-0-alpha} and~\ref{lem:inverse}) $\overset{+}{\mu}\substack{\max\\\alpha\ \ }-\mu^+_\alpha(D_i)$ times on the side of $D_i$, where $\overset{+}{\mu}\substack{\max\\\alpha\ \ }:=\max\left(\mu^+_\alpha(D_1),\mu^+_\alpha(D_2)\right)$. We hence end up with $\mu_\alpha =\overset{+}{\mu}\substack{\max\\\alpha\ \ }+\overset{-}{\mu}\substack{\max\\\alpha\ \ }$ occurrences of $\alpha$ on both sides. Formally, $D_i'$ is defined as:
\begin{align*}
\input{./figures/\fig/\fig_10.tikz}:=\input{./figures/\fig/\fig_11.tikz}
\\[-\normalbaselineskip]\tag*{\qedhere}
\end{align*}
\end{itemize}
\end{proof}

\noindent
Proposition~\ref{prop:var2inp}  implies in particular that if the equation $D_1'\circ \theta_r(\alpha) =  D_2'\circ \theta_r(\alpha)$ is provable  using the axioms of the ZX-calculus, then so is $D_1(\alpha)= D_2(\alpha)$.
 Proposition~\ref{prop:var2inp} also implies that if $\interp{D_1(\alpha)}= \interp{D_2(\alpha)}$, then $ \interp{D_1'\circ \theta_r(\alpha)} = \interp{ D_2'\circ \theta_r(\alpha)}$, thanks to the soundness of the ZX-calculus.

%
%\begin{defi}The number of occurences of $\alpha$ in the equation $D_1(\alpha)=D_2(\alpha)$ is defined as the
% minimal $r$ for which equation (\ref{eq:varasinputs}) holds.
% \end{defi}
%
 \subsection{Removing the variables}
% \subsubsection{Multiplicity $1$}

 Given $D_1(\alpha)$ and $D_2(\alpha)$  linear in $\alpha$ with constants in $\frac \pi 4\mathbb Z$, if $\alpha$ has multiplicity $1$ in  ${D_1(\alpha)} = {D_2(\alpha)}$, then according to Proposition~\ref{prop:var2inp}, the equation  can be transformed into the following equivalent equation involving a single occurrence of $\alpha$:
\begin{equation}\label{eq:single}

%\InputIfFileExists{#1.tikz}{}{
\input{./figures/thm1-single-occurrence.tikz}%} % chktex 27

\end{equation}
 %$\alpha$ occurs only once in ${D_1(\alpha)} = {D_2(\alpha)}$ then according to Prop.~\ref{prop}, the equation is equivalent to [Dessin]
 where $D_1'$ and $D_2'$ are in the $\frac \pi 4$-fragment.
Notice that equation (\ref{eq:single}) holds
 if and only if $\interp{D_1'}=\interp{D_2'}$, since $\left(\rz{}, \rz{$\pi$}\right)$ forms a basis of the input space. Thus, a variable of multiplicity $1$ can easily be removed, leading to an equivalent equation in the complete $\frac \pi 4$-fragment of the ZX-calculus.

%Given an equation  ${D_1(\alpha)} = {D_2(\alpha)}$ where $D_1(\alpha)$ and $D_2(\alpha)$ are linear in $\alpha$ with constants in $\frac \pi 4$$\mathbb Z$, if $\alpha$ occurs only once, then according to Prop.~\ref{prop}, the equation is equivalent to [Dessin] where $D'_1$ and $D'_2$ are in the $\frac \pi 4$-fragment. In this particular case, notice that $\forall \alpha\in \mathbb R, \interp{D_1(\alpha)} = \interp {D_2(\alpha)}$ if and only if $\interp{D_1'}=\interp{D_2'}$, [poiint 0], [point$\pi$] form a basis,  leading to an equation in the complete $\frac \pi 4$-fragment of the ZX-calculus.
%
% the original equation holds for any alpha if and only if $\interp{D_1'}=\interp{D_2'}$.indeed [poiint 0], [point$\pi$ form a basis thus equation~\ref{} is satisfied iff $\interp{D_1'}=\interp{D_2'}$, leading to an equation in the complete $\frac \pi 4$-fragment of the ZX-calculus.

 %Thus, a variable can be removed when it appears only once.
 When a variable has a  multiplicity $r>1$ in an equation, the variable cannot be removed similarly as $\left(\rz{$\alpha$}\right)^{\otimes r}$ does not generate a basis of the $2^r$ dimensional space when $r>1$. However these dots can be replaced by an appropriate projector on the subspace generated by these dots, as described in the following.

% Let $S_n$ be the set of permutation on $\{1,\ldots,n\}$. For any permutation $\tau\in S_n$, let $Q_\tau:\mathbb C^{2^n}\to \mathbb C^{2^n}$ be the unique morphism such that $\forall x_1,\ldots x_n\in \{0,1\}, Q_\tau\ket{x_1\ldots x_n}=\ket{x_{\tau(1)}\ldots x_{\tau(n)}}$.

%
% \begin{defi}
%For any $r\ge 0$, the set of symmetric states on $r$ qubits be defined as \[\mathcal S_r := \{\varphi \in \mathbb C^{2^r}~|~Ê\forall \tau \in \textsf S_r, Q_\tau (\varphi) = \varphi\}\] where $\textsf S_r$ is the set of permutation on $\{1,\ldots,r\}$ and given $\tau\in \textsf S_r$,  $Q_\tau:\mathbb C^{2^r}\to \mathbb C^{2^r}$ is the unique morphism such that $\forall \varphi_1,\ldots, \varphi_r\in \mathbb C^{2}$, $Q_\tau(\varphi_1 \otimes \ldots \otimes \varphi_r)=\varphi_{\tau(1)} \otimes \ldots \otimes \varphi_{\tau(r)}$.
% \end{defi}
%

 %However these dots can be replaced by an appropriate projector on the subspace generated by these dots, as described in the following.

%
%
%Once the diagrams involved in an equation have been split into the variable part and the constant part, notice that if there is a single occurence of the variable, e.g.
%
%
% In this section we show that once the diagrams involved in an equation have been split into the variable part and the constant part, the variables can be replaced by a constant diagram.
%
%

%
% \begin{lem}
% For any $r\ge 0$, $\mathcal S_r$ is of dimension $r+1$ and is generated by $\{\theta_r(\alpha) ~|~\alpha \in \mathbb R\}$.
% \end{lem}
%
%
 \subsubsection{When the multiplicity is 2}

%First we consider the c

%
%
%
%We will first prove the theorem in the special case of two diagrams with only two occurrences of one variable as a warm-up before going to the general case.
%%We will then give an example on how this theorem can be used.

%\subsection{Diagrams with two inputs}
%We aim to find a complete set of rules for the general ZX-Calculus, but the set of rules ZX (Figure~\ref{fig:ZX_rules}) is not, as evidenced by~\cite{incompleteness,cyclo}. However, we can develop a procedure that can help us know if an equality can be derived with this set of rules.

%We will use in the following a shortcut introduced in~\cite{JPV}, the triangle:
%
%\begin{defi}
%\label{def:triangle}
%We define the triangle node as:
%\[\tikzfig{ug-decomp}\qquad\qquad\qquad\text{with interpretation}\interp{\tikzfig{ug-node}}=\begin{pmatrix}1&1\\0&1\end{pmatrix}\]
%\end{defi}

Consider the following diagram $R$:
\def\varone{$R$}
\[
%\InputIfFileExists{#1.tikz}{}{
\input{./figures/2-in-2-out-box-var.tikz}%} % chktex 27
~~:=~~
%\InputIfFileExists{#1.tikz}{}{
\input{./figures/matrix-X.tikz}%} % chktex 27
\]
One can check that
\[\interp R=\begin{pmatrix}
1&0&0&0\\
0&\frac{1}{2}&\frac{1}{2}&0\\
0&\frac{1}{2}&\frac{1}{2}&0\\
0&0&0&1
\end{pmatrix}.\]
This matrix basically mixes the second and third elements of any size-4 vector. We can then show:

\begin{lem} For any $\alpha \in \mathbb R$, $\zxct\vdash R\circ \theta_2(\alpha) = \theta_2(\alpha)$, i.e.\ pictorially:%
\label{lem:alphas-on-X}
\[\forall\alpha\in\mathbb{R},~~\zxct\vdash~ 
%\InputIfFileExists{#1.tikz}{}{
\input{./figures/2-gn-alpha-to-X.tikz}%} % chktex 27
\]
\end{lem}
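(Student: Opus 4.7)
Semantically the identity is immediate: $\interp{R}$ is the orthogonal projector onto the symmetric subspace of $\mathbb C^2\otimes \mathbb C^2$ (the $3$-dimensional span of $\ket{00}$, $\ket{11}$ and $\ket{01}+\ket{10}$), while $\interp{\theta_2(\alpha)}=(1,e^{i\alpha},e^{i\alpha},e^{2i\alpha})^\top$ lies in that subspace because its two middle coordinates already coincide. So $\interp{R\circ \theta_2(\alpha)}=\interp{\theta_2(\alpha)}$ as matrices, and the work of the lemma is to replay this at the level of the equational theory, \emph{uniformly} in $\alpha$: we cannot yet invoke Theorem~\ref{thm:cliff-T} nor Theorem~\ref{thm:lin-diag} since the present lemma is used in their very proof, and a case split on the value of $\alpha$ would break parametricity.

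\textbf{Derivation.} I would start by unfolding $R$ into its defining green/red spiders and Hadamards (the tikz figure \texttt{matrix-X}) and attaching $\theta_2(\alpha)=\rz{\alpha}\otimes\rz{\alpha}$ at its two top inputs. The two workhorses are the spider-fusion rule \s, which lets each dangling phase $\alpha$ on a green leg be absorbed into any adjacent green spider by angle addition, and the colour-change rule \h{} together with the bialgebra/commutation rules \bialg{} and \picom, which let a green phase travel across a red node once Hadamards have been shuffled. The plan is to push each $\rz{\alpha}$ into $R$ via \s{}, commute past any intervening red structure using \h, \bialg{} and \picom, have the two phases meet on a common internal green spider where they combine to $2\alpha$, and then re-expand that spider via \s{} so as to deposit $\rz{\alpha}$ on each of the two output wires. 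Wherever a green-red-green or Hadamard-sandwich segment appears along the way it can be flattened by the Euler-decomposition rule of Figure~\ref{fig:ZX-Clifford-rules}.

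\textbf{Main obstacle.} The subtle part will be scalar bookkeeping: the factor $\frac12$ sitting on the middle diagonal of $\interp{R}$ corresponds to a specific scalar subdiagram of $R$, and it must cancel exactly against scalars created by fusing and re-splitting the two $\alpha$-spiders together with the Hadamards that migrate around. Obtaining the \emph{precise} equality —and not merely equality up to a nonzero scalar— will require careful use of the scalar rules \iv{} and \zo{} of Figure~\ref{fig:ZX-Clifford-rules}, together with the basic scalar identities on $e^{\pm i\alpha}$-type subdiagrams that will later be invoked in the proof of Proposition~\ref{prop:var2inp}. A secondary concern is making sure every rewrite is parametric in $\alpha$; in particular the copy rule \cp{} cannot be used on $\rz{\alpha}$ (it only applies to basis-state phases), so all movement of $\alpha$ across the diagram must go through \s{} and Hadamard-based colour changes.
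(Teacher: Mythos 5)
Your semantic reading of $\interp{R}$ as the projector onto the symmetric subspace is correct, and the broad shape of attack (unfold $R$, fuse the two $\alpha$-spiders in with \s, commute and cancel) matches the spirit of the paper's explicit derivation. But the plan contains a factual error: you propose to do the scalar bookkeeping with \iv{} and \zo, yet those two rules are excluded from $\zxct$ by definition --- the caption of Figure~\ref{fig:ZX-CliffordT-rules} declares $\zxct$ to be $\zxc\setminus\{\iv,\zo\}$ together with the four Clifford+T rules. In $\zxct$, scalar housekeeping goes through the appendix lemmas \ref{lem:2-is-sqrt2-squared}, \ref{lem:inverse}, \ref{lem:multiplying-global-phases}, \ref{lem:bicolor-0-alpha}, which bottom out in \e{} and the remaining $\pi/4$ axioms, not in \iv{} or \zo.

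The second, more substantive gap is that every move you actually enumerate (\s, \h, \bialg, \picom, the Euler decomposition \hd) lives in the Clifford part of the rule set, whereas the paper's derivation of this lemma relies essentially on the Clifford+T-specific axioms \ccom{} and \supp, mediated by the derived Lemmas~\ref{lem:euler-decomp-with-scalar}, \ref{lem:C1-original}, \ref{lem:green-state-pi_2-is-red-state-minus-pi_2} and \ref{lem:supp-to-minus-pi_4}. The step you describe informally as ``commute past any intervening red structure using \h, \bialg{} and \picom'' is exactly where Lemma~\ref{lem:C1-original} (a control-commutation identity proved from \ccom) is applied, and your scalar-cancellation step is where Lemma~\ref{lem:supp-to-minus-pi_4} (proved from \supp) enters. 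Without naming these ingredients the rewriting you sketch gives no reason to believe the left-hand side closes to $\theta_2(\alpha)$; a purely Clifford derivation is not what the paper produces and should not be expected here. A minor correction: Theorem~\ref{thm:cliff-T} \emph{is} already proved and available at this point of the paper (Section~\ref{sec:cliff-t} precedes Section~\ref{sec:lin-diag}) --- you are right that it does not directly settle the lemma since $\alpha$ ranges over all of $\mathbb R$, but it is freely used right afterwards in Lemma~\ref{lem:alphas-on-M} to dispatch the $\pi/4$-angle subproblem $P_2\circ R = R$.
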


The proof is given in appendix.

\begin{lem}%
\label{lem:equivalence-X}
For any two ZX-diagrams $D_1,D_2 : 2 \to n$,\\
$(\forall \alpha\in \mathbb{R}, \interp{D_1\circ \theta_2(\alpha)} =  \interp{D_2\circ \theta_2(\alpha)}) \Leftrightarrow \interp{D_1\circ R}=\interp{D_2\circ R}$ i.e.,
\[\left(\forall\alpha\in\mathbb{R},~~ \interp{
%\InputIfFileExists{#1.tikz}{}{
\input{./figures/2-gn-alpha-to-D_1.tikz}%} % chktex 27
}  = \interp{
%\InputIfFileExists{#1.tikz}{}{
\input{./figures/2-gn-alpha-to-D_2.tikz}%} % chktex 27
}\right) \Leftrightarrow \interp{
%\InputIfFileExists{#1.tikz}{}{
\input{./figures/2-gn-D_1-X.tikz}%} % chktex 27
} = \interp{
%\InputIfFileExists{#1.tikz}{}{
\input{./figures/2-gn-D_2-X.tikz}%} % chktex 27
}\]
where $\alpha$ does not appear in $D_1$ or $D_2$.
 %[[TODO: adapter le dessin n+1 -> m]]
\end{lem}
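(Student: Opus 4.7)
The plan is to exploit the fact that $\interp R$ acts, at the semantic level, as the identity on the subspace $V \subseteq \mathbb C^4$ spanned by the family $\{\interp{\theta_2(\alpha)}:\alpha\in\mathbb R\}$, while sending everything else to zero (or more precisely, projecting into $V$). A direct inspection of the matrix given for $\interp R$ just before Lemma \ref{lem:alphas-on-X} shows that $\interp R$ maps $(a,b,c,d)^T$ to $(a,\tfrac{b+c}{2},\tfrac{b+c}{2},d)^T$, so its image is $V = \{(a,b,b,c)^T : a,b,c\in\mathbb C\}$, a three-dimensional subspace, and $\interp R$ restricts to the identity on $V$.

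Next I would verify that $V$ is exactly spanned by the vectors $\interp{\theta_2(\alpha)}$. Since $\interp{\theta_2(\alpha)} = (1,e^{i\alpha},e^{i\alpha},e^{2i\alpha})^T$ lies in $V$, it suffices to exhibit three linearly independent such vectors; for three distinct choices of $\alpha$ (say $0$, $\pi/2$, $\pi$), setting $z_j = e^{i\alpha_j}$ the projection onto coordinates $1,2,4$ gives $(1,z_j,z_j^2)$, and these form a Vandermonde basis of $\mathbb C^3$. Hence the span equals $V$.

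With these two observations the equivalence is essentially automatic. For the direction $(\Leftarrow)$, I would use that $\interp{\theta_2(\alpha)}\in V$, so $\interp R\cdot \interp{\theta_2(\alpha)} = \interp{\theta_2(\alpha)}$ (this is Lemma \ref{lem:alphas-on-X} read at the semantic level); hence $\interp{D_i\circ\theta_2(\alpha)} = \interp{D_i}\interp R\interp{\theta_2(\alpha)} = \interp{D_i\circ R}\cdot\interp{\theta_2(\alpha)}$, and the hypothesis $\interp{D_1\circ R} = \interp{D_2\circ R}$ propagates. For the direction $(\Rightarrow)$, the hypothesis says the linear maps $\interp{D_1}$ and $\interp{D_2}$ agree on each $\interp{\theta_2(\alpha)}$, hence (by the spanning property) on all of $V$. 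Since the image of $\interp R$ lies in $V$, the composites $\interp{D_i}\interp R$ are determined by the restriction of $\interp{D_i}$ to $V$, so they are equal.

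The proof is pure linear algebra once $\interp R$ has been unpacked as a projector onto $V$; I do not anticipate any real obstacle, the only step requiring a little care being the Vandermonde-style argument showing that the $\interp{\theta_2(\alpha)}$ span the full three-dimensional subspace $V$.
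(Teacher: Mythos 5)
Your proof is correct and takes essentially the same approach as the paper's: both show that $\interp{R}$ is a (non-orthogonal) projector of rank three onto the span of the vectors $\interp{\theta_2(\alpha)}$, verified by exhibiting three linearly independent instances via a Vandermonde argument, and then derive the two directions of the equivalence by pure linear algebra.
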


\begin{proof}
The proof consists in showing that $\interp R$ is a projector onto $S= \mathop{\mathrm{span}} \{ \interp{\theta_2(\alpha)}~|~\alpha \in \mathbb{R}\}$. According to Lemma~\ref{lem:alphas-on-X}, $\interp R$ is the identity on $S$, moreover it is easy to show that $\interp R$ is a matrix of rank $3$ and that $\interp {\theta_2(0)},\interp {\theta_2(\pi/2)}, \interp {\theta_2(\pi)}$  are three linearly independent vectors in the image of $\interp R$.
%
%Let $\theta(\alpha) := \begin{pmatrix}
%  1\\e^{i\alpha}\\e^{i\alpha}\\e^{2i\alpha}\end{pmatrix} = \begin{pmatrix}1 \\e^{i\alpha}\end{pmatrix} \otimes \begin{pmatrix}1 \\e^{i\alpha}\end{pmatrix}$.
%
%  $X$ is a projector onto    $\mathop{\mathrm{span}} \{ \theta(\alpha), \alpha \in \mathbb{R}\}$. That is:
%  \begin{itemize}
%  \item $X \theta(\alpha) = \theta(\alpha)$ for all $\alpha$
%  \item For every vector $v$, $Xv$ is a linear combination of the $\theta(\alpha)$.
%  \end{itemize}
%
%  In particular, if $A$ and $B$ are $p \times 4$ matrices, then
%  \[(\forall\alpha,~ A\theta(\alpha)=B\theta(\alpha)) \Leftrightarrow (AX=BX)\]
%
%%\end{lem}
%%  \begin{proof}
%    The first point is a consequence of the previous lemma by looking at interpretations of the left and right terms.
%
%    For the second, notice that $X$ is a matrix of rank $3$ and that $\theta(0), \theta(\pi/2), \theta(\pi)$ are three linearly independant vectors in the image of $X$.
  \end{proof}

\subsubsection{Arbitrary multiplicity}%
\label{sec:multiple-variables}
%
%\begin{prop}
%  \label{prop:meta-2}
%  Let $D_1$ and $D_2$ be two diagrams of the Clifford+T fragment, and suppose that
%$\forall\alpha\in\mathbb{R},~~ \interp{\tikzfig{2-gn-alpha-to-D_1}}  = \interp{\tikzfig{2-gn-alpha-to-D_2}}$.
%\\  Then $\forall\alpha\in\mathbb{R},~~ ZX \vdash \tikzfig{2-gn-alpha-to-D_1}  = \tikzfig{2-gn-alpha-to-D_2}$
%\end{prop}
%
%\begin{proof}%[Proposition~\ref{prop:meta-2}]
%From Lemma~\ref{lem:equivalence-X} we get:
%\[  \interp{\tikzfig{2-gn-D_1-X}} = \interp{\tikzfig{2-gn-D_2-X}} \implies \zx \vdash ~~\tikzfig{2-gn-D_1-X} ~=~ \tikzfig{2-gn-D_2-X}\]
%by completeness of the Clifford+T fragment. We then conclude by Lemma~\ref{lem:alphas-on-X}.
%\end{proof}
%
%\subsection{The general case}
We now want to generalise Lemma~\ref{lem:equivalence-X} to any multiplicity $r$ of $\alpha$.
It turns out that there is no obvious generalization for $r$ wires of the matrix $\interp R$ expressible using angles multiple of $\frac \pi 4$, so we will rather use the following family  $(P_r)_{r\ge 1}$ of diagrams:

\[\left\lbrace\begin{array}{l}

%\InputIfFileExists{#1.tikz}{}{
\input{./figures/P1.tikz}%} % chktex 27
\\

%\InputIfFileExists{#1.tikz}{}{
\input{./figures/2-in-2-out-box-M_2.tikz}%} % chktex 27
~~:=~~ 
%\InputIfFileExists{#1.tikz}{}{
\input{./figures/matrix-M2-2-arxiv.tikz}%} % chktex 27
\\

%\InputIfFileExists{#1.tikz}{}{
\input{./figures/matrix-M_n-def.tikz}%} % chktex 27
~~=~~\scalebox{0.6}{
%\InputIfFileExists{#1.tikz}{}{
\input{./figures/matrix-M_n-form.tikz}%} % chktex 27
}
\end{array}\right.\]

%One has to be aware of the facts that these diagrams are oriented (they are not symmetrical), and that the inputs cannot be swapped.

For the reader convenience, here are the interpretations of $P_2$ and $P_3$:
\[
\interp{P_2} = \begin{pmatrix}
  1 & 0 & 0 & 0 \\
  0 & 0 & 1 & 0 \\
  0 & 0 & 1 & 0 \\
  0 & 0 & 0 & 1 \\
  \end{pmatrix}
\qquad \interp{P_3} = \begin{pmatrix}
  1 & 0 & 0 & 0 & 0 & 0 & 0 & 0  \\%000
  0 & 0 & 0 & 0 & 1 & 0 & 0 & 0 \\%001
  0 & 0 & 0 & 0 & 1 & 0 & 0 & 0 \\%010
  0 & 0 & 0 & 0 & 0 & 0 & 1 & 0  \\%011
  0 & 0 & 0 & 0 & 1 & 0 & 0 & 0  \\%100
  0 & 0 & 0 & 0 & 0 & 0 & 1 & 0 \\%101
  0 & 0 & 0 & 0 & 0 & 0 & 1 & 0 \\%110
  0 & 0 & 0 & 0 & 0 & 0 & 0 & 1  \\%111
  \end{pmatrix}
\]

We can characterise the interpretation of $P_r$ for any $r$.
\begin{prop}
For any word $\vec x\in\{0,1\}^r$, $\interp{P_r}^t\ket{\vec x}=\ket{1^{|\vec x|_1}0^{r-|\vec x|_1}}$ where $|\vec x|_1$ is the Hamming weight of $x$ i.e.~the number of symbol $1$ in the word $\vec x$.
\end{prop}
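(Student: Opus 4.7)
The plan is to proceed by induction on $r$. The base case $r=1$ is immediate: since $P_1$ is the identity wire, $\interp{P_1}^t \ket{x} = \ket{x} = \ket{1^x 0^{1-x}}$ because $|x|_1 = x$ for $x \in \{0,1\}$. The case $r=2$ can also be verified directly from the explicit matrix $\interp{P_2}$ displayed just above the proposition, and it agrees with the claim on all four basis elements.

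For the inductive step, I would exploit the recursive structure of $P_r$, unfolding the defining diagrams so as to express it as a composition $P_r = M_r \circ (P_{r-1} \otimes \mathbb{I})$ (up to the obvious mirror symmetry concerning which end the additional wire is attached to). Taking transposes this gives
\[
\interp{P_r}^t \;=\; \bigl(\interp{P_{r-1}}^t \otimes \mathbb{I}\bigr)\circ\interp{M_r}^t.
\]
For an arbitrary word $\vec x = \vec x' x_r \in \{0,1\}^r$ the inductive hypothesis takes care of the first $r-1$ wires once we have understood how $\interp{M_r}^t$ acts on the $r$-th wire together with an already-sorted prefix.

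The key intermediate statement is therefore a \emph{merge lemma} for $M_r$: for every $0 \le k \le r-1$ and $b \in \{0,1\}$,
\[
\interp{M_r}^t \, \ket{1^k 0^{r-1-k}}\!\otimes\!\ket{b} \;=\; \ket{1^{k+b} 0^{r-k-b}}.
\]
In other words, $M_r$ performs a single insertion of the last bit into a sorted $(r-1)$-bit string. This I would prove by unfolding \tikzfig{matrix-M_n-form} into its ZX-generators and directly computing the action on the four relevant sorted-prefix basis vectors using the semantics of the green and red spiders; the calculation is essentially the same as the one that verifies $\interp{P_2}$ by hand.

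The main obstacle is the merge lemma itself, since it requires a careful combinatorial computation inside the unfolded diagram of $M_r$. Once it is in place, combining with the induction hypothesis for $P_{r-1}$ applied to $\vec x'$ gives
\[
\interp{P_r}^t \ket{\vec x} \;=\; \bigl(\interp{P_{r-1}}^t \otimes \mathbb{I}\bigr)\, \ket{\vec x'}\!\otimes\!\ket{x_r}\;\longmapsto\; \ket{1^{|\vec x'|_1} 0^{r-1-|\vec x'|_1}}\!\otimes\!\ket{x_r},
\]
which the merge lemma then sorts to $\ket{1^{|\vec x'|_1 + x_r} 0^{r-|\vec x'|_1-x_r}} = \ket{1^{|\vec x|_1} 0^{r-|\vec x|_1}}$, closing the induction.
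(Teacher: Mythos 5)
Your overall plan — induction on $r$, split $P_r$ into a sort of the first $r-1$ wires followed by a ``merge'' that inserts the last bit into the sorted prefix — is exactly what the paper does. There are, however, two things worth fixing.

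First, your decomposition $P_r = M_r \circ (P_{r-1}\otimes\mathbb{I})$ is backwards. Taking transposes as you do gives $\interp{P_r}^t = (\interp{P_{r-1}}^t\otimes\mathbb{I})\circ\interp{M_r}^t$, which applies $\interp{M_r}^t$ \emph{first}, i.e.\ before the prefix has been sorted — and your merge lemma only describes how $M_r$ acts on an already-sorted prefix. The correct factorisation (and the one your concluding paragraph actually uses, so this is an internal inconsistency rather than a conceptual error) is
\[
\interp{P_r}^t = \interp{M_r}^t \circ \bigl(\interp{P_{r-1}}^t\otimes\mathbb{I}\bigr),
\]
which is precisely the paper's $\interp{P_r}^t = \interp{P_2}^{t[1,2]}\circ\cdots\circ\interp{P_2}^{t[r-1,r]}\circ\interp{P_{r-1}}^{t[1,\ldots,r-1]}$.

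Second, you propose to establish the merge lemma by re-unfolding the diagram into ZX generators and computing spider semantics from scratch. The paper's argument for this step is shorter and worth adopting: the layer $M_r^t$ is literally the chain $\interp{P_2}^{t[1,2]}\circ\cdots\circ\interp{P_2}^{t[r-1,r]}$ of two-wire $P_2$ gates acting on adjacent pairs, so once the $r=2$ case is verified by hand, the merge is a single bubble-sort pass: $\interp{P_2}^{t[j,j+1]}$ sends $\ket{\ldots 0 1 \ldots}$ to $\ket{\ldots 1 0 \ldots}$ and fixes $\ket{00},\ket{10},\ket{11}$, so the chain carries the trailing bit leftwards until it lands in position $|\vec x'|_1+1$. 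No further spider computation is needed, and there is nothing special about ``four'' basis vectors for general $r$: the merge lemma has $2r$ cases, but they all reduce to the four already checked for $P_2$.
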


Informally, $\interp{P_r}^t$ sends all the words of the same Hamming weight to the word of the same weight where all the $1$s are on the left.

\begin{proof}
First of all, notice that the result is true for $P_2$:
\[\interp{P_2}^t\ket{00}=\ket{00},\qquad \interp{P_2}^t\ket{01}=\interp{P_2}^t\ket{10}=\ket{10},\qquad \interp{P_2}^t\ket{11}=\ket{11}\]
Let us denote $\operatorname{Op}^{[i_1,\ldots,i_k]}$ the application of the $k$-qubit operator $\operatorname{Op}$ on the wires $i_1,\ldots,i_k$. With this notation, $\interp{P_r}^t=\interp{P_2}^{t[1,2]}\circ \interp{P_2}^{t[2,3]}\circ\cdots\circ \interp{P_2}^{t[r-1,r]}\circ \interp{P_{r-1}}^{t[1,\ldots,r-1]}$.
We then prove the result by induction on $r$. Let $\vec x\in\{0,1\}^r$ be a word. Then:
\begin{align*}
\interp{P_{r+1}}\ket{\vec x0}
&= \interp{P_2}^{t[1,2]}\circ \interp{P_2}^{t[2,3]}\circ\cdots\circ \interp{P_2}^{t[r,r+1]}\circ \interp{P_r}^{t[1,\ldots,r]}\ket{\vec x0}\\
&= \interp{P_2}^{t[1,2]}\circ \cdots\circ \interp{P_2}^{t[r,r+1]} \ket{1^{|\vec x|_1}0^{r-|\vec x|_1}0}\\
%&= \interp{P_2}^{t[1,2]}\circ \cdots\circ \ket{1^{|\vec x|_1}0^{r-|\vec x|_1}0}\\
&= \cdots\\
&= \ket{1^{|\vec x|_1}0^{r+1-|\vec x|_1}}
\end{align*}
and
\begin{align*}
\interp{P_{r+1}}\ket{\vec x1}
&= \interp{P_2}^{t[1,2]}\circ \interp{P_2}^{t[2,3]}\circ\cdots\circ \interp{P_2}^{t[r,r+1]}\circ \interp{P_r}^{t[1,\ldots,r]}\ket{\vec x1}\\
&= \interp{P_2}^{t[1,2]}\circ \cdots\circ \interp{P_2}^{t[r,r+1]} \ket{1^{|\vec x|_1}0^{r-|\vec x|_1}1}\\
&= \interp{P_2}^{t[1,2]}\circ \cdots\circ \ket{1^{|\vec x|_1}0^{r-1-|\vec x|_1}10}\\
&= \cdots\\
&= \interp{P_2}^{t[1,2]}\circ \cdots\circ\interp{P_2}^{t[|\vec x|_1,|\vec x|_1+1]} \circ \ket{1^{|\vec x|_1}10^{r-|\vec x|_1}}\\
&= \cdots\\
&= \ket{1^{|\vec x|_1+1}0^{r-|\vec x|_1}}
\qedhere
\end{align*}
\end{proof}

\begin{cor}%
\label{lem:rank}
The rank of $\interp{P_r}$ is exactly $r+1$.
\end{cor}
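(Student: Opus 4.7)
The plan is to read off the rank directly from the preceding proposition. Since rank is invariant under transposition, it suffices to compute $\operatorname{rank}(\interp{P_r}^t)$, which is the dimension of the image of $\interp{P_r}^t$.

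By the previous proposition, for every computational basis vector $\ket{\vec x}$ with $\vec x \in \{0,1\}^r$ we have $\interp{P_r}^t \ket{\vec x} = \ket{1^{|\vec x|_1} 0^{r - |\vec x|_1}}$. Since the $\ket{\vec x}$ span the whole domain, the image of $\interp{P_r}^t$ is spanned by the vectors $\{\ket{1^k 0^{r-k}} \mid 0 \le k \le r\}$. These are $r+1$ distinct computational basis vectors of $(\mathbb{C}^2)^{\otimes r}$, hence linearly independent, so the image has dimension exactly $r+1$.

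I would wrap this up by noting that every Hamming weight $k \in \{0, 1, \ldots, r\}$ is actually achieved (for instance by $\vec x = 1^k 0^{r-k}$ itself), so the span is neither smaller nor larger than $r+1$. The main (and only) point to verify is this non-degeneracy: no two distinct values of $k$ produce the same image vector, which is immediate since $\ket{1^k 0^{r-k}}$ determines $k$ uniquely. There is no real obstacle here; the corollary is essentially a one-line consequence of the explicit description of $\interp{P_r}^t$ on the standard basis.
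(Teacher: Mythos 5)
Your proof is correct and is exactly the intended derivation: the paper states this as a corollary of the preceding proposition with no further argument, and the reasoning you supply (image of $\interp{P_r}^t$ is spanned by the $r+1$ distinct basis vectors $\ket{1^k 0^{r-k}}$, all of which are attained, and rank is transposition-invariant) is the one-line justification the authors leave to the reader.
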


\begin{lem}%
\label{lem:alphas-on-M}For any $r\ge 2$ and any $\alpha \in \mathbb R$, $\zxct\vdash P_r\circ \theta_r(\alpha) = \theta_r(\alpha)$ i.e.,
\[\zxct\vdash~ 
%\InputIfFileExists{#1.tikz}{}{
\input{./figures/n-gn-alpha-to-M_n.tikz}%} % chktex 27
\]
\end{lem}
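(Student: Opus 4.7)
The plan is to proceed by induction on $r \geq 2$, with $r = 2$ as the base case. For $r = 2$ one must show $\zxct \vdash P_2 \circ \theta_2(\alpha) = \theta_2(\alpha)$ for every real $\alpha$. This identity has the same character as Lemma \ref{lem:alphas-on-X}: although $\interp{P_2}$ differs from $\interp{R}$, both act as the identity on every vector of the form $(1,e^{i\alpha})^{\otimes 2}$, so after fusing each input $\alpha$-dot with the appropriate green spider of $P_2$ via \s, the remaining constant sub-diagram can be collapsed back to two free $\alpha$-dots using the copy, bialgebra, and Euler-decomposition rules. I expect this base case to be the main technical effort and to be discharged along the same lines as Lemma \ref{lem:alphas-on-X}, ideally in the appendix.

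For the inductive step, assuming the statement for $r-1$, I would unfold the recursive definition of $P_r$. The recursion matches the semantic decomposition established just before this lemma, namely $P_r = \bigl(P_{r-1} \otimes \mathbb{I}\bigr) \circ \bigl(\mathbb{I}^{\otimes r-2} \otimes P_2\bigr) \circ \cdots \circ \bigl(P_2 \otimes \mathbb{I}^{\otimes r-2}\bigr)$. Writing $\theta_r(\alpha) = \rz{\alpha}^{\otimes r}$, the first $P_2$ to meet the input acts on two adjacent $\alpha$-dots on wires $(1,2)$, and by the base case applied locally (which is legitimate thanks to the locality of $\zxct \vdash$ recalled in Section \ref{sec:zx}) it is absorbed, leaving $\theta_r(\alpha)$ untouched below. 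Iterating, every successive $P_2$ on wires $(i,i{+}1)$ meets the same configuration and is likewise absorbed, so the diagram reduces to $\bigl(P_{r-1} \otimes \mathbb{I}\bigr) \circ \theta_r(\alpha) = \bigl(P_{r-1} \circ \theta_{r-1}(\alpha)\bigr) \otimes \rz{\alpha}$. The induction hypothesis then rewrites this to $\theta_{r-1}(\alpha) \otimes \rz{\alpha} = \theta_r(\alpha)$.

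The main obstacle is the base case: once $P_2$ is known to swallow an adjacent pair of $\alpha$-dots, the inductive cascade is essentially the structural observation that $P_r$ is a sorting network of copies of $P_2$ (plus a single $P_{r-1}$), each of which acts as the identity on the uniform input $\theta_r(\alpha)$. The one subtlety I would double-check in writing up the inductive step is that each invocation of the base case happens on a genuine sub-diagram of the current term — this is the standard locality property of $\zxct\vdash$, but it is worth stating explicitly so that the spider-fusion and absorption steps can be chained without semantic shortcuts.
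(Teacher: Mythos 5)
Your inductive step is correct and matches what the paper leaves implicit when it says the proof for $r>2$ is by induction on $r$: unfold $P_r$ as $\bigl(P_{r-1}\otimes\mathbb{I}\bigr)$ post-composed with a cascade of $P_2$'s on adjacent wires, absorb each $P_2$ into the uniform input $\theta_r(\alpha)$ using the base case applied locally (valid because $\zxct\vdash$ is compatible with the two compositions), and close with the induction hypothesis on $P_{r-1}$. The divergence from the paper is at the base case $r=2$, and it is a real one. You propose to establish $\zxct \vdash P_2 \circ \theta_2(\alpha) = \theta_2(\alpha)$ by a fresh diagrammatic derivation in the style of Lemma~\ref{lem:alphas-on-X}. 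That would be sound in principle, but it is genuine extra technical work which you acknowledge you have not carried out, so as written your proof has a hole exactly there. The paper instead avoids a second hand derivation entirely: $P_2 \circ R$ and $R$ are constant diagrams of the $\frac{\pi}{4}$-fragment with $\interp{P_2 \circ R} = \interp{R}$, so Theorem~\ref{thm:cliff-T} yields $\zxct \vdash P_2 \circ R = R$ for free; post-composing both sides with $\theta_2(\alpha)$ and using Lemma~\ref{lem:alphas-on-X} to rewrite $R\circ\theta_2(\alpha)$ to $\theta_2(\alpha)$ on each side then delivers the base case. The point of $R$ is precisely that the one laborious derivation (Lemma~\ref{lem:alphas-on-X}) only needs to be done once; any other diagram that fixes $\interp{\theta_2(\alpha)}$ pointwise, such as $P_2$, is then compared to $R$ at the constant level, where the already-established completeness of the Clifford+T fragment does the rest. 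If you write this up, either supply the missing multi-step rewriting for $P_2$, or, better, adopt the paper's cheaper argument.
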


\begin{proof}
Notice that $\interp{P_2\circ R} = \interp R$, so by completeness of the ZX-calculus for the $\frac \pi 4$ fragment, $\zxct\vdash P_2\circ R = R$, so  $\zx\vdash P_2\circ R \circ \theta_2(\alpha) = R\circ \theta_2(\alpha)$. According to Lemma~\ref{lem:alphas-on-X}, it implies $\zxct\vdash P_2\circ \theta_2(\alpha) = \theta_2(\alpha)$. The proof for $r>2$ is by induction on $r$.
%The result for $r\geq3$ is an easy recurrence on $r$. The base case $P_2$ is solved using proposition~\ref{prop:meta-2} since $P_2$ is a diagram with constant in $\frac \pi 4$$\mathbb Z$.
\end{proof}

We can now prove a similar statement as in lemma~\ref{lem:equivalence-X}:

\begin{lem}%
\label{lem:equivalence-Pk}
For any $r\ge 2$ and any $D_1,D_2 : r \to n$, \\
$(\forall \alpha\in \mathbb R, \interp{D_1\circ \theta_r(\alpha)} =  \interp{D_2\circ \theta_r(\alpha)}) \Leftrightarrow \interp{D_1\circ P_r}=\interp{D_2\circ P_r}$ i.e.,
%\begin{align*}
%\left(\forall\alpha\in\mathbb{R},~~ \interp{\tikzfig{r-gn-alpha-to-D_1}} = \interp{\tikzfig{r-gn-alpha-to-D_2}}\right)\\ \tag*{$\equi{} \interp{\tikzfig{P-to-D1}} = \interp{\tikzfig{P-to-D2}}$}
%\end{align*}
\begin{align*}
\left(\forall\alpha\in\mathbb{R},~~ \interp{
%\InputIfFileExists{#1.tikz}{}{
\input{./figures/r-gn-alpha-to-D_1.tikz}%} % chktex 27
} = \interp{
%\InputIfFileExists{#1.tikz}{}{
\input{./figures/r-gn-alpha-to-D_2.tikz}%} % chktex 27
}\right) \Leftrightarrow \interp{
%\InputIfFileExists{#1.tikz}{}{
\input{./figures/P-to-D1.tikz}%} % chktex 27
} = \interp{
%\InputIfFileExists{#1.tikz}{}{
\input{./figures/P-to-D2.tikz}%} % chktex 27
}
\end{align*}
where $\alpha$ does not appear in $D_1$ nor $D_2$.
\end{lem}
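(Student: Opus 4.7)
The plan is to mirror the strategy used for $R$ in Lemma \ref{lem:equivalence-X}: identify $\interp{P_r}$ as a map whose image is exactly the linear span $V := \mathop{\mathrm{span}}\{\interp{\theta_r(\alpha)} \mid \alpha \in \mathbb{R}\}$, and then argue that two diagrams post-composed with $P_r$ agree semantically iff $\interp{D_1}$ and $\interp{D_2}$ agree on $V$.

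The backward direction ($\Leftarrow$) is straightforward from Lemma \ref{lem:alphas-on-M}: applied semantically via soundness, it gives $\interp{P_r \circ \theta_r(\alpha)} = \interp{\theta_r(\alpha)}$, so
\[
\interp{D_i \circ \theta_r(\alpha)} = \interp{D_i \circ P_r \circ \theta_r(\alpha)} \quad\text{for } i=1,2,
\]
and the hypothesis $\interp{D_1 \circ P_r} = \interp{D_2 \circ P_r}$ yields the desired equality for every $\alpha$.

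The forward direction ($\Rightarrow$) requires the identification $V = \mathrm{Im}(\interp{P_r})$. First I would establish $\dim V = r+1$: decomposing
\[
\interp{\theta_r(\alpha)} = \sum_{\vec x \in \{0,1\}^r} e^{i\alpha |\vec x|_1}\, \ket{\vec x} = \sum_{k=0}^{r} e^{ik\alpha}\, s_k, \qquad s_k := \sum_{|\vec x|_1 = k} \ket{\vec x},
\]
a Vandermonde argument applied to $r+1$ distinct values of $\alpha$ shows that each $s_k$ lies in $V$, so $V$ has dimension exactly $r+1$. Next, Lemma \ref{lem:alphas-on-M} (semantically) gives $\interp{\theta_r(\alpha)} \in \mathrm{Im}(\interp{P_r})$, hence $V \subseteq \mathrm{Im}(\interp{P_r})$. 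Since Corollary \ref{lem:rank} states $\mathrm{rank}\,\interp{P_r} = r+1$, a dimension count forces $V = \mathrm{Im}(\interp{P_r})$. Now if $\interp{D_1 \circ \theta_r(\alpha)} = \interp{D_2 \circ \theta_r(\alpha)}$ for all $\alpha$, the linear maps $\interp{D_1}$ and $\interp{D_2}$ coincide on a spanning set of $V = \mathrm{Im}(\interp{P_r})$, so $\interp{D_1}\interp{P_r} = \interp{D_2}\interp{P_r}$, i.e.\ $\interp{D_1 \circ P_r} = \interp{D_2 \circ P_r}$.

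The only nontrivial ingredient is the dimension computation $\dim V = r+1$, but this is essentially a Vandermonde observation once the decomposition through the vectors $s_k$ is made; all other steps are purely formal consequences of Lemma \ref{lem:alphas-on-M} and Corollary \ref{lem:rank}. No additional ZX-derivation is needed beyond the semantic content already furnished by those results.
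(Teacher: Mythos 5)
Your proof is correct and follows essentially the same approach as the paper: both directions rest on showing that $\interp{P_r}$ is a projector onto $\mathop{\mathrm{span}}\{\interp{\theta_r(\alpha)}\mid\alpha\in\mathbb R\}$, combining Lemma~\ref{lem:alphas-on-M}, the rank bound from Corollary~\ref{lem:rank}, and a Vandermonde argument. The only cosmetic difference is in that last step --- you establish $\dim V = r+1$ by recovering the Hamming-weight vectors $s_k$ from $r+1$ samples, while the paper directly exhibits the $r+1$ vectors $\interp{\theta_r(j\pi/r)}$, $j=0,\dots,r$, as linearly independent; these are two phrasings of the same Vandermonde fact.
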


%
%\begin{prop}
%\label{prop:equivalence-n}
%Let $\theta_n(\alpha):=\begin{pmatrix}
%1\\e^{i\alpha}
%\end{pmatrix}^{\otimes n}$.
%
%
%  $M_n$ is a projector onto    $\mathop{\mathrm{span}} \{ \theta_n(\alpha), \alpha \in \mathbb{R}\}$. That is:
%  \begin{itemize}
%  \item $M_n \theta_n(\alpha) = \theta_n(\alpha)$ for all $\alpha$
%  \item For every vector $v$, $M_nv$ is a linear combination of the $\theta_n(\alpha)$.
%  \end{itemize}
%
%  In particular, if $A$ and $B$ are $p \times 2^n$ matrices, then
%  \[(\forall\alpha,~ A\theta(\alpha)=B\theta(\alpha)) \Leftrightarrow (AM_n=BM_n)\]
%
%
%\end{prop}

\begin{proof} The proof consists in showing that $\interp {P_r}$ is a projector onto $S_r= \mathop{\mathrm{span}} \{ \interp{\theta_r(\alpha)}~|~\alpha \in \mathbb{R}\}$. According to  Lemma~\ref{lem:alphas-on-M}, $\interp{P_r}$ is the identity on $S_r$, and $\interp {P_r}$ is of rank at most $r+1$ according to Corollary~\ref{lem:rank}, thus to finish the proof, it is sufficient to prove that the $r+1$ vectors $(\theta_r(\alpha^{(j)}))_{j=0\ldots r}$ are linearly independent, where $\alpha^{(j)} = j\pi/r$.

  Let $\lambda_0,\ldots,\lambda_r$ be scalars such that $\sum_j\lambda_j\theta_r(\alpha^{(j)})=0$.
  Notice that the $2^p$-th row (when rows are labeled from $1$ to $2^r$) of $\theta_r(\alpha^{(j)})$ is exactly $e^{ip\alpha^{(j)}}$.
  Therefore, if we look at all $2^p$-th rows of the equations, we obtain
\[ \begin{pmatrix}
1&1&\cdots&1\\
e^{i\alpha^{(0)}}&e^{i\alpha^{(1)}}&\cdots&e^{i\alpha^{(r)}}\\
\vdots&\vdots&\ddots&\vdots\\
e^{in\alpha^{(0)}}&e^{in\alpha^{(1)}}&\cdots&e^{in\alpha^{(r)}}
\end{pmatrix}\begin{pmatrix}
%\lambda_0&&&\\
%&\lambda_1&&\\
%&&\ddots&\\
%&&&\lambda_n
\lambda_0\\
\lambda_1\\
\vdots\\
\lambda_r
\end{pmatrix}=0 \]
However, the first matrix is a Vandermonde matrix (actually its transpose), with
$e^{i\alpha^{(j)}}=e^{i\alpha^{(l)}}$ iff $j=l$, which is enough to state that
this matrix is invertible. Therefore all $\lambda_j$ are equal to $0$
and the vectors $\theta_r(\alpha^{(j)})$ are linearly independent.
\end{proof}

We are now ready to prove the main theorem in the particular case of a single variable:

\begin{prop}\label{prop:1var}
For any $D_1(\alpha), D_2(\alpha)$ ZX-diagrams linear in $\alpha$ with constants in $\frac \pi 4 \mathbb Z$,
\[ \forall  \alpha \in \mathbb R, \interp{D_1( \alpha)}= \interp{D_2( \alpha)}      \,\,\,\,\,\Leftrightarrow\,\,\,\,\,  \forall  \alpha \in \mathbb R, \zxct\vdash D_1( \alpha) = D_2( \alpha) \]
%\[\left( \forall  \alpha \in \mathbb R, \interp{D_1( \alpha)}= \interp{D_2( \alpha)} \right)      \Leftrightarrow \left(  \forall  \alpha \in \mathbb R, \zx\vdash D_1( \alpha) = D_2( \alpha) \right)      \]
\end{prop}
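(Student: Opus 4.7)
The plan is to reduce the ``for all $\alpha$'' statement to a single constant equation between Clifford+T diagrams, to which Theorem \ref{thm:cliff-T} applies, by exploiting the projector $P_r$ introduced above. The right-to-left direction is pointwise soundness, so only the forward direction needs work. Assume $\interp{D_1(\alpha)} = \interp{D_2(\alpha)}$ for every $\alpha \in \mathbb{R}$.

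First I would invoke Proposition \ref{prop:var2inp} to extract the variable: it produces constant diagrams $D_1', D_2'$ in the $\frac{\pi}{4}$-fragment, and $r = \mu_\alpha$, such that the equivalence $D_1(\alpha) = D_2(\alpha) \iff D_1' \circ \theta_r(\alpha) = D_2' \circ \theta_r(\alpha)$ is $\zxct$-provable for every $\alpha$. By soundness, the semantic hypothesis transfers: $\interp{D_1' \circ \theta_r(\alpha)} = \interp{D_2' \circ \theta_r(\alpha)}$ for all $\alpha$.

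Next I would collapse the universally quantified semantic equality into a single constant one. If $r = 0$, the diagrams $D_1', D_2'$ are already constant and the hypothesis is a single Clifford+T equation. If $r = 1$, the vectors $\interp{\theta_1(0)}$ and $\interp{\theta_1(\pi)}$ form a basis of $\mathbb{C}^2$, so the uniform equality is equivalent to $\interp{D_1'} = \interp{D_2'}$. If $r \geq 2$, Lemma \ref{lem:equivalence-Pk} converts the universally quantified equality into the single equation $\interp{D_1' \circ P_r} = \interp{D_2' \circ P_r}$. In each case, the resulting equality is between diagrams of the $\frac{\pi}{4}$-fragment, so Theorem \ref{thm:cliff-T} provides a $\zxct$-proof, namely $\zxct \vdash D_1' \circ P_r = D_2' \circ P_r$ in the generic case (with $P_r$ replaced by the identity for $r \leq 1$).

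Finally I would post-compose both sides with $\theta_r(\alpha)$ and use Lemma \ref{lem:alphas-on-M}, which gives $\zxct \vdash P_r \circ \theta_r(\alpha) = \theta_r(\alpha)$, to obtain $\zxct \vdash D_1' \circ \theta_r(\alpha) = D_2' \circ \theta_r(\alpha)$ for every $\alpha$. Applying Proposition \ref{prop:var2inp} in the reverse direction then yields $\zxct \vdash D_1(\alpha) = D_2(\alpha)$ for every $\alpha$. The hard step is the case $r \geq 2$: because $\{\interp{\theta_r(\alpha)} : \alpha \in \mathbb{R}\}$ does not span $\mathbb{C}^{2^r}$, one cannot reduce ``$\forall \alpha$'' to finitely many instances and invoke Clifford+T completeness naively. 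This is precisely where the projector $P_r$ is needed, together with its key property (Lemma \ref{lem:alphas-on-M}) that it is absorbed by $\theta_r(\alpha)$ inside $\zxct$, so that the constant equation $D_1' \circ P_r = D_2' \circ P_r$ is strong enough to re-derive the variable equation uniformly in $\alpha$.
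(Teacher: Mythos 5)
Your proposal is correct and follows essentially the same route as the paper: extract the variable via Proposition \ref{prop:var2inp}, collapse the universally quantified semantic equality using Lemma \ref{lem:equivalence-Pk}, invoke Clifford+T completeness (Theorem \ref{thm:cliff-T}), then re-insert $\theta_r(\alpha)$ with Lemma \ref{lem:alphas-on-M} and undo Proposition \ref{prop:var2inp}. Your explicit handling of the edge cases $r=0$ and $r=1$ is a small but welcome refinement, since Lemma \ref{lem:equivalence-Pk} and Lemma \ref{lem:alphas-on-M} are stated only for $r\geq 2$, a subtlety the paper's proof glosses over.
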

\begin{proof}
{[$\Leftarrow$]} is a direct consequence of the soundness of the ZX-calculus.
[$\Rightarrow$] Assume $\forall  \alpha \in \mathbb R, \interp{D_1( \alpha)}= \interp{D_2( \alpha)}$. According to Proposition~\ref{prop:var2inp}, $\forall  \alpha \in \mathbb R, \interp{D'_1\circ \theta_r(\alpha)}= \interp{D'_2\circ \theta_r( \alpha)}$ where $D'_i$ are in the $\frac \pi 4$-fragment of the ZX-calculus. It implies, according to Lemma~\ref{lem:equivalence-Pk}, that  $\interp{D_1'\circ P_r} = \interp{D_2'\circ P_r}$. Thanks to the completeness of the ZX-calculus for the $\frac \pi 4$-fragment, $\zxct\vdash D_1'\circ P_r= D_2'\circ P_r$, so $\forall \alpha\in \mathbb R, \zxct\vdash D_1'\circ P_r \circ \theta_r(\alpha)= D_2'\circ P_r\circ \theta_r(\alpha)$. Thus, by Lemma~\ref{lem:alphas-on-M}, $\forall \alpha\in \mathbb R,~\zxct\vdash D_1' \circ \theta_r(\alpha)= D_2'\circ  \theta_r(\alpha)$, which is equivalent to $\forall  \alpha \in \mathbb R,~\zxct\vdash D_1( \alpha) = D_2( \alpha)$ according to Proposition~\ref{prop:var2inp}.
\end{proof}

\subsection{Multiple variables}

Proposition~\ref{prop:var2inp} can be straighforwardly extended to multiple variables:

\begin{prop}\label{prop:vars2inp}
For any $D_1(\vec{\alpha}), D_2(\vec{\alpha}): n\to m$ two ZX-diagrams linear in $\vec{\alpha}=\alpha_1,\ldots, \alpha_k$ with constants in $\frac \pi 4 \mathbb Z$, there exist  $D_1', D'_2:(\sum_{i=1}^k r_i)\to n+m$ two ZX-diagrams with angles multiple of $\frac \pi 4$ such that, for any $\vec{\alpha} \in \mathbb R^k$,
\begin{equation}
D_1(\vec{\alpha})= D_2(\vec{\alpha}) \quad\Leftrightarrow\quad D_1'\circ \theta_{\vec r}(\vec{\alpha}) =  D_2'\circ \theta_{\vec r}(\vec{\alpha})
\end{equation}
is provable using the axioms of the ZX-calculus, where $r_i$ is the multiplicity of $\alpha_i$ in $D_1(\vec{\alpha})=D_2(\vec{\alpha})$, $\vec r:=r_1, \ldots, r_k$, and $\theta_{\vec{r}}(\vec{\alpha}):=\theta_{r_1}(\alpha_1)  \otimes \ldots\otimes \theta_{r_k}(\alpha_k)$.

Pictorially:
\def\fig{theta_r-alpha-on-diagrams-gen-arxiv}
\begin{align*}
&\zxct\vdash\input{./figures/\fig/\fig_00.tikz}\eq{}\input{./figures/\fig/\fig_01.tikz}\equi{}\\
&\zxct\vdash\input{./figures/\fig/\fig_02.tikz}\eq{}\input{./figures/\fig/\fig_03.tikz}
\end{align*}
\end{prop}

 Similarly Lemma~\ref{lem:equivalence-Pk} can also be extended to multiple variables:
\begin{lem}%
\label{lem:equivalence-Pk-vars}
For any $k\ge  0$, any $\vec r=r_1,\ldots, r_k\in \mathbb N^k$ and any $D_1,D_2 : (\sum_i r_i) \to n$, \\
$(\forall \vec{\alpha}\in \mathbb R^k\!, \interp{D_1\circ \theta_{\vec r}(\vec{\alpha})} =  \interp{D_2\circ \theta_{\vec r}(\vec{\alpha})}) \Leftrightarrow \interp{D_1\circ P_{\vec r}}=\interp{D_2\circ P_{\vec r}}$
%\[\left(\forall\alpha\in\mathbb{R},~~ \interp{\tikzfig{2-gn-alpha-to-D_1}}  = \interp{\tikzfig{2-gn-alpha-to-D_2}}\right) \Leftrightarrow \interp{\tikzfig{2-gn-D_1-X}} = \interp{\tikzfig{2-gn-D_2-X}}\]
where no $\alpha_i$ appear in $D_1$ or $D_2$, and $P_{\vec r} = P_{r_1} \otimes \ldots \otimes P_{r_k}$.
% [[TODO: adapter le dessin n+1 -> m]]
\end{lem}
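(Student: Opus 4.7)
My plan is to mimic the proof of Lemma~\ref{lem:equivalence-Pk} but exploiting the tensor-product structure of $P_{\vec r}$ and $\theta_{\vec r}(\vec\alpha)$. The direction $[\Leftarrow]$ is immediate: assuming $\interp{D_1\circ P_{\vec r}} = \interp{D_2\circ P_{\vec r}}$, compose both sides on the right by $\interp{\theta_{\vec r}(\vec\alpha)}$, and use the identity $\interp{P_{\vec r}}\circ\interp{\theta_{\vec r}(\vec\alpha)} = \interp{\theta_{\vec r}(\vec\alpha)}$, which follows from Lemma~\ref{lem:alphas-on-M} applied factor-wise, since both $P_{\vec r}$ and $\theta_{\vec r}(\vec\alpha)$ split as tensor products indexed by $i = 1, \ldots, k$.

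For $[\Rightarrow]$, the strategy is to show that $\interp{P_{\vec r}}$ is a projector whose image equals $S_{\vec r} := \mathop{\mathrm{span}}\{\interp{\theta_{\vec r}(\vec\alpha)} : \vec\alpha \in \mathbb R^k\}$. Given that, the hypothesis says the linear maps $\interp{D_1}$ and $\interp{D_2}$ agree on a spanning set of the image of $\interp{P_{\vec r}}$, hence agree on the image, hence $\interp{D_i \circ P_{\vec r}}$ coincide. Since $\interp{P_{\vec r}} = \bigotimes_i \interp{P_{r_i}}$, the image of $\interp{P_{\vec r}}$ is the tensor product of the images of the $\interp{P_{r_i}}$, so its dimension is $\prod_i (r_i+1)$ by Corollary~\ref{lem:rank}. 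Moreover, by Lemma~\ref{lem:alphas-on-M}, $\interp{P_{\vec r}}$ fixes every $\interp{\theta_{\vec r}(\vec\alpha)}$, so $S_{\vec r}$ is contained in this image.

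The remaining step — and the main subtlety — is the reverse inclusion: showing $S_{\vec r}$ has dimension $\prod_i(r_i+1)$. I would pick the angles $\alpha_i^{(j_i)} := j_i\pi/r_i$ for $j_i \in \{0,\ldots,r_i\}$ and consider the $\prod_i(r_i+1)$ vectors $\interp{\theta_{\vec r}(\vec\alpha^{(\vec j)})} = \bigotimes_i \interp{\theta_{r_i}(\alpha_i^{(j_i)})}$. In the proof of Lemma~\ref{lem:equivalence-Pk}, a Vandermonde argument already shows that for each fixed $i$ the family $\{\interp{\theta_{r_i}(\alpha_i^{(j_i)})}\}_{j_i=0,\ldots,r_i}$ is linearly independent; since tensor products of linearly independent families remain linearly independent, the full family of $\prod_i(r_i+1)$ pure-tensor vectors is independent, and as each of them lies in $S_{\vec r}$, we obtain $\dim S_{\vec r} \geq \prod_i(r_i+1)$, forcing equality.

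I expect the bookkeeping of tensoring Vandermonde arguments across the $k$ indices to be the only mildly delicate point; everything else reduces mechanically to the single-variable case already handled in Lemma~\ref{lem:alphas-on-M} and Lemma~\ref{lem:equivalence-Pk}. Once $\interp{P_{\vec r}}$ is identified as the projector onto $S_{\vec r}$, the equivalence stated in the lemma falls out exactly as in the proof of Lemma~\ref{lem:equivalence-Pk}, and the case $k=0$ (in which both $P_{\vec r}$ and $\theta_{\vec r}(\vec\alpha)$ are empty diagrams) is vacuous.
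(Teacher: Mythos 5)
Your proof is correct and follows exactly the route the paper gestures at (it states only that the proof "is similar to that of Lemma~\ref{lem:equivalence-Pk}"): show $\interp{P_{\vec r}}$ is a projector whose image is $S_{\vec r}=\mathop{\mathrm{span}}\{\interp{\theta_{\vec r}(\vec\alpha)}\}$, using Lemma~\ref{lem:alphas-on-M} factor-wise for the fixing property, Corollary~\ref{lem:rank} for the rank bound, and the fact that a tensor product of linearly independent families is linearly independent to lower-bound $\dim S_{\vec r}$ without redoing the Vandermonde computation multi-dimensionally. The only nit worth noting is that Lemma~\ref{lem:alphas-on-M} is stated for $r\ge 2$, so the cases $r_i\in\{0,1\}$ (where $P_{r_i}$ is empty or the identity wire) must be observed to be trivially true before "applying factor-wise," but this is immediate.
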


Using Proposition~\ref{prop:vars2inp} and Lemma~\ref{lem:equivalence-Pk-vars} (whose proofs are similar to those of~\ref{prop:var2inp} and~\ref{lem:equivalence-Pk}), the proof of Theorem~\ref{thm:lin-diag} is similar to the single variable case (Proposition~\ref{prop:1var}) by induction.

Notice that Theorem~\ref{thm:lin-diag} implies that if $\forall \vec{\alpha} \in \mathbb R^k,  \interp{D_1(\vec{\alpha})}= \interp{D_2(\vec{\alpha})} $ then $D_1(\vec{\alpha})=D_2(\vec{\alpha})$ has a \emph{uniform} proof in the ZX-calculus in the sense that the structure of the proof is the same for all the values of $\vec{\alpha} \in \mathbb R^k$. Indeed, following the proof of Theorem~\ref{thm:lin-diag}, the sequence of axioms which leads to a proof of $D_1(\vec{\alpha})=D_2(\vec{\alpha})$ is independent of the particular values of $\vec{\alpha}$.  Notice, however, that Theorem~\ref{thm:lin-diag} is non constructive. %The proof strongly relies on the completeness of ZX-calculus for Clifford+T

\section{Applications of Linear Diagrams}%
\label{sec:appli-lin-diag}

In order to prove that $\forall \vec{\alpha} \in \mathbb R^k, \zxct\vdash D_1(\vec{\alpha})=D_2(\vec{\alpha})$ using  Theorem~\ref{thm:lin-diag}, one has to double check the semantic condition $\interp{D_1(\vec{\alpha})}= \interp{D_2(\vec{\alpha})}$ for all $\vec{\alpha} \in \mathbb R^k$, which might not be easy in practice. We show in the following alternative ways to prove $\forall \vec{\alpha} \in \mathbb R^k, \zxct\vdash D_1(\vec{\alpha})=D_2(\vec{\alpha})$, the two first based on a finite case-based reasoning in the ZX-calculus, and the last one by diagram substitution.

\subsection{Considering a basis}

\begin{thm}%
\label{thm:basis}
For any ZX-diagrams $D_1(\vec{\alpha}),D_2(\vec{\alpha}):1\to m$ linear in $\vec{\alpha}=\alpha_1, \ldots, \alpha_k$ with constants in $\frac \pi 4 \mathbb Z$, if %for any $j\in \{0,1\}$ and any $\vec{\alpha} \in \mathbb R^k$,
 \[\forall j\in \{0,1\},\forall \vec{\alpha} \in \mathbb R^k,  \zxct\vdash {D_1(\vec{\alpha})\circ R_X(j\pi)}= {D_2(\vec{\alpha})\circ R_X(j\pi)}\] % and $ \zx\vdash{D_1(\vec{\alpha})\circ (R_Z(\pi) \otimes \mathbb I^{\otimes n})}= {D_2(\vec{\alpha})\circ (R_Z(\pi) \otimes \mathbb I^{\otimes n})} $\\
 then \[\forall \vec{\alpha} \in \mathbb R^k, \zxct\vdash D_1(\vec{\alpha}) = D_2(\vec{\alpha})\]
\end{thm}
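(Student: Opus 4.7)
The plan is to reduce the claim to Theorem~\ref{thm:lin-diag} by observing that $\{R_X(0), R_X(\pi)\}$ acts as a basis of the one-qubit input space, so that matching after pre-composition with these two diagrams suffices to force equality of the underlying linear maps.

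First I would invoke soundness: the hypothesis $\zxct \vdash D_1(\vec\alpha)\circ R_X(j\pi) = D_2(\vec\alpha)\circ R_X(j\pi)$ for $j\in\{0,1\}$ and all $\vec\alpha \in \mathbb{R}^k$ gives, for those $j$ and $\vec\alpha$,
\[\interp{D_1(\vec\alpha)} \circ \interp{R_X(j\pi)} = \interp{D_2(\vec\alpha)} \circ \interp{R_X(j\pi)}.\]
Next I would compute the two relevant vectors. Using the standard interpretation we have
\[\interp{R_X^{(0,1)}(j\pi)} = \interp{~\tikzfig{Hadamard}~} \cdot \begin{pmatrix}1\\ e^{ij\pi}\end{pmatrix} = \frac{1}{\sqrt{2}}\begin{pmatrix}1+e^{ij\pi}\\ 1-e^{ij\pi}\end{pmatrix},\]
so $\interp{R_X(0)} = \sqrt{2}\,\ket{0}$ and $\interp{R_X(\pi)} = \sqrt{2}\,\ket{1}$. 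Hence $\{\interp{R_X(0)}, \interp{R_X(\pi)}\}$ is (up to invertible scalars) the computational basis of $\mathbb{C}^2$.

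Since the linear maps $\interp{D_1(\vec\alpha)}$ and $\interp{D_2(\vec\alpha)}$ from $\mathbb{C}^2$ to $\mathbb{C}^{2^m}$ agree on a basis of their domain for every $\vec\alpha$, they are equal as matrices:
\[\forall \vec\alpha \in \mathbb{R}^k,\quad \interp{D_1(\vec\alpha)} = \interp{D_2(\vec\alpha)}.\]

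Finally, because $D_1(\vec\alpha)$ and $D_2(\vec\alpha)$ are linear in $\vec\alpha$ with constants in $\frac{\pi}{4}\mathbb{Z}$, Theorem~\ref{thm:lin-diag} applies and yields $\zxct \vdash D_1(\vec\alpha) = D_2(\vec\alpha)$ for every $\vec\alpha$. There is no genuinely hard step here: the only content is the elementary linear-algebra observation that two qubit inputs $R_X(0)$ and $R_X(\pi)$ span $\mathbb{C}^2$, and that linearity-with-constants is preserved on the subdiagrams $D_i(\vec\alpha)$, so Theorem~\ref{thm:lin-diag} can be used as a black box.
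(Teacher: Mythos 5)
Your proof is correct and matches the paper's own argument step for step: soundness reduces the hypothesis to agreement of $\interp{D_1(\vec\alpha)}$ and $\interp{D_2(\vec\alpha)}$ on the vectors $\ket0,\ket1$ (up to an invertible scalar $\sqrt2$), hence as linear maps, after which Theorem~\ref{thm:lin-diag} gives derivability. The only cosmetic difference is that you spell out the computation of $\interp{R_X(j\pi)}$, which the paper leaves implicit.
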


\begin{proof}
Assume $\zxct\vdash {D_1(\vec{\alpha})\circ R_X(j\pi)}= {D_2(\vec{\alpha})\circ R_X(j\pi)}$ for any $j\in \{0,1\}$ and any $\vec{\alpha} \in \mathbb R^k$.
It implies that for $x\in \left\lbrace \left(\begin{array}{c}1\\0\end{array}\right),\left(\begin{array}{c}0\\1\end{array}\right)\right\rbrace$, $\interp{D_1(\vec{\alpha})} x= \interp{D_2(\vec{\alpha})}x$, so $\interp{D_1(\vec{\alpha})} = \interp{D_2(\vec{\alpha})}$, which implies according to Theorem~\ref{thm:lin-diag} $\forall \vec{\alpha} \in \mathbb{R}^k, \zxct\vdash D_1(\vec{\alpha}) = D_2(\vec{\alpha})$. % chktex 1
\end{proof}

Notice that the Theorem~\ref{thm:basis} can be applied recursively: in order to prove the equality between two diagrams  with  $n$ inputs, $m$ outputs, and constants in $\frac{\pi}{4} \mathbb{Z}$, one can consider the $2^{n+m}$ ways to fix these inputs/outputs in a standard  basis states. It reduces the existence of a proof between two diagrams with constants in $\frac{\pi}{4} \mathbb{Z}$ to the existence of proofs on scalar diagrams (diagrams with no input and no output).   %It implies

\begin{cor}
\label{cor:distribution}
\[\forall \alpha, \beta\in \mathbb R, \zxct\vdash~~ 
%\InputIfFileExists{#1.tikz}{}{
\input{./figures/add-axiom-2.tikz}%} % chktex 27
\]
\end{cor}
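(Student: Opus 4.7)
The plan is to invoke Theorem~\ref{thm:basis} (applied recursively, as noted in the remark right after its statement) on both sides of the claimed equation. Both diagrams in \callrule{rule-A}{A$_2$} are linear in $\alpha,\beta$ with constants in $\frac\pi4\mathbb Z$, so the theorem applies. Composing each input with $R_X(j\pi)$ for $j\in\{0,1\}$ reduces the equation to a finite collection of scalar equations, each still linear in $\alpha$ and $\beta$ but with no remaining inputs or outputs.

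First I would unfold what happens when a basis state is fed into a green spider $R_Z^{(1,m)}(\alpha)$: on $\ket0=\rx{}$ the $\alpha$ disappears (a green spider absorbs $\ket 0$), while on $\ket1=\rx{\pi}$ it contributes a scalar factor $e^{i\alpha}$ and the remaining spider keeps no parameter. Applied to each input/output wire in turn, this collapses every spider bearing $\alpha$ or $\beta$ either to a spider with trivial parameter or to a scalar phase multiplier. Hence, after the recursive basis-state reduction, the only residual role of the variables is as global phases $e^{i\alpha}$, $e^{i\beta}$, $e^{i(\alpha+\beta)}$, etc., attached to otherwise constant (\frag 4) diagrams.

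Next I would verify each of the $2^{n+m}$ resulting scalar equations. Each is a statement between two scalar diagrams whose underlying constant part lies in the Clifford+T fragment, multiplied by some fixed combination of phase scalars. By soundness of the ZX-calculus the scalar identity is either vacuously true (both sides compute the same complex number for every $\alpha,\beta$) or reduces to a linear equality among phases which is already an instance of Theorem~\ref{thm:lin-diag} applied to a single-scalar case. Either way, Theorem~\ref{thm:lin-diag} (equivalently, its single-variable specialisation together with the completeness result for Clifford+T, Theorem~\ref{thm:cliff-T}) guarantees a $\zxct$-derivation of each case. Feeding these derivations back into the hypothesis of Theorem~\ref{thm:basis} then yields $\zxct\vdash$\callrule{rule-A}{A$_2$} for all $\alpha,\beta\in\mathbb R$.

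The main obstacle I anticipate is purely bookkeeping: the recursive basis-state reduction produces several branches, and within each branch the spiders must be simplified using the usual Clifford+T manipulations (\s, \h, \picom) before matching the two sides. No individual case is conceptually difficult — each boils down to either a trivial scalar identity or a one-line instance of spider fusion — but one has to be systematic to avoid redundancies, since the same scalar identity (e.g. $e^{i\alpha}\cdot e^{i\beta}=e^{i(\alpha+\beta)}$) will typically re-occur across several branches.
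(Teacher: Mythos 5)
Your high-level strategy matches the paper exactly: Corollary~\ref{cor:distribution} is proved by recursively applying Theorem~\ref{thm:basis}, plugging $\rx{}$ and $\rx{$\pi$}$ on the input, and then (in the nontrivial branch) again on one of the outputs, until both sides collapse to the same diagram.

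There is a real gap, though, in how you characterize the leaves of the recursion. You claim that after the reduction ``the only residual role of the variables is as global phases,'' and that each residual case is ``a trivial scalar identity or a one-line instance of spider fusion.'' Neither of these is accurate. First, after plugging $\rx{$\pi$}$ into the input, the variables $\alpha,\beta$ still sit on spiders \emph{inside} the diagram, not as detached phase scalars; getting them into the form you describe already takes \cp, \picom, Lemma~\ref{lem:k1} and Lemma~\ref{lem:multiplying-global-phases}. Second, the reductions that close the branches do not just use \s: they invoke \bialg, \supp, \picom, \cp, and a handful of appendix lemmas (\ref{lem:bicolor-0-alpha}, \ref{lem:green-state-pi_2-is-red-state-minus-pi_2}, \ref{lem:2-is-sqrt2-squared}, \ref{lem:inverse}, \ref{lem:hopf}). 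So the ``bookkeeping'' you flag as the main obstacle is in fact the substantive content of the proof, and it needs to be carried out to completion rather than asserted. Finally, falling back to Theorem~\ref{thm:lin-diag} on the scalar branches is logically valid but somewhat defeats the purpose: that route requires re-verifying the semantics of each scalar case, whereas the whole point of Theorem~\ref{thm:basis} (which the paper's proof actually achieves) is that the recursion terminates with two sides that are \emph{syntactically} equal after a concrete $\zxct$ derivation, with no further semantic check.
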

\begin{proof}
We can prove that this equality is derivable by plugging our basis $\left(\rx{},\rx{$\pi$}\right)$ on the input and one of the outputs. The detail is given in the appendix at Section~\ref{prf:distribution}.
\end{proof}

\subsection{Considering a finite set of angles}

%\begin{thm}
%For any ZX-diagrams $D_1(\vec{\alpha}),D_2(\vec{\alpha}):n\to m$ linear in $\alpha_1, \ldots, \alpha_k$ with constants in $\frac \pi 4 \mathbb Z$, if %for any $j\in \{0,1\}$ and any $\vec{\alpha} \in \mathbb R^k$,
% \[\forall \alpha_1\in K,\forall (\alpha_2,\ldots \alpha_k) \in \mathbb R^{k-1},  \zx\vdash {D_1(\vec{\alpha})}= {D_2(\vec{\alpha})}\] % and $ \zx\vdash{D_1(\vec{\alpha})\circ (R_Z(\pi) \otimes \mathbb I^{\otimes n})}= {D_2(\vec{\alpha})\circ (R_Z(\pi) \otimes \mathbb I^{\otimes n})} $\\
% then \[\forall \vec{\alpha} \in \mathbb R^k, \zx\vdash D_1(\vec{\alpha}) = D_2(\vec{\alpha})\]
% with $K$  a set of $\mu+1$  'independent' angles where $\mu$ is the multiplicity of $\alpha$ in $D_1(\vec{\alpha}) = D_2(\vec{\alpha})$.
%\end{thm}

\begin{thm}%
\label{thm:valuations}
For any ZX-diagrams $D_1(\vec{\alpha}),D_2(\vec{\alpha}):n\to m$ linear in $\vec{\alpha} = \alpha_1, \ldots, \alpha_k$ with constants in $\frac \pi 4 \mathbb Z$, if %for any $j\in \{0,1\}$ and any $\vec{\alpha} \in \mathbb R^k$,
 \[\forall \vec{\alpha} \in T_1\times \cdots \times T_k,  \zxct\vdash {D_1(\vec{\alpha})}= {D_2(\vec{\alpha})}\] % and $ \zx\vdash{D_1(\vec{\alpha})\circ (R_Z(\pi) \otimes \mathbb I^{\otimes n})}= {D_2(\vec{\alpha})\circ (R_Z(\pi) \otimes \mathbb I^{\otimes n})} $\\
 then \[\forall \vec{\alpha} \in \mathbb R^k, \zxct\vdash D_1(\vec{\alpha}) = D_2(\vec{\alpha})\]
 with $T_i$  a set of $\mu_i+1$ % 'independent' ({\bf TODO}) angles
 distinct angles in $\mathbb{R}/2\pi\mathbb{Z}$
  where $\mu_i$ is the multiplicity of $\alpha_i$ in $D_1(\vec{\alpha}) = D_2(\vec{\alpha})$.
\end{thm}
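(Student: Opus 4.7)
The plan is to reduce this to a polynomial interpolation argument and then invoke Theorem~\ref{thm:lin-diag}. By soundness of $\zxct$, the hypothesis gives the semantic equality $\interp{D_1(\vec\alpha)} = \interp{D_2(\vec\alpha)}$ for every $\vec\alpha \in T_1 \times \cdots \times T_k$. If we can upgrade this to semantic equality on all of $\mathbb{R}^k$, then Theorem~\ref{thm:lin-diag} immediately yields the required provability $\zxct \vdash D_1(\vec\alpha) = D_2(\vec\alpha)$ for every $\vec\alpha \in \mathbb{R}^k$, so the whole task reduces to this upgrade.

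The upgrade rests on the observation that, because $D_1$ and $D_2$ are linear in $\vec\alpha$, every entry of $\interp{D_i(\vec\alpha)}$ is a Laurent polynomial in the variables $z_j := e^{i\alpha_j}$. Indeed, each node of type $R_Z^{(n,m)}(\sum_j n_j\alpha_j + c)$ contributes $e^{ic}\, z_1^{n_1}\cdots z_k^{n_k}$ in its single non-trivial matrix entry, red spiders reduce to green ones via Hadamards with constant coefficients, and spacial and sequential composition amount to forming sums of products of such entries. An induction on the construction of the diagram, mirroring the inductive definition of $\mu^\pm_{\alpha_j}$, shows that the exponent of $z_j$ appearing in any entry of $\interp{D_i(\vec\alpha)}$ lies in $[-\mu^-_{\alpha_j}(D_i),\, \mu^+_{\alpha_j}(D_i)]$.

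Setting $M^-_j := \max_i \mu^-_{\alpha_j}(D_i)$, we clear denominators: multiplying each entry of $F(\vec\alpha) := \interp{D_1(\vec\alpha)} - \interp{D_2(\vec\alpha)}$ by $z_1^{M^-_1}\cdots z_k^{M^-_k}$ yields an honest polynomial $\tilde F \in \mathbb{C}[z_1,\ldots,z_k]$ whose degree in $z_j$ is at most $\max_i \mu^+_{\alpha_j}(D_i) + M^-_j = \mu_j$. Since the angles in each $T_j$ are pairwise distinct modulo $2\pi$, they give rise to $\mu_j+1$ pairwise distinct complex values $e^{i\alpha}$, and by the hypothesis $\tilde F$ vanishes on the product $\{e^{i\alpha} : \alpha \in T_1\} \times \cdots \times \{e^{i\alpha}:\alpha \in T_k\}$. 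A standard induction on $k$, using at each step that a univariate polynomial of degree $\le d$ vanishing at $d+1$ points is zero, forces $\tilde F \equiv 0$ on $\mathbb{C}^k$, and hence $F \equiv 0$ on $\mathbb{R}^k$. Theorem~\ref{thm:lin-diag} then concludes the argument.

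The main obstacle will be the degree bookkeeping through tensor products and compositions: one must carefully verify that the additive behaviour of maximal positive and negative $z_j$-exponents matches the additive definition $\mu^\pm_{\alpha_j}(D\otimes D') = \mu^\pm_{\alpha_j}(D\circ D') = \mu^\pm_{\alpha_j}(D)+\mu^\pm_{\alpha_j}(D')$ of Section~\ref{sec:lin-diag}, so that after clearing by $z_j^{M^-_j}$ the effective degree in $z_j$ is bounded by exactly $\mu_j$, and not $\mu_j+1$. Once this bookkeeping is in place, the polynomial vanishing step and the final appeal to Theorem~\ref{thm:lin-diag} are entirely routine.
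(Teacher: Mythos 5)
Your proof is correct and follows the same overall skeleton as the paper's: soundness turns the hypothesis into semantic equality on the finite grid $T_1\times\cdots\times T_k$, you upgrade this to semantic equality on all of $\mathbb{R}^k$, and Theorem~\ref{thm:lin-diag} converts semantic equality back into provability. The upgrade step is where you diverge. The paper first isolates the variables as inputs via Proposition~\ref{prop:vars2inp}, reducing to $\interp{D_1'\circ\theta_{\vec r}(\vec\alpha)}=\interp{D_2'\circ\theta_{\vec r}(\vec\alpha)}$ with constant $D_i'$, and then appeals to the Vandermonde computation inside the proof of Lemma~\ref{lem:equivalence-Pk}, which shows that $\mu_j+1$ pairwise-distinct angles already span the $(\mu_j+1)$-dimensional space $S_{\mu_j}=\mathop{\mathrm{span}}\{\interp{\theta_{\mu_j}(\alpha)}\,:\,\alpha\in\mathbb{R}\}$; agreement on a spanning set of each factor then extends to all $\theta_{\vec r}(\vec\alpha)$ and hence to all $\vec\alpha$. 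You instead view the entries of $\interp{D_i(\vec\alpha)}$ directly as Laurent polynomials in $z_j=e^{i\alpha_j}$, clear the negative powers, and invoke grid interpolation. The two mechanisms are the same linear-algebraic fact in different clothes — both reduce to a Vandermonde determinant — and your degree bookkeeping does close correctly at $\mu_j$: the spider, tensor and composition cases all respect the additive bounds $\mu^\pm_{\alpha_j}(D\circ D')=\mu^\pm_{\alpha_j}(D)+\mu^\pm_{\alpha_j}(D')$ and $\mu^\pm_{\alpha_j}(D\otimes D')=\mu^\pm_{\alpha_j}(D)+\mu^\pm_{\alpha_j}(D')$, and multiplying by $z_j^{M^-_j}$ with $M^-_j=\max_i\mu^-_{\alpha_j}(D_i)$ brings the $z_j$-degree to at most $\max_i\mu^+_{\alpha_j}(D_i)+M^-_j=\mu_j$. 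Your route is slightly more self-contained for this particular theorem, as it sidesteps Proposition~\ref{prop:vars2inp} and the $\theta_r/P_r/S_r$ projector framework in the semantic step, at the cost of re-deriving the exponent bounds that are implicitly encoded in the definition of multiplicity; Theorem~\ref{thm:lin-diag} still rests on that framework under the hood either way.
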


\begin{proof}%{\bf TODO?}
In the proof of Lemma~\ref{lem:equivalence-Pk}, we actually only used $\mu_{\alpha}+1$ values of $\alpha$, that constitute a basis of $S_{\mu_{\alpha}}$. This extends naturally to several variables: the dimension of $S_{\mu_{\alpha_1}}\times\cdots\times S_{\mu_{\alpha_k}}$ is $(\mu_{\alpha_1}+1)\times\cdots\times(\mu_{\alpha_k}+1)$, and taking $\vec{\alpha} \in T_1\times \cdots \times T_k$ gives as many linearly independent vectors in (hence a basis of) $S_{\mu_{\alpha_1}}\times\cdots\times S_{\mu_{\alpha_k}}$.
\end{proof}

\begin{cor}%
\label{cor:big-scalar-equation}
\[
%\InputIfFileExists{#1.tikz}{}{
\input{./figures/big-scalar-equation.tikz}%} % chktex 27
\]
\end{cor}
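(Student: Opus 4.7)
The plan is to derive this corollary as a direct application of Theorem~\ref{thm:valuations} combined with the completeness for Clifford+T (Theorem~\ref{thm:cliff-T}). Reading off the multiplicities from the earlier running example (\(\mu_\alpha = 2\) and \(\mu_\beta = 3\)), Theorem~\ref{thm:valuations} tells us that in order to establish the equality uniformly in \(\alpha\) and \(\beta\) it suffices to establish the \zxct-derivability of the equation on a finite grid \(T_1 \times T_2\) of size \((\mu_\alpha + 1)(\mu_\beta+1) = 3 \times 4 = 12\).

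First I would choose the sets \(T_1\) and \(T_2\) so as to stay entirely inside the Clifford+T fragment: for example \(T_1 = \{0, \pi/2, \pi\}\) and \(T_2 = \{0, \pi/2, \pi, 3\pi/2\}\), all of whose elements lie in \(\frac\pi4\mathbb Z\). Since the original diagrams are linear in \(\alpha,\beta\) with constants in \(\frac\pi4\mathbb Z\), after any such specialisation both sides become ordinary Clifford+T diagrams.

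Second, for each of the twelve pairs \((\alpha,\beta) \in T_1\times T_2\), I would compute the standard interpretations of the two sides and verify that they coincide as scalars in \(\mathbb D[\piq{}]\). This is a finite and purely arithmetical verification (each instance reduces to comparing two complex numbers). Once this semantic equality holds, Theorem~\ref{thm:cliff-T} applied instance by instance yields \(\zxct\vdash D_1(\alpha,\beta) = D_2(\alpha,\beta)\) for every \((\alpha,\beta)\in T_1\times T_2\).

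Finally, feeding these twelve derivations into Theorem~\ref{thm:valuations} upgrades them to a derivation valid for all \(\alpha,\beta\in\mathbb R\), which is exactly the claim. The main obstacle is really only the semantic check at the twelve base points: it is routine but tedious, and care has to be taken that one picks grids \(T_1,T_2\) for which the chosen angles really do remain in \(\frac\pi4\mathbb Z\) once substituted into the affine expressions of the diagram (so that completeness of \zxct is genuinely applicable); everything else is a direct application of the two theorems.
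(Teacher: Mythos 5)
Your argument is correct, but it is a genuinely different route from the paper's. You reduce the 12 base-point derivations required by Theorem~\ref{thm:valuations} to 12 semantic (arithmetic) checks of scalars in $\mathbb D[\piq{}]$, and then invoke the Clifford+T completeness result (Theorem~\ref{thm:cliff-T}) as a black box to turn each semantic equality into a $\zxct$-derivation. The paper instead produces \emph{explicit syntactic} $\zxct$-derivations, and it does so more economically: rather than treating the 12 grid points one at a time, it fixes $\alpha \in \{0,\pi,\pi/2\}$ while keeping $\beta$ symbolic, so that only 3 short derivations (uniform in $\beta$) are needed; specialising these to the four values of $\beta$ then gives the 12 required instances. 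What your approach buys is conceptual economy — once one accepts Theorem~\ref{thm:cliff-T}, nothing remains but routine arithmetic — but at the cost of 12 tedious scalar computations and of treating the instance derivations non-constructively. What the paper's approach buys is shorter, reusable, explicit derivations that demonstrate the uniform-in-$\beta$ trick, which is in the spirit of Section~\ref{sec:appli-lin-diag}'s illustrations. Your caveats about choosing $T_1, T_2 \subset \frac\pi4\mathbb Z$ so that the specialised diagrams remain in the Clifford+T fragment are correct and necessary for your invocation of Theorem~\ref{thm:cliff-T} to be legitimate; note also that your $T_2 = \{0,\pi/2,\pi,3\pi/2\}$ coincides with the paper's $\{0,\pi,\pi/2,-\pi/2\}$ modulo $2\pi$.
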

\begin{proof}
Notice that $\mu_{\alpha}=2$ and $\mu_{\beta}=3$ in this equation. Hence we need to evaluate it for 12 values of $(\alpha,\beta)$, for instance for $\alpha,\beta\in\{0,\pi,\frac{\pi}{2}\}\times\{0,\pi,\frac{\pi}{2},-\frac{\pi}{2}\}$.
We can actually simplify the proof, by showing that whatever the value of $\beta\in\mathbb{R}$, the equation is derivable for $\alpha\in\{0,\pi,\frac{\pi}{2}\}$. This means the equation is derivable for all
%We can actually do not need to also evaluate $\beta$ because we can prove the equality  for all
$\alpha,\beta\in\{0,\pi,\frac{\pi}{2}\}\times\mathbb{R}$, and a fortiori for all $\alpha,\beta\in\{0,\pi,\frac{\pi}{2}\}\times\{0,\pi,\frac{\pi}{2},-\frac{\pi}{2}\}$ which would be a direct application of the theorem.
Details are in appendix at Section~\ref{prf:big-scalar-equation}.
\end{proof}

\begin{rem}
The number of occurrences of a variable is not to be mistaken for its multiplicity. For instance consider the following equation:
\[\rz{$\alpha$}=\rz{-$\alpha$}\]
This equation is obviously wrong in general, but not for $0$ and $\pi$. If we tried to apply Theorem~\ref{thm:valuations} with the number of occurrences (which seems to be $1$), then we might end up with the wrong conclusion. The multiplicity (here $\mu_{\alpha}=2$) prevents this.
\end{rem}

%
%{\bf TODO}\[\forall \alpha \in \frac{\pi}{4}\mathbb{Z},\quad \zx\vdash~~\tikzfig{example-non-uniform}\] since, in general, $\qquad\zx\nvdash~~\tikzfig{example-non-uniform}$ .

\subsection{Diagram substitution}
%\label{sec:diagram-substitution}

\begin{defi}
A diagram $D:0 \to n$ is symmetric if for any permutation $\tau$ on $\{1,\ldots n\}$, \[Q_\tau(\interp {D}) = \interp{D}\] where $Q_\tau:\mathbb C^{2^r}\to \mathbb C^{2^r}$ is the unique morphism such that:
\[
    \forall \varphi_1,\ldots, \varphi_r\in \mathbb C^{2},\ Q_\tau(\varphi_1 \otimes \ldots \otimes \varphi_r)=\varphi_{\tau(1)} \otimes \ldots \otimes \varphi_{\tau(r)}.
\]
%a permutation diagram is any diagram generated by $\sigma$ and $\mathbb I$ only.
\end{defi}

In particular for any diagram $D_0:0\to 1$, $D_0\otimes \ldots \otimes D_0$ is a symmetric diagram.

\begin{thm}%
\label{thm:diagram-substitution}
For any $D_1(\vec{\alpha}), D_2(\vec{\alpha}):r\to n$ and any symmetric $D(\vec{\alpha}):0\to r$ such that  $D_1(\vec{\alpha})$, $D_2(\vec{\alpha})$, and $D(\vec{\alpha})$ are linear in $\vec{\alpha}$ with constants in $\frac \pi 4\mathbb Z$, if $\forall \alpha_0\in \mathbb R, \forall\vec{\alpha}\in\mathbb{R}^k, \zxct\vdash D_1(\vec{\alpha}) \circ \theta_r(\alpha_0) = D_2(\vec{\alpha}) \circ \theta_r(\alpha_0)$ then $\forall\vec{\alpha}\in\mathbb{R}^k, \zxct\vdash D_1(\vec{\alpha}) \circ D(\vec{\alpha}) = D_2(\vec{\alpha}) \circ D(\vec  \alpha)$ i.e., pictorially:
\begin{align*}
\left(\forall \alpha_0\in \mathbb R, \forall\vec{\alpha}\in\mathbb{R}^k,~~ \zxct\vdash {
%\InputIfFileExists{#1.tikz}{}{
\input{./figures/2-gn-alpha-to-D_1-bis.tikz}%} % chktex 27
}  = {
%\InputIfFileExists{#1.tikz}{}{
\input{./figures/2-gn-alpha-to-D_2-bis.tikz}%} % chktex 27
}\right)
 \implies ~~\left(\forall\vec{\alpha}\in\mathbb{R}^k, \zxct\vdash {
%\InputIfFileExists{#1.tikz}{}{
\input{./figures/2-gn-D_1-X-bis.tikz}%} % chktex 27
} = {
%\InputIfFileExists{#1.tikz}{}{
\input{./figures/2-gn-D_2-X-bis.tikz}%} % chktex 27
}\right)
\end{align*}
\end{thm}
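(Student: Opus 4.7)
The plan is to route through the semantics and invoke Theorem \ref{thm:lin-diag}. Specifically, I will show the semantic equality $\interp{D_1(\vec\alpha)\circ D(\vec\alpha)} = \interp{D_2(\vec\alpha)\circ D(\vec\alpha)}$ for every $\vec\alpha\in\mathbb R^k$, and then conclude provability from completeness on linear diagrams with constants in $\tfrac{\pi}{4}\mathbb Z$, since $D_1\circ D$ and $D_2\circ D$ are manifestly linear in $\vec\alpha$ with such constants.

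First, by soundness of $\zxct$, the hypothesis gives, for every $\vec\alpha$ and every $\alpha_0\in\mathbb R$,
\[
\interp{D_1(\vec\alpha)}\cdot\interp{\theta_r(\alpha_0)} \;=\; \interp{D_2(\vec\alpha)}\cdot\interp{\theta_r(\alpha_0)} .
\]
Thus, fixing $\vec\alpha$, the two linear maps $\interp{D_1(\vec\alpha)},\interp{D_2(\vec\alpha)}:(\mathbb C^2)^{\otimes r}\to(\mathbb C^2)^{\otimes n}$ coincide on every vector of the set $\mathcal S_r:=\{\interp{\theta_r(\alpha_0)}:\alpha_0\in\mathbb R\}$, hence on $\mathrm{span}(\mathcal S_r)$. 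Each $\interp{\theta_r(\alpha_0)}=\bigl(\begin{smallmatrix}1\\ e^{i\alpha_0}\end{smallmatrix}\bigr)^{\otimes r}$ is a symmetric tensor; moreover, reusing the Vandermonde argument from the proof of Lemma \ref{lem:equivalence-Pk}, any choice of $r+1$ distinct angles $\alpha_0^{(0)},\ldots,\alpha_0^{(r)}$ yields $r+1$ linearly independent vectors in $\mathcal S_r$, hence $\mathrm{span}(\mathcal S_r)$ is exactly the $(r+1)$-dimensional symmetric subspace $\mathrm{Sym}^r(\mathbb C^2)\subseteq (\mathbb C^2)^{\otimes r}$.

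Now the hypothesis on $D$ enters: since $D(\vec\alpha)$ is symmetric, its semantics $\interp{D(\vec\alpha)}$ is invariant under every permutation $Q_\tau$, hence lies in $\mathrm{Sym}^r(\mathbb C^2)=\mathrm{span}(\mathcal S_r)$. Consequently
\[
\interp{D_1(\vec\alpha)}\cdot\interp{D(\vec\alpha)} \;=\; \interp{D_2(\vec\alpha)}\cdot\interp{D(\vec\alpha)},
\]
i.e.\ $\interp{D_1(\vec\alpha)\circ D(\vec\alpha)}=\interp{D_2(\vec\alpha)\circ D(\vec\alpha)}$, for every $\vec\alpha\in\mathbb R^k$.

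Finally, both $D_1\circ D$ and $D_2\circ D$ are linear in $\vec\alpha$ with constants in $\tfrac{\pi}{4}\mathbb Z$ because all three of $D_1,D_2,D$ are, and linearity and constancy are preserved by the sequential composition. Applying Theorem \ref{thm:lin-diag} to the equation $D_1(\vec\alpha)\circ D(\vec\alpha)=D_2(\vec\alpha)\circ D(\vec\alpha)$ yields $\zxct\vdash D_1(\vec\alpha)\circ D(\vec\alpha)=D_2(\vec\alpha)\circ D(\vec\alpha)$ for all $\vec\alpha\in\mathbb R^k$, completing the proof. The only subtle step is the spanning claim $\mathrm{span}(\mathcal S_r)=\mathrm{Sym}^r(\mathbb C^2)$, but this is precisely the Vandermonde computation already performed, and the rest of the argument is a soundness/completeness sandwich around that linear-algebraic fact.
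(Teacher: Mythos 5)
Your proof is correct, but it follows a genuinely different route from the one in the paper. You establish the \emph{semantic} equality $\interp{D_1(\vec\alpha)\circ D(\vec\alpha)}=\interp{D_2(\vec\alpha)\circ D(\vec\alpha)}$ directly, by observing that $\interp{D_1(\vec\alpha)}$ and $\interp{D_2(\vec\alpha)}$ agree on $\mathrm{span}\{\interp{\theta_r(\alpha_0)}\}_{\alpha_0}=\mathrm{Sym}^r(\mathbb C^2)$ and that the symmetric $D$ lands in this subspace; you then invoke Theorem~\ref{thm:lin-diag} once to convert this semantic equality into a derivation. The paper instead routes through the projector $P_r$: it first derives $\zxct\vdash D_1(\vec\alpha)\circ P_r=D_2(\vec\alpha)\circ P_r$ (via Lemma~\ref{lem:equivalence-Pk} and Theorem~\ref{thm:lin-diag}), then separately derives $\zxct\vdash P_r\circ D(\vec\alpha)=D(\vec\alpha)$ (via the same Hamming-weight/dimension count on the symmetric subspace, together with Lemma~\ref{lem:alphas-on-M} and a second application of Theorem~\ref{thm:lin-diag}), and composes the two syntactic equalities. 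Both arguments hinge on the same linear-algebraic core --- the span of $\{(1,e^{i\alpha_0})^{\otimes r}\}$ is the $(r+1)$-dimensional symmetric subspace, established by Vandermonde --- but you use it once to pin down the semantics and then jump to provability in a single step, whereas the paper manufactures two intermediate ZX-derivable equations involving the explicit $\frac\pi4$-fragment projector $P_r$ before splicing them together. Your version is shorter and avoids any further reference to $P_r$; the paper's version is more syntactic and exhibits the reusable fact $\zxct\vdash P_r\circ D=D$ for symmetric $D$. Since both rest on Theorem~\ref{thm:lin-diag}, which is itself non-constructive, there is no material difference in constructivity.
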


\begin{proof}
If $\forall \alpha_0\in \mathbb R, \forall\vec{\alpha}\in\mathbb{R}^k, \zxct\vdash D_1(\vec{\alpha}) \circ \theta_r(\alpha_0) = D_2(\vec{\alpha}) \circ \theta_r(\alpha_0)$ then $\interp{D_1(\vec{\alpha}) \circ \theta_r(\alpha_0)} = \interp{D_2(\vec{\alpha}) \circ \theta_r(\alpha_0)}$, so according to Lemma~\ref{lem:equivalence-Pk}, $\interp{D_1(\vec{\alpha}) \circ P_r} = \interp{D_2(\vec{\alpha}) \circ P_r}$. It implies that $\zxct\vdash D_1(\vec{\alpha}) \circ P_r  =D_2(\vec{\alpha}) \circ P_r$, so $\zxct\vdash D_1(\vec{\alpha}) \circ P_r \circ D(\vec{\alpha})  =D_2(\vec{\alpha}) \circ P_r\circ D(\vec{\alpha})$. To complete the proof, it is enough to show that $\zxct\vdash P_r\circ D(\vec{\alpha}) = D(\vec{\alpha})$. \\
Let $\mathcal S = \{\interp D ~|~D:0\to n \text{ symmetrical}\}$. First we show that $\mathcal S$ is of dimension at most $r+1$.
 %We have the inclusion $\mathcal S_r:= \mathop{\mathrm{span}} \{ \interp{\theta_r(\alpha)}~|~\alpha \in \mathbb{R}\}\subset \mathcal D$. To prove that $\mathcal S = \mathcal S_{r}$ we show that $\mathcal S$ is of dimension at most $r+1$.
Indeed, notice that if $\varphi \in \mathcal S$, then $\forall i,j\in \{0,\ldots, 2^r-1\}$ s.t. $|i|_1 = |j|_1$, $\varphi_i = \varphi_j$, where $|x|_1$ is the Hamming weight of the binary representation of $x$. As a consequence, for any $\varphi\in \mathcal S$, $\exists a_0, \ldots,a_r\in \mathbb C$ s.t. $\varphi = \sum_{h=0}^n a_h\varphi^{(h)}$ where $\varphi^{(h)}\in \mathbb C^{2^r}$ is defined as $\varphi^{(h)}_i=\begin{cases}1&\text{if $|i|_1 = h$}\\0&\text{otherwise}\end{cases}$. Thus $\mathcal S$ is of dimension at most $r+1$.  Moreover, for any $\alpha\in \mathbb R$, $\interp{\theta_r(\alpha)}\in \mathcal S$, so $\mathcal S \subseteq \mathcal S_r:= \mathop{\mathrm{span}} \{ \interp{\theta_r(\alpha)}~|~\alpha \in \mathbb{R}\}$. Since $\mathcal S_r$ is of dimension $r+1$ (see proof of Lemma~\ref{lem:equivalence-Pk}), $\mathcal S=\mathcal S_r$.
%Let $\mathcal S =  \mathop{\mathrm{span}}\{s_h~|~h\in [0,r]\}$ be the subspace of symmetric states of $\mathbb C^{2^r}$, with  $s_h:=\sum_{i\in [0,2^r-1] ~s.t.~w(i)=h} e^{(i)}$ where $w(i)$ is the Hamming weight of the binary representation of $i$ and $(e^{(i)})_{i\in [0,2^r-1]}$ is the standard basis of $\mathbb C^{2^r}$. \\
%Notice that $\forall \alpha\in \mathbb R$, $\interp {\theta_r(\alpha)} \in \mathcal S$ so $\mathcal S_r= \mathop{\mathrm{span}} \{ \interp{\theta_r(\alpha)}~|~\alpha \in \mathbb{R}\}$ is a subset of $\mathcal S$. Since both $\mathcal S$ and $\mathcal S_r$ are of dimension $r+1$, $\mathcal S_r=\mathcal S$.
As a consequence $\interp D\in \mathcal S_r$, so $\interp {P_r}\circ \interp {D(\vec{\alpha})} = \interp {D(\vec{\alpha})}$, since, according to Lemma~\ref{lem:alphas-on-M} for any $\alpha\in \mathbb R$, $\interp{P_r\circ \theta_r(\alpha)}= \interp {\theta_r(\alpha)}$. Thus, $\zxct\vdash P_r\circ D(\vec{\alpha})$ thanks to Theorem~\ref{thm:lin-diag}.% , we have $\interp {Pr}\circ \interp {D(\vec{\alpha})} = \interp {D(\vec{\alpha})}$, so $\zx\vdash P_r\circ D(\vec{\alpha})$.
\end{proof}

\begin{cor}%
\label{cor:gen-supp}
 \[\forall\alpha,\beta\in\mathbb{R}^2,\quad \zxct\vdash
%\InputIfFileExists{#1.tikz}{}{
\input{./figures/new-supplementarity.tikz}%} % chktex 27
\]
\end{cor}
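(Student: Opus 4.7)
The plan is to apply Theorem~\ref{thm:diagram-substitution} (diagram substitution). The idea is to reduce the two-variable new supplementarity to a one-variable equation of the form $D_1 \circ \theta_2(\alpha_0) = D_2 \circ \theta_2(\alpha_0)$ that is already provable (essentially from the former supplementarity rule \supp), and then re-inflate the two independent variables on the two wires via diagram substitution.

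First, I would factor both sides of \tikzfig{new-supplementarity} as $D_i(\vec{\alpha}) \circ D(\alpha, \beta)$, where $D_1, D_2 : 2 \to n$ are ZX-diagrams with constants in $\frac{\pi}{4}\mathbb{Z}$ (and no variables), and $D(\alpha, \beta) : 0 \to 2$ is a top piece carrying the two free angles. The crucial design constraint is that $D(\alpha, \beta)$ must be \emph{symmetric} in the sense of Theorem~\ref{thm:diagram-substitution}, i.e.~$\interp{D(\alpha, \beta)}$ must be invariant under the swap of its two outputs. A plausible candidate is built from a symmetric gadget (a copy dot, a cup, or a W-like template) together with $\rz{$\alpha$}$ and $\rz{$\beta$}$ plugged in on symmetrically-situated legs; the symmetry of the resulting semantic vector is then checked directly.

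Next, the hypothesis of Theorem~\ref{thm:diagram-substitution} is the single-variable equation
\[\forall \alpha_0 \in \mathbb{R},\quad \zxct \vdash D_1 \circ \theta_2(\alpha_0) = D_2 \circ \theta_2(\alpha_0),\]
where $\theta_2(\alpha_0) = \rz{$\alpha_0$} \otimes \rz{$\alpha_0$}$ places the same angle on both wires. By the way $D_1, D_2$ arise from the factorisation, this equation should collapse, via the spider and bialgebra rules of $\zxct$, to the former supplementarity \supp (possibly applied via a colour swap or an upside-down variant). This is the step where the specific combinatorics of the chosen factorisation and of \tikzfig{new-supplementarity} have to line up.

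Finally, Theorem~\ref{thm:diagram-substitution} upgrades the single-variable equation to $\zxct \vdash D_1 \circ D(\alpha, \beta) = D_2 \circ D(\alpha, \beta)$ for all $\alpha, \beta \in \mathbb{R}$, which, on unfolding the factorisation, is precisely the statement of the corollary. The main obstacle is the first step: identifying a symmetric top piece $D(\alpha, \beta)$ that both (i) specialises on the diagonal $\alpha = \beta = \alpha_0$ to a diagram manifestly equal (in $\zxct$) to $\theta_2(\alpha_0)$ up to the surrounding constant diagram, and (ii) reproduces, after composition with $D_1, D_2$, both sides of \tikzfig{new-supplementarity}. As a fallback, Theorem~\ref{thm:valuations} offers an alternative route: since $\alpha$ and $\beta$ each have small multiplicity in the equation, it would suffice to verify it in $\zxct$ on a finite set of values (as was done for Corollary~\ref{cor:big-scalar-equation}); this would bypass the need to find a symmetric factorisation altogether, at the cost of more case analysis.
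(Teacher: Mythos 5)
Your approach coincides with the paper's: the paper likewise derives the one-variable hypothesis from the colour-swapped, spider-decomposed form of \supp, obtaining an equation of the shape $D_1\circ\theta_2(\alpha_0)=D_2\circ\theta_2(\alpha_0)$, and then applies Theorem~\ref{thm:diagram-substitution} with a symmetric two-output top diagram carrying $\alpha$ and $\beta$, unfolding back to the statement in a few spider steps. The symmetric $D(\alpha,\beta)$ you leave as the ``main obstacle'' is exactly the piece the paper exhibits explicitly, so your plan is sound and is essentially the paper's proof.
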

\begin{proof}
 Indeed, simply by decomposing the colour-swapped version of \supp using \s, we can derive:
 \[\forall\alpha\in\mathbb{R},\quad\zxct\vdash 
%\InputIfFileExists{#1.tikz}{}{
\input{./figures/decomposed-supp.tikz}%} % chktex 27
\]
Now we just need to apply Theorem~\ref{thm:diagram-substitution} with
$
%\InputIfFileExists{#1.tikz}{}{
\input{./figures/diag-substitution-gen-supp-arxiv.tikz}%} % chktex 27
$ which is clearly symmetrical:
\def\fig{decomposed-supp-fin}
\begin{align*}
\input{./figures/\fig/\fig_00.tikz}
\eq{}\input{./figures/\fig/\fig_01.tikz}
\eq{}\input{./figures/\fig/\fig_02.tikz}
\eq{}\input{./figures/\fig/\fig_03.tikz}
\\[-\normalbaselineskip]\tag*{\qedhere}
\end{align*}
%and use \s to merge the adjacent red nodes.
\end{proof}

\section{Completeness for the General ZX-Calculus}%
\label{sec:gen-zx}

The previous result on linear diagrams gives a lot of power to the axiomatisation $\zxct$. We want now to complete this axiomatisation for the unrestricted ZX-Calculus, i.e.~we want to add enough axioms to $\zxct$ so that the resulting axiomatisation makes the general ZX-Calculus complete. This problem has been addressed in~\cite{HNW}, although to answer it, the authors added two generators to the language, and built a set of rules involving around 25 axioms. In the following, we show that we only need to add one axiom to $\zxct$.

\subsection{Incompleteness}

The axiomatisation $\zxct$ is complete for the Clifford+T quantum mechanics  --i.e.\ the $\frac \pi 4$-fragment--, but is not complete in general:

\begin{thm}%
\label{thm:incompleteness}
There exist two ZX-diagrams $D_1$ and $D_2$ such that:
\[\interp{D_1}=\interp{D_2}\qquad\text{and}\qquad \zxct\nvdash D_1=D_2\]
\end{thm}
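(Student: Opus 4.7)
The plan is to adapt the standard \emph{linearity argument} of \cite{incompleteness} to the axiomatisation $\zxct$. The crucial observation is that every axiom of $\zxct$, inspecting Figures \ref{fig:ZX-Clifford-rules} and \ref{fig:ZX-CliffordT-rules}, is an equation between two diagrams that are linear in their angle parameters with constants in $\frac{\pi}{4}\mathbb{Z}$, in the sense of Section \ref{sec:lin-diag}: each angle appearing in a red or green node of such an axiom is an affine combination $\sum_i n_i\alpha_i + c$ with $n_i\in\mathbb{Z}$ and $c\in\frac{\pi}{4}\mathbb{Z}$. This allows the axioms to be ``transported'' through any $\mathbb{Q}$-linear endomorphism of $\mathbb{R}$ that fixes $\frac{\pi}{4}\mathbb{Z}$.

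First, I would fix a Hamel basis of $\mathbb{R}$ over $\mathbb{Q}$ containing $\pi$ and use it to define a $\mathbb{Q}$-linear map $\phi:\mathbb{R}\to\mathbb{R}$ that is the identity on $\pi\mathbb{Q}$ (hence on $\frac{\pi}{4}\mathbb{Z}$) but acts non-trivially on some fixed $\alpha_0\notin\pi\mathbb{Q}$ (for instance $\phi(\alpha_0)=2\alpha_0$ modulo the basis construction). Extend $\phi$ to ZX-diagrams node-wise: $\phi(D)$ is obtained by replacing each angle $\theta$ labelling a generator of $D$ by $\phi(\theta)$.

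Next, I would prove the key lemma: for all ZX-diagrams $D_1,D_2$,
\[ \zxct\vdash D_1=D_2 \implies \interp{\phi(D_1)}=\interp{\phi(D_2)}. \]
The proof is by induction on the derivation. For each axiom $E_1(\vec\alpha)=E_2(\vec\alpha)$ of $\zxct$, every node-angle is of the form $\sum n_i\alpha_i+c$ with $c\in\frac{\pi}{4}\mathbb{Z}$, so applying $\phi$ yields $\sum n_i\phi(\alpha_i)+c$, which is simply the same axiom reinstantiated at $\vec\alpha'=\phi(\vec\alpha)$. Hence the image of any axiom instance is another axiom instance, and soundness of $\zxct$ gives the desired semantic equality after applying $\phi$. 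Functoriality of $\phi$ with respect to $\otimes$ and $\circ$ is immediate from the node-wise definition.

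Finally, I would exhibit explicit diagrams $D_1,D_2$ such that $\interp{D_1}=\interp{D_2}$ but $\interp{\phi(D_1)}\neq\interp{\phi(D_2)}$ for a suitably chosen $\phi$. A natural candidate is a concrete instance of the rule \add of Figure \ref{fig:ZX-non-linear-rule}: since its two sides are related in a non-linear way (one side involves an angle that depends on the others through a non-affine function such as an $\arg$ or an $\arccos$), one can pick values of the free parameters close to $\alpha_0$ so that both sides have equal semantics, while the image under $\phi$ breaks the equality on the non-linear side. The contrapositive of the key lemma then yields $\zxct\nvdash D_1=D_2$, proving the theorem. The main obstacle is checking this last step quantitatively: one must explicitly compute the semantics on both sides of the chosen instance, verify the concrete equality $\interp{D_1}=\interp{D_2}$, and confirm that no choice of $\phi$ compatible with the constraints can preserve it; alternatively, one can directly import a counterexample of \cite{incompleteness} and verify that the linearity of $\zxct$ (which the rules \supp, \ccom, \bw, \e retain) is enough to run the above transport argument unchanged.
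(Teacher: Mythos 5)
Your overall strategy -- transport provability through an angle-endomorphism that preserves the axioms of $\zxct$ and then exhibit a sound equation the map breaks -- is exactly the paper's strategy. But your realisation of it has two issues: the map is over-constrained, and the counterexample is never actually produced, as you acknowledge yourself.

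The paper uses the far simpler map $\theta\mapsto 9\theta\pmod{2\pi}$. Because $9\equiv 1\pmod 8$, this fixes every $\frac{k\pi}{4}$ modulo $2\pi$, which is all that is needed to preserve the axioms of $\zxct$: each axiom has node-labels of the form $\sum_i n_i\alpha_i + c$ with $c\in\frac\pi4\mathbb Z$, and $9(\sum n_i\alpha_i + c)=\sum n_i(9\alpha_i)+c\pmod{2\pi}$, so the image of an axiom instance is another axiom instance. Crucially, multiplication by $9$ does \emph{not} fix $\frac{2\pi}{3}$ (it sends it to $0$), so the paper can use the very concrete sound scalar equation built from $\frac{2\pi}{3}$ and $\frac{4\pi}{3}$ nodes (whose interpretation is $1=1$) and show that under the image the two sides evaluate to $4$ and $1$. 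Your map, by contrast, is required to be the identity on \emph{all} of $\pi\mathbb{Q}$; this is stronger than what is needed, and it rules out the paper's counterexample entirely (both sides involve only rational multiples of $\pi$). So you cannot, as you suggest in the last sentence, ``directly import a counterexample of \cite{incompleteness}'' and run your argument unchanged: those counterexamples have $\pi$-rational angles, and your $\phi$ leaves them invariant.

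What remains then is your own suggested route via an instance of \add, which would require: fixing concrete values $\alpha,\beta,\theta_1,\theta_2\in\pi\mathbb{Q}$ for which the induced $\gamma=\arccos(\cdots)$ and $\theta_3$ are \emph{not} in $\pi\mathbb{Q}$ (a Niven-type argument is needed here), extending a Hamel basis to contain $\pi$ and these irrational multiples, defining $\phi$ to move $\gamma$ (and verifying $\phi$ descends to $\mathbb{R}/2\pi\mathbb{Z}$), and then computing that the image equation is semantically false. None of that is carried out. As it stands the proposal contains a genuine gap at precisely the point where the paper does real work: the paper's choice of $\theta\mapsto 9\theta$ makes the axiom-preservation check trivial \emph{and} hands it a two-line counterexample for free. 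I would recommend either adopting the multiplication-by-$9$ map, or, if you keep the Hamel-basis map, supplying the missing quantitative verification of the \add-based counterexample.
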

\begin{proof}
Consider the following equation:
\[
%\InputIfFileExists{#1.tikz}{}{
\input{./figures/incomplete-pi_3.tikz}%} % chktex 27
\]
This equation is sound, it represents \[(1+e^{i\frac{2\pi}{3}})(1+e^{i\frac{4\pi}{3}})=1+e^{i\frac{2\pi}{3}}+e^{i\frac{4\pi}{3}}+e^{i\frac{6\pi}{3}}=1\]
However, consider the interpretation $\interp{.}_9$ that multiplies all the angles by $9$. All the multiples of $\frac{\pi}{4}$ remain unchanged ($\frac{k\pi}{4}\times9=\frac{k\pi}{4}+2k\pi=\frac{k\pi}{4}$). It is then easy to show that all the rules of $\zxct$ hold with this interpretation. However:
\def\fig{incomplete-pi_3-proof}
\[\interp{\input{./figures/\fig/\fig_00.tikz}}_9\eq{}\input{./figures/\fig/\fig_01.tikz}\ \neq\ \input{./figures/\fig/\fig_02.tikz}\eq{}\interp{\input{./figures/\fig/\fig_02.tikz}}_9\]
Indeed the left hand side amounts to $4$ while the right hand side amounts to $1$.
Since all the rules of $\zxct$ hold with this interpretation, if the calculus were complete, then it would prove the above equation and so its interpretation would hold. It does not, so the ZX-Calculus is not complete.
\end{proof}

Notice that thanks to Theorem~\ref{thm:lin-diag}, a completion of the ZX calculus would imply to  add  either non linear axioms, or axioms with constants not multiple of $\pi/4$. Such potential axioms have already been discovered, for instance the cyclotomic supplementarity~\cite{cyclo}:
\[
%\InputIfFileExists{#1.tikz}{}{
\input{./figures/cyclo-supp.tikz}%} % chktex 27
\quad\text{($\text{SUP}_n$)}\]
Adding this family of axioms to those of $\zxct$ would nullify the counterexample in the proof of~\ref{thm:incompleteness} (the equality is derivable from $\zxct+\textnormal{(SUP$_3$)}$). However, the ZX-Calculus, with this set of axioms, would still be incomplete. Indeed, the argument given in~\cite{cyclo} still holds here.

In the following, we actually show that adding one axiom to $\zxct$ is sufficient to get the completeness in general. Contrary to the axioms of $\zxct$, this one manipulates angles in a non-linear fashion.

\subsection{From ZX to ZW}
%The ZW$_{\mathbb C}$ calculus introduced very recently by Hadzihasanovic~\cite{Amar} is an extension of the ZW-calculus.  ZW$_{\mathbb C}$ has been proved to be
%The ZW-Calculus is
%universal and complete for complex matrices. From this result follows naturally a completion procedure of the ZX-calculus: b
Both the ZX-Calculus and the \zwc-Calculus are universal for complex matrices, so there exists a pair of translations between the two languages which preserve the semantics ($[.]_X: \zwc\to \zx$ and $[.]_W:\zx\to \zwc$ s.t. $\forall D\in \zx, \interp{[D]_W}=\interp D$ and $\forall D\in \zwc, \interp{[D]_X}=\interp D$). The axiom \add has been chosen so that we can prove that $\left(\zxct+\add\right)\vdash {[[D]_W]_X}= D$ for any generator $D$ of the ZX-calculus and that $\left(\zxct+\add\right)\vdash [D_1]_X=[D_2]_X$ for any axiom $D_1=D_2$ of the ZW-Calculus. The choice of the translations is however essential as the new axiom relies on them. %The translation from ZX to ZW$_{\mathbb C}$ can be canonically defined using the normal forms of the ZW-calculus: for any generator $D$ of the ZX one can define $[D]_W$ as the ZW normal form representation of the matrix $\interp{D}$. This is however convenient to deviate from this canonically defined interpretation for the green spider $\alpha$ by choosing to translate it to a white spider $e^{i\alpha}$, (even if it is not in normal form).

%The $[.]_W$ translation %translation from ZX to ZW$_{\mathbb C}$
%could be canonically defined using the normal form of the \zwc-calculus: for any generator $D$ of the ZX one can define $[D]_W$ as the ZW normal form representation of the matrix $\interp{D}$. It is however convenient to deviate from this canonically defined interpretation for the green and red spiders and for the Hadamard gate. We end up with basically the same translation from ZX to \zwc as in~\cite{NgWang,HNW}:\\
In~\cite{NgWang,HNW}, the authors already use the same proof technique --- i.e.~transporting the completeness from $\zwc$ to ZX ---, and hence provide two interpretations from ZX to $\zwc$ and back. The former, that we denote $[.]_W$ absolutely suits our needs:

\noindent\begin{minipage}{\columnwidth}
\titlerule{$[.]_W$}
\[
%\InputIfFileExists{#1.tikz}{}{
\input{./figures/empty-diagram.tikz}%} % chktex 27
 \quad\mapsto\quad 
%\InputIfFileExists{#1.tikz}{}{
\input{./figures/empty-diagram.tikz}%} % chktex 27
\qquad\qquad

%\InputIfFileExists{#1.tikz}{}{
%} % chktex 27
 \quad\mapsto\quad 
%\InputIfFileExists{#1.tikz}{}{
%} % chktex 27
\qquad\qquad
%\InputIfFileExists{#1.tikz}{}{
\input{./figures/ZW-to-ZX-braid-no-braid.tikz}%} % chktex 27
\]
\end{minipage}
\[
%\InputIfFileExists{#1.tikz}{}{
%} % chktex 27
 \quad\raisebox{0.3em}{$\mapsto$}\quad 
%\InputIfFileExists{#1.tikz}{}{
%} % chktex 27
\hspace{6em}

%\InputIfFileExists{#1.tikz}{}{
%} % chktex 27
 \quad\raisebox{0.3em}{$\mapsto$}\quad 
%\InputIfFileExists{#1.tikz}{}{
%} % chktex 27
\]
\[
%\InputIfFileExists{#1.tikz}{}{
\input{./figures/gn-alpha-ZX-to-ZW.tikz}%} % chktex 27
\qquad\qquad

%\InputIfFileExists{#1.tikz}{}{
\input{./figures/hadamard-ZX-to-ZW.tikz}%} % chktex 27
\]
%$$\begin{array}{c}
%\tikzfig{empty-diagram} ~~\mapsto~~ \tikzfig{empty-diagram}\\\\
%\tikzfig{single-line} ~~\mapsto~~ \tikzfig{single-line}\\\\
%\tikzfig{caps} ~~\raisebox{0.3em}{$\mapsto$}~~ \tikzfig{caps}\\\\
%\tikzfig{cup} ~~\raisebox{0.3em}{$\mapsto$}~~ \tikzfig{cup}
%\end{array}
%\qquad\qquad\begin{array}{c}
%\tikzfig{ZW-to-ZX-braid-no-braid}\\\\
%\tikzfig{gn-alpha-ZX-to-ZW}\\\\
%\tikzfig{hadamard-ZX-to-ZW}
%\end{array}$$
\[
%\InputIfFileExists{#1.tikz}{}{
\input{./figures/rn-alpha-2.tikz}%} % chktex 27
\mapsto\left[~
%\InputIfFileExists{#1.tikz}{}{
%} % chktex 27
~\right]_W^{\otimes m}\circ \left[
%\InputIfFileExists{#1.tikz}{}{
\input{./figures/gn-alpha-2.tikz}%} % chktex 27
\right]_W\circ \left[~
%\InputIfFileExists{#1.tikz}{}{
%} % chktex 27
~\right]_W^{\otimes n}\]
\noindent\begin{minipage}{\columnwidth}
\[D_1\circ D_2\mapsto [D_1]_W\circ[D_2]_W\qquad\quad D_1\otimes D_2\mapsto [D_1]_W\otimes[D_2]_W\]
\vspace{0.2em}
\end{minipage}
\rule{\columnwidth}{0.5pt}

%\begin{rem}
%There does not seem to be black dots involved in this interpretation (apart from a scalar). However, as Hadamard can be expressed in terms of the other generators in ZX, so does \fittext{\tikzfig{zw-cross}} in ZW, as evidenced by its normal form:
%\[\tikzfig{ZW-crossing-normal-form}\]
%\end{rem}

\subsection{From ZW\texorpdfstring{$\!_{\mathbb{C}}$}{\_C} to ZX}

The $[.]_X$ translation %from ZW$_{\mathbb C}$ to ZX
has already been partially defined in Section~\ref{sec:cliff-t}. To extend it to the generalised white spider present in \zwc, the main subtlety is the encoding of positive real numbers % $\lambda\in \mathbb R$
 in the ZX-diagrams.
In~\cite{NgWang}, the authors decompose, roughly speaking, a positive real number  into its integer part and its non-integer part.
Our translation relies on a different (although not unique) decomposition:
\[\forall z\in\mathbb{C},~~ \exists (n,\theta,\beta)\in\mathbb{N}\times \ropen{0;2\pi}\times\left[0;\frac{\pi}{2}\right],\quad z=2^n\cos(\beta)e^{i\theta}\]

\noindent\begin{minipage}{\columnwidth}
\titlerule{$[.]_X$}
\[
%\InputIfFileExists{#1.tikz}{}{
\input{./figures/empty-diagram.tikz}%} % chktex 27
 \quad\mapsto\quad 
%\InputIfFileExists{#1.tikz}{}{
\input{./figures/empty-diagram.tikz}%} % chktex 27
\qquad\qquad

%\InputIfFileExists{#1.tikz}{}{
%} % chktex 27
 \quad\mapsto\quad 
%\InputIfFileExists{#1.tikz}{}{
%} % chktex 27
\qquad\qquad
%\InputIfFileExists{#1.tikz}{}{
\input{./figures/ZW-to-ZX-braid-no-braid.tikz}%} % chktex 27
\]
\end{minipage}
\[
%\InputIfFileExists{#1.tikz}{}{
%} % chktex 27
 \quad\raisebox{0.3em}{$\mapsto$}\quad 
%\InputIfFileExists{#1.tikz}{}{
%} % chktex 27
\hspace{6em}

%\InputIfFileExists{#1.tikz}{}{
%} % chktex 27
 \quad\raisebox{0.3em}{$\mapsto$}\quad 
%\InputIfFileExists{#1.tikz}{}{
%} % chktex 27
\]
\[\begin{array}{c}

%\InputIfFileExists{#1.tikz}{}{
\input{./figures/ZW-to-ZX-dot-1-1.tikz}%} % chktex 27
\\\\

%\InputIfFileExists{#1.tikz}{}{
\input{./figures/ZW-to-ZX-cross-no-braid.tikz}%} % chktex 27

\end{array} \qquad\quad

%\InputIfFileExists{#1.tikz}{}{
\input{./figures/ZW-to-ZX-dot-1-2.tikz}%} % chktex 27
 % chktex 8
\]
\[
%\InputIfFileExists{#1.tikz}{}{
\input{./figures/ZW-white-dot-to-ZX.tikz}%} % chktex 27
\]
\noindent\begin{minipage}{\columnwidth}
\[D_1\circ D_2\mapsto [D_1]_X\circ[D_2]_X\quad\qquad D_1\otimes D_2\mapsto [D_1]_X\otimes[D_2]_X\]
\vspace{0.2em}
\end{minipage}
\rule{\columnwidth}{0.5pt}

\begin{rem}
$n$ is well-defined: Every complex number $x\neq0$ can be expressed as $\rho e^{i\theta}$ where $\rho\in\mathbb{R}^*_+$. If $x=0$, then $n:=0$. However, $\theta$ may take any value, but it makes no difference (see Section~\ref{prf:X-is-homomorphism} in appendix).
\end{rem}

%This interpretation relies on the fact that we can easily create the vector $\begin{pmatrix}1\\ \cos(\alpha)\end{pmatrix}$ with ZX-diagrams since:
%\[\interp{\tikzfig{cos-alpha}}=\frac{1}{\sqrt{2}}\begin{pmatrix}1&0&0&1\\0&1&1&0\end{pmatrix}
%\begin{pmatrix}1\\e^{-i\alpha}\\e^{i\alpha}\\1\end{pmatrix}
%%=\frac{1}{\sqrt{2}}\begin{pmatrix}2\\e^{-i\alpha}+e^{i\alpha}\end{pmatrix}
%=\frac{2}{\sqrt{2}}\begin{pmatrix}1\\\cos(\alpha)\end{pmatrix}\]
%and the scalar $\frac{1}{\sqrt{2}}$ is easy to create.

We may prove the two following propositions:

\begin{prop}%
\label{prop:double-interpretation-equivalence-1}
\[\left(\zxct+\add\right)\vdash D = [[D]_W]_X\]
\end{prop}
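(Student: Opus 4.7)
The plan is to proceed by structural induction on the ZX-diagram $D$. Since both interpretations $[.]_W$ and $[.]_X$ are defined compositionally with respect to $\circ$ and $\otimes$, the inductive steps
$$[[D_1 \circ D_2]_W]_X = [[D_1]_W]_X \circ [[D_2]_W]_X \quad\text{and}\quad [[D_1 \otimes D_2]_W]_X = [[D_1]_W]_X \otimes [[D_2]_W]_X$$
hold syntactically, so combined with the induction hypothesis they reduce the whole statement to checking the generators one by one.

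For the structural generators (empty diagram, identity wire, cup, cap, and crossing), one can read off directly from the definitions of $[.]_W$ and $[.]_X$ in the previous sections that $[[g]_W]_X = g$ on the nose (no axioms needed). For the Hadamard, $[H]_W$ produces a specific Clifford-like $\zwc$-diagram, and then $[[H]_W]_X$ yields a ZX-diagram whose interpretation is again $\interp{H}$; since this ZX-diagram lies in the Clifford+T fragment, Theorem~\ref{thm:cliff-T} already guarantees that $\zxct \vdash [[H]_W]_X = H$, with no need for \add. The red spider case will then follow from the green spider case together with Hadamard, because $[.]_W$ defines $R_X(\alpha)$ via Hadamards and $R_Z(\alpha)$ and $[.]_X$ preserves composition.

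The essential case is therefore the green spider $R_Z^{(n,m)}(\alpha)$ for arbitrary $\alpha \in \mathbb{R}$ and arbitrary arity. Its image $[R_Z^{(n,m)}(\alpha)]_W$ is a white $\zwc$-spider with parameter $e^{i\alpha}$; applying $[.]_X$ decomposes this parameter via $e^{i\alpha} = 2^{0}\cdot\cos(0)\cdot e^{i\alpha}$, i.e.\ with $n=0$, $\beta=0$, $\theta=\alpha$, producing a ZX-gadget that realises the correct $R_Z^{(n,m)}(\alpha)$ semantics but in a syntactically more complex form. The goal is to rewrite this gadget to the plain green spider $R_Z^{(n,m)}(\alpha)$. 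For arities $(1,1)$, $(0,1)$ and $(1,0)$ the reduction is a purely Clifford+T computation (using \s and the green/red spider rules), so again Theorem~\ref{thm:cliff-T} suffices. For arbitrary arities we reduce to these small arities by spider fusion \s, and to the general angle case by noting that the image under $[.]_X$ factors through the degree-$1$ white spider decorated by the parameter $e^{i\alpha}$, which is where \add intervenes.

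The main obstacle will be precisely this arbitrary-angle white-spider-to-green-spider reduction: since $\alpha$ ranges over all of $\mathbb{R}$ and the $\zxct$ axioms are linear in their angle parameters, no purely Clifford+T reduction can work (by the incompleteness argument of Theorem~\ref{thm:incompleteness}). The axiom \add is designed precisely to encode, diagrammatically, the non-linear identity that allows the decomposition $z = 2^n \cos(\beta) e^{i\theta}$ used to define $[.]_X$ to be collapsed back to the original angle $\alpha$ when $z = e^{i\alpha}$. Once this reduction is performed for the $(1,1)$ green node, the general case follows by spider fusion and the compositional structure of the induction, completing the proof.
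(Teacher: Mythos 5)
Your inductive skeleton (generators plus compositional reduction) and the handling of the structural generators match the paper, and your Hadamard argument is valid: the scalar in $[H]_W$ is $1/\sqrt{2}$, so $[[H]_W]_X$ lies entirely in the $\frac{\pi}{4}$-fragment, and Theorem~\ref{thm:cliff-T} gives the equality directly. The gap is in the green-spider case, where you conclude that \add\ is necessary — it is not, and your reasoning for why it would be does not hold up.

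The key observation, which you in fact write down but then fail to exploit, is that the white spider in $[R_Z^{(n,m)}(\alpha)]_W$ carries parameter $e^{i\alpha}$, a complex number of modulus $1$. In the decomposition $z=2^n\cos(\beta)e^{i\theta}$ used by $[.]_X$, this forces $n=0$, $\beta=0$, $\gamma=\arccos(1/2^n)=0$, and $\theta=\alpha$. Hence the gadget produced by $[.]_X$ is a single green $\alpha$-node plus a few angle-$0$ spiders and Clifford scalars; the paper collapses it to $R_Z^{(n,m)}(\alpha)$ in one line via \cp, Lemma~\ref{lem:inverse} and Lemma~\ref{lem:2-is-sqrt2-squared}, with no appeal to \add\ whatsoever. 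Your invocation of Theorem~\ref{thm:cliff-T} for the arities $(1,1),(0,1),(1,0)$ is also off the mark: that theorem only applies to diagrams of the $\frac{\pi}{4}$-fragment, and $[[R_Z(\alpha)]_W]_X$ contains the arbitrary angle $\alpha$, so it is not in that fragment. If you want a high-level argument rather than the direct derivation, the correct appeal is to Theorem~\ref{thm:lin-diag}: both $R_Z^{(n,m)}(\alpha)$ and $[[R_Z^{(n,m)}(\alpha)]_W]_X$ are linear in $\alpha$ with constants in $\frac{\pi}{4}\mathbb{Z}$ and have the same interpretation for every $\alpha$, so $\zxct$ proves them equal.

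Finally, the argument that ``no purely Clifford+T reduction can work, by the incompleteness argument'' is a misreading of Theorem~\ref{thm:incompleteness}. Incompleteness only says $\zxct$ fails to prove \emph{some} sound equation of the unrestricted calculus; it does not preclude proving a particular family of sound equations involving arbitrary angles. In fact Theorem~\ref{thm:lin-diag} shows $\zxct$ \emph{is} complete for exactly the class of equations you are dealing with here. The axiom \add\ enters only in Proposition~\ref{prop:X-is-homomorphism}, where the rules $1c$ and $4a$ of $\zwc$ genuinely relate decompositions $(n,\beta,\theta)$ of distinct parameters $r$, $s$, $rs$, $r+s$ in a non-linear way — a situation that does not arise in Proposition~\ref{prop:double-interpretation-equivalence-1}, because every white spider that $[.]_W$ produces has parameter of modulus $1$ or $1/\sqrt{2}$.
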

Proof in appendix at Section~\ref{prf:double-interp-eq}.

\begin{prop}%
\label{prop:X-is-homomorphism}
\[\zwc\vdash D_1=D_2 \implies \left(\zxct+\add\right)\vdash [D_1]_X=[D_2]_X\]
\end{prop}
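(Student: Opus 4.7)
The plan is an induction on the derivation in $\zwc$: since $[.]_X$ is defined compositionally on $\otimes$ and $\circ$, it commutes with contextual application of rewrite rules, so it suffices to check that the image of every axiom from Figure \ref{fig:ZWC_rules} is derivable in $\zxct+\add$. Before even reaching the axioms, there is a preliminary well-definedness obligation: the decomposition $r = 2^n\cos(\beta)e^{i\theta}$ underlying the translation of the white node is not canonical (one may trade $n$ for $\beta$ by halving/doubling $\cos(\beta)$ when it stays in $[0,1]$, and when $r=0$ the phase $\theta$ is arbitrary, as flagged in the remark). I would first prove that any two decompositions of the same $r$ yield ZX-diagrams provably equal in $\zxct+\add$; the relevant equalities are linear in the sense of Theorem \ref{thm:lin-diag} and can be dispatched with it together with the tools of Section \ref{sec:appli-lin-diag}.

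The bulk of the work is then a case-by-case verification of the $\zwc$-axioms. The purely structural rules (those manipulating only crossings, the W-nodes, the black nodes, and the braid) carry no parameters on the white spider and, after translation, reduce to equalities already established in Proposition \ref{prop:rules-preserved} for \zwh; they are derivable in $\zxct$ alone. The parametrised rules, by contrast, produce ZX-equalities whose angle parameters are affine combinations of the phases $\theta_i$ and whose scalar prefactors involve sums and products of the $\cos(\beta_i)$ together with powers of $2$ coded by $\half$ and $\two$. The dependence on the $\theta_i$ and on the integer exponents $n_i$ is linear, so Theorem \ref{thm:lin-diag} (often through the basis reasoning of Theorem \ref{thm:basis} or the finite-valuation argument of Theorem \ref{thm:valuations}) collapses each axiom to a finite family of scalar equalities with constants in $\frac{\pi}{4}\mathbb{Z}$ augmented by the cosine gadgets; those that remain inside the $\frac{\pi}{4}$-fragment are closed by Theorem \ref{thm:cliff-T}.

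The main obstacle is the fusion rule for two white $r$-nodes, whose image under $[.]_X$ has the new parameter $r_1 r_2$. The identity
\[ \cos(\beta_1)\cos(\beta_2) = \tfrac{1}{2}\bigl(\cos(\beta_1+\beta_2)+\cos(\beta_1-\beta_2)\bigr) \]
forces a genuinely non-linear relationship between the angle parameters on the two sides of the translated equality: no equation linear in the $\beta_i$ can express it, which is consistent with the impossibility argument recalled in Theorem \ref{thm:incompleteness}. The axiom \add is designed precisely to encode this cosine identity, and the end of the proof amounts to showing that, once the auxiliary lemmas of Section \ref{sec:appli-lin-diag} are available (in particular Corollary \ref{cor:distribution} and Corollary \ref{cor:gen-supp}) and Proposition \ref{prop:double-interpretation-equivalence-1} has been used to normalise the two translated sides into a comparable shape, a single application of \add closes this final case and hence the whole induction.
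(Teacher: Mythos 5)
Your high-level route matches the paper's: reduce the claim to a case check over the $\zwc$-axioms, dismiss the parameter-free rules by appeal to Clifford+T completeness already established via Proposition~\ref{prop:rules-preserved}, and invoke \add{} for the non-linear cosine identities that govern the parametrised white node. You also correctly isolate the product-to-sum relation as the essential obstruction. But there is a genuine gap in your treatment of well-definedness. You claim that the equality between ZX-translates arising from two decompositions $r=2^{n}\cos(\beta)e^{i\theta}$ of the same $r$ is "linear in the sense of Theorem~\ref{thm:lin-diag}" and can be discharged by it and the tools of Section~\ref{sec:appli-lin-diag}. That step fails: two decompositions satisfy $\cos(\beta_2)=2^{n_1-n_2}\cos(\beta_1)$, a non-linear side constraint on the angle parameters, whereas Theorem~\ref{thm:lin-diag} only quantifies universally over free variables with no such constraint. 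The paper handles this point in Lemma~\ref{lem:white-dot-to-zx-generalised-form}, whose proof runs through Lemma~\ref{lem:prod-cos}, itself a consequence of \add{}. So \add{} is already indispensable before you ever reach the axiom verification, not just at a "final case".

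Two further inaccuracies are worth naming. Proposition~\ref{prop:double-interpretation-equivalence-1} (the fact that $\zxct+\add\vdash D=[[D]_W]_X$) is a separate ingredient of Theorem~\ref{thm:gen-zx} and plays no role in the proof of Proposition~\ref{prop:X-is-homomorphism}; it cannot be used here to "normalise the translated sides". And "a single application of \add{} closes the whole induction" understates the work: the paper invokes \add{} (through Lemmas~\ref{lem:prod-cos}, \ref{lem:add-bis} and Corollary~\ref{cor:arccos-2^-n}) in independent derivations for the multiplication rule $1c$, the addition rule $4a$, rule $6c$, and the normalisation lemma above. The cleaner scoping is the paper's opening observation: since $\zxct$ already proves every $\frac{\pi}{4}$-fragment equation, only the $\zwc$-rules whose white-node parameter is allowed to leave $\{\pm1\}$ need a separate derivation, namely $1c$, $3b$, $4a$, $4b$, $6c$ (and trivially $0c$); it is $1c$ and $4a$ -- multiplication and addition of parameters -- together with the well-definedness patch that force genuine uses of \add{}.
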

Proof in appendix at Section~\ref{prf:X-is-homomorphism}.

The completeness of the calculus is now easy to prove:
\begin{proof}[Proof of Thm.~\ref{thm:gen-zx}]
Let $D_1$ and $D_2$ be two diagrams of the ZX-Calculus such that $\interp{D_1}=\interp{D_2}$. Since $[.]_W$ preserves the the semantics, $\interp{[D_1]_W}=\interp{[D_1]_W}$. By completeness of the \zwc-Calculus, $\zwc\vdash [D_1]_W=[D_2]_W$. By Proposition~\ref{prop:X-is-homomorphism}, $\left(\zxct+\add\right)\vdash [[D_1]_W]_X=[[D_2]_W]_X$. Finally, by Proposition~\ref{prop:double-interpretation-equivalence-1}, $\left(\zxct+\add\right)\vdash D_1=D_2$ which completes the proof.
\end{proof}

\section{Discussion on the New Axioms}%
\label{sec:new-axioms}

\noindent\textbf{
%\InputIfFileExists{#1.tikz}{}{
\input{./figures/commutation-of-controls-general-simplified.tikz}%} % chktex 27
 as a Commutation of Controlled Operations}

A controlled operation is a fairly common concept in quantum circuits. Let $U$ be an $n\to n$ diagram representing a unitary. Then, a larger diagram $n+1\to n+1$, denoted $\Lambda U$, is considered a controlled $U$ if:
\[
%\InputIfFileExists{#1.tikz}{}{
\input{./figures/controlled-U-ket-0.tikz}%} % chktex 27
\qquad\qquad
%\InputIfFileExists{#1.tikz}{}{
\input{./figures/controlled-U-ket-1.tikz}%} % chktex 27
\]
The leftmost wire in $\Lambda U$ has a particular function: it is called the control wire. When classical data is plugged on this wire, you recover this data at the end of the operation. Plus, if $\ket0$ is plugged, the identity is recovered on the $n$ right wires, whilst if $\ket1$ is plugged, then $U$ is recovered on the right. For instance, the following diagram can be considered as $\Lambda R_Z(2\alpha)$: 
%\InputIfFileExists{#1.tikz}{}{
\input{./figures/controlled-RZ-alpha.tikz}%} % chktex 27
, because
\[
%\InputIfFileExists{#1.tikz}{}{
\input{./figures/controlled-RZ-alpha-ket-0.tikz}%} % chktex 27
\qquad\text{ and }\qquad
%\InputIfFileExists{#1.tikz}{}{
\input{./figures/controlled-RZ-alpha-ket-1.tikz}%} % chktex 27
\]

One can easily reverse the roles of $\ket0$ and $\ket1$ in $\Lambda U$ by adding 
%\InputIfFileExists{#1.tikz}{}{
\begin{tikzpicture}
	\begin{pgfonlayer}{nodelayer}
		\node [style=none] (0) at (0, -0.5) {};
		\node [style=rn] (1) at (0, 0) {$\pi$};
		\node [style=none] (2) at (0, 0.5) {};
	\end{pgfonlayer}
	\begin{pgfonlayer}{edgelayer}
		\draw (2.center) to (0.center);
	\end{pgfonlayer}
\end{tikzpicture}
%} % chktex 27
 before and after the controlled operation on the control wire. The result, denoted $\overline{\Lambda}U$: % chktex 8
\[
%\InputIfFileExists{#1.tikz}{}{
\input{./figures/anti-controlled-U.tikz}%} % chktex 27
\]
is called an anti-controlled operation. The identity is obtained when $\ket1$ is plugged on the control wire, and $U$ when $\ket0$ is plugged.

Controlled operations haves a nice matrix interpretation: $\interp{\Lambda U} = \begin{pmatrix}I_{2^n} & 0 \\ 0 & \interp{U}\end{pmatrix}$, as well as anti-controlled operations: $\interp{\overline{\Lambda} U} = \begin{pmatrix} \interp{U} & 0 \\ 0 & I_{2^n} \end{pmatrix}$. An interesting property of controlled and anti-controlled operations, that can easily be seen thanks to the interpretation, is that they commute: $\interp{\Lambda U \circ \overline{\Lambda}V} = \interp{\overline{\Lambda}V \circ \Lambda U} = \begin{pmatrix}\interp{V} & 0 \\ 0 & \interp{U}\end{pmatrix}$.

Now, let $U(\beta)=R_Z(2\beta)\otimes \mathbb{I}=~
%\InputIfFileExists{#1.tikz}{}{
\input{./figures/V-for-C.tikz}%} % chktex 27
$ and $V(\alpha,\gamma) =~ 
%\InputIfFileExists{#1.tikz}{}{
\input{./figures/U-for-C.tikz}%} % chktex 27
$. Then we can consider the following diagrams as $\Lambda U(\beta)$ and $\overline{\Lambda}V(\alpha,\gamma)$:
\[\Lambda U(\beta):=~
%\InputIfFileExists{#1.tikz}{}{
\input{./figures/controlled-V-for-C.tikz}%} % chktex 27
\]
\[\overline{\Lambda}V(\alpha,\gamma):=~
%\InputIfFileExists{#1.tikz}{}{
\input{./figures/anti-controlled-U-for-C.tikz}%} % chktex 27
\]
These two diagrams should commute. The rule \ccom expresses this equality in a ZX-style, i.e.~with redundant information being cropped out of the picture. Indeed, if we ignore the scalars --- that are invertible ---, we get on the one hand:
\def\fig{commutation-from-C}
\begin{align*}
\input{./figures/\fig/\fig_00.tikz}
\eq{}\input{./figures/\fig/\fig_01.tikz}
\eq{}\input{./figures/\fig/\fig_02.tikz}
\eq{}\input{./figures/\fig/\fig_03.tikz}
\end{align*}
And on the other hand:
\def\fig{C-from-commutation}
\begin{align*}
\input{./figures/\fig/\fig_00.tikz}
\eq{}\input{./figures/\fig/\fig_01.tikz}
\eq{}\input{./figures/\fig/\fig_02.tikz}
\eq{}\input{./figures/\fig/\fig_03.tikz}
\end{align*}
Hence the two equations are equivalent under the ZX-rules for Clifford.

\noindent\textbf{
%\InputIfFileExists{#1.tikz}{}{
\input{./figures/add-axiom-3.tikz}%} % chktex 27
 as a Sum of Two Controls}

First of all, let us notice that the following pattern, which can be found three times in \add, has interpretation:
\[\interp{
%\InputIfFileExists{#1.tikz}{}{
\input{./figures/control-cos.tikz}%} % chktex 27
}=\sqrt{2}\begin{pmatrix}1\\e^{i\theta}\cos(\alpha)\end{pmatrix}\] % chktex 35
Such a diagram can be seen as a controlled scalar. Up to the global scalar $\sqrt{2}$, if $\bra0$ is plugged at the bottom, we get $1$, but if $\bra1$ is plugged, then we get $e^{i\theta}\cos(\alpha)$. Two of occurrences of this pattern are plugged to the following pattern, with interpretation:
\[\interp{
%\InputIfFileExists{#1.tikz}{}{
\input{./figures/add-part-of-A.tikz}%} % chktex 27
}=\frac{\sqrt{2}\piq{}}{2}\begin{pmatrix}1&0&0&0\\0&\frac{1}{\sqrt{2}}&\frac{1}{\sqrt{2}}&0\end{pmatrix}\]
If two controlled scalars $\begin{pmatrix}1\\x\end{pmatrix}$ and $\begin{pmatrix}1\\y\end{pmatrix}$ are plugged on top, we get:
\[\frac{\sqrt{2}\piq{}}{2}\begin{pmatrix}1&0&0&0\\0&\frac{1}{\sqrt{2}}&\frac{1}{\sqrt{2}}&0\end{pmatrix}\begin{pmatrix}1\\y\\x\\xy\end{pmatrix}=\frac{\sqrt{2}\piq{}}{2}\begin{pmatrix}1\\\frac{1}{\sqrt{2}}(x+y)\end{pmatrix}\]
In our case, $x = e^{i\theta_1}\cos(\alpha)$ and $y = e^{i\theta_2}\cos(\beta)$. For any two such numbers, there exist $\theta_3$ and $\gamma$ such that $e^{i\theta_1}\cos(\alpha)+e^{i\theta_2}\cos(\beta) = 2e^{i\theta_3}\cos(\gamma)$. Hence:
\[\interp{
%\InputIfFileExists{#1.tikz}{}{
\input{./figures/add-axiom-left-up.tikz}%} % chktex 27
}=\sqrt{2}\piq{}\begin{pmatrix}1\\\sqrt{2}e^{i\theta_3}\cos(\gamma)\end{pmatrix}=\piq{}\begin{pmatrix}1&0\\0&\sqrt{2}\end{pmatrix}\sqrt{2}\begin{pmatrix}1\\e^{i\theta_3}\cos(\gamma)\end{pmatrix}\]
The term $\sqrt{2}\begin{pmatrix}1\\e^{i\theta_3}\cos(\gamma)\end{pmatrix}$ can be represented by the first pattern. The term $\piq{}\begin{pmatrix}1&0\\0&\sqrt{2}\end{pmatrix}$, however, can be represented by 
%\InputIfFileExists{#1.tikz}{}{
\input{./figures/normaliser-part-of-A.tikz}%} % chktex 27
.

We end up with:
\[
%\InputIfFileExists{#1.tikz}{}{
\input{./figures/add-axiom-3.tikz}%} % chktex 27
\]
%TODO

\section{Parametrisation of the Triangle}%
\label{sec:param-triangle}

In this section we go further, and make the triangle a generator of the language. However, contrary to~\cite{NgWang,HNW}, where a $\lambda$-box has been introduced to meet the ring structure of \zwc, this property will be achieved by the triangle itself, for it is allowed a parameter:
\[\mathrm{\Delta}(r):1\to1\qquad
%\InputIfFileExists{#1.tikz}{}{
\begin{tikzpicture}
	\begin{pgfonlayer}{nodelayer}
		\node [style=none] (0) at (0, -0.5) {};
		\node [style=uglabel] (1) at (0, 0.25) {$r$};
		\node [style=none] (2) at (0, 0.5) {};
		\node [style=ug] (3) at (0, -0) {};
	\end{pgfonlayer}
	\begin{pgfonlayer}{edgelayer}
		\draw (2.center) to (0.center);
	\end{pgfonlayer}
\end{tikzpicture}%} % chktex 27
\qquad\text{where}\qquad
\forall r\in\mathbb{C},~~\interp{
%\InputIfFileExists{#1.tikz}{}{
%} % chktex 27
}=\begin{pmatrix}1&r\\0&1\end{pmatrix}\]
This parametrisation changes the fundamental nature of the ZX-calculus. It leads to a new language in which a phase group and the ring structure coexist. An axiomatisation \zxt is proposed in Figure~\ref{fig:ZX_rules-complete-param-triangles}.
\begin{figure}[!htb]
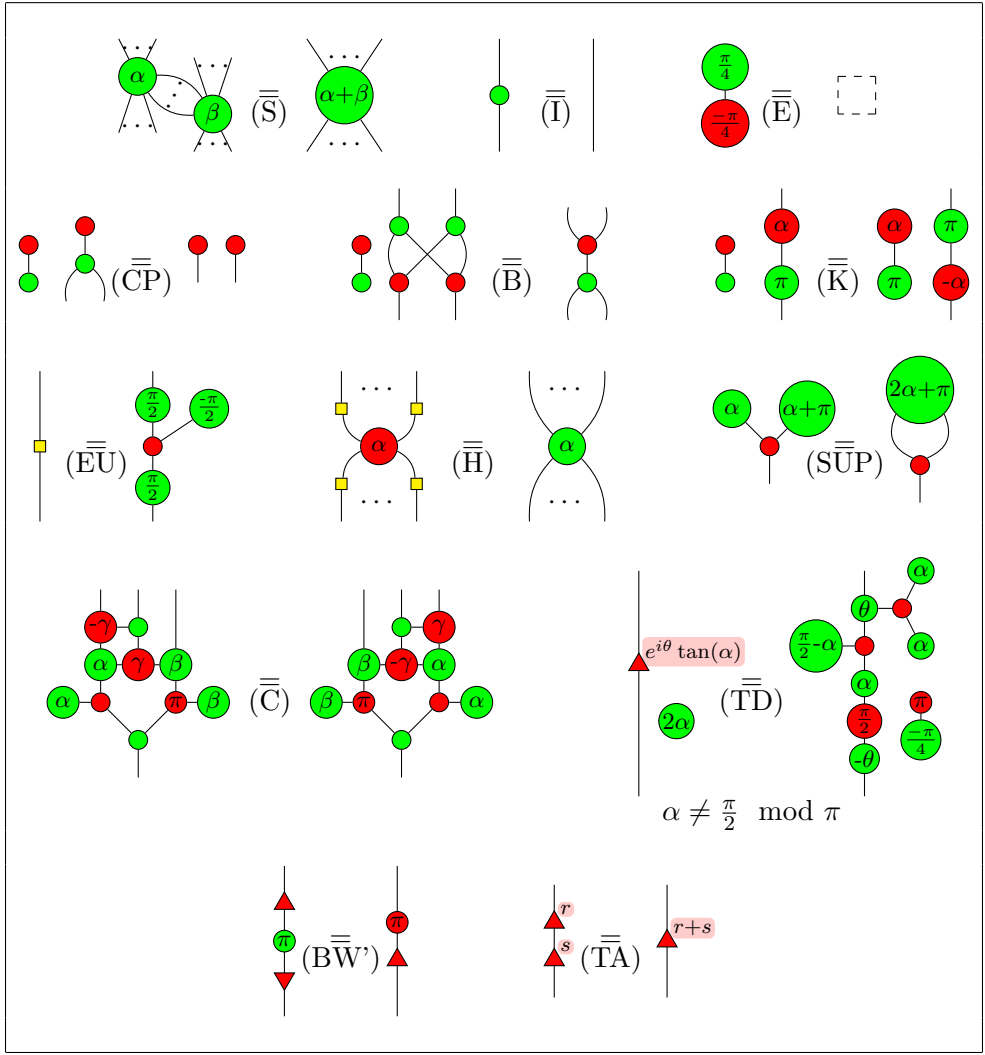

 \centering
 \hypertarget{r:param-triangle-rules}{}
  \begin{tabular}{|c|}
   \hline\\
   
%\InputIfFileExists{#1.tikz}{}{
\input{./figures/spider-1.tikz}%} % chktex 27
$\quad\qquad$
%\InputIfFileExists{#1.tikz}{}{
\input{./figures/s2-simple.tikz}%} % chktex 27
$\quad\qquad$
%\InputIfFileExists{#1.tikz}{}{
\input{./figures/bicolor_pi_4_eq_empty.tikz}%} % chktex 27
\\\\
   
%\InputIfFileExists{#1.tikz}{}{
\input{./figures/b1s.tikz}%} % chktex 27
$\quad\qquad$
%\InputIfFileExists{#1.tikz}{}{
\input{./figures/b2s.tikz}%} % chktex 27
$\quad\qquad$
%\InputIfFileExists{#1.tikz}{}{
\input{./figures/k2s.tikz}%} % chktex 27
\\\\
   
%\InputIfFileExists{#1.tikz}{}{
\input{./figures/euler-decomp-scalar-free.tikz}%} % chktex 27
$\quad\qquad$
%\InputIfFileExists{#1.tikz}{}{
\input{./figures/h2.tikz}%} % chktex 27
$\quad\qquad$
%\InputIfFileExists{#1.tikz}{}{
\input{./figures/former-supp.tikz}%} % chktex 27
\\\\
   
%\InputIfFileExists{#1.tikz}{}{
\input{./figures/commutation-of-controls-general-simplified.tikz}%} % chktex 27
$\qquad\qquad$
%\InputIfFileExists{#1.tikz}{}{
\input{./figures/general-triangle-decomposition-2.tikz}%} % chktex 27
\\\\
   
%\InputIfFileExists{#1.tikz}{}{
\input{./figures/BW-triangle-p.tikz}%} % chktex 27
$\qquad\qquad$
%\InputIfFileExists{#1.tikz}{}{
\input{./figures/triangle-addition.tikz}%} % chktex 27
\\\\
   \hline
  \end{tabular}
 \caption[]{
Set of rules \zxt for the ZX-Calculus with parametrised triangles.
 }%
 \label{fig:ZX_rules-complete-param-triangles}
\end{figure}

We claim that our calculus is simpler than the one presented in~\cite{NgWang}. In this new version of the ZX-Calculus, the interpretation of the parametrised triangle in \zwc $[.]_W$ is:
\[
%\InputIfFileExists{#1.tikz}{}{
\input{./figures/param-triangle-ZX-to-ZW.tikz}%} % chktex 27
\]
while the interpretation in ZX $[.]_X$ of the white dot is simplified:
\[
%\InputIfFileExists{#1.tikz}{}{
\input{./figures/ZW-white-dot-to-ZX-param-triangle.tikz}%} % chktex 27
\]
Here again, we define the parametrised green triangle, as well as the triangle with no parameter (which happens to be the same as the one we introduced before):
\[
%\InputIfFileExists{#1.tikz}{}{
\input{./figures/param-green-triangle-definition.tikz}%} % chktex 27
\hspace{6em}
%\InputIfFileExists{#1.tikz}{}{
\input{./figures/triangle-1-def.tikz}%} % chktex 27
\]
Indeed, notice that using \td with $\alpha = \frac{\pi}{4}$, we get the initial definition of the triangle, up to a scalar that can be dealt with.

%\begin{figure}[!htb]
% \centering
% \hypertarget{r:param-triangle-rules}{}
%  \begin{tabular}{|c|}
%  \hline\\
%   \tikzfig{general-triangle-decomposition-2}$\qquad\qquad$\tikzfig{not-ug-is-symmetrical}$\qquad\qquad$\tikzfig{triangle-addition}\\\\
%   \hline
%  \end{tabular}
% \caption[]{Set of rules \zxt for the general ZX-Calculus with scalars and parametrised triangles. All of these rules also hold when flipped upside-down, or with the colours red and green swapped.}
% \label{fig:ZX_rules-complete-param-triangles}
%\end{figure}

\begin{prop}%
\label{prop:ZX-param-triangles-contains-ZX-pi_4}
The set of rules for Clifford+T is deducible from the set $\zxt$ in Figure~\ref{fig:ZX_rules-complete-param-triangles}.
\end{prop}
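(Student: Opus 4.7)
The plan is to verify that each rule of $\zxct$ is derivable in $\zxt$. By direct inspection of Figures \ref{fig:ZX-Clifford-rules}, \ref{fig:ZX-CliffordT-rules} and \ref{fig:ZX_rules-complete-param-triangles}, all axioms of $\zxct$ except \bw already appear verbatim in $\zxt$: the Clifford rules \s, \id, the two bialgebra rules, \picom, the Euler decomposition, \h, together with the Clifford+T additions \supp, \e and \ccom, are common to both axiomatisations. The only rule requiring a nontrivial derivation is \bw, which in Figure \ref{fig:ZX-CliffordT-rules} is phrased in terms of the syntactic-sugar triangle of Definition \ref{def:triangle}, while $\zxt$ contains the analogous rule (BW-triangle-p) phrased using the parametrised-triangle generator.

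The bridging step is to identify these two triangles inside $\zxt$. As noted in the paragraph preceding the proposition, specialising the decomposition axiom \td at $\alpha = \pi/4$ rewrites the parametrised triangle $\mathrm{\Delta}(1)$---which by convention in $\zxt$ represents the unparametrised triangle---into exactly the ZX diagram appearing on the right-hand side of Definition \ref{def:triangle}, up to a scalar factor that is itself derivable in $\zxt$. This yields a $\zxt$-proof that the sugar triangle and $\mathrm{\Delta}(1)$ are equal modulo a fixed scalar diagram. The Clifford+T axiom \bw is then obtained by instantiating (BW-triangle-p) at the parameter value corresponding to $\mathrm{\Delta}(1)$, and then applying \td on both sides to convert each parametrised triangle back into its sugar-form decomposition, thereby producing an equation between ZX diagrams of the Clifford+T language that is literally \bw.

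The main obstacle I expect is reconciling the scalar factor flagged by the parenthetical \emph{up to a scalar that can be dealt with}. The scalar arising from the specialisation of \td at $\alpha = \pi/4$ will appear on both sides of the translated \bw once every triangle is expanded; since \bw has the same number of triangles on each side, these scalars either cancel automatically, or can be produced/eliminated using the scalar manipulations already available in $\zxt$ via \s, \id, the Euler decomposition and \e. Beyond this routine bookkeeping, no further difficulty is anticipated, and the remaining Clifford and Clifford+T rules transfer unchanged.
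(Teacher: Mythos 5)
Your inspection step is correct and matches the paper: all of $\zxct$ except \bw appears verbatim in $\zxt$, with \bw replaced by (BW'). But the bridging argument you propose is not the one the paper uses, and it rests on an assumption that is false. You treat \bw and (BW') as the same equation written with two different triangle notations, so that unfolding each via \td (and matching it against Definition \ref{def:triangle}) would ``produce an equation that is literally \bw'' after scalar bookkeeping. The paper's actual proof instead invokes Lemma \ref{lem:not-ug-is-symmetrical}, whose derivation is a genuine chain of \emph{equivalences} between the two rules, passing through several intermediate equations and using \picom, \id, and the scalar lemmas \ref{lem:inverse}, \ref{lem:k1}, \ref{lem:multiplying-global-phases}. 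That chain lives entirely in the common Clifford part of the two axiomatisations (no \td is used at all), which is exactly why it transfers to $\zxt$ without any triangle unfolding. The existence of that multi-step chain is strong evidence that (BW') is \emph{not} merely a \td-unrolling of \bw: if it were, the lemma would be the single triangle-identification step plus scalar cancellation, not a sequence of $K$-rule and global-phase manipulations.

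Concretely, the gap is this: your proposal never establishes the equation you would obtain after expanding (BW') with \td is equal, in $\zxt$, to the diagrammatic equation \bw. You assume it is ``literal'' after scalars cancel, but the needed argument is precisely what Lemma \ref{lem:not-ug-is-symmetrical} supplies, and it is not a scalar-bookkeeping issue --- it is a nontrivial rewrite using the Clifford rules. Without citing (or re-proving) that equivalence, your derivation of \bw from $\zxt$ is incomplete. The fix is short: replace the \td-unfolding story with a reference to Lemma \ref{lem:not-ug-is-symmetrical} (or reproduce its equivalence chain), observing that every rule appearing in that chain is available in $\zxt$.
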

\begin{proof}
All the rules of the Clifford+T fragment except \bw are expressed in the set $\zxt$. Instead, this rule is replace by (BW') in \zxt, but the two are equivalent as shown in Lemma~\ref{lem:not-ug-is-symmetrical}.
\end{proof}

As a corollary, we can prove any equality between two diagrams of the Clifford+T fragment using \zxt. We may now wonder if we can have a theorem for the new language similar to Theorem~\ref{thm:lin-diag}. To do so we need to update our notion of linear diagrams, for now, there can be two kinds of parameters, that we need to distinguish: the ones in green and red nodes, that follow a group structure, and the ones in triangles, that follow a ring structure:

\begin{defi} %Given $C\subseteq [0, 2\pi)$,
A \zxt-diagram is linear in $(\vec{\alpha}, \vec r)$ with constants in $C\subseteq \mathbb R$, if it is generated by $R_Z^{(n,m)}(E_\alpha)$, $R_X^{(n,m)}(E_\alpha)$, $\mathrm{\Delta}(E_r)$, $H$, $e$, $\mathbb I$, $\sigma$, $\epsilon$, $\eta$, and the spacial and sequential compositions, where $n,m\in \mathbb  N$, and:
\begin{itemize}
\item  $E_\alpha$ is of the form $\sum_{i} n_i \alpha_i+c$, with $n_i\in \mathbb Z$ and $c\in C$
\item  $E_r$ is of the form $\sum_{i} n_i r_i+p$, with $n_i\in \mathbb Z$ and $p\in \mathbb{Z}$
\end{itemize}
\end{defi}

It may look strange that $E_r$ has coefficients and constants in $\mathbb{Z}$ since the triangle should be able to hold any complex parameter. This choice actually simplifies the proof of the following theorem. Although, once we have the completeness (Theorem~\ref{thm:completeness-t}), the previous definition can obviously be changed with:
\begin{itemize}
\item  $E_r$ is of the form $\sum_{i} n_i r_i+p$, with $n_i\in \mathbb C$ and $p\in \mathbb{C}$
\end{itemize}

\begin{thm}%
\label{thm:lin-diag-T}
For any diagrams $D_1(\vec\alpha,\vec r)$ and $D_2(\vec\alpha, \vec r)$ linear in $(\vec\alpha, \vec r)$ with constants in $\frac{\pi}{4}\mathbb{Z}$:
\[\left(\forall (\vec\alpha,\vec r)\in \mathbb{R}^{k_\alpha}\times\mathbb{C}^{k_r},~~\interp{D_1(\vec\alpha, \vec r)}=\interp{D_2(\vec\alpha, \vec r)}\right) \equi{} \zxt \vdash D_1(\vec\alpha, \vec r)=D_2(\vec\alpha, \vec r)\]
\end{thm}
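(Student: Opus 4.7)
The proof follows the same architecture as that of Theorem \ref{thm:lin-diag}, but extended to accommodate the two kinds of variables now present. The key observation is that because $E_r$ is required to have integer coefficients and integer constant, the rule \ta (which expresses $\Delta(r_1+r_2) = \Delta(r_1)\cdot\Delta(r_2)$ graphically) allows every occurrence of $\Delta(n r+p)$ to be split into $|n|$ occurrences of $\Delta(\pm r)$ together with a parameter-free factor $\Delta(p)$, in complete analogy with how \s splits $R_Z(n\alpha+c)$ for the angle variables.

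First, I would prove an analogue of Proposition \ref{prop:var2inp}: given linear diagrams $D_1(\vec\alpha,\vec r)$ and $D_2(\vec\alpha,\vec r)$ with constants in $\frac{\pi}{4}\mathbb{Z}$, there exist parameter-free diagrams $D_1'$ and $D_2'$ in the $\frac\pi 4$-fragment (augmented with constant triangles, which by \td are themselves \frag{4} diagrams) such that \zxt proves the equivalence of $D_1(\vec\alpha,\vec r)=D_2(\vec\alpha,\vec r)$ with
\[D_1'\circ\bigl(\theta_{\vec s}(\vec\alpha)\otimes\Xi_{\vec t}(\vec r)\bigr) = D_2'\circ\bigl(\theta_{\vec s}(\vec\alpha)\otimes\Xi_{\vec t}(\vec r)\bigr),\]
where $\theta_{\vec s}(\vec\alpha)$ is defined as in Proposition \ref{prop:var2inp} and $\Xi_{t_j}(r_j):0\to 2t_j$ is the state obtained by bending $t_j$ independent copies of $\Delta(r_j)$ through $\eta$'s, so its semantics depends polynomially on $r_j$ up to degree $t_j$. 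The angle-variable extraction is identical to that of Proposition \ref{prop:var2inp}; the triangle-variable extraction uses \ta to split sums, bending identities to turn maps into states, and \zxct-derivability (Proposition \ref{prop:ZX-param-triangles-contains-ZX-pi_4}) to absorb the scalar and constant-triangle residue into $D_i'$.

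Second, in analogy with Lemma \ref{lem:equivalence-Pk}, I would construct a triangle projector $Q_t$ satisfying $\zxt\vdash Q_t\circ\Xi_t(r)=\Xi_t(r)$ for every $r\in\mathbb{C}$, with $\mathrm{rank}(\interp{Q_t})=t+1$. The dimension count is the natural analogue of the angle case: as $r$ varies in $\mathbb{C}$, $\interp{\Xi_t(r)}$ lies in the $(t+1)$-dimensional span of the monomials $r^0,r^1,\ldots,r^t$, and linear independence of any $t+1$ evaluations at pairwise distinct values of $r$ follows from invertibility of a Vandermonde matrix exactly as in the proof of Lemma \ref{lem:equivalence-Pk}. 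Combined with the projectors $P_{s_i}$ for the angle variables, this reduces the universal hypothesis to a single semantic equality $\interp{D_1'\circ (P_{\vec s}\otimes Q_{\vec t})} = \interp{D_2'\circ (P_{\vec s}\otimes Q_{\vec t})}$ between parameter-free diagrams, which is provable in \zxt by the Clifford+T completeness (Theorem \ref{thm:cliff-T}) together with Proposition \ref{prop:ZX-param-triangles-contains-ZX-pi_4}, and unwinding then yields the required derivation.

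The main obstacle is the explicit construction of the triangle projector $Q_t$ within \zxt and the verification of the absorbing property $\zxt\vdash Q_t\circ\Xi_t(r)=\Xi_t(r)$ for the symbolic variable $r$. The angle projector $P_r$ was built inductively from a two-wire kernel $P_2$; for the triangle variant one will have to engineer an analogous two-wire kernel using \td and \ta to exploit the ring structure now available, and then iterate as in the definition of $P_r$. Once this kernel is in place, the remainder of the argument is a straightforward variable-by-variable induction following the template of Proposition \ref{prop:1var}, with the Vandermonde step transplanted essentially verbatim.
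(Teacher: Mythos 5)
Your proposal takes a genuinely different route from the paper, and it leaves a gap at exactly the step you flag as the ``main obstacle.'' The paper does \emph{not} construct a new projector for the ring variables. Instead it exploits the re-parametrisation $r = e^{i\theta}\tan\beta$: after splitting each $\mathrm{\Delta}(\sum_j n_j r_j + p)$ into single-occurrence triangles $\mathrm{\Delta}(\pm r_j)$ using \ta, Lemma~\ref{lem:param-triangle-0} and Lemma~\ref{lem:triangle-times-exp}, it pads the diagrams with auxiliary scalars $R_Z^{(0,0)}(\beta_j)$ and applies \td to replace every occurrence of $\mathrm{\Delta}(e^{i\theta_j}\tan\beta_j)$ by a triangle-free sub-diagram parametrised by the angles $\theta_j,\beta_j$. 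The resulting diagrams $d_1, d_2$ are ordinary ZX-diagrams, linear in $(\vec\alpha, \vec\theta, \vec\beta)$ with constants in $\frac{\pi}{4}\mathbb{Z}$ and semantically equal for all valuations, so Theorem~\ref{thm:valuations} yields $\zxct\vdash d_1 = d_2$, whence $\zxt\vdash d_1 = d_2$ by Proposition~\ref{prop:ZX-param-triangles-contains-ZX-pi_4}; substituting back (and cancelling the auxiliary scalars via Lemma~\ref{lem:gn-inverse}) gives the result. In other words, the key idea is to convert each ring variable into two angle variables, so that the Section~\ref{sec:lin-diag} machinery can be reused wholesale rather than rebuilt.

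Your plan --- extracting $\vec r$ into states $\Xi_t(r)$, engineering a projector $Q_t$ onto $\mathop{\mathrm{span}}\{\interp{\Xi_t(r)}\mid r\in\mathbb{C}\}$, and proving the absorption property $\zxt\vdash Q_t\circ \Xi_t(r) = \Xi_t(r)$ --- is considerably more laborious and, as written, does not close. You never exhibit $Q_t$, and the absorption property is itself a $\zxt$-equation containing a triangle variable, so it cannot be obtained from the theorem under proof; as with $P_2$ and Lemma~\ref{lem:alphas-on-X} in the angle case, you would need a hand-built base-case derivation followed by an induction, and you would further need $Q_t$ to be (or be $\zxt$-provably equal to) a \frag{4} diagram so that Theorem~\ref{thm:cliff-T} applies to the residual parameter-free equation. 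The dimension count and Vandermonde argument you sketch are correct, but without an explicit $Q_t$, the absorption lemma, and the membership of $Q_t$ in the \frag{4}, the central step is missing. It is plausible that such a $Q_t$ exists and your architecture could be completed, but the \td re-parametrisation sidesteps the whole construction and is what the paper actually does.
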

To prove this theorem, we will first need a few lemmas (their proofs are in Appendix at page~\pageref{prf:gn-inverse}):

\begin{lem}%
\label{lem:gn-inverse}
The green node 
%\InputIfFileExists{#1.tikz}{}{
\begin{tikzpicture}
	\begin{pgfonlayer}{nodelayer}
		\node [style=gn] (0) at (0, 0) {$\alpha$};
	\end{pgfonlayer}
\end{tikzpicture}
%} % chktex 27
 has an inverse, denoted $
%\InputIfFileExists{#1.tikz}{}{
%} % chktex 27
^{-1}$, if $\alpha\neq\pi\mod 2\pi$: % chktex 8
\[\zxt\vdash~~
%\InputIfFileExists{#1.tikz}{}{
\input{./figures/gn-alpha-inverse.tikz}%} % chktex 27
\]
\end{lem}

\begin{lem}%
\label{lem:triangle-times-exp}
Phases can be extracted from the parameter of a triangle:
\[
%\InputIfFileExists{#1.tikz}{}{
\input{./figures/param-triangle-times-exp.tikz}%} % chktex 27
\] % chktex 35
\end{lem}

\begin{lem}%
\label{lem:param-triangle-0}
A triangle with parameter $0$ is the identity:
\[
%\InputIfFileExists{#1.tikz}{}{
\input{./figures/param-triangle-0.tikz}%} % chktex 27
\]
\end{lem}

\begin{proof}[Proof of Theorem~\ref{thm:lin-diag-T}]
The idea of the proof is to use \td to recover a diagram of the ZX-Calculus with no triangle, and use Theorem~\ref{thm:lin-diag}.

Let $D_1(\vec\alpha, \vec r)$ and $D_2(\vec\alpha, \vec r)$ be two linear diagrams of \zxt, such that $\interp{D_1(\vec\alpha, \vec r)}=\interp{D_2(\vec\alpha, \vec r)}$.
For any parametrised triangle, in both diagrams, we first develop its expression:
\[
%\InputIfFileExists{#1.tikz}{}{
\input{./figures/lin-param-triangle-decomp.tikz}%} % chktex 27
\]
Then, we decompose each triangle with parameter $n_j r_j$, distinguishing according to the sign of $n_j$, using \ta,~\ref{lem:param-triangle-0} and~\ref{lem:triangle-times-exp}:
\[\begin{array}{c@{\qquad\qquad}c@{\qquad\qquad}c}
n_j = 0: & n_j>0: & n_j < 0:\\

%\InputIfFileExists{#1.tikz}{}{
\input{./figures/param-triangle-0.tikz}%} % chktex 27
 & 
%\InputIfFileExists{#1.tikz}{}{
\input{./figures/lin-param-triangle-decomp-pos.tikz}%} % chktex 27
 & 
%\InputIfFileExists{#1.tikz}{}{
\input{./figures/lin-param-triangle-decomp-neg.tikz}%} % chktex 27

\end{array}\]
%Then, $\forall r_j\in\mathbb{C},~\exists !(\theta_j,\beta_j) \in [0,2\pi[\times [0,\frac{\pi}{2}[,~r_j = e^{i\theta_j}\tan(\beta_j)$. Hence:
We have:
\begin{align*}
%&\left(\forall (\vec\alpha,\vec r) \in \mathbb{R}^\ell\times\mathbb{C}^k, \interp{D_1(\vec\alpha,\vec r)} = \interp{D_2(\vec\alpha,\vec r)} \right)\\
%&\iff \left(
\begin{array}{l}
\forall (\vec\alpha,\vec \theta,\vec \beta) \in \mathbb{R}^\ell\times [0,2\pi[^k\times [0,\frac{\pi}{2}[^k,\\
\quad\interp{D_1\left(\vec\alpha, \left[e^{i\theta_j}\tan(\beta_j)\right]_j\right)} = \interp{D_2\left(\vec\alpha, \left[e^{i\theta_j}\tan(\beta_j)\right]_j\right)}
\end{array}
%\right)
\end{align*}
where $\left[e^{i\theta_j}\tan(\beta_j)\right]_j=e^{i\theta_1}\tan(\beta_1),\ldots,e^{i\theta_k}\tan(\beta_k)$.
In order to use \td, we need to add enough 
%\InputIfFileExists{#1.tikz}{}{
\begin{tikzpicture}
	\begin{pgfonlayer}{nodelayer}
		\node [style=gn] (0) at (0, 0) {$2\beta_j$};
	\end{pgfonlayer}
\end{tikzpicture}
%} % chktex 27
 to the diagrams. We can easily count the number of occurrences of $r_j$ in $D_1$ and $D_2$, and we define $m_j$ to be the maximum of these two quantities. We add $m_j$ occurrences of 
%\InputIfFileExists{#1.tikz}{}{
%} % chktex 27
 to both diagrams: $D_i\otimes\left[ \bigotimes\limits_j\left(
%\InputIfFileExists{#1.tikz}{}{
%} % chktex 27
\right)^{\otimes m_j}\right]$. Notice that the equation remains sound: % chktex 8
\begin{align*}
&\forall (\vec\alpha,\vec \theta,\vec \beta) \in \mathbb{R}^\ell\times \ropen{0,2\pi}^k\times \ropen{0,\frac{\pi}{2}}^k,\\
&\!\interp{D_1\left(\vec\alpha, \left[e^{i\theta_j}\tan(\beta_j)\right]_j\right)\otimes\left[ \bigotimes\limits_j
%\InputIfFileExists{#1.tikz}{}{
%} % chktex 27
^{\otimes m_j}\right]} \hspace{-.3ex}=\hspace{-.3ex} \interp{D_2\left(\vec\alpha, \left[e^{i\theta_j}\tan(\beta_j)\right]_j\right)\otimes\left[ \bigotimes\limits_j
%\InputIfFileExists{#1.tikz}{}{
%} % chktex 27
^{\otimes m_j}\right]}
\end{align*}
Now, there is enough occurrences of 
%\InputIfFileExists{#1.tikz}{}{
%} % chktex 27
 in both diagrams to use \td on all the triangles. We denote $d_i$ the resulting diagrams. These diagrams are triangle free, hence can be assimilated to diagrams of the canonical ZX-Calculus, and are linear in $(\vec\alpha,\vec\theta,\vec\beta)$. Since % chktex 8
\[\forall (\vec\alpha,\vec \theta,\vec \beta) \in \mathbb{R}^\ell\times \ropen{0,2\pi}^k\times \ropen{0,\frac{\pi}{2}}^k,\interp{d_1(\vec\alpha,\vec\theta,\vec\beta)} = \interp{d_2(\vec\alpha,\vec\theta,\vec\beta)}\]
using Theorem~\ref{thm:valuations}, $\forall(\vec\alpha,\vec\theta,\vec\beta)\in\mathbb{R}^{\ell+2k},~\zxct \vdash d_1(\vec\alpha,\vec\theta,\vec\beta)=d_2(\vec\alpha,\vec\theta,\vec\beta)$. However, thanks to Proposition~\ref{prop:ZX-param-triangles-contains-ZX-pi_4}, $\forall(\vec\alpha,\vec\theta,\vec\beta)\in\mathbb{R}^{\ell+2k},~\zxt\vdash d_1(\vec\alpha,\vec\theta,\vec\beta)=d_2(\vec\alpha,\vec\theta,\vec\beta)$. Finally, for all $r\in\mathbb{C}$, there exists $(\theta,\beta) \in \ropen{0,2\pi}\times \ropen{0,\frac{\pi}{2}}$ such that $r=e^{i\theta}\tan(\beta)$. Hence,
thanks to Lemma~\ref{lem:gn-inverse}, %$\zxt\vdash d_1(\vec\alpha,\vec\theta,\vec\beta)=d_2(\vec\alpha,\vec\theta,\vec\beta)\iff\zxt\vdash D_1(\vec\alpha, \vec r)=D_2(\vec\alpha, \vec r)$.
$\forall(\vec\alpha,\vec r)\in\mathbb{R}^\ell\times\mathbb{C}^k,~\zxt\vdash D_1(\vec\alpha, \vec r)=D_2(\vec\alpha, \vec r)$.
\end{proof}

Again, this theorem proves to be very useful for proving the derivability of some equations, for instance:
\begin{multicols}{2}
\begin{lem}%
\label{lem:param-tri-distri}
\[\zxt\vdash
%\InputIfFileExists{#1.tikz}{}{
\input{./figures/add-axiom-2-param-triangle.tikz}%} % chktex 27
\]
\end{lem}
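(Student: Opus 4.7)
The strategy is to invoke Theorem~\ref{thm:lin-diag-T} directly: as soon as both sides of the stated equation can be exhibited as linear diagrams with constants in $\frac{\pi}{4}\mathbb{Z}$ and we check semantic equality, derivability in \zxt is automatic. This is the analogue of how Corollary~\ref{cor:distribution} was obtained from Theorem~\ref{thm:lin-diag} in the triangle-free setting, except that here the ring-typed parameters of the triangles sit alongside the group-typed angle parameters of the spiders.

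Concretely, I would proceed in three steps. First, read off the free parameters (the triangle entries, and any spider angles) and confirm by inspection of the generators that both sides fit the definition of a \zxt-diagram linear in $(\vec\alpha,\vec r)$ with constants in $\frac{\pi}{4}\mathbb{Z}$; the only constants introduced by the syntactic sugar of the green/red nodes and by the decompositions of the triangle are $\pi/4$-multiples, so this is immediate. Second, verify soundness by a direct matrix computation. Using $\interp{\Delta(r)}=\begin{pmatrix}1 & r\\0 & 1\end{pmatrix}$ together with the standard interpretations of the spiders and the Hadamard box, one evaluates both sides as linear maps with polynomial entries in the $r_i$ and trigonometric entries in the $\alpha_j$, and checks coefficient-wise equality. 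Third, feed the semantic equality into Theorem~\ref{thm:lin-diag-T} to conclude $\zxt\vdash\tikzfig{add-axiom-2-param-triangle}$.

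The main obstacle is step two: the diagram encodes a distributivity-like identity between two composite scalars/operators, and the semantic check, while entirely mechanical over $\mathbb{C}$, requires care because the upper-triangular factor coming from $\Delta(r)$ combines multiplicatively with the $e^{i\alpha}$ terms on the diagonals of the spiders. Once the matrices are expanded, however, the identity boils down to the bilinearity of the triangle parameter in the relevant positions and the standard $R_Z$/$R_X$ interchange, so no additional diagrammatic lemma beyond Theorem~\ref{thm:lin-diag-T} is needed.
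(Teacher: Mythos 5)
Your proposal is correct and matches the paper's approach exactly: the paper simply states that Lemma~\ref{lem:param-tri-distri} (and Lemma~\ref{lem:C2-gen}) ``are direct applications of Theorem \ref{thm:lin-diag-T},'' which is precisely your check-linearity, verify-soundness, invoke-the-theorem routine. (One small caveat: your analogy to Corollary~\ref{cor:distribution} is slightly off, since that corollary is obtained via the basis-plugging Theorem~\ref{thm:basis} rather than a direct call to Theorem~\ref{thm:lin-diag}, but this does not affect your argument.)
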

\begin{lem}%
\label{lem:C2-gen}
\[\zxt\vdash
%\InputIfFileExists{#1.tikz}{}{
\input{./figures/param-triangle-and-param-not-W-commute.tikz}%} % chktex 27
\]
\end{lem}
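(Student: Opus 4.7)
The plan is to apply Theorem \ref{thm:lin-diag-T}, the linear-completeness result for \zxt just proved, to reduce the claim to a semantic verification. Indeed, both sides of the equation are built from the generators of \zxt and the parameters appearing on them (a green-node angle and a triangle parameter) occur with integer coefficients and constant offsets in $\frac{\pi}{4}\mathbb{Z}$; so both diagrams are linear in the sense of the definition preceding Theorem \ref{thm:lin-diag-T}, and proving the equation under $\zxt \vdash$ reduces to verifying that $\interp{\mathrm{LHS}} = \interp{\mathrm{RHS}}$ for every valuation of $(\vec\alpha,\vec r) \in \mathbb R^{k_\alpha}\times \mathbb C^{k_r}$.

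Next, I would carry out the semantic verification by unfolding the standard interpretation of each side. The shape of the diagram mirrors the pattern exploited in the discussion of \ccom in Section \ref{sec:new-axioms}: one side is the composition of a controlled operation and an anti-controlled operation sharing the same control wire, and the other is the same composition in the opposite order. Since $\Lambda U \circ \overline\Lambda V$ and $\overline\Lambda V \circ \Lambda U$ are both equal, as block matrices, to $\mathrm{diag}(\interp{V},\interp{U})$, the two diagrams have the same interpretation. The only new ingredient here compared to \ccom is that one of the controlled blocks is built from a parametrised triangle $\mathrm{\Delta}(r)$, whose semantics $\begin{pmatrix}1 & r\\0 & 1\end{pmatrix}$ is simply an upper-triangular matrix; the block-diagonal structure of controlled/anti-controlled operations is unaffected. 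This gives semantic equality for all choices of $(\vec\alpha,\vec r)$.

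Finally, applying Theorem \ref{thm:lin-diag-T} to the verified semantic equality yields $\zxt \vdash \mathrm{LHS} = \mathrm{RHS}$, as required.

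The part I expect to be the most delicate is the scalar bookkeeping in the matrix computation. As with the derivation of \ccom in Section \ref{sec:new-axioms}, in the ZX-style "scalar-free" presentation the two sides carry ancillary scalar subdiagrams which must match up exactly; checking this for the triangle-parametrised version is a routine but somewhat tedious verification, made easier by the fact that the scalars involved are all invertible (so one can argue modulo global scalars and then recover them at the end using the Clifford fragment rules). Conceptually, however, there is no obstacle: the identity is simply the commutation of two operations acting on disjoint targets with a common control wire.
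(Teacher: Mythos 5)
Your proposal is correct and coincides with the paper's own proof: the paper explicitly notes that Lemma \ref{lem:C2-gen}, like Lemma \ref{lem:param-tri-distri}, is a direct application of Theorem \ref{thm:lin-diag-T}, reducing the derivability to a routine semantic check for all valuations of $(\vec\alpha,\vec r)$. Your elaboration via the block-diagonal structure of controlled and anti-controlled operations is a reasonable way to organise that semantic check and matches the discussion of \ccom in Section \ref{sec:new-axioms}.
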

\end{multicols}
The last two lemmas are direct applications of Theorem~\ref{thm:lin-diag-T} while the following requires a bit more work.
\begin{lem}%
\label{lem:parallel-param-triangles}
\[\forall r,s\in\mathbb{C},~~\zxt\vdash~~
%\InputIfFileExists{#1.tikz}{}{
\input{./figures/parallel-param-triangles.tikz}%} % chktex 27
\]
\end{lem}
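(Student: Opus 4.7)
The plan is to eliminate the parametrised triangles on both sides by using rule \td, converting the equation into one between triangle-free ZX-diagrams parametrised by angles, and then invoking Theorem \ref{thm:lin-diag-T} (equivalently Theorem \ref{thm:lin-diag} combined with Proposition \ref{prop:ZX-param-triangles-contains-ZX-pi_4}). The subtlety, and the reason this lemma needs more work than Lemmas \ref{lem:param-tri-distri} and \ref{lem:C2-gen}, is that the two parallel triangles are expected to combine in a way that depends \emph{non-linearly} on $r$ and $s$, so Theorem \ref{thm:lin-diag-T} does not apply directly to the statement as phrased.

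First I would verify soundness by a direct matrix computation, which is routine. Next, following the strategy used in the proof of Theorem \ref{thm:lin-diag-T}, I would write each parameter as $r = e^{i\theta_r}\tan(\beta_r)$ and $s = e^{i\theta_s}\tan(\beta_s)$ with $\theta_r,\theta_s \in [0,2\pi)$ and $\beta_r,\beta_s \in [0,\tfrac{\pi}{2})$; every complex number admits such a decomposition, so it suffices to prove the equation for parameters of this form. Then I would introduce auxiliary green nodes \tikzfig{gn-0-0-beta_r} and \tikzfig{gn-0-0-beta_s} on both sides, apply \td to each parametrised triangle, and obtain triangle-free diagrams $d_1$ and $d_2$ whose parameters are (affine combinations of) the new variables $\theta_r,\beta_r,\theta_s,\beta_s$ together with constants in $\tfrac{\pi}{4}\mathbb{Z}$.

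The main obstacle is that if the right-hand side of the lemma contains a triangle whose parameter is a non-linear function of $r$ and $s$ (typically $rs$), the naive decomposition produces angles that depend non-linearly on the new variables, breaking the linearity required by Theorem \ref{thm:lin-diag-T}. To overcome this, I would use rule \ta together with Lemma \ref{lem:triangle-times-exp} and Lemma \ref{lem:param-triangle-0} to further split the combined triangle into a composition of simpler triangles whose parameters each involve only one of $r$ or $s$ (plus phase factors extracted via Lemma \ref{lem:triangle-times-exp}). Applying \td once more to these pieces then yields a triangle-free equation that is genuinely linear in $\theta_r,\beta_r,\theta_s,\beta_s$ with constants in $\tfrac{\pi}{4}\mathbb{Z}$.

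At this point, soundness of the decomposed equation (which follows from soundness of the original) and Theorem \ref{thm:lin-diag} give a derivation in $\zxct$, hence in $\zxt$. Lemma \ref{lem:gn-inverse} finally lets me cancel the auxiliary \tikzfig{gn-0-0-beta_r} and \tikzfig{gn-0-0-beta_s} nodes (choosing $\beta_r,\beta_s \neq \pi \bmod 2\pi$, which is automatic since they lie in $[0,\tfrac{\pi}{2})$), recovering the original statement for every $r,s \in \mathbb{C}$.
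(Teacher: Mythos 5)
Your overall strategy (reduce to an angular form and invoke the linearity theorem, then cancel the auxiliary green nodes with Lemma~\ref{lem:gen-zx}-style inverses) is the right instinct, and you correctly diagnose that Theorem~\ref{lem:param-triangle-0} plus Theorem~\ref{thm:lin-diag-T} cannot be applied directly because the combined triangle's parameter depends multiplicatively on $r$ and $s$. The problem is that you do not actually resolve that obstacle: you propose to ``use rule \ta{} together with Lemma~\ref{lem:triangle-times-exp} and Lemma~\ref{lem:param-triangle-0} to further split the combined triangle into a composition of simpler triangles whose parameters each involve only one of $r$ or $s$,'' but none of those tools can do this. \ta{} decomposes a \emph{sum} of parameters, Lemma~\ref{lem:triangle-times-exp} extracts only a \emph{phase}, and Lemma~\ref{lem:param-triangle-0} is a base-case identity. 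A triangle with parameter $rs$, with $|rs| = \tan\beta_r\tan\beta_s$, cannot be rewritten by these rules into pieces that are each linear in $\theta_r,\beta_r,\theta_s,\beta_s$ with constants in $\frac{\pi}{4}\mathbb{Z}$; indeed, being able to split a product-parametrised triangle into a composition involving one of $r$ and $s$ at a time is essentially the content of the lemma you are trying to prove, so this step is circular as stated.

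The paper sidesteps the product in a different way. It first uses Theorem~\ref{thm:lin-diag-T} to establish a \emph{linear} auxiliary equality~(\ref{eq:distrib-T}) over three independent complex parameters $r,a,b$ (so no product of variables appears on either side), then proves the lemma in the restricted case where one of the parameters has the form $n e^{i\theta}$ with $n\in\mathbb{N}$ by unrolling $n$ with \ta{} and extracting the phase with Lemma~\ref{lem:triangle-times-exp}, at which point Theorem~\ref{thm:lin-diag-T} genuinely applies because $n$ has become a constant. Finally it reduces arbitrary $s\in\mathbb{C}$ to this case via the observation that $s$ can always be written $n(e^{i\theta}+e^{-i\theta})e^{i\alpha}$ with $n\in\mathbb{N}$ and $\theta,\alpha\in\mathbb{R}$, combining the integer case with~(\ref{eq:distrib-T}). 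This decomposition is the essential new idea, and your proposal is missing it; without it, or some equivalent mechanism, the proof does not go through.
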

The proof is in Appendix at page~\pageref{prf:parallel-param-triangles}.

After this, it is not hard to show:
\begin{thm}%
\label{thm:completeness-t}
The set of rules \zxt (in~\ref{fig:ZX_rules-complete-param-triangles}) is complete for quantum mechanics.
\end{thm}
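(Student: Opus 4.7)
The plan is to reuse the back-and-forth translation strategy that proved Theorem~\ref{thm:gen-zx}, but now with the simpler interpretations afforded by the parametrised triangle. Specifically, I would establish the two translations $[.]_W : \zxt \to \zwc$ and $[.]_X : \zwc \to \zxt$ already sketched in the section, extending $[.]_W$ from the Clifford+T case by sending $\mathrm{\Delta}(r)$ to the corresponding $\zwc$ diagram indicated, and extending $[.]_X$ by sending the white dot with parameter $r$ to its triangle-based image. A routine structural induction then gives $\interp{[D]_W} = \interp{D}$ for $D \in \zxt$ and $\interp{[D]_X} = \interp{D}$ for $D \in \zwc$.

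Then I would prove the two key propositions analogous to Propositions~\ref{prop:double-interpretation-equivalence-1} and~\ref{prop:X-is-homomorphism}: first, that $\zxt \vdash [[D]_W]_X = D$ for every generator $D$ of $\zxt$ (including $\mathrm{\Delta}(r)$), and second, that for every axiom $D_1 = D_2$ of $\zwc$, $\zxt \vdash [D_1]_X = [D_2]_X$. Once both are in place, completeness follows in the standard way: if $\interp{D_1} = \interp{D_2}$, then $\interp{[D_1]_W} = \interp{[D_2]_W}$, so by completeness of $\zwc$ we have $\zwc \vdash [D_1]_W = [D_2]_W$; applying $[.]_X$ yields $\zxt \vdash [[D_1]_W]_X = [[D_2]_W]_X$; and then the first proposition gives $\zxt \vdash D_1 = D_2$.

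The recovery statement $\zxt \vdash [[D]_W]_X = D$ on generators should largely reduce to checks already carried out in the proof of Theorem~\ref{thm:gen-zx}, since $\zxt$ contains all rules needed to derive the Clifford+T axioms by Proposition~\ref{prop:ZX-param-triangles-contains-ZX-pi_4}; the only new case is the parametrised triangle itself, which should unfold via \td and Lemma~\ref{lem:triangle-times-exp}. For the axioms of $\zwc$, the most significant leverage comes from Theorem~\ref{thm:lin-diag-T}: each $\zwc$ axiom is linear in its white-dot parameters (the $r_i$) and in its angle parameters, and after applying $[.]_X$ it becomes a linear \zxt-diagram equation in those same parameters. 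So for each such axiom, it suffices to check semantic equality (which holds by soundness of $\zwc$) and invoke Theorem~\ref{thm:lin-diag-T}.

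The main obstacle will be handling the $\zwc$ axioms that encode the ring structure on white-dot parameters — namely the rules implementing addition and multiplication of $r$'s — since on the \zxt side these become equalities between compositions of parametrised triangles. Here I expect to need Lemmas~\ref{lem:param-tri-distri}, \ref{lem:C2-gen}, and especially \ref{lem:parallel-param-triangles}, which respectively provide distribution, commutation, and the parallel composition of triangles required to mirror the ring operations. Once these building blocks are folded into the linear-diagram framework of Theorem~\ref{thm:lin-diag-T}, the remaining axioms of $\zwc$ — topological rules and structural identities shared with the integer ZW-calculus — are dispatched by the same arguments already developed in Sections~\ref{sec:cliff-t} and~\ref{sec:gen-zx}, adapted to the parametrised generator.
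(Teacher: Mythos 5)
Your proposal follows essentially the same strategy as the paper's proof: a back-and-forth translation with $\zwc$, a recovery proposition $\zxt\vdash[[D]_W]_X=D$, preservation of the $\zwc$ axioms under $[.]_X$ using Theorem~\ref{thm:lin-diag-T} where the parameters enter linearly, and Lemmas~\ref{lem:param-tri-distri}, \ref{lem:C2-gen}, \ref{lem:parallel-param-triangles} for the nonlinear ring-structure axioms (multiplication). One small oversight: in the recovery step you say the only new generator case is $\mathrm{\Delta}(r)$, but the Hadamard case must also be redone, since $[.]_X$ now renders the $\zwc$ scalars $\frac{1}{2}$ and $\frac{1}{\sqrt 2}$ as parametrised triangles, and the paper spends non-trivial effort showing these simplify away.
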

The proof is in Appendix at page~\pageref{prf:completeness-t}.

%\bibliography{../quantum-ref}
\newcommand{\etalchar}[1]{$^{#1}$}

\appendix

\section{Lemmas from \texorpdfstring{$\zxct$}{ZX(pi/4)}} % chktex 36

In this appendix (\ref{prf:rules-preserved},~\ref{prf:left-inverse-ZX}) are the proofs of Propositions~\ref{prop:rules-preserved} and~\ref{prop:left-inverse-ZX}. To simplify the following work, we use the new node introduced as a notation in Section~\ref{sec:cliff-t}, and give a few lemmas in Section~\ref{app:lemmas}, and prove them in Section~\ref{app:proof-of-lemmas}. We try to specify at each step which axiom or previously proven lemma has been used. The spider rule \s is so natural in ZX and used so many times that we decided not to list it when we use it. Keep in mind that for any provable equation, its upside down version, its colour-swapped version, and (after Lemma~\ref{lem:-1-interp}) its version with opposed angles are all provable.

\subsection{Lemmas}%
\label{app:lemmas}
\begin{multicols}{3}
\begin{lem}%
\label{lem:2-is-sqrt2-squared}
\[
%\InputIfFileExists{#1.tikz}{}{
\input{./figures/2-is-sqrt2-squared.tikz}%} % chktex 27
\]
\end{lem}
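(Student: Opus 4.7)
The plan is to establish the scalar identity $2 = (\sqrt{2})^2$ at the purely diagrammatic level, using only the rules of $\zxct$ (or more precisely, since both sides involve only multiples of $\pi/2$, just the Clifford fragment). The left-hand side is the closed loop diagram \two{} whose interpretation is $\begin{pmatrix}2\end{pmatrix}$, and the right-hand side is a tensor of two disjoint closed ZX-diagrams each representing the scalar $\sqrt{2}$, typically arising from a Hadamard box sandwiched between a green and a red phase-free spider.

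First I would write the right-hand side explicitly: each $\sqrt{2}$-scalar is a closed diagram of the form \tikzfig{Hadamard} plugged between elementary zero-phase green/red spiders. Then I would apply the Hadamard decomposition rule \h (Euler decomposition) to one of the two Hadamard boxes, converting it into three alternating-color $\frac{\pi}{2}$-phased spiders. At this stage, spider fusion \s combines the resulting phased spiders with the adjacent zero-phase spiders of matching color, and the bialgebra rule \bialg together with the copy rule \cp collapses the remaining structure. After a few applications of \picom{} and \h, the two previously disjoint components should merge into a single closed loop, which by the zero rule \zo{} or a direct Clifford-level simplification reduces to \two.

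The main obstacle is bookkeeping: ZX-calculus scalar manipulations are notoriously delicate because nearly every rule application silently introduces or absorbs a $\sqrt{2}$-factor (this is the whole reason the Clifford axiomatisation of Figure~\ref{fig:ZX-Clifford-rules} includes dedicated scalar rules like \iv{} and \zo). One must therefore track precisely which diagram-level factors are produced at each rewrite and verify that the running total stays equal to $\sqrt{2} \cdot \sqrt{2} = 2$, rather than $1$ or $4$.

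A cleaner alternative, which I would likely favor, is to appeal to the already-established completeness of the Clifford fragment \cite{pi_2-complete}, given as Figure~\ref{fig:ZX-Clifford-rules} plus derivability of \picom{} in the $\pi/2$-fragment. Both sides of the claimed equality are diagrams of the Clifford fragment, and both are easily verified to have standard interpretation equal to the $1\times 1$ matrix $\begin{pmatrix}2\end{pmatrix}$. Completeness then furnishes the required derivation in $\zxct$ without the need to exhibit an explicit rewrite sequence, since the Clifford rules are all contained in $\zxct$.
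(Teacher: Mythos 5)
Your second approach, which you say you would favor, is circular. The axiomatisation $\zxct$ is defined (see the caption of Figure~\ref{fig:ZX-CliffordT-rules}) as $\zxc\setminus\{\iv,\zo\}$ together with the four Clifford+T rules --- that is, the scalar rules \iv and \zo are \emph{deliberately dropped} from $\zxct$. The Clifford-fragment completeness of \cite{pi_2-complete} is proved relative to the \emph{full} rule set of Figure~\ref{fig:ZX-Clifford-rules}, so it is not available in $\zxct$ until the missing scalar axioms have been re-derived. The paper's strategy is exactly the reverse of yours: Lemma~\ref{lem:2-is-sqrt2-squared} is established by an explicit rewrite, and the paper then observes (in the proof of Lemmas~\ref{lem:hopf} to~\ref{lem:control-pi-and-anti-CNOT-commute}) that the stabilizer axiomatisation of \cite{towards-minimal} ``can easily be recovered from Lemma \ref{lem:2-is-sqrt2-squared}'', after which Clifford completeness may be invoked for the remaining Clifford-fragment lemmas. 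Invoking it \emph{here} begs the question. Notably, you yourself point out that \iv and \zo are the dedicated scalar rules of the Clifford calculus --- that observation is precisely the reason your shortcut cannot work, since those rules are absent from $\zxct$.

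Your first, explicit sketch is in the right spirit but is heavier than needed, and its final step (``by the zero rule \zo{}'') again relies on a rule that $\zxct$ does not contain. The paper's proof is a short four-step rewrite using only \id, \cp, spider fusion \s, and \h, none of which are among the removed scalar axioms, and it avoids \bialg, \picom, and \zo entirely. To repair your argument you would need to confine the rewrite to the rules $\zxct$ actually provides and exhibit the chain explicitly rather than appeal to bookkeeping or to completeness.
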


\begin{lem}%
\label{lem:hopf}
\[
%\InputIfFileExists{#1.tikz}{}{
\input{./figures/hopf.tikz}%} % chktex 27
\]
\end{lem}

\begin{lem}%
\label{lem:k1}
\[
%\InputIfFileExists{#1.tikz}{}{
\input{./figures/k1.tikz}%} % chktex 27
\]
\end{lem}

\begin{lem}%
\label{lem:inverse}
\[
%\InputIfFileExists{#1.tikz}{}{
\input{./figures/inverse.tikz}%} % chktex 27
\]
\end{lem}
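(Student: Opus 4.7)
The plan is to show that the two scalar diagrams appearing in the statement are mutual inverses with respect to the tensor product, i.e.\ that their juxtaposition rewrites (via $\zxct$) to the empty diagram. Since the empty diagram has interpretation $1$, this is exactly the assertion that one scalar is a multiplicative inverse of the other. The only way to establish such a scalar identity in the Clifford axiomatisation is through a careful syntactic manipulation, because the rules do not directly manipulate scalar values.

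My first step will be to bring the two diagrams into contact by freely sliding them side-by-side (the paradigm ``only connectivity matters''), then applying the spider rule \s to fuse any pairs of adjacent same-colour spiders into a single spider. Where the diagram contains Hadamard nodes, I will use \h to turn $H\circ H$ into identity wires, and \hd (or equivalently the Euler decomposition together with colour change) to rewrite any green/red alternation in a normal form that exposes a bipartite Clifford structure. At that point, I expect to apply the bialgebra rule \bialg and/or the Hopf law (Lemma~\ref{lem:hopf}) to disconnect the subdiagrams, followed by the copy rule \cp to push the remaining $\pi$-phases through.

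After this simplification, what should remain is a bare ``scalar skeleton'' built out of empty green/red spiders connected by a few wires, together with the canonical $\sqrt{2}$ factors coming from the Hadamard gates. I will then use Lemma~\ref{lem:2-is-sqrt2-squared} to identify and rearrange these $\sqrt{2}$ factors so that each of them pairs with its inverse. The final closure step will be to apply the invertibility axiom \iv (and, if any trivial loop remains, \zo), which is precisely the rule in the Clifford presentation whose purpose is to allow such scalar cancellations, giving the empty diagram.

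The main obstacle will be the accountancy of scalar normalisations: the ZX-calculus rules are ``only up to local rewrites'' and the $\sqrt{2}$ factors introduced by Hadamard boxes multiply rapidly, so one has to be careful not to leave orphan scalar components that cannot be absorbed by \iv. I expect that the interplay between Lemma~\ref{lem:2-is-sqrt2-squared} (to split $2$ into two $\sqrt{2}$'s, or recombine them), the Hopf lemma \ref{lem:hopf} (to kill redundant bipartite connections), and \iv (to finally cancel the remaining matched scalars) is the delicate part, whereas the topological reshuffling is routine. Once this sequence is in place, the proof reduces to a chain of rewrites terminating at the empty diagram, as required.
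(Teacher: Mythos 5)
Your plan has a concrete flaw that the paper's proof is specifically designed to avoid. You propose to close the rewrite by applying the invertibility rule \iv (and, if needed, \zo), but neither rule is available in $\zxct$: the caption of Figure~\ref{fig:ZX-CliffordT-rules} defines $\zxct$ as the Clifford+T additions together with $\zxc\setminus\{\iv,\zo\}$, so \iv and \zo are explicitly removed from the axiomatisation in which Lemma~\ref{lem:inverse} must be derived. A rewrite chain ending with an application of \iv is therefore simply not a proof in $\zxct$, and this is not an incidental bookkeeping problem — the entire reason this lemma needs stating is that the usual scalar-cancellation rules are gone.

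The paper's actual argument is much shorter and does not manipulate diagrams explicitly at all. It observes that Lemma~\ref{lem:inverse} lies in the \frag{2}, and that Lemma~\ref{lem:2-is-sqrt2-squared} — proved first, using only \id, \cp and \h, which \emph{are} in $\zxct$ — suffices to recover the complete Clifford axiomatisation of \cite{towards-minimal} inside $\zxct$. Completeness of that stabiliser axiomatisation then yields derivability of any sound $\frac{\pi}{2}$-fragment equation, Lemma~\ref{lem:inverse} included, without ever invoking \iv or \zo. So the intended route is a completeness-transfer argument, with Lemma~\ref{lem:2-is-sqrt2-squared} as the bridge, rather than an explicit rewrite sequence. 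If you want a direct derivation you would have to replace your use of \iv by whatever combination of \e and Lemma~\ref{lem:2-is-sqrt2-squared} reproduces its effect, but nothing in your sketch indicates how that would go, and the paper deliberately sidesteps the issue.
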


\begin{lem}%
\label{lem:multiplying-global-phases}
\[
%\InputIfFileExists{#1.tikz}{}{
\input{./figures/multiplying-global-phases.tikz}%} % chktex 27
\]
\end{lem}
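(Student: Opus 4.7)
The plan is to prove that the two scalar sub-diagrams representing the global phases $e^{i\alpha}$ and $e^{i\beta}$, when placed side by side, can be rewritten as a single scalar diagram representing the global phase $e^{i(\alpha+\beta)}$. At a high level, the equality should boil down to a single application of the spider rule \s, once the two phase nodes have been brought into contact.

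First, I would unfold each ``global-phase scalar'' into its underlying bicoloured form — typically a small loop consisting of a green node carrying the parameter connected to a red node (or dually) — and use the paradigm ``only topology matters'' to bend the wires so that the two green phase nodes (respectively the two red nodes) become adjacent. Then I would invoke the spider rule \s to fuse the two phase-carrying nodes into a single node whose parameter is $\alpha+\beta$, which is the key algebraic step of the proof. Finally, I would re-fold the resulting one-phase loop back into the canonical form of a single global-phase scalar, again relying on topological deformations and possibly \h to reintroduce Hadamards if the canonical form uses a bicoloured convention.

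The subtlety I expect to be the main obstacle is scalar bookkeeping. The moves above will typically introduce extra wires and hence extra scalar factors (powers of $\sqrt{2}$ or $\tfrac{1}{\sqrt 2}$), so I would have to compensate using Lemma~\ref{lem:2-is-sqrt2-squared} to convert between $2$ and $\sqrt 2 \cdot \sqrt 2$, together with \iv and \zo to dispose of trivial loops. If the bicoloured loop happens to create a Hopf-style pair of connected dots, Lemma~\ref{lem:hopf} would be invoked to disconnect them cleanly. Once the normalisation is matched on both sides, the identity follows immediately.

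I do not expect to need any of the Clifford$+T$ axioms beyond the Clifford ones here; the whole proof should live inside $\zxc$ (indeed it holds for arbitrary real $\alpha,\beta$ and will later be used as such in Proposition~\ref{prop:var2inp}). This also means the proof is a good sanity check on the scalar fragment of the axiomatisation, independent of \supp, \ccom, \bw and \e.
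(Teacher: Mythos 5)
The paper does not actually spell out a proof of Lemma~\ref{lem:multiplying-global-phases}; it simply points to the external references \cite{simplified-stabilizer,cyclo} (see the block proof of ``Lemmas~\ref{lem:hopf} to~\ref{lem:control-pi-and-anti-CNOT-commute}''). So there is no step-by-step argument in the paper to compare against, and what matters is whether your sketch would actually go through. As written, I do not think it would.

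The central claim of your sketch --- that once the two scalar boxes are unfolded ``the equality should boil down to a single application of the spider rule \s, once the two phase nodes have been brought into contact'' --- does not hold. The scalar $e^{i\alpha}$ in this calculus is a bicoloured scalar component in which the green $\alpha$-node is already connected to a red $\pi$-node; placing two such scalars side by side gives two \emph{disconnected} components. Topology alone (bending wires, moving nodes) never creates a wire between two disconnected components, so the two green phase nodes can never be made adjacent in the sense that \s requires. The interaction you need is mediated by the red $\pi$-nodes, and the rule that governs how a red $\pi$ interacts with a green phase is precisely \picom (the $\pi$-commutation rule K). Indeed, \picom is the rule that \emph{produces} such $e^{i\alpha}$ scalar byproducts elsewhere in the paper (see the ``Changing the sign'' step in the proof of Proposition~\ref{prop:var2inp}), and a correct proof of this lemma has to route through \picom --- which your sketch never invokes. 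So the key mechanism is missing, not merely elided.

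Two further, smaller issues. First, you propose using \iv and \zo for scalar bookkeeping, but those two rules are explicitly \emph{removed} from $\zxct$ (Figure~\ref{fig:ZX-CliffordT-rules} says $\zxct$ is built from $\zxc\setminus\{\iv,\zo\}$ plus the four Clifford+T rules); the paper instead works through Lemma~\ref{lem:2-is-sqrt2-squared}, and a valid proof must do the same. Second, you correctly observe that the lemma involves arbitrary $\alpha,\beta$, but for that very reason it is \emph{not} in the \frag2 and cannot be discharged by appealing to the Clifford completeness theorem --- so ``living inside $\zxc$'' in the sense of a uniform derivation with generic angles (using \picom) is the right intuition, but the wires of the argument need to be connected via \picom, not via \s.
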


\begin{lem}%
\label{lem:bicolor-0-alpha}
\[
%\InputIfFileExists{#1.tikz}{}{
\input{./figures/bicolor-0-alpha.tikz}%} % chktex 27
\]
\end{lem}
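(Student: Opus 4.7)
The lemma concerns a short bicolour identity mixing a $0$-angle node of one colour with an $\alpha$-angle node of the other. My plan is therefore to give a routine diagrammatic calculation using only the small toolbox of axioms that is in scope at this point of the appendix (the Clifford rules of Figure~\ref{fig:ZX-Clifford-rules} together with the already-proved Lemmas~\ref{lem:2-is-sqrt2-squared}--\ref{lem:multiplying-global-phases}).

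The strategy I would follow: first normalise the two-coloured piece using the Euler decomposition rule \e and the Hadamard rule \h, so that the red node carrying the $\alpha$ (or the green node carrying the $0$) is converted into the opposite colour; this replaces the bicolour pattern by a monochrome chain of spiders separated at most by a Hadamard. Second, apply the spider rule \s to fuse the two now-same-colour nodes, absorbing the $0$ into the $\alpha$ and leaving a single spider carrying $\alpha$. Third, use the identity \id (or its coloured variant \idr/\idg) to eliminate any residual $0$-angle arity-$2$ nodes that the colour-change produced, which yields the right-hand side of the stated equation.

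The main bookkeeping obstacle will be tracking scalars: colour changes and spider fusion in the scalar-free Euler presentation \e leave behind small scalar diagrams, and the lemma is supposed to be an \emph{exact} equality (it will be used immediately afterwards inside Proposition~\ref{prop:var2inp} to show that $e^{i\alpha}$ has an inverse, so any hidden factor of $\sqrt{2}$ or $\tfrac{1}{2}$ would be fatal). I would therefore keep a running tally of the scalar subdiagrams produced at each step and close them up using Lemma~\ref{lem:2-is-sqrt2-squared} (to cancel the $\sqrt{2}$-scalars created by Hadamard moves) and Lemma~\ref{lem:multiplying-global-phases} (to recombine any stray $e^{i\cdot 0}$ global phases). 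If the scalar count still does not match at the end, the correct fix is to fold in one extra use of the zero rule \zo, which absorbs an isolated $0$-phase scalar loop without changing the semantics.

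In short, the proof is a four-to-six step rewrite: colour-change via \e/\h, then fusion via \s, then identity cleanup via \id, with Lemmas~\ref{lem:2-is-sqrt2-squared} and~\ref{lem:multiplying-global-phases} invoked at the end to verify that the scalar factor is trivial. I do not expect any deep difficulty beyond getting this scalar accounting exactly right, because the underlying semantic statement is just the obvious linear-algebra identity that a $0$-phase node acts as the unit for spider fusion in its colour.
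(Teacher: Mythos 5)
The paper does not actually supply a derivation for this lemma: in the proof block covering Lemmas~\ref{lem:hopf}--\ref{lem:control-pi-and-anti-CNOT-commute}, the authors single out Lemma~\ref{lem:bicolor-0-alpha} (and Lemma~\ref{lem:multiplying-global-phases}) and dispatch both with a citation to \cite{simplified-stabilizer,cyclo}, precisely \emph{because} the statement involves a free angle~$\alpha$ and therefore falls outside the $\pi/2$-fragment, so the completeness-of-Clifford argument used for the neighbouring lemmas does not apply. Your attempt to give an explicit rewriting proof is thus a genuinely different route from what the paper does, and a successful version of it would be a self-contained improvement. The outline you give -- colour-change, spider fusion, identity cleanup, scalar accounting -- is the plausible shape such a proof would take.

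However, as written the proposal has two concrete problems. First, you invoke the rule \e as ``the Euler decomposition rule,'' but in this paper's naming \e is the scalar identity \tikzfig{bicolor_pi_4_eq_empty}; the Euler/Hadamard decomposition you want is \hd (\tikzfig{euler-decomp-scalar-free}), and the colour-change rule itself is \h. This is not just a label slip, because \e is a closed scalar rule that cannot change the colour of any spider, so step~1 of your plan as literally stated does nothing. Second, your fallback of ``one extra use of the zero rule \zo'' is unavailable: the caption of Figure~\ref{fig:ZX-CliffordT-rules} explicitly defines $\zxct$ as $\zxc\setminus\{\iv,\zo\}$ together with the four Clifford+T additions, so neither \iv\ nor \zo\ may be used. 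If your scalar tally genuinely required \zo, the proof would fail in $\zxct$. Since you yourself identify the scalar bookkeeping as the decisive difficulty and leave it undone, the proposal is currently a heuristic sketch rather than a proof; to land it you would need to actually carry through the tally using only \s, \h, \hd, \id, \cp, \picom, Lemma~\ref{lem:2-is-sqrt2-squared} and Lemma~\ref{lem:multiplying-global-phases}, without any appeal to \zo.
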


\begin{lem}%
\label{lem:hadamard-involution}
\[
%\InputIfFileExists{#1.tikz}{}{
\input{./figures/hadamard-involution.tikz}%} % chktex 27
\]
\end{lem}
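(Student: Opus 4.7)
The plan is to reduce $H \circ H$ to the identity wire (possibly up to an invertible scalar) by first unfolding each Hadamard via the Euler decomposition axiom, and then collapsing the resulting composite using the spider law and the $\pi$-copy rule \picom. Concretely, I would start by invoking the axiom \callrule{rules}{EU} (the Euler-decomposition axiom, already listed in Figure~\ref{fig:ZX-Clifford-rules}) to rewrite each $H$ box as a green-red-green sandwich $R_X(\tfrac{\pi}{2})\circ R_Z(\tfrac{\pi}{2})\circ R_X(\tfrac{\pi}{2})$, modulo a global scalar. Stacking the two resulting decompositions gives a chain of six single-leg spiders of alternating colour.

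Next I would fuse the two $R_X(\tfrac{\pi}{2})$ nodes that sit next to each other in the middle of the chain, using the spider rule \s, obtaining an $R_X(\pi)$ sandwiched between two $R_Z(\tfrac{\pi}{2})$ spiders. I would then apply the $\pi$-commutation rule \picom to commute the red $\pi$ spider through one of the adjacent $R_Z(\tfrac{\pi}{2})$ spiders, which by \picom has the effect of negating the green phase: the $R_Z(\tfrac{\pi}{2})$ becomes $R_Z(-\tfrac{\pi}{2})$. The two green spiders $R_Z(\tfrac{\pi}{2})$ and $R_Z(-\tfrac{\pi}{2})$ are now adjacent and can be merged by \s into $R_Z(0)$, which disappears thanks to \id. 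Finally, the three remaining red spiders $R_X(\tfrac{\pi}{2}), R_X(\pi), R_X(\tfrac{\pi}{2})$ (now sitting in a chain, after the green spider has vanished) fuse by \s into $R_X(2\pi)=R_X(0)$, which is the identity wire up to scalar.

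The bookkeeping that remains is to verify that the scalar produced by the two applications of \callrule{rules}{EU} cancels out, so that $H\circ H$ really equals the bare wire and not just proportional to it. For this I would pull the global phases together using Lemma~\ref{lem:multiplying-global-phases}, simplify bicoloured scalars via Lemma~\ref{lem:bicolor-0-alpha}, and use Lemma~\ref{lem:inverse} to cancel a scalar against its inverse, together with Lemma~\ref{lem:2-is-sqrt2-squared} if a $\sqrt{2}$ factor needs to be squared away. I expect this scalar manipulation to be the most delicate part of the argument: the topological/spider-level reasoning is essentially forced, whereas the exact coefficient produced by Euler decomposition needs to be tracked through several commutations and merges without error. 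Once the scalars cancel, the identification with the bare wire follows immediately.
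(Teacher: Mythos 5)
Your derivation is correct in outline, but the paper takes a much more economical route: it observes that this lemma (together with a batch of other Clifford-fragment lemmas, \ref{lem:hopf} through \ref{lem:control-pi-and-anti-CNOT-commute}) lives entirely in the $\frac{\pi}{2}$-fragment, and then simply appeals to the known completeness of the stabilizer ZX-calculus. The only substantive work in the paper's proof of the batch is to check that the minimal stabilizer axiomatisation from \cite{towards-minimal} is recoverable from $\zxct$ once Lemma~\ref{lem:2-is-sqrt2-squared} is in hand (since $\zxct$ omits \iv\ and \zo); after that, completeness of the Clifford fragment does the rest, and no explicit rewriting chain is needed. Your argument — Euler-decompose both Hadamards via \hd, fuse the middle pair of red $\frac{\pi}{2}$'s into a red $\pi$, push it through a green $\frac{\pi}{2}$ by \picom\ to flip its sign, cancel the two green phases, then fuse the remaining reds to $R_X(2\pi)=\mathbb{I}$ — is a valid constructive derivation, and the scalar bookkeeping you flag would indeed go through with Lemmas~\ref{lem:multiplying-global-phases}, \ref{lem:bicolor-0-alpha}, \ref{lem:inverse} and \ref{lem:2-is-sqrt2-squared}. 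The trade-off is that you get an explicit rewrite sequence at the cost of careful scalar tracking, whereas the paper gets the result for free once stabilizer completeness is available, which is cleaner given that dozens of such Clifford-level lemmas are needed. One small slip in your wording: $R_X(\frac{\pi}{2})\circ R_Z(\frac{\pi}{2})\circ R_X(\frac{\pi}{2})$ is a red--green--red sandwich, not green--red--green, but this does not affect the substance of the argument.
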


\begin{lem}%
\label{lem:h-loop}
\[
%\InputIfFileExists{#1.tikz}{}{
\input{./figures/h-loop.tikz}%} % chktex 27
\]
\end{lem}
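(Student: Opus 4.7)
The plan is to treat the H-loop as a closed sub-diagram and reduce it to a bare scalar (or a small normal form) by replacing the Hadamard box by its Euler decomposition and then repeatedly applying spider fusion. Concretely, I would first invoke rule \h to rewrite the box \tikzfig{Hadamard} inside the loop as the chain $R_X(\tfrac\pi2)\,R_Z(\tfrac\pi2)\,R_X(\tfrac\pi2)$ (together with the explicit global scalar that \h carries). After this substitution the loop consists solely of three alternating red/green $\tfrac\pi2$-spiders closed by a cap and a cup.

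Next, I would use the ``only topology matters'' principle to slide one of the red $\tfrac\pi2$-spiders around the loop until it meets the other red $\tfrac\pi2$-spider, and fuse them with \s. This produces a red $\pi$-spider sitting between two green spiders, at which point either \picom (to push the $\pi$ through) or the Hopf-style Lemma \ref{lem:hopf} will disconnect the two halves of the loop. Any remaining self-connected spider is collapsed with \s into a $0$- or $\pi$-phase dot with no free wires, i.e.\ a scalar.

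The final step is scalar bookkeeping: the rewrites introduced several $\sqrt{2}$ factors (from \h), and the disconnected spiders produced above give further scalar diagrams such as \tikzfig{gn-0-0-alpha}. I would normalise these using Lemmas \ref{lem:inverse}, \ref{lem:bicolor-0-alpha}, and \ref{lem:2-is-sqrt2-squared}, matching everything against the right-hand side of the claimed equality.

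The main obstacle I expect is precisely this scalar bookkeeping rather than the structural rewrites: each application of \h carries a scalar factor and one must also track the scalar produced when the loop is cut open by \picom/\ref{lem:hopf}. To control this, I would first prove the loop identity up to scalars (where the argument is essentially topological, using only \s, \h and \picom) and then close the scalar gap in a separate step using the inverse/scalar lemmas already established in Section \ref{app:lemmas}.
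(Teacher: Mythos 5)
Your proof plan is a direct diagrammatic derivation, whereas the paper does something much quicker: it observes that Lemma~\ref{lem:h-loop}, like Lemmas~\ref{lem:hopf} through~\ref{lem:control-pi-and-anti-CNOT-commute}, involves only angles in $\frac{\pi}{2}\mathbb{Z}$, and therefore is derivable because the axiomatisation of~\cite{towards-minimal} (known complete for the Clifford fragment) can be recovered from $\zxct$ once Lemma~\ref{lem:2-is-sqrt2-squared} is in hand. So the paper invokes Clifford completeness as a meta-theorem, and the only explicit work is in Lemma~\ref{lem:2-is-sqrt2-squared}. Your route---Euler-decompose the $H$, slide and fuse $\frac{\pi}{2}$-spiders around the loop, cut the loop with \picom{}/Lemma~\ref{lem:hopf}, then normalise scalars with Lemmas~\ref{lem:inverse}, \ref{lem:bicolor-0-alpha}, \ref{lem:2-is-sqrt2-squared}---is a perfectly valid Clifford rewrite and would yield an explicit derivation, which the paper's appeal to completeness does not. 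The trade-off is labour: the paper's argument is essentially one sentence, while your plan leaves the hardest part (scalar bookkeeping) as future work and is not yet a complete proof. One small correction: the Euler decomposition of the Hadamard box is the rule \hd{}, not \h{} (which is the colour-change rule); moreover the axiom form \hd{} in Figure~\ref{fig:ZX-Clifford-rules} is scalar-free, so if you want the decomposition with the global phase you should cite Lemma~\ref{lem:euler-decomp-with-scalar} rather than the bare axiom.
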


\begin{lem}%
\label{lem:red-pi_2-around-green-node}
\[
%\InputIfFileExists{#1.tikz}{}{
\input{./figures/lemma-red-pi_2-around-green-node.tikz}%} % chktex 27
\]
\end{lem}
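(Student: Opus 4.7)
The plan is to attack this identity by reducing it to the Euler decomposition of the Hadamard (the rule \callrule{rules}{EU}, implemented in the axiomatisation as the scalar-free \tikzfig{euler-decomp-scalar-free}), together with the colour-change rule \h and spider fusion \s. The idea is that a red $\frac{\pi}{2}$ node is, up to scalars, ``half a Hadamard'': indeed the Euler decomposition tells us $H = R_X(\tfrac\pi2)\circ R_Z(\tfrac\pi2)\circ R_X(\tfrac\pi2)$ up to a global phase. So surrounding a green $\alpha$-node by red $\frac{\pi}{2}$-nodes should, modulo scalars, be equivalent to sandwiching a Hadamard between two red $\frac{\pi}{2}$-green-$\alpha$-red-$\frac{\pi}{2}$ blocks which then simplifies.

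Concretely, the first step is to expand every red $\frac{\pi}{2}$ appearing around the green $\alpha$-node by plugging in the Euler identity in reverse, writing
$R_X(\tfrac\pi2) = H \circ R_Z(\tfrac\pi2) \circ H \circ R_Z(\tfrac\pi2)$ (or any convenient rearrangement obtained from \callrule{rules}{EU}); this introduces a controllable number of Hadamards and green $\frac{\pi}{2}$-nodes on either side of the central green $\alpha$-node. The second step is to fuse all green nodes on the same wire using \s, collecting all phases. The third step is to use the colour-change rule \h to flip any remaining Hadamard–green–Hadamard sandwiches into red nodes of the appropriate colour.

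At this point the computation is almost formal: one needs only to match the scalar factors produced along the way, which is where Lemmas \ref{lem:2-is-sqrt2-squared}--\ref{lem:h-loop} come into play. In particular, Lemma~\ref{lem:hadamard-involution} lets us remove pairs of adjacent Hadamards and Lemma~\ref{lem:multiplying-global-phases} allows merging or splitting the global scalar phases that \callrule{rules}{EU} leaves on the side. Finally Lemma~\ref{lem:inverse} is used to cancel any spurious scalar factors of the form \tikzfig{scalar-e-pow-i-alpha}\,\tikzfig{scalar-e-pow-minus-i-alpha} introduced during the rearrangement.

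The main obstacle, as often with ZX derivations in the scalar-aware setting, will not be the structural rewriting but the bookkeeping of scalars: \callrule{rules}{EU} is stated in its scalar-free form, and each application leaves implicit a multiplicative $\frac{1}{\sqrt 2}$ (or $\sqrt 2$) factor, so one must carefully count the uses of \callrule{rules}{EU} and the ancillary Hadamards so that the accumulated scalars on both sides of the desired equation agree. Once the structural steps are aligned, closing the scalar discrepancy using Lemmas~\ref{lem:2-is-sqrt2-squared}, \ref{lem:multiplying-global-phases} and \ref{lem:inverse} finishes the proof.
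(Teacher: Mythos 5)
Your proof takes a genuinely different, and substantially harder, route than the paper. The paper's own argument for Lemma~\ref{lem:red-pi_2-around-green-node} is a one-liner that goes via completeness of the Clifford fragment: the authors note that this lemma (like the rest of the block from Lemma~\ref{lem:hopf} through Lemma~\ref{lem:control-pi-and-anti-CNOT-commute}, apart from \ref{lem:multiplying-global-phases} and \ref{lem:bicolor-0-alpha}) lies entirely in the $\frac{\pi}{2}$-fragment, and that the complete axiomatisation of~\cite{towards-minimal} is recoverable once Lemma~\ref{lem:2-is-sqrt2-squared} is in hand, so the identity follows from completeness of that fragment with no explicit rewriting at all. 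Your proposal instead attempts an explicit, step-by-step derivation via Euler decomposition and colour change. That style of proof is not wrong in principle, but here it is over-engineered: you are re-deriving by hand something the paper deliberately offloads onto an already-known completeness theorem.

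There are also two concrete problems with the sketch as written. First, you describe the lemma as surrounding ``a green $\alpha$-node'' by red $\frac{\pi}{2}$-nodes, with $\alpha$ apparently arbitrary; but the paper's own proof treats this lemma as lying inside the $\frac{\pi}{2}$-fragment, which it cannot do if the green node carries a free parameter. You appear to be reconstructing the statement incorrectly, and an Euler/rewriting proof aimed at an arbitrary-$\alpha$ version would be proving a different (and harder) statement than the one the paper uses downstream. Second, the explicit identity you invoke, $R_X(\tfrac{\pi}{2}) = H\circ R_Z(\tfrac{\pi}{2})\circ H\circ R_Z(\tfrac{\pi}{2})$, is not correct: colour change gives $H\circ R_Z(\tfrac{\pi}{2})\circ H = R_X(\tfrac{\pi}{2})$, so the right-hand side of your identity is $R_X(\tfrac{\pi}{2})\circ R_Z(\tfrac{\pi}{2})$, not $R_X(\tfrac{\pi}{2})$. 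You hedge with ``or any convenient rearrangement,'' but as it stands the concrete step you propose is false, and the rest of the sketch (fuse, colour-change, fix scalars) is too schematic to tell whether the repaired version closes. If you want a constructive proof for the record, the cleaner move is exactly the paper's: observe that everything in sight is Clifford and cite $\frac{\pi}{2}$-fragment completeness.
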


\begin{lem}%
\label{lem:6-cycle-bialgebra}
\[{
%\InputIfFileExists{#1.tikz}{}{
\input{./figures/6-cycle-bialgebra.tikz}%} % chktex 27
}\]
\end{lem}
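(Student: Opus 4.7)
The plan is to derive the 6-cycle bialgebra as a consequence of the ordinary bialgebra rule \bialg, together with the Hopf law (Lemma \ref{lem:hopf}) and the spider rule \s. A standard 6-cycle bialgebra equation expresses that a cycle of alternating green and red nodes (three of each, connected around a hexagon) collapses to a much simpler diagram in which the green and red ``halves'' are completely disentangled, with only some small scalar factor surviving.

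First, I would unfold the hexagon by fusing together the three green nodes with \s into a single green node with many legs going to the three red nodes, which forms a complete bipartite configuration between a green node of degree $3$ and three red nodes of degree $2$. Equivalently, one may first split each green node of the cycle into two degree-$2$ green nodes via \s, so that the 6-cycle is viewed as a $K_{2,3}$-style bipartite graph between two green ``branches'' and three red spiders (or the colour-reversed dual). Concretely, I would rewrite one chosen green-red subconfiguration by applying \bialg to exchange the colours, producing new wires that can be fused with neighbouring same-colour nodes using \s.

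Next, after each bialgebra move, pairs of parallel green-red edges appear between a green and a red node; these can be eliminated using the Hopf law (Lemma \ref{lem:hopf}), which replaces two parallel wires between a green and a red spider with a disconnected pair (introducing at most a scalar factor that can be absorbed using Lemmas \ref{lem:inverse} and \ref{lem:2-is-sqrt2-squared}). Iterating ``apply \bialg, fuse with \s, cancel with Hopf'' around the cycle systematically strips one red-green edge at a time, and the 6-cycle shrinks to a tree-shaped diagram that \s immediately reduces to the right-hand side, up to an invertible scalar that is handled by Lemma \ref{lem:multiplying-global-phases} and \idr / \idg.

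The main obstacle, in my view, will be bookkeeping rather than cleverness: one must be careful that each bialgebra rewrite is applied at a subdiagram of the right shape (two red nodes feeding into two green nodes, or the colour dual), and that the scalar factors generated by Hopf cancellations and by \bialg (which typically carries an implicit $\sqrt 2$ or $1/\sqrt 2$) are tracked correctly so that the final normalisation agrees with the right-hand side of the claimed equation. Since Lemmas \ref{lem:2-is-sqrt2-squared} and \ref{lem:inverse} give us explicit handles on these scalars, the clean-up step is mechanical once the topological reduction is completed.
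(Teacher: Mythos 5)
Your proposed derivation is a genuinely different route from the one the paper actually uses, and it is worth contrasting the two. The paper does \emph{not} give an explicit diagrammatic rewrite for Lemma~\ref{lem:6-cycle-bialgebra}: it observes that the equation lives entirely in the Clifford fragment (\frag2), and since $\zxct$ restricted to \frag2 subsumes a known complete stabilizer axiomatisation (the axiomatisation of \cite{towards-minimal} is recoverable once Lemma~\ref{lem:2-is-sqrt2-squared} has been established), the equality follows by appealing to \frag2-completeness and soundness, with no scalar bookkeeping required. Your plan instead attempts a hands-on reduction by iterating \bialg, \s, and the Hopf law (Lemma~\ref{lem:hopf}), absorbing scalars via Lemmas~\ref{lem:inverse}, \ref{lem:2-is-sqrt2-squared}, and \ref{lem:multiplying-global-phases}. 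That route is plausible and pedagogically more informative, but note that you explicitly defer the part the paper's method handles for free: verifying that the $\sqrt2$ and $1/\sqrt 2$ factors accumulated by each \bialg and Hopf step land exactly on the right-hand side's normalisation. As written, your argument is a proof plan rather than a checked proof; to make it airtight at the same standard as the paper you would either have to carry out the rewrite sequence with explicit scalar tracking, or — more economically — invoke \frag2-completeness as the paper does, which reduces the whole lemma to a single semantic check of the interpretation on both sides.
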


\begin{lem}%
\label{lem:control-pi-and-anti-CNOT-commute}
\[
%\InputIfFileExists{#1.tikz}{}{
\input{./figures/control-pi-and-anti-CNOT-commute-simp.tikz}%} % chktex 27
\]
\end{lem}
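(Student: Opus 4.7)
The plan is to derive the commutation by reducing both sides of the equation to a common form using the Clifford fragment rules, or alternatively by instantiating the general commutation axiom \ccom at suitable parameter values. Since both the controlled-$\pi$ (i.e.~CZ) and the anti-CNOT are pure Clifford operations, one expects this equality to live entirely within the \frag2 portion of the axiomatisation, and thus not to require \supp, \bw, or the Clifford+T rules.

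First I would expand each building block into its primitive ZX form: the anti-CNOT decomposes as an ordinary CNOT conjugated by \rx{$\pi$} on its control wire (via \s and \id), and the controlled-$\pi$ can be rewritten, using \hd, as a CNOT-like bialgebra pattern conjugated by Hadamards on the target leg so that its two legs appear in matching green/red colours. After this step, both sides of the claimed equality consist of compositions of bialgebra patterns (\bialg) interspersed with single-qubit rotations of angle $0$ or $\pi$. I would then use \picom to push the \rx{$\pi$} node through the green control of the CZ pattern, at the cost of a colour swap that lines up exactly with the structure produced by \bialg on the other side. Parallel CNOT fragments that appear after this rearrangement are then cancelled by a Hopf-style argument (Lemma \ref{lem:hopf}), and the surviving Hadamards are eliminated using Lemma \ref{lem:hadamard-involution} and \hd.

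An alternative, possibly shorter route is to apply \ccom directly at the valuation that specialises its angles to multiples of $\pi$, so that the two ``controlled'' gates in \ccom become exactly a CZ and an anti-CNOT (up to the colour-swapped version of \picom); the lemma then follows by a purely topological rewrite together with scalar normalisation.

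The main obstacle will be the bookkeeping of scalars and colour swaps during the slide of the anti-CNOT past the controlled-$\pi$: since neither gate is written in a single canonical colour, the Hadamards introduced for colour conversion proliferate, and one must be careful to cancel them in pairs using Lemma \ref{lem:hadamard-involution} while tracking the $e^{i0}$/$e^{i\pi}$ scalars via Lemmas \ref{lem:multiplying-global-phases} and \ref{lem:bicolor-0-alpha}. If these cancellations are ordered correctly — ideally by performing all \picom steps before any \bialg step — the residual diagram on each side collapses to the same configuration and the equality follows.
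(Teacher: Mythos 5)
Your opening observation --- that both gates involved are Clifford operations, so the equality should live entirely in the \frag2 part of the axiomatisation --- is exactly the content of the paper's proof, but the paper then stops there: it notes that the equation is sound, lies in the \frag2, and that the axiomatisation subsumes a complete axiomatisation of the stabiliser fragment (the rules of the minimal stabiliser calculus are recoverable once Lemma \ref{lem:2-is-sqrt2-squared} is established), so the lemma is derivable by completeness with no explicit rewrite needed. You instead launch into an explicit derivation (unfolding the anti-CNOT, pushing \rx{$\pi$} through with \picom, cancelling via \bialg, Lemma \ref{lem:hopf} and Lemma \ref{lem:hadamard-involution}); this is plausible but left as a sketch whose bookkeeping of scalars and Hadamards is precisely the part you flag as delicate, and it is work the completeness argument lets you skip entirely. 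To make the shorter argument rigorous you need only two things you currently leave implicit: a check that the two sides have the same standard interpretation, and the remark that the rules at hand prove everything true in the \frag2. Your alternative route via \ccom is best avoided: it imports a Clifford+T axiom to prove a purely Clifford fact, and instantiating \ccom at angles in $\pi\mathbb Z$ does not obviously produce an anti-controlled NOT without further colour-changing steps that would themselves need the Clifford completeness you are trying to sidestep.
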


\end{multicols}

\begin{lem}%
\label{lem:-1-interp}
Let $\interp{.}_{-1}$ be the interpretation that multiplies all the angles by $-1$. Then:
\[ \zxct\vdash D_1=D_2 ~~\iff~~ \zxct\vdash \interp{D_1}_{-1}=\interp{D_2}_{-1} \]
\end{lem}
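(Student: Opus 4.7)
The plan is to proceed by structural induction on $\zxct$-derivations. First observe that $\interp{.}_{-1}$ acts only on the angle parameters of green and red spiders, sending $R_Z(\alpha)$ and $R_X(\alpha)$ to $R_Z(-\alpha)$ and $R_X(-\alpha)$ respectively, while fixing every non-phase generator ($H$, $\sigma$, $\mathbb I$, $\epsilon$, $\eta$, and the empty diagram). Hence $\interp{.}_{-1}$ distributes over both sequential and spatial composition, and is an involution. Since the inference rules of $\zxct$ close provability under composition with arbitrary contexts, it suffices to check that for every axiom $A = B$ of $\zxct$, the equation $\interp{A}_{-1} = \interp{B}_{-1}$ is derivable in $\zxct$. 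The reverse implication then follows by applying the involution once more.

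We then inspect the rules of Figures~\ref{fig:ZX-Clifford-rules} and \ref{fig:ZX-CliffordT-rules} case by case. For axioms whose angles appear as free variables---such as \s, \picom, \supp, and \ccom---the image under $\interp{.}_{-1}$ is the same axiom schema instantiated with $-\alpha, -\beta, \dots$ in place of $\alpha, \beta, \dots$, so nothing is to prove. Axioms involving no angle parameters (e.g.\ \bialg, \h, \hd, \zo) are literally fixed by the transformation, and similarly for axioms involving only $\pi$ (such as \iv), since $-\pi \equiv \pi$ modulo $2\pi$. This leaves the axioms with hard-coded phases in $\{\pm\pi/2, \pm\pi/4\}$, namely the Euler decomposition, the rule \e, and \bw.

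For these we derive the negated versions explicitly. The key syntactic tool is \picom, which (up to scalars) lets us conjugate an $R_Z(\alpha)$ by $R_X(\pi)$ nodes to obtain $R_Z(-\alpha)$, and dually for $R_X(\alpha)$. Applying this conjugation node by node in the target axiom transforms it into its angle-negated form, up to pairs of $R_X(\pi)$ nodes that can be fused via \s into $R_X(0) = \mathbb I$. Combined with the closure of $\zxct$ under colour-swap and upside-down---declared in the caption of Figure~\ref{fig:ZX-Clifford-rules}---this handles each remaining axiom. The main obstacle is the scalar bookkeeping: \picom is stated only up to global phase factors, so each conjugation introduces spurious scalars that must be balanced on both sides; this is handled by invoking the scalar manipulation lemmas (Lemmas~\ref{lem:inverse}, \ref{lem:multiplying-global-phases}, \ref{lem:bicolor-0-alpha}) to cancel the extraneous factors, after which the induction closes.
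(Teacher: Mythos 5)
Your proposal matches the paper's proof in both structure and substance: reduce to checking the axioms via compositionality and involutivity, observe that rules with free-variable angles or with only $0/\pi$ phases are preserved up to reinstantiation, and single out the same three hard cases---\e, \hd and \bw---to be dispatched by $\pi$-conjugation (\picom, or the one-legged variant Lemma~\ref{lem:k1}) and by Hadamard colour-swap \h, together with scalar bookkeeping via Lemmas~\ref{lem:inverse}, \ref{lem:multiplying-global-phases}, \ref{lem:bicolor-0-alpha}. Two small slips worth flagging: you list \hd among the ``angle-free'' axioms before (correctly) placing it, as ``the Euler decomposition'', among the hard-coded-phase cases that do need work; and the colour-swap/flip closure quoted from the caption is declared only for Figure~\ref{fig:ZX-Clifford-rules}, not for Figure~\ref{fig:ZX-CliffordT-rules} where \e lives, so the colour-swapped form of \e has to be derived explicitly with \h---which is precisely what the paper's derivation does.
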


\begin{multicols}{3}

\begin{lem}%
\label{lem:gn-pi_2-0-0-equals-sqrt2-exp-pi_4}
\[
%\InputIfFileExists{#1.tikz}{}{
\begin{tikzpicture}
	\begin{pgfonlayer}{nodelayer}
		\node [style=gn] (0) at (-1, -0) {$\frac{\pi}{2}$};
		\node [style=none] (1) at (0, -0) {=};
		\node [style=gn] (2) at (1, -0.2500001) {$\frac{\pi}{4}$};
		\node [style=rn] (3) at (1, 0.5) {$\pi$};
	\end{pgfonlayer}
	\begin{pgfonlayer}{edgelayer}
		\draw (3) to (2);
	\end{pgfonlayer}
\end{tikzpicture}%} % chktex 27
\] % chktex 35
\end{lem}

\begin{lem}%
\label{lem:green-state-pi_2-is-red-state-minus-pi_2}
\[
%\InputIfFileExists{#1.tikz}{}{
\input{./figures/green-state-pi_2-is-red-state-minus-pi_2.tikz}%} % chktex 27
\]
\end{lem}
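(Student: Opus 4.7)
The plan is to reduce the red state to a green one via the defining identity $\rx{\alpha} = H \circ \rz{\alpha}$ followed by the Euler decomposition of the Hadamard. Concretely, I would start from one side of the equation, say the red state $\rx{-\pi/2}$, substitute $H \circ \rz{-\pi/2}$, and then expand $H$ using the Euler decomposition axiom from Figure~\ref{fig:ZX-Clifford-rules} into a chain of the form $R_X(\pi/2) \circ R_Z(\pi/2) \circ R_X(\pi/2)$, together with its explicit scalar factor. This sets up a column of $\pi/2$-rotations of alternating colours sitting on top of a green $-\pi/2$ state, which is precisely the kind of configuration that spider fusion and the colour-change rules handle well.

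The next step is algebraic simplification inside this chain. Spider fusion \s merges the innermost $R_Z(\pi/2)$ with $\rz{-\pi/2}$ into a trivial $\rz{0}$ node; Lemma~\ref{lem:bicolor-0-alpha} then absorbs this zero-angle green node into the adjacent $R_X(\pi/2)$, leaving a two-node red/green column on top of a state. A further application of the colour-change rule \h combined with the Hadamard involution (Lemma~\ref{lem:hadamard-involution}) and one more spider fusion collapses what remains into the desired green state $\rz{\pi/2}$, up to an accumulated scalar. An alternative, essentially equivalent route is to apply the Euler decomposition directly to the green state by first dressing it with a pair of Hadamards via Lemma~\ref{lem:hadamard-involution} and then exploiting Lemma~\ref{lem:red-pi_2-around-green-node} to swap the $\pi/2$ rotations around the central node; whichever route turns out to produce the cleanest intermediate diagrams.

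The remaining task is scalar reconciliation, which I expect to be the main obstacle rather than the structural rewriting. Here I would lean on Lemma~\ref{lem:gn-pi_2-0-0-equals-sqrt2-exp-pi_4}, which identifies the scalar $\rz{\pi/2}$ with $\sqrt{2}\,e^{i\pi/4}$, together with the inverse scalar Lemma~\ref{lem:inverse} and Lemma~\ref{lem:multiplying-global-phases} for combining global phases. The Euler decomposition contributes an $e^{i\pi/4}$ (or $e^{-i\pi/4}$) factor, each fusion step can leak or consume a $\sqrt{2}$ factor hidden in the $\rz{0}/\rx{0}$ scalars via Lemma~\ref{lem:2-is-sqrt2-squared}, and the colour-change step potentially introduces further scalars. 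Tracking these carefully so that the net scalar on each side of the equation matches exactly is the delicate part of the proof, but since each of the auxiliary lemmas isolates a single scalar identity, the argument should reduce to a short, purely arithmetic check at the very end.
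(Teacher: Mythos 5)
Your high-level strategy coincides with the paper's: introduce a Hadamard by colour change \h, Euler-decompose it with \hd, collapse the resulting chain, and fix up the scalar using Lemma~\ref{lem:gn-pi_2-0-0-equals-sqrt2-exp-pi_4}. (The paper's own proof is exactly the four-step chain \h, \hd, \cp, then \h together with Lemma~\ref{lem:gn-pi_2-0-0-equals-sqrt2-exp-pi_4}.) However, the simplification stage of your plan has a real gap.

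With the Euler form you explicitly write, $R_X(\pi/2)\circ R_Z(\pi/2)\circ R_X(\pi/2)$, the middle green $R_Z(\pi/2)$ is separated from the green $-\pi/2$ state by a red $R_X(\pi/2)$, so the spider fusion you invoke cannot fire; you need the colour-swapped Euler form $R_Z(\pi/2)\circ R_X(\pi/2)\circ R_Z(\pi/2)$. More importantly, once the two bottom green nodes have fused to a green $0$-state sitting under an $R_X(\pi/2)$, the rule that eliminates the red node is the copy rule \cp --- the decisive move in the paper's proof --- not Lemma~\ref{lem:bicolor-0-alpha}. That lemma is a scalar identity, invoked throughout the paper alongside Lemmas~\ref{lem:multiplying-global-phases} and \ref{lem:inverse} precisely to rewrite closed (zero-leg) scalar subdiagrams; it says nothing about a $0$-state feeding a $1$-output spider. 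After \cp fires, only one more green fusion and the scalar cleanup remain, and in particular there is no Hadamard left in the picture, so your further appeal to \h and Lemma~\ref{lem:hadamard-involution} has no target. In short: right skeleton, but the step that actually collapses the Euler chain is missing and has been replaced by a lemma that does not apply.
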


\begin{lem}%
\label{lem:euler-decomp-with-scalar}
\[
%\InputIfFileExists{#1.tikz}{}{
\input{./figures/euler-decomp-with-scalar.tikz}%} % chktex 27
\]
\end{lem}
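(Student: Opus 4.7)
The plan is to derive the scalar-full Euler decomposition of the Hadamard gate from the scalar-free one (axiom \callrule{rules}{EU}), and then to account for the missing scalar factor by using the previously established scalar identities. Since the Clifford axiomatisation already contains a scalar-free Euler decomposition, the work will essentially be bookkeeping of normalization constants, with the key tools being Lemma \ref{lem:gn-pi_2-0-0-equals-sqrt2-exp-pi_4} (which identifies a closed scalar diagram with $\sqrt{2}e^{i\pi/4}$) and Lemma \ref{lem:multiplying-global-phases} (to combine global phase scalars).

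First I would start from the right-hand side of the scalar-free Euler decomposition and tensor in the appropriate scalar diagram so that the overall interpretation matches that of the Hadamard box. Concretely, I would compute $\interp{H}$ against the interpretation of the bare $R_XR_ZR_X$ (or $R_ZR_XR_Z$) chain to identify the missing factor as a power of $\sqrt{2}$ times an eighth root of unity; this fixes the diagrammatic scalar I need to insert. Then I would rewrite this scalar as a composition of closed green/red nodes using Lemma \ref{lem:gn-pi_2-0-0-equals-sqrt2-exp-pi_4} and Lemma \ref{lem:green-state-pi_2-is-red-state-minus-pi_2}, which convert specific scalar diagrams into the needed factor $\sqrt{2}e^{i\pi/4}$ (or its conjugate/power thereof).

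Next I would chain the manipulations: apply \callrule{rules}{EU} to rewrite the Hadamard gate as an Euler chain up to the scalar-free equivalence, then multiply both sides by the appropriate closed scalar diagram. On the left, the scalar combines with the Hadamard via the empty-diagram/scalar-tensor conventions; on the right, it absorbs into the existing scalar gadget in the scalar-free form to yield the stated diagram. The use of Lemma \ref{lem:inverse} would allow me to cancel any introduced pair of scalar and its inverse, ensuring equality rather than mere proportionality.

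The main obstacle will be tracking the precise scalars: the scalar-free rule \callrule{rules}{EU} hides a factor that must be reintroduced, and several of the auxiliary lemmas produce $\sqrt{2}$ factors and $\frac{\pi}{4}$ phases that must be carefully combined, keeping an eye on signs and on whether we need $e^{i\pi/4}$ or $e^{-i\pi/4}$. Once this scalar reconciliation is done cleanly using Lemmas \ref{lem:multiplying-global-phases}, \ref{lem:gn-pi_2-0-0-equals-sqrt2-exp-pi_4}, and \ref{lem:inverse}, the equality reduces to a direct application of \callrule{rules}{EU} and the proof is complete.
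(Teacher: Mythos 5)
Your approach matches the paper's: apply the scalar-free Euler-decomposition axiom and then reconcile scalars using Lemmas \ref{lem:inverse}, \ref{lem:multiplying-global-phases}, and \ref{lem:green-state-pi_2-is-red-state-minus-pi_2} (the last of which already packages the $\sqrt{2}e^{i\pi/4}$ factor from Lemma \ref{lem:gn-pi_2-0-0-equals-sqrt2-exp-pi_4}). The only slip is naming: the rule you refer to as (EU) is called \hd in this paper's axiom set.
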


\begin{lem}%
\label{lem:C1-original}
\[
%\InputIfFileExists{#1.tikz}{}{
\input{./figures/control-commutation-2.tikz}%} % chktex 27
\]
\end{lem}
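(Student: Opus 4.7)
The plan is to derive the original commutation-of-controls equation~C1 from the simplified rule \ccom in the axiomatisation, by reintroducing the scalars and $R_X(\pi)$ conjugations that were stripped off in the simplification. This is essentially the reverse direction of the sketch given in Section~\ref{sec:new-axioms}, where it was shown that the original commutation of a controlled $R_Z(2\beta)$ gate with an anti-controlled two-qubit gate $V(\alpha,\gamma)$ reduces, up to invertible scalars, to the simplified \ccom.

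First, I would rewrite each side of the target equation as an instance of the simplified \ccom sandwiched between (i) an Euler decomposition of the Hadamards acting on the target wires and (ii) pairs of $R_X(\pi)$ nodes on the control wire, which flip controls into anti-controls and vice-versa. The Euler step is handled by Lemma~\ref{lem:euler-decomp-with-scalar} together with \h and \hd. Then \picom (combined with Lemma~\ref{lem:-1-interp} to account for the sign changes introduced when an $R_X(\pi)$ node is pushed through a green spider) migrates these $R_X(\pi)$ nodes into their target positions, and \s fuses adjacent spiders of matching colour.

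After these preparations, both sides of the target equation contain a subdiagram matching respectively the LHS and RHS of \ccom, together with identical residual contexts. A single application of \ccom then closes the rewrite modulo global scalars. These scalars are handled by Lemmas~\ref{lem:multiplying-global-phases},~\ref{lem:bicolor-0-alpha} and~\ref{lem:inverse}: the first lets me combine global phase factors, the second identifies bicoloured $0/\alpha$ configurations, and the third lets me introduce or cancel multiplicative inverses on either side of the equation without changing provability.

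The main obstacle will be purely bookkeeping: keeping the $R_X(\pi)$ conjugations, the Euler scalars, and the parameters $\alpha,\beta,\gamma$ in the right positions as they migrate through the diagram, and checking that no stray Hadamard or scalar survives. Everything stays within the Clifford+$T$ fragment, and only \ccom is used in an essential way; the delicate step is the exact pairing of $R_X(\pi)$ nodes so that the simplified \ccom applies directly, without invoking additional axioms beyond the preparatory lemmas already collected in Section~\ref{app:lemmas}.
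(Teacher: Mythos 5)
Your overall strategy is exactly the paper's: reduce the original commutation~(C1) to a single application of the simplified \ccom, sandwiched between symmetric preparation and cleanup steps, handling scalars with the multiplicative-inverse machinery. Several of the lemmas you name do appear in the paper's proof, namely Lemma~\ref{lem:euler-decomp-with-scalar}, Lemma~\ref{lem:multiplying-global-phases}, and Lemma~\ref{lem:inverse}, and you correctly identify \ccom as the only essential step.

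Where you diverge is in what the preparation actually consists of. The paper's proof is
\[
\text{C1} \;\eq{\ref{lem:euler-decomp-with-scalar},\,\ref{lem:multiplying-global-phases},\,\ref{lem:inverse}}\; \cdot
\;\eq{\cp}\; \cdot
\;\eq{\ccom}\; \cdot
\;\eq{\cp}\; \cdot
\;\eq{\ref{lem:inverse},\,\ref{lem:multiplying-global-phases},\,\ref{lem:euler-decomp-with-scalar}}\; \text{RHS},
\]
so the non-Euler part of the preparation is the copy rule \cp, applied once going in and once coming out. You do not mention \cp at all; instead you invoke \picom together with Lemma~\ref{lem:-1-interp} to migrate $R_X(\pi)$ nodes, and Lemma~\ref{lem:bicolor-0-alpha}, none of which appear in the paper's proof. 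This suggests you have misidentified where C1 and \ccom differ: they are related by an Euler unfolding of Hadamards plus a copy-rule splitting, not by a control/anti-control flip. The $R_X(\pi)$-conjugation story is part of the \emph{semantic} motivation in Section~\ref{sec:new-axioms}, but it is not what the formal derivation needs. Without the \cp step, the diagram after your Euler unfolding will not actually match the left- and right-hand sides of \ccom, so the ``single application of \ccom'' you promise would not go through as stated. The fix is simple once you see the paper's proof: replace the $R_X(\pi)$/\picom/\ref{lem:-1-interp} machinery with one use of \cp on each side.
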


\begin{lem}%
\label{lem:C1-bis}
\[
%\InputIfFileExists{#1.tikz}{}{
\input{./figures/control-commutation-3.tikz}%} % chktex 27
\]
\end{lem}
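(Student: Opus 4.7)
The plan is to derive Lemma \ref{lem:C1-bis} as a consequence of Lemma \ref{lem:C1-original}. Since the two statements are closely related variants of control commutation (as the naming \emph{C1-original} vs.\ \emph{C1-bis} suggests), the idea is to bring one side of the claimed equation into the form of the left-hand side of Lemma \ref{lem:C1-original} by suitable local rewrites, then apply Lemma \ref{lem:C1-original}, and finally normalise the result back to the shape of the claim.

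First I would perform a colour analysis of both diagrams in the statement and identify the subdiagrams where colours or orientations differ from Lemma \ref{lem:C1-original}. At every such location I would insert a pair of Hadamard boxes (using Lemma \ref{lem:hadamard-involution}) and push one of them through the adjacent spider with rule \h to swap its colour. For phase nodes where the Euler form is needed, I would invoke Lemma \ref{lem:euler-decomp-with-scalar} to convert an $R_Z$ into a sequence $R_X R_Z R_X$ (or the symmetric rewrite), producing the configuration of red and green nodes that Lemma \ref{lem:C1-original} expects. I would use Lemma \ref{lem:red-pi_2-around-green-node} as needed to transport small $\pi/2$ phases past a node while conjugating its angle, and \picom to flip the sign of an angle when the parity of ambient $\pi$ nodes requires it.

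With the intermediate diagram now in the precise shape of the left-hand side of Lemma \ref{lem:C1-original}, I would apply that lemma as the substantive step. The resulting diagram is then dragged back to the shape of the right-hand side of Lemma \ref{lem:C1-bis} by reversing the preparatory moves: reabsorbing the Hadamards via Lemma \ref{lem:hadamard-involution}, merging spiders via \s, and cleaning up stray $R_Z(0)$ or $R_X(0)$ nodes using the spider rule. If a bialgebra manipulation is needed (for instance to exchange the inputs of a controlled operation), I would appeal to \bialg or to Lemma \ref{lem:6-cycle-bialgebra}.

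The main obstacle I anticipate is keeping track of the scalar factors throughout the derivation. Lemma \ref{lem:C1-original} and the Euler decomposition (Lemma \ref{lem:euler-decomp-with-scalar}) both introduce explicit scalars, and matching them up on the two sides may require several applications of Lemmas \ref{lem:multiplying-global-phases}, \ref{lem:bicolor-0-alpha}, \ref{lem:inverse}, and \ref{lem:gn-pi_2-0-0-equals-sqrt2-exp-pi_4} to absorb or cancel $\sqrt{2}$'s and global phases such as $e^{i\pi/4}$. A secondary source of bookkeeping difficulty is the sign convention on the rotation angles: depending on which colour of node the $\pi/2$ phase is pushed past, the residual angle in Lemma \ref{lem:C1-bis} may come with a flipped sign, in which case Lemma \ref{lem:-1-interp} combined with \picom provides the controlled way to reconcile this.
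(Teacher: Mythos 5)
Your proposal is correct and takes essentially the same route as the paper: both derive the statement from Lemma~\ref{lem:C1-original} sandwiched between Hadamard manipulations (via Lemma~\ref{lem:hadamard-involution} and rule \h) and a scalar cleanup. The one notable difference is that the paper first establishes an auxiliary identity by invoking completeness of the $\frac{\pi}{2}$-fragment, which collapses the colour/phase massaging you propose to do by hand into a single step, so your version would likely be longer but should go through.
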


%\begin{lem}
%\label{lem:new-supp}
%\[\tikzfig{new-supplementarity}\qquad \text{ for }\alpha,\beta\in \frac{\pi}{4}\mathbb{Z}\]
%\end{lem}

\begin{lem}%
\label{lem:supp-to-minus-pi_4}
\[
%\InputIfFileExists{#1.tikz}{}{
\input{./figures/supp-to-minus-pi_4.tikz}%} % chktex 27
\]
\end{lem}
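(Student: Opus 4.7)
The plan is to derive this lemma by specializing the supplementarity axiom \supp to an angle that produces a $-\pi/4$ phase, and then normalizing the resulting diagram. Concretely, I expect \supp to take the form of an equation parametrized by some angle $\alpha$, and I would instantiate it at $\alpha = \pi/4$ (or perhaps $\alpha = \pi/4$ with a suitable global shift), so that the two phases $\alpha$ and $-\alpha$ on the two legs of the supplementarity pattern become $\pi/4$ and $-\pi/4$. This explains the name ``supp-to-minus-pi\_4''.

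After that initial instantiation, the strategy is to reshape the result using the spider rule \s, the color-change Lemmas via \h (Lemma~\ref{lem:hadamard-involution} and the Euler decomposition of Lemma~\ref{lem:euler-decomp-with-scalar}), and the commutation rule \picom together with Lemma~\ref{lem:red-pi_2-around-green-node}, so that the $\pi/4$ legs end up on the same spider and only the $-\pi/4$ phase survives explicitly. Scalar factors that appear during these manipulations would be absorbed using Lemmas~\ref{lem:multiplying-global-phases}, \ref{lem:bicolor-0-alpha}, and \ref{lem:gn-pi_2-0-0-equals-sqrt2-exp-pi_4}, which together let one convert scalar subdiagrams into powers of $\piq{}$ and $\sqrt{2}$ and merge them with any leading normalizer on the target side. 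The fact that $\piq{} + \piq{-1} = \sqrt{2}$ is what ultimately makes the numerical sides match, and its diagrammatic analogue is exactly what \supp provides once specialized.

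The main obstacle, as in most of the $\zxct$ derivations of this kind, will be the scalar bookkeeping: \supp comes with its own prefactor, and the target equation has another one, so one has to carefully track which powers of $\piq{}$ and $\sqrt{2}$ appear on each side. A secondary subtlety is choosing the right orientation of \supp --- upside-down, color-swapped, or with angles negated via Lemma~\ref{lem:-1-interp} --- so that the topology of the generated diagram lines up with the target without requiring a non-derivable rearrangement. Once the orientation and scalar accounting are settled, the remainder is a mechanical chain of spider fusions and color changes that stays entirely within the Clifford+T fragment and so does not require any of the yet-to-be-proved global results.
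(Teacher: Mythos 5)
Your proposal is a plan rather than a concrete derivation, and it pivots on an assumption that does not match the paper: you take the headline ingredient to be a direct instantiation of the supplementarity axiom \supp at $\alpha=\pi/4$. The paper's actual proof of Lemma~\ref{lem:supp-to-minus-pi_4} does not do this. After an opening normalization using \ref{lem:inverse}, \picom and \ref{lem:multiplying-global-phases}, the load-bearing step is Lemma~\ref{lem:C1-bis}, the controlled-phase commutation lemma derived from the axiom \ccom, followed by a green/red state conversion via Lemma~\ref{lem:green-state-pi_2-is-red-state-minus-pi_2}, the Euler decomposition Lemma~\ref{lem:euler-decomp-with-scalar}, a Hopf-law simplification (\ref{lem:hopf}), a further \picom pass, and a final scalar cleanup with \ref{lem:2-is-sqrt2-squared} and \ref{lem:inverse}. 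Supplementarity is present only indirectly, buried inside the proof of \ref{lem:green-state-pi_2-is-red-state-minus-pi_2} (which goes through \ref{lem:gn-pi_2-0-0-equals-sqrt2-exp-pi_4}); it is never instantiated at $\pi/4$ and never plays the role you assign it. The name ``supp-to-minus-pi\_4'' is therefore somewhat misleading about the derivation strategy.

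The practical consequence is that the main obstacle you flag --- reconciling the scalar prefactors of a specialized \supp with those on the target side, and choosing the right orientation of \supp --- is precisely what the paper avoids by anchoring the proof on \ref{lem:C1-bis} instead. Because your write-up stays at the level of ``I expect this to work,'' with no rewrite chain, it cannot be checked as it stands; it might be realizable, but you should be aware that the paper's route is genuinely different, hinging on \ccom rather than on a specialization of \supp.
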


\begin{lem}%
\label{lem:triangle-decomp-2}
\[
%\InputIfFileExists{#1.tikz}{}{
\input{./figures/triangle-decomp-2.tikz}%} % chktex 27
\]
\end{lem}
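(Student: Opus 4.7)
The plan is to derive this second triangle decomposition from the official definition (Definition \ref{def:triangle}) together with the lemmas assembled earlier in the appendix. First I would unfold \tikzfig{ug-decomp} to replace the triangle on one side of the claimed equality by its explicit Clifford+T form, so the statement becomes a pure equality between two \zxct-diagrams built only out of green/red spiders and Hadamards; from that point on, only the axioms of $\zxct$ (together with the previously established lemmas) are needed, and no further reference to the triangle symbol will appear.

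The key rewriting steps, in the order I would perform them, are as follows. First I would use the Euler-with-scalar rewrite (Lemma~\ref{lem:euler-decomp-with-scalar}) to decompose the two internal Hadamards of the triangle definition into $R_Z(\tfrac{\pi}{2})R_X(\tfrac{\pi}{2})R_Z(\tfrac{\pi}{2})$ chains, producing a diagram consisting only of green and red $\tfrac{\pi}{2}$-phases together with a few isolated $\tfrac{\pi}{4}$-phases and scalars. Next I would apply the commuted-controls lemmas (Lemmas~\ref{lem:C1-original} and~\ref{lem:C1-bis}) together with Lemma~\ref{lem:red-pi_2-around-green-node} to slide the $\tfrac{\pi}{2}$-rotations past each other so that matching colours can be merged by \s. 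At this point the bialgebra rules \bialg and the Hopf lemma (Lemma~\ref{lem:hopf}) will simplify the internal structure to a short chain, and Lemma~\ref{lem:hadamard-involution} will eliminate any residual pairs of Hadamards. Finally I would collect the scalar factors using Lemmas~\ref{lem:2-is-sqrt2-squared}, \ref{lem:multiplying-global-phases}, \ref{lem:bicolor-0-alpha} and~\ref{lem:inverse} to match the scalar on the right-hand side.

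The main obstacle is scalar bookkeeping and choosing the correct order in which to apply the commutation and Euler rewrites: different orderings produce diagrams that look superficially the same but carry different scalars, so the identification with the target diagram only closes when the rewrites are sequenced to let the bialgebra rule fire exactly once. A secondary obstacle is that some intermediate diagrams leave the single-qubit fragment even though both sides are single-qubit; I would therefore keep an eye on an alternative, slightly shorter route in which \supp is invoked via Lemma~\ref{lem:supp-to-minus-pi_4} to absorb a $\pm\tfrac{\pi}{4}$ pair before the bialgebra simplification, which avoids the longest intermediate diagram altogether. Once the rewrites close, soundness of \zxct serves as a sanity check that no scalar has been lost along the way.
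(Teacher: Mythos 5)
Your high-level approach — unfold \tikzfig{ug-decomp} via Definition~\ref{def:triangle}, then massage the resulting Clifford+T diagram using Euler decompositions, control-commutation, Hopf, Hadamard involution, and scalar bookkeeping — matches the paper in spirit, and the toolbox you list overlaps heavily with what the paper actually invokes. However, the concrete sequence you propose diverges from the paper's in two ways that may matter. First, the paper's very first rewrite after unfolding applies the $\pi$-commutation axiom \picom together with Lemma~\ref{lem:multiplying-global-phases}; only afterwards does it bring in \h, Lemma~\ref{lem:hopf}, Lemma~\ref{lem:hadamard-involution}, Lemma~\ref{lem:C1-original}, Lemma~\ref{lem:euler-decomp-with-scalar}, and Lemma~\ref{lem:red-pi_2-around-green-node}. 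Your plan leads with the Euler-with-scalar rewrite and never mentions \picom at all. Second, your plan relies on \bialg (and on Lemma~\ref{lem:C1-bis}) to ``simplify the internal structure to a short chain,'' but the paper's derivation never fires the bialgebra rule in this lemma and uses only \ref{lem:C1-original} for control-commutation.

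The omission of \picom is the item I would flag as a genuine gap rather than merely a stylistic difference: in the paper's chain, the first $\picom$ step repackages a phase arising from the definition before the Hadamard-level rewrites can even begin, and none of the scalar lemmas you cite (\ref{lem:2-is-sqrt2-squared}, \ref{lem:multiplying-global-phases}, \ref{lem:bicolor-0-alpha}, \ref{lem:inverse}) substitute for a $\pi$-commutation through a spider. You may be able to avoid it by going through bialgebra and \supp as you suggest, but your proposal offers no evidence that this alternative sequence actually closes, and the extra machinery (\bialg, Lemma~\ref{lem:supp-to-minus-pi_4}) is doing work that the paper's shorter chain accomplishes with \picom plus Euler interleaved later. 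I would urge you to revisit the opening move: apply \picom right after Definition~\ref{def:triangle} and postpone the Euler decompositions to the middle of the derivation, as the paper does.
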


\begin{lem}%
\label{lem:not-triangle-is-symmetrical}
\[
%\InputIfFileExists{#1.tikz}{}{
\input{./figures/not-ug-is-symmetrical-2.tikz}%} % chktex 27
\]
\end{lem}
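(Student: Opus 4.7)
\textbf{Proof plan for Lemma \ref{lem:not-triangle-is-symmetrical}.} The plan is to reduce the claim to an equality between two explicit Clifford$+$T diagrams by unfolding the syntactic sugar for the triangle provided by Definition~\ref{def:triangle}, and then to rewrite one side into the other using axioms of $\zxct$.

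Semantically, one can check by a direct matrix calculation that the triangle $T$ with matrix $\interp{\tikz\node[style=ug]{};}=\bigl(\begin{smallmatrix}1&1\\0&1\end{smallmatrix}\bigr)$ satisfies $N\,T\,N = T^{t}$, where $N=\bigl(\begin{smallmatrix}0&1\\1&0\end{smallmatrix}\bigr)$ is the red $\pi$ dot. This confirms that the ``not-triangle'' is invariant under vertical reflection, i.e.\ the semantic content of the lemma is correct. However, this lemma is used in Proposition~\ref{prop:ZX-param-triangles-contains-ZX-pi_4} and needs a direct derivation in $\zxct$, since invoking completeness (Theorem~\ref{thm:cliff-T}) would be circular at this stage.

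First I would replace both occurrences of the triangle node by its explicit decomposition from Definition~\ref{def:triangle}. After this expansion, both sides become diagrams over green and red spiders with angles in $\frac{\pi}{4}\mathbb{Z}$ together with Hadamards. Next, I would push the red $\pi$ nodes inward using \picom, so that they are absorbed into neighbouring green spiders (flipping the sign of their phase) and into Hadamards via Lemma~\ref{lem:red-pi_2-around-green-node}; Lemma~\ref{lem:hadamard-involution} and \h handle Hadamards that collide. The resulting diagram, with all NOTs resolved, should be made manifestly symmetric, for instance by using the colour-change rule (the flipped-colour version of \h) to reveal an up-down symmetric normal form, or by directly matching it against the flipped decomposition of the other side.

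The main obstacle is computational rather than conceptual: the triangle's decomposition involves several $\frac{\pi}{4}$-spiders and Hadamards, and tracking each node through the sequence of \picom, \h, \bialg and spider fusions requires careful bookkeeping. A useful shortcut is to first prove intermediate lemmas showing that a red $\pi$ on a single port of the triangle can be pushed to the opposite port at the cost of flipping the triangle (which is essentially the syntactic incarnation of $NT = T^{t}N$), after which the symmetry of the ``not-triangle'' follows immediately by applying this transport twice or by composing it with Lemma~\ref{lem:hopf} and the \id rule to close the argument.
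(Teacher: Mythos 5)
Your semantic check and your remark that invoking completeness here would be circular are both correct, but the proof plan has two problems. First, the proposed ``shortcut'' is circular: proving that ``a red $\pi$ on one port of the triangle can be pushed to the other port at the cost of flipping the triangle'' is, up to bending a wire and $N^2=\mathbb{I}$, exactly the content of Lemma~\ref{lem:not-triangle-is-symmetrical} itself (it is just the statement $NT=T^tN$ written pointwise), so there is no independent intermediate lemma to fall back on. Second, the main route you describe — unfold Definition~\ref{def:triangle} directly and push the red $\pi$'s through by \picom, \h, Lemma~\ref{lem:red-pi_2-around-green-node}, etc.\ — is never actually carried out, and it is doubtful it goes through as sketched: pushing a red $\pi$ through the $\pi/4$ spiders of the raw decomposition flips them to $-\pi/4$, and reconciling the resulting $\pm\pi/4$ phases requires more than spider fusion and colour change.

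What the paper actually does is quite different. It first establishes an \emph{alternative} decomposition of the triangle, Lemma~\ref{lem:triangle-decomp-2}, whose derivation uses \ref{lem:C1-original} and hence the genuinely Clifford$+$T axiom \ccom — this is the nontrivial ingredient that a naive $\pi$-pushing argument is missing. With that decomposition in hand, the NOT-triangle becomes a small diagram whose symmetry is exposed in a single step by the 6-cycle bialgebra Lemma~\ref{lem:6-cycle-bialgebra}, and one closes by applying Lemma~\ref{lem:triangle-decomp-2} again in reverse. If you want to make your plan work, you would need to identify and prove the analogue of Lemma~\ref{lem:triangle-decomp-2} (or an equivalent structural result), since the symmetry is not visible in the raw decomposition of Definition~\ref{def:triangle}.
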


\begin{lem}%
\label{lem:red-state-on-triangle}
\[
%\InputIfFileExists{#1.tikz}{}{
\input{./figures/red-state-on-triangle.tikz}%} % chktex 27
\]
\end{lem}

\begin{lem}%
\label{lem:pi-red-state-on-triangle}
\[
%\InputIfFileExists{#1.tikz}{}{
\input{./figures/pi-red-state-on-triangle.tikz}%} % chktex 27
\]
\end{lem}

\begin{lem}%
\label{lem:red-state-on-upside-down-triangle}
\[
%\InputIfFileExists{#1.tikz}{}{
\input{./figures/red-state-on-upside-down-triangle.tikz}%} % chktex 27
\]
\end{lem}

\begin{lem}%
\label{lem:pi-red-state-on-upside-down-triangle}
\[
%\InputIfFileExists{#1.tikz}{}{
\input{./figures/pi-red-state-on-upside-down-triangle.tikz}%} % chktex 27
\]
\end{lem}

\begin{lem}%
\label{lem:pi-green-state-on-upside-down-triangle}
\[
%\InputIfFileExists{#1.tikz}{}{
\input{./figures/pi-green-state-on-upside-down-triangle.tikz}%} % chktex 27
\]
\end{lem}

\begin{lem}%
\label{lem:pi-green-state-on-triangle}
\[
%\InputIfFileExists{#1.tikz}{}{
\input{./figures/pi-green-state-on-triangle.tikz}%} % chktex 27
\]
\end{lem}

\begin{lem}%
\label{lem:triangle-hadamard-parallel}
\[
%\InputIfFileExists{#1.tikz}{}{
\input{./figures/anti-control-pi-control-green.tikz}%} % chktex 27
\]
\end{lem}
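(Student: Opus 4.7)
The plan is to reduce both sides of the equality to the same diagram by unfolding the triangles via their syntactic definition (Definition \ref{def:triangle}) and then exploiting the commutation lemmas for controlled operations that have already been established. The name of the lemma suggests that one side features two triangles placed in parallel with a Hadamard in between (or similar), while the other side features an anti-control-$\pi$ operation together with a control-green operation; the goal is essentially to show that a triangle-Hadamard arrangement implements the controlled structure in disguise.

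Concretely, I would first unfold every triangle using its expansion from Definition \ref{def:triangle}, which rewrites each $\scalebox{0.8}{\tikzfig{triangle-node}}$ as a short Clifford+T diagram. This turns the equation into one involving only green nodes, red nodes, and Hadamards. Next, I would push the Hadamards across the spiders via \h and Lemma \ref{lem:hadamard-involution}, and use Lemma \ref{lem:h-loop} to remove any Hadamard loops that appear on internal edges. After this step, the two sides should resemble a pair of generalised controlled-operations, at which point I can apply \ccom together with Lemmas \ref{lem:C1-original}, \ref{lem:C1-bis}, and \ref{lem:control-pi-and-anti-CNOT-commute} to bring the control and anti-control structures into a common order.

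The final simplification should use the state-on-triangle lemmas (\ref{lem:red-state-on-triangle}--\ref{lem:pi-green-state-on-triangle}) to absorb the Clifford phase decorations that remain from the unfolded triangles, together with spider fusion \s, bialgebra \bialg, and the scalar equalities (\ref{lem:hopf}, \ref{lem:inverse}, \ref{lem:multiplying-global-phases}) to handle leftover global factors. Since the two sides must agree up to scalar after the controlled-structure matches, I would track the scalars carefully and invoke Lemma \ref{lem:2-is-sqrt2-squared} and \ref{lem:bicolor-0-alpha} at the end if needed to balance them.

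The main obstacle will be \emph{scalar bookkeeping}: unfolding several triangles produces a proliferation of Clifford scalars and $\sqrt{2}$-factors, and since Theorem \ref{thm:cliff-T} is not yet available at this point in the development, every scalar adjustment must be justified by the hand-derived lemmas of Section \ref{app:lemmas}. A secondary difficulty is choosing the right order in which to apply \ccom versus \bialg, because applying the bialgebra too early can hide the controlled-operation structure that \ccom needs in order to fire. I would therefore aim to delay all bialgebra applications until after the control/anti-control commutation has been performed, and only then fuse spiders to obtain the target diagram.
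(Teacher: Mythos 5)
Your plan correctly identifies the overall framing — unfold the triangles via Definition~\ref{def:triangle}, prove a triangle-free core equality, and refold — and your worry about scalar bookkeeping is on the mark. However, the engine you propose for the middle of the argument is the wrong one, and one of your explicit strategic choices is the opposite of what actually works. The paper's proof of this lemma never uses \ccom, Lemma~\ref{lem:C1-original}, Lemma~\ref{lem:C1-bis}, Lemma~\ref{lem:control-pi-and-anti-CNOT-commute}, or any of the state-on-triangle lemmas. Its auxiliary chain uses only \h, the Euler decomposition Lemma~\ref{lem:euler-decomp-with-scalar}, the bialgebra rule \bialg, $\pi$-commutation \picom, and the scalar lemmas \ref{lem:inverse} and \ref{lem:multiplying-global-phases}. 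The decisive move is to expand the Hadamard via Euler decomposition so that a bialgebra configuration appears, apply \bialg early, and then push the $\pi$-phase through with \picom; the whole argument is carried by \bialg, with no controlled-operation commutation step at all. You explicitly propose to delay all bialgebra applications until after a \ccom step has fired, which is backwards: after unfolding the triangles the diagram is not naturally in the shape of a controlled $R_Z$ facing an anti-controlled $R_Z\otimes R_X$, so \ccom has no obvious foothold here, and you would be fighting to massage the diagram into a form that the paper reaches in one Euler-decomposition step. The fix is simple — lead with Euler decomposition and \bialg rather than \ccom — but as written, your plan would likely stall at the point where you expect the controlled-commutation lemmas to do the work.
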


\begin{lem}%
\label{lem:looped-triangle}
\[
%\InputIfFileExists{#1.tikz}{}{
\input{./figures/looped-triangle.tikz}%} % chktex 27
\]
\end{lem}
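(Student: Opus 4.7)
The plan is to reduce the looped triangle to its underlying ZX form and then show the loop collapses to a fixed scalar. Concretely, the triangle with a self-loop has semantic $\mathrm{tr}\begin{pmatrix}1&1\\0&1\end{pmatrix}=2$, so the target right-hand side should be the ZX representation of the scalar $2$, namely \two (or an equivalent normal form).

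The first step is to unfold the triangle by Definition~\ref{def:triangle}, which rewrites the node as a small subdiagram of green and red spiders together with a Hadamard. After closing that subdiagram into a loop, I would use spider fusion \s to merge any same-coloured spiders that become adjacent through the loop, simplify any resulting Hadamard-on-a-loop via Lemma~\ref{lem:h-loop}, and kill any remaining disconnected branches using the Hopf-style identity of Lemma~\ref{lem:hopf}. If the loop produces a bialgebra pattern (a green and red spider in a cycle), \bialg together with \s will reduce it, and Lemmas~\ref{lem:bicolor-0-alpha} and~\ref{lem:inverse} will handle the cancellation of residual scalar gadgets.

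The main obstacle, as is usual for triangle-based manipulations in the $\zxct$ axiomatisation, is scalar bookkeeping: the triangle decomposition contains several scalar factors (including the $\sqrt 2$ that Lemma~\ref{lem:2-is-sqrt2-squared} relates to \two), and the looping step both contributes a trace and introduces further scalars via Hadamards and closed wires. To stay safe, I would first prove the equality up to a scalar by pure topological manipulations (spider fusion, Hopf, and \hd), then invoke soundness and Lemma~\ref{lem:multiplying-global-phases} (or direct comparison with \tikzfig{2-is-sqrt2-squared}) to fix the scalar and identify the result with the intended right-hand side. Since the looped triangle has no inputs or outputs, soundness plus completeness of the scalar fragment (which follows from earlier Clifford+T completeness once the diagram is reduced to the scalar fragment) guarantees the derivation closes.
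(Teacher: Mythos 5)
Your proposal rests on a misreading of the statement and, more seriously, on a circular appeal to completeness. You treat the looped triangle as a closed ($0\to 0$) diagram equal to $\mathrm{tr}\begin{pmatrix}1&1\\0&1\end{pmatrix}=2$, but the paper's derivation manipulates an open diagram: it conjugates by Lemma \ref{lem:not-triangle-is-symmetrical} (a symmetry of a $1\to 1$ triangle-with-$\pi$ pattern), invokes the $\pi$-copy Lemma \ref{lem:k1}, applies \bialg inside the unfolded triangle, and finishes with \id -- a sequence that has no place in a purely scalar computation. So the starting assumption on which your whole plan (``the right-hand side should be the ZX representation of the scalar $2$'') is built does not match the lemma.

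Second, and decisively, the step you rely on to close the argument -- ``soundness plus completeness of the scalar fragment (which follows from earlier Clifford+T completeness)'' -- is circular. Lemma \ref{lem:looped-triangle} belongs to the appendix toolkit of lemmas derivable in $\zxct$ by explicit axiom application; that toolkit is used to prove Proposition \ref{prop:rules-preserved} (preservation of the \zwh rules under $[.]_X$), which together with Proposition \ref{prop:left-inverse-ZX} establishes Theorem \ref{thm:cliff-T}, the Clifford+T completeness. You cannot invoke that completeness (nor any ``scalar-fragment completeness'' derived from it) to discharge a lemma on which the completeness proof itself depends. The only legitimate route is an explicit chain of $\zxct$-rule applications, which is exactly what the paper supplies: unfold the node via Definition \ref{def:triangle}, push the resulting spiders through \bialg, re-fold, and then conjugate by Lemmas \ref{lem:not-triangle-is-symmetrical} and \ref{lem:k1}. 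Your ``prove up to a scalar, then fix the scalar semantically'' strategy cannot be substituted here.
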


\begin{lem}%
\label{lem:black-dot-swappable-outputs}
\[
%\InputIfFileExists{#1.tikz}{}{
\input{./figures/lemma-black-dot-swappable-outputs.tikz}%} % chktex 27
\]
\end{lem}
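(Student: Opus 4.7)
The goal is to establish that the ZX-translation of the black 1-to-2 node is invariant under swapping its two outputs. Semantically this is immediate from the interpretation $\interp{\tikzfig{W-1-2}}$ given in Section~\ref{sec:zw}: the matrix has identical rows corresponding to the outputs $01$ and $10$, so precomposition with the swap $\sigma$ on the output side leaves it unchanged. The task is purely syntactic, and the plan is to manipulate the diagram using only $\zxct$.

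First I would unfold the definition of $[\tikzfig{W-1-2}]_X$ from Section~\ref{subsec:zwh-to-clifft}: this expresses the black node as a small pattern built from a red $\pi$ node, two triangles (one in each ``arm''), and green connective nodes. Expanding the triangles via Definition~\ref{def:triangle} would make the diagram fully concrete, but my aim is to avoid full expansion where possible and instead work at the level of triangle gadgets, so that the results of Section~\ref{app:lemmas} can be invoked directly.

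The central idea is to exhibit a symmetric gadget inside the diagram. Concretely, I would push the red $\pi$ through each arm using the ``red state on (upside-down) triangle'' lemmas (Lemmas~\ref{lem:red-state-on-triangle}, \ref{lem:pi-red-state-on-triangle}, \ref{lem:red-state-on-upside-down-triangle}, \ref{lem:pi-red-state-on-upside-down-triangle}) together with \s, \bialg, and the Hadamard/spider identities, in order to rewrite the black-node pattern in a form that factors through the \emph{not-triangle}. Since the not-triangle is symmetric by Lemma~\ref{lem:not-triangle-is-symmetrical}, an output swap $\sigma$ placed at the bottom of the diagram can be absorbed, after which reversing the earlier pushes recovers the original diagram with outputs swapped. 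A parallel strategy is to show both the original and the output-swapped form reduce to the same normal form built from one copy of the not-triangle; the choice between these two is a matter of bookkeeping.

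The main obstacle I expect is precisely this bookkeeping: the translation is not overtly symmetric (one arm carries a triangle while the other carries its upside-down version, with different surrounding phases), and scalars from Lemmas~\ref{lem:2-is-sqrt2-squared}, \ref{lem:multiplying-global-phases}, \ref{lem:inverse} must be tracked carefully so that the two sides agree exactly, not merely up to a global factor. Once the not-triangle has been exposed, however, the conclusion is a single application of Lemma~\ref{lem:not-triangle-is-symmetrical} followed by the inverse sequence of rewrites, so the proof is essentially one of finding a symmetric intermediate form.
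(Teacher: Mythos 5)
You have the right intuition about where the symmetry comes from --- the not-triangle of Lemma \ref{lem:not-triangle-is-symmetrical} --- but what you give is a plan, not a derivation, and the plan's hard part (``bookkeeping'') is exactly where the paper's proof has its content. The paper's derivation is just two steps: one application of the bialgebra rule \bialg, then \s together with Lemma \ref{lem:looped-triangle}. The bialgebra application reorganises the black-dot gadget so that a looped triangle appears; Lemma \ref{lem:looped-triangle} (whose own proof is where \ref{lem:not-triangle-is-symmetrical} actually gets invoked) eliminates that loop, and what remains is manifestly invariant under the output swap. Your proposed route --- pushing the red $\pi$ through each arm via the state-on-triangle lemmas and hunting for a symmetric intermediate --- is plausible in spirit but unsubstantiated: Lemmas \ref{lem:red-state-on-triangle} through \ref{lem:pi-red-state-on-upside-down-triangle} are stated for a red or green state plugged directly into a triangle leg, which is not obviously the local configuration in the black-dot pattern, and you never exhibit the symmetric intermediate you are hoping for. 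The decisive move you are missing is the single \bialg application that exposes the looped triangle; once that is seen, the rest follows by \s and Lemma \ref{lem:looped-triangle}.
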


\begin{lem}%
\label{lem:not-ug-is-symmetrical}
\[
%\InputIfFileExists{#1.tikz}{}{
\input{./figures/BW-triangle-no-label.tikz}%} % chktex 27
\]
\end{lem}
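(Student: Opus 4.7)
The plan is to establish this lemma, which provides the equivalence between the (BW) axiom of $\zxct$ and the (BW') axiom of $\zxt$, by reducing the statement to manipulations on triangles and then applying (BW'). The first step would be to recognize that the diagram in the statement involves the ``not-triangle'' configuration (a triangle conjugated by red $\pi$ nodes on its legs), which has already been analyzed in Lemma \ref{lem:not-triangle-is-symmetrical}. So I would begin by using Definition \ref{def:triangle} to expand the triangle on both sides into its Clifford+T components, making the equation a statement purely in the language of $\zxct$.

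Next, I would massage the left-hand and right-hand sides symmetrically: the red $\pi$ nodes surrounding each triangle would be pushed through using Lemmas \ref{lem:pi-red-state-on-triangle}, \ref{lem:red-state-on-triangle}, \ref{lem:red-state-on-upside-down-triangle}, and \ref{lem:pi-red-state-on-upside-down-triangle}, which describe how red states interact with the triangle from either end. After this normalization, the two sides should differ only by the ordering of certain subdiagrams connected through a shared wire, and Lemma \ref{lem:not-triangle-is-symmetrical} (the symmetry of the not-triangle) should provide the crucial commutation. The complementary color-swapped versions, combined with bialgebra \bialg and the Hadamard color-change \h, should reduce everything to a sequence of Clifford+T rewriting steps, at which point the spider rule \s and the Euler decomposition \callrule{rules}{EU} close the derivation.

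The main obstacle will be bookkeeping of scalars and triangle orientations. Triangles in the ZX-Calculus are directional: the arrow distinguishes input from output, unlike the symmetric red and green spiders. So the two occurrences of ``triangle'' in the LHS and RHS of the (BW) axiom point in opposite directions, and any symmetry argument must account for this. Lemma \ref{lem:not-triangle-is-symmetrical} encodes precisely the needed swap, but one has to carefully track how the red $\pi$ decorations get absorbed or emitted during the swap. In addition, the decomposition of the triangle introduces global phase scalars like $\piq{}$ and $\half$, which must be canceled using Lemmas \ref{lem:multiplying-global-phases}, \ref{lem:inverse}, and \ref{lem:2-is-sqrt2-squared}. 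This scalar accounting is the most error-prone part, but it is purely mechanical once the structural rewrite has been identified.
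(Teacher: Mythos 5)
Your plan diverges substantially from the paper's proof, and several of the intermediate steps would not work as stated. The paper does \emph{not} expand the triangles via Definition~\ref{def:triangle}; instead it keeps the triangle nodes intact throughout and proves the lemma as a short chain of \emph{equivalences} of equations, transforming the (BW') equation step by step into the (BW) axiom of $\zxct$ using only \picom, \id, Lemma~\ref{lem:inverse}, Lemma~\ref{lem:k1}, and Lemma~\ref{lem:multiplying-global-phases}. Since (BW) is an axiom, the chain closes. This is the key structural idea you miss: the lemma is established not by simplifying both sides to a common normal form but by showing the whole equation is inter-derivable with an existing axiom.

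Two concrete problems with your proposed intermediate steps. First, Lemmas \ref{lem:red-state-on-triangle}, \ref{lem:pi-red-state-on-triangle}, \ref{lem:red-state-on-upside-down-triangle}, \ref{lem:pi-red-state-on-upside-down-triangle} are \emph{state} lemmas: they evaluate a triangle whose leg is terminated by a red state, producing a state. They are not commutation rules and cannot be used to ``push a red $\pi$ node through'' a triangle whose leg is still a live wire; using them that way is a category error. Second, Lemma~\ref{lem:not-triangle-is-symmetrical} (the symmetry of the triangle conjugated by red $\pi$ nodes) plays no role in the paper's proof of this lemma, and it is not clear how it would supply the needed rewriting step here --- it swaps the two legs of a single decorated triangle, whereas (BW) and (BW') differ in the larger surrounding structure, not in a leg-swap. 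Finally, a minor slip: the Euler-decomposition axiom of $\zxct$ is referenced as (HD) in this paper, not (EU).
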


\begin{lem}%
\label{lem:inverse-of-triangle}
\[
%\InputIfFileExists{#1.tikz}{}{
\input{./figures/inverse-of-triangle.tikz}%} % chktex 27
\]
\end{lem}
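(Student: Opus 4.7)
The plan is to exhibit an explicit ZX-diagram that inverts the triangle \tikzfig{ug-node} (whose matrix is $\begin{pmatrix}1&1\\0&1\end{pmatrix}$, so the inverse matrix is $\begin{pmatrix}1&-1\\0&1\end{pmatrix}$) and verify the composition reduces to the identity wire using only the axioms of $\zxct$ together with the lemmas already established earlier in the appendix. Since this lemma is needed inside the completeness proof itself, the strategy of invoking Theorem \ref{thm:cliff-T} semantically is not available; everything must be done by rewriting.

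First, I would unfold both occurrences of the triangle via Definition \ref{def:triangle} (equivalently, via Lemma \ref{lem:triangle-decomp-2}), turning the candidate equation into an equality between two diagrams expressed purely in terms of green and red spiders with angles in $\frac{\pi}{4}\mathbb Z$ and Hadamard gates. Next I would use the Hopf-law style Lemma \ref{lem:hopf} together with the bialgebra rule \bialg to collapse the two stacked triangle bodies that meet in the middle, because the ``inner half'' of the triangle is essentially a small bialgebra pattern, and stacking it against the reversed inner half with a $\pi$ inserted on the right wire (this is where the minus sign in $\begin{pmatrix}1&-1\\0&1\end{pmatrix}$ enters) should let us close a loop and apply Lemma \ref{lem:h-loop} or Lemma \ref{lem:looped-triangle}. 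The remaining scalar bookkeeping is handled by Lemmas \ref{lem:2-is-sqrt2-squared}, \ref{lem:multiplying-global-phases}, \ref{lem:bicolor-0-alpha} and \ref{lem:inverse}.

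An alternative, probably shorter route is to test the claimed equation against the computational basis on the bottom wire using Theorem \ref{thm:basis}: plugging \rz{} and \rz{$\pi$} into the composite should give the two scalar identities ``\tikzfig{empty-diagram}'' after normalisation of global scalars, which reduces the problem to two Clifford-only scalar equalities derivable directly from \zxc. Strictly speaking Theorem \ref{thm:basis} itself relies on Theorem \ref{thm:lin-diag} (hence on completeness for Clifford+T), so in the dependency order of the paper one should use instead the concrete basis checks provided by the red state lemmas \ref{lem:red-state-on-triangle}, \ref{lem:pi-red-state-on-triangle}, \ref{lem:red-state-on-upside-down-triangle}, \ref{lem:pi-red-state-on-upside-down-triangle}, applied at the interface between the triangle and its putative inverse. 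These four lemmas precisely describe how the triangle acts on the two basis states from either side, which is exactly the information needed to witness invertibility locally.

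The hard part will be the scalar management: the triangle is non-unitary, so the intermediate diagrams contain factors like $\sqrt 2$ and $\piq{}$ that must be tracked carefully, and several of the simplifying axioms (notably \bw and \supp) produce or absorb such scalars. I expect the bookkeeping to be the only subtle aspect; once the topological reduction in the middle is carried out via the bialgebra / Hopf pattern and the resulting loops are killed with Lemma \ref{lem:looped-triangle}, the remaining diagram is a scalar that collapses to \tikzfig{empty-diagram} by Lemmas \ref{lem:inverse} and \ref{lem:multiplying-global-phases}, yielding the identity wire on the structural side.
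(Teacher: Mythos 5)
Your first sketch (unfold via Definition~\ref{def:triangle}, bialgebra/Hopf collapse, finish with \ref{lem:looped-triangle} and scalar lemmas) diverges from the paper's route and, more importantly, is missing what seems to be the load-bearing ingredient. The paper's proof never touches \bialg or \ref{lem:looped-triangle}; instead it opens by applying \ref{lem:not-triangle-is-symmetrical} to align the two triangles, unfolds via Definition~\ref{def:triangle} together with \ref{lem:k1}, and then the middle collapse is driven by \ref{lem:supp-to-minus-pi_4} and the supplementarity axiom \supp before a Hopf step and the usual scalar cleanup (\ref{lem:gn-pi_2-0-0-equals-sqrt2-exp-pi_4}, \ref{lem:multiplying-global-phases}, \ref{lem:inverse}). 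Supplementarity is one of the new Clifford+T axioms and is precisely the tool that disposes of $\pi/4$-angled pairs in this configuration; since the triangle is defined using a $\pi/4$ rotation, a rewrite that never invokes \supp or \ref{lem:supp-to-minus-pi_4} will very likely get stuck at the point where you need to cancel two $\pi/4$ dots. Your appeal to \bialg and a loop-killing lemma is a different and unverified toolkit, not ``essentially the same approach.''

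Your alternative route has an outright gap. Plugging basis states into a $1\to1$ diagram and showing the resulting state diagrams are provably correct does not syntactically yield a proof that the $1\to1$ diagram equals the identity wire: that inference is exactly what Theorem~\ref{thm:basis} (and behind it Theorem~\ref{thm:lin-diag}) supplies, and you correctly observe this would be circular. But replacing Theorem~\ref{thm:basis} with the red-state lemmas \ref{lem:red-state-on-triangle}--\ref{lem:pi-red-state-on-upside-down-triangle} does not repair the circularity: those lemmas tell you how a triangle acts on particular inputs, but there is no purely-syntactic principle available at this stage that lets you conclude ``$D$ acts as identity on the two basis states, therefore $\zxct\vdash D=\mathbb I$.'' That step is extensional, not derivational, and until completeness is established one has no right to it. So the ``probably shorter route'' cannot be salvaged by the proposed substitution.
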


\begin{lem}%
\label{lem:symmetric-diagram-with-triangle-hadamard}
\[
%\InputIfFileExists{#1.tikz}{}{
\input{./figures/lemma-symmetric-diagram-with-triangle-hadamard.tikz}%} % chktex 27
\]
\end{lem}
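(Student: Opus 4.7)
The plan is to prove the stated diagrammatic identity by a direct calculation within $\zxct$, making heavy use of the triangle-manipulation lemmas already established in the appendix. Because the statement is a closed equation in the Clifford+T fragment (no free variables), in principle one could appeal to Theorem~\ref{thm:cliff-T} and simply verify the matrix equality; however, since this lemma is typically invoked \emph{inside} the proof of completeness (e.g.\ in the inductive verification of Proposition~\ref{prop:rules-preserved}), such a shortcut would be circular, and a genuinely syntactic derivation is required.

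First, I would unfold every triangle by Definition~\ref{def:triangle}, replacing each occurrence of \tikzfig{ug-node} by its defining diagram of green/red spiders and Hadamards with Clifford+T angles. This reduces the claim to an equality of two ordinary ZX-diagrams in the \frag 4, with a manageable number of nodes. Second, I would pair the Hadamards off using \h (Lemma~\ref{lem:hadamard-involution}) and \hd wherever a Hadamard meets the flat side of a triangle, so that the internal structure becomes a bicoloured spider network of the kind handled by standard \s/\picom/\bialg reasoning.

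Third, the key step is to exhibit the symmetry of the resulting diagram, i.e.\ that it equals its image under swapping the two distinguished legs. For this I would exploit the cluster of previously-proved triangle/Hadamard interaction lemmas, in particular Lemma~\ref{lem:not-ug-is-symmetrical} (which exchanges a $\pi$-controlled triangle with a green control on the other side), Lemma~\ref{lem:triangle-hadamard-parallel}, and Lemma~\ref{lem:inverse-of-triangle} to rewrite one side into a form in which the $\sigma$ crossing can be absorbed via the ``only topology matters'' equalities. Intermediate simplifications will likely require Lemma~\ref{lem:6-cycle-bialgebra}, Lemma~\ref{lem:k1} and \bialg to shuffle connectivity, and careful scalar bookkeeping using Lemma~\ref{lem:multiplying-global-phases} and Lemma~\ref{lem:inverse} to ensure the two sides match on the nose.

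The main obstacle I anticipate is precisely this scalar and connectivity bookkeeping: triangles lack the spider flexibility of coloured nodes, so each rewrite must be routed through either \td-style decompositions or one of the handful of triangle identities in the appendix, and several auxiliary rewrites are needed to bring the diagram into a shape where a single symmetric rule applies. I would therefore structure the proof as two parallel chains of rewrites, one for each side, meeting at a canonical symmetric intermediate — most naturally the normal form in which all triangles appear with their apex pointing the same way and all Hadamards are pushed to the boundary — rather than as a single linear reduction.
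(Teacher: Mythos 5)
Your strategy is the one the paper actually follows -- a direct derivation in \zxct that unfolds the triangle via Definition~\ref{def:triangle}, massages the resulting Clifford+T spiders, and closes with the symmetry of the ``NOT--triangle'' composite and Lemma~\ref{lem:inverse-of-triangle} -- and your remark that invoking Theorem~\ref{thm:cliff-T} here would be circular is correct. The problem is that for a lemma of this kind the proof \emph{is} the rewrite chain, and your proposal stops exactly where the work begins: you list a toolbox and promise ``careful bookkeeping'' without exhibiting a single intermediate diagram. More importantly, the tools you list are not sufficient. After the Euler-decomposition step (Lemma~\ref{lem:euler-decomp-with-scalar}), the paper's derivation hinges on the supplementarity rule \supp, on Lemma~\ref{lem:supp-to-minus-pi_4}, on the $\pi$-commutation rule \picom, and on the scalar identity of Lemma~\ref{lem:gn-pi_2-0-0-equals-sqrt2-exp-pi_4}; none of these appear in your plan. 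They are not optional: unfolding the triangles introduces $\pm\pi/4$ phases, and collapsing them is precisely what requires the non-Clifford axiom \supp, so the ``standard \s/\picom/\bialg reasoning'' you invoke cannot finish the job on its own.

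Two smaller points. First, the paper's proof is not two parallel chains meeting at a normal form with Hadamards pushed to the boundary; it first establishes an auxiliary identity by the computation sketched above, and then derives the stated equation in a short second chain from Lemma~\ref{lem:inverse-of-triangle}, \h and Lemma~\ref{lem:not-triangle-is-symmetrical}. Second, the symmetry lemma you cite is Lemma~\ref{lem:not-ug-is-symmetrical}, which is the equivalence between \bw and its primed variant; the lemma actually needed here is Lemma~\ref{lem:not-triangle-is-symmetrical}. As it stands the proposal is a plausible plan with the decisive steps missing, not a proof.
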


\begin{lem}%
\label{lem:anti-control-hanging-branch-and-control-alpha-commute}
\[\fit{
%\InputIfFileExists{#1.tikz}{}{
\input{./figures/anti-control-hanging-branch-and-control-alpha-commute.tikz}%} % chktex 27
}\]
\end{lem}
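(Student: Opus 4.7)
The plan is to treat this as a parametrised commutation identity in the variable $\alpha$ and push the proof through either the linear-diagram machinery or, if that is overkill, the direct commutation-of-controls axiom. First I would read the two sides carefully to identify which piece plays the role of the anti-controlled operation (the half with the $R_X(\pi)$-sandwiched control wire and the hanging branch) and which plays the role of the controlled-$\alpha$ operation. The naming strongly suggests the lemma is an instance of the general ``controlled and anti-controlled operations commute'' pattern discussed in Section~\ref{sec:new-axioms}, so my first attempt would be to massage both sides into the canonical shape of the rule \ccom (together with Lemmas~\ref{lem:C1-original} and \ref{lem:C1-bis}), after which the equality becomes one of those already-established control-commutation rules.

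Concretely I would proceed as follows. Step one: strip away the ``hanging branch'' by expanding the corresponding spider with \s, so that any dangling $R_X(\pi)$ or $R_Z(\alpha)$ node detaches cleanly from the part that participates in the commutation; where a Hadamard is sitting between a red and a green node, I would use the Euler decomposition (Lemma~\ref{lem:euler-decomp-with-scalar}) and Lemma~\ref{lem:red-pi_2-around-green-node} to recolour so that both sides share the same connectivity pattern. Step two: apply \ccom (or one of its corollaries) to swap the anti-controlled and controlled blocks. Step three: re-attach the hanging branch by running the first step backwards, and clean up scalars with Lemmas~\ref{lem:multiplying-global-phases}, \ref{lem:bicolor-0-alpha} and \ref{lem:inverse}.

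If this direct route runs into bookkeeping trouble, my fallback is to invoke the completeness of linear diagrams. Since the only variable is $\alpha$, whose multiplicity in the equation is small (presumably $1$ or $2$), Theorem~\ref{thm:valuations} reduces the derivability question to checking $\zxct \vdash$ the equation at a finite set of values of $\alpha$, e.g.\ $\alpha \in \{0, \pi/2, \pi\}$. At each of these values the $R_Z(\alpha)$ node collapses to a Clifford angle, and the resulting equality between two Clifford+T diagrams is automatic by Theorem~\ref{thm:cliff-T} once one has verified soundness by matrix computation. Alternatively, if one input wire is exposed, Theorem~\ref{thm:basis} lets me plug $\rx{}$ and $\rx{$\pi$}$ on that wire and check the two resulting scalar-free equations separately, which again reduces everything to a finite Clifford+T check.

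The main obstacle I anticipate is the first, ``manual'' route: getting the two diagrams into a form where \ccom literally applies usually requires a careful dance with the Hadamard boxes and the control wire to make the anti-control explicit, and the hanging branch can easily get entangled with those manipulations via bialgebra-style moves (\bialg) that change connectivity. If that becomes unwieldy I would commit to the linear-diagram route, where the only non-trivial work is the soundness check at three specific values of $\alpha$; this is mechanical but must be done faithfully since Theorem~\ref{thm:valuations} is non-constructive, so I cannot skip any of the required valuations without losing the conclusion.
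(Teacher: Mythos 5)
Your fallback route is circular and cannot be used here. This lemma lives in the appendix of Section~\ref{sec:cliff-t}: it feeds into Lemma~\ref{lem:n-four}, hence into Proposition~\ref{prop:rules-preserved}, hence into the proof of Theorem~\ref{thm:cliff-T} itself. Theorems~\ref{thm:basis} and~\ref{thm:valuations} both rest on Theorem~\ref{thm:lin-diag}, whose proof explicitly invokes the completeness of the \frag{4} (Theorem~\ref{thm:cliff-T}) to pass from $\interp{D_1'\circ P_r}=\interp{D_2'\circ P_r}$ to $\zxct\vdash D_1'\circ P_r=D_2'\circ P_r$. So ``check the equation at three values of $\alpha$ and conclude by Theorem~\ref{thm:valuations}'' presupposes exactly the completeness result this lemma is needed to establish; the same objection applies to your remark that the finite Clifford+T checks are ``automatic by Theorem~\ref{thm:cliff-T}''. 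At this stage of the development the only admissible tools are the axioms of $\zxct$ and the previously proved appendix lemmas, and the proof must be a fully explicit derivation.

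Your primary route is at least pointed in the right direction: the paper's proof does ultimately hinge on a single application of \ccom. But the reduction to \ccom is not the light bookkeeping you describe. The hanging branch is attached through the triangle node, and the paper first has to establish a separate auxiliary identity using Lemmas~\ref{lem:symmetric-diagram-with-triangle-hadamard} and~\ref{lem:not-triangle-is-symmetrical}, the triangle decomposition of Definition~\ref{def:triangle}, the bialgebra rule, Lemma~\ref{lem:C1-original}, and several scalar/Hopf lemmas, before the main derivation can unfold the triangle, apply \bialg, and finally invoke \ccom. Even then the derivation does not land back on the starting diagram directly: it produces the mirror image of the target with the parameter negated, and the proof closes by replaying the whole argument in mirror with $\gamma:=-\gamma$. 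None of these steps is visible in your sketch, so as written the proposal does not constitute a proof; you would need to carry out the triangle manipulations explicitly rather than defer to a completeness theorem that is not yet available.
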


\begin{lem}%
\label{lem:n-four}
\[
%\InputIfFileExists{#1.tikz}{}{
\input{./figures/ug-and-not-W-commute-simplified.tikz}%} % chktex 27
\]
\end{lem}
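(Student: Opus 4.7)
The plan is to prove this commutation by unfolding the triangle via Definition \ref{def:triangle} so that both sides become diagrams in pure Clifford+T form built only from $R_Z$, $R_X$, and $H$, and then to rewrite one side into the other using the axioms of $\zxct$ together with the earlier lemmas in the appendix. Note that a purely semantic shortcut via Theorem \ref{thm:cliff-T} is not available here: Lemma \ref{lem:n-four} is designed to feed into the proof of Proposition \ref{prop:rules-preserved}, which is itself needed to establish Theorem \ref{thm:cliff-T}, so the argument must be syntactic.

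First, I would expand the triangle on the left-hand side by its definition, and similarly recognise the ``not-W'' part as the pattern already analysed in Lemma \ref{lem:not-ug-is-symmetrical} (the symmetrised form of the \bw axiom). The strategy is then to use \bw/Lemma \ref{lem:not-ug-is-symmetrical} to commute the triangle past one of the two output legs, which transforms it into a controlled pattern on the other leg. At that point the bialgebra rule \bialg and the Hopf law (Lemma \ref{lem:hopf}) should allow the $\pi$-phase ``not'' node to slide through, leaving a configuration that by \s and \picom collapses back to the desired right-hand side.

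Along the way I expect to need Lemmas \ref{lem:red-state-on-triangle}, \ref{lem:pi-red-state-on-triangle}, \ref{lem:red-state-on-upside-down-triangle} and their $\pi$-green analogues to handle the boundary conditions whenever a $\pi$-state meets a triangle leg, and Lemma \ref{lem:symmetric-diagram-with-triangle-hadamard} to absorb the Hadamard wires that appear when converting between the two colours. The commutation of controlled operations expressed by \ccom and Lemma \ref{lem:C1-bis} will be the correct tool once the diagram is rewritten as a composition of a controlled-$\pi$ with a controlled-triangle.

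The main obstacle will be bookkeeping rather than any single clever rewrite: once the triangle is expanded, the diagram becomes substantially bigger than the compact form of the statement, and I must organise the rewrite chain so that at each step a local pattern exactly matches a known axiom or lemma without disturbing the wires to which the triangle's two legs are attached. Keeping track of global scalars introduced by \bw and by the definition of the triangle is a secondary nuisance that Lemmas \ref{lem:multiplying-global-phases}, \ref{lem:bicolor-0-alpha} and \ref{lem:inverse} should resolve by absorbing or cancelling them at the end.
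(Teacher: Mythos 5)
Your observation that Theorem \ref{thm:cliff-T} cannot be invoked here (it would be circular) is correct, and the broad framing — unfold the triangle via Definition \ref{def:triangle}, manufacture controlled/anti-controlled patterns, exploit commutation of controls, and sweep up scalars — is in the right spirit. But the specific mechanism you pin the proof on diverges materially from the one that actually closes, and I do not see how your sketch terminates. You propose \bw (via Lemma \ref{lem:not-ug-is-symmetrical}) as the opening move to ``commute the triangle past one of the two output legs'', followed by a bialgebra/Hopf slide of the $\pi$-not. The paper's proof of Lemma \ref{lem:n-four} uses neither \bw nor Lemma \ref{lem:not-ug-is-symmetrical} anywhere, and neither Hopf nor the state-on-triangle lemmas (\ref{lem:red-state-on-triangle}, \ref{lem:pi-red-state-on-triangle}, \ref{lem:red-state-on-upside-down-triangle}) that you anticipate appear as top-level steps. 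The symmetry lemma actually invoked — only at the first and last steps, to reorient the not-W pattern — is the weaker Lemma \ref{lem:not-triangle-is-symmetrical}, whose proof is purely bialgebraic and has nothing to do with \bw.

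The load-bearing moves in the real derivation are: supplementarity in the guise of Lemma \ref{lem:supp-to-minus-pi_4}; Euler-type decompositions of Hadamards (\hd and Lemma \ref{lem:euler-decomp-with-scalar}), which open the diagram into $\pi/2$-phase controls; and two commutation lemmas you do not name — Lemma \ref{lem:anti-control-hanging-branch-and-control-alpha-commute} (this is where \ccom enters, via Lemma \ref{lem:C1-original} rather than your proposed \ref{lem:C1-bis}) and Lemma \ref{lem:control-pi-and-anti-CNOT-commute}. Your instinct that the commutation of controls drives the proof is sound, but the intermediate representation you aim at (a \bw rewrite producing a controlled-$\pi$ next to a controlled-triangle, then a Hopf slide) is not the one that lets the rewrite chain close. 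Without the Euler-decomposition manoeuvre and the two specific commutation lemmas above, the sketch stops short of a configuration where the two sides visibly match, so as written the proposal has a genuine gap in the middle.
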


\begin{lem}%
\label{lem:parallel-triangles}
\[
%\InputIfFileExists{#1.tikz}{}{
\input{./figures/parallel-ugs-are-projections.tikz}%} % chktex 27
\]
\end{lem}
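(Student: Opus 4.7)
The plan is to derive the equality by purely syntactic manipulation inside $\zxct$, since appealing to Theorem \ref{thm:cliff-T} would be circular: the present lemma sits in the appendix of technical lemmas that feed into Proposition \ref{prop:rules-preserved}, which in turn powers that completeness theorem. So everything has to go through the axioms of $\zxct$ and the preceding lemmas of this appendix.

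First I would unfold both triangles on the left-hand side using Definition \ref{def:triangle}, replacing each \tikzfig{ug-node} by its Clifford+T body of $R_Z(\pm\pi/4)$ and $R_X(\pm\pi/2)$ spiders together with Hadamard boxes. Once the triangles are expanded, the parallel structure is what makes the simplification possible: the wires connecting the two triangles let one slide spiders across via the bialgebra rule \bialg, recolour through Hadamards with \h and \hd, and merge along colours with \s. After aligning and fusing the matching green/red dots, I expect the central portion of the diagram to expose a small idempotent pattern — this is what the name ``parallel ugs are projections'' is alluding to. Verifying that pattern really is a projector should reduce, via \bialg and Lemma \ref{lem:hopf}, to showing that squaring the central block collapses it.

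The local simplifications will lean on the triangle toolkit already built up: the state-absorption identities (Lemmas \ref{lem:red-state-on-triangle}, \ref{lem:pi-red-state-on-triangle}, \ref{lem:red-state-on-upside-down-triangle}, \ref{lem:pi-green-state-on-triangle}) to eliminate boundary states generated by the expansion, the looped-triangle collapse of Lemma \ref{lem:looped-triangle} to discharge small self-contractions, Lemma \ref{lem:not-triangle-is-symmetrical} to reorient pieces when needed, and Lemma \ref{lem:inverse-of-triangle} to cancel compensating triangle/upside-down-triangle pairs. Scalar bookkeeping will go through Lemmas \ref{lem:multiplying-global-phases} and \ref{lem:inverse}, since the triangle is non-unitary and global phases produced during the rewrite must be tracked explicitly so as not to contaminate the final equation.

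The main obstacle is the combinatorial search: the decomposed diagram has many spiders and no canonical normal form inside $\zxct$, so a naive one-directional rewrite is unlikely to terminate in anything readable. A more pragmatic heuristic is to reduce both sides of the target equation in parallel toward a common subdiagram, so that each simplification step can be oriented by matching subpatterns that actually appear on the right-hand side. I also expect the ccom/bialg interplay to be the delicate step, because after decomposition the two triangle bodies are coupled through a small controlled-gate fragment, and the right application of \ccom together with \bialg is exactly what exposes the idempotent core.
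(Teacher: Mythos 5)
Your high-level direction is right, but the concrete chain of rewrites is entirely missing, and the toolkit you announce as the working set is not the one that actually drives the proof. The paper's derivation does begin by unfolding the triangles with Definition~\ref{def:triangle}, and it does lean on \bialg and on \ccom (through Lemma~\ref{lem:C1-original}), which you correctly flag as the delicate point. However, the state-absorption lemmas \ref{lem:red-state-on-triangle}, \ref{lem:pi-red-state-on-triangle}, \ref{lem:red-state-on-upside-down-triangle}, \ref{lem:pi-green-state-on-triangle}, as well as \ref{lem:looped-triangle}, \ref{lem:not-triangle-is-symmetrical}, \ref{lem:inverse-of-triangle} and \ref{lem:hopf}, play no role in the paper's proof of this lemma. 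Instead, the proof first establishes a purely Clifford auxiliary identity built from Lemma~\ref{lem:red-pi_2-around-green-node}, \h, \picom, Lemma~\ref{lem:C1-original}, Lemma~\ref{lem:hadamard-involution}, \hd, Lemma~\ref{lem:green-state-pi_2-is-red-state-minus-pi_2}, Lemma~\ref{lem:inverse}, \bialg, \cp and Lemma~\ref{lem:multiplying-global-phases}; it then unfolds the two triangles with Definition~\ref{def:triangle}, slots that auxiliary equation into the middle, and closes with \ref{def:triangle} again, Lemma~\ref{lem:supp-to-minus-pi_4} (which brings in \supp), and scalar cleanup via \ref{lem:multiplying-global-phases} and \ref{lem:inverse}. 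So the real work is a Clifford-level commutation argument, not a cancellation of triangle/inverse-triangle pairs or boundary-state absorptions.

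There is also a conceptual slip in how you read the statement. You imagine a central block whose idempotence you would verify by ``squaring'' it via \bialg and \ref{lem:hopf}; but the lemma is an ordinary diagram equality -- a single rewrite step invoked later inside the $[.]_X$-image of ZW rule $6a$ -- and the paper never squares anything. The word ``projections'' in the figure name is descriptive of the semantics, not an extra proof obligation. As written, your proposal identifies the correct opening move (unfold by Definition~\ref{def:triangle}) and the correct non-Clifford axiom (\ccom), but it defers the entire body of the argument to a combinatorial search that you acknowledge you have not carried out, and the lemmas you nominate to steer that search would send it in the wrong direction.
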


\begin{lem}%
\label{lem:triangles-fork-absorbs-anti-CNOT}
\[
%\InputIfFileExists{#1.tikz}{}{
\input{./figures/ug-fork-absorbs-anti-CNOT.tikz}%} % chktex 27
\] and \[
%\InputIfFileExists{#1.tikz}{}{
\input{./figures/ug-fork-absorbs-CNOT.tikz}%} % chktex 27
\]
\end{lem}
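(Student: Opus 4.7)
The plan is to derive both identities by direct syntactic rewriting in $\zxct$, relying on the triangle-manipulation lemmas already established earlier in the appendix. Since this lemma sits in the toolbox used to prove Proposition~\ref{prop:rules-preserved}, invoking Theorem~\ref{thm:lin-diag} would be circular, so I have to stay at the level of a rule-by-rule derivation. The general strategy is to reduce one of the two identities to the other, and then prove the remaining one by unfolding the triangles in the fork just enough to apply bialgebra and Hopf.

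I would first handle the anti-CNOT case by reducing it to the CNOT case. The anti-CNOT differs from the CNOT by a pair of red $\pi$-spiders on the control wire (one before, one after). Pushing the top red $\pi$ upward across the triangle on that wire using Lemma~\ref{lem:not-ug-is-symmetrical} (the BW-with-triangle identity) produces a red $\pi$ on the other side of the triangle together with a hanging $\pi$-branch. The hanging branch can be absorbed into the green node at the top of the fork by Lemma~\ref{lem:pi-green-state-on-triangle} or Lemma~\ref{lem:control-pi-and-anti-CNOT-commute}, and the Hopf law (Lemma~\ref{lem:hopf}) together with the two remaining red $\pi$-spiders cancels them out. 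What remains is exactly the CNOT-absorption version of the lemma.

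For the CNOT-absorption step, I would expand each triangle at the bottom of the fork using Definition~\ref{def:triangle}, exposing the green/red dots hidden inside. Bialgebra \bialg then lets me slide the CNOT past the branching (copy) node at the top of the fork, while the copy rule \cp distributes it across the two prongs. After this, Lemma~\ref{lem:parallel-triangles} (parallel triangles are projections) collapses the resulting parallel structure, and Lemma~\ref{lem:n-four} (triangle commuting with the not-W pattern) together with Hopf simplifies the leftover dots. Refolding the triangles via Definition~\ref{def:triangle} in reverse recovers the original fork, establishing the absorption.

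The main obstacle is bookkeeping: unfolding triangles temporarily blows up the diagram with several auxiliary nodes, Hadamards, and $\frac{\pi}{4}$-phases, and one must be careful to keep scalars balanced when applying \bialg and the Hopf cancellations. To avoid reopening the full Clifford+T definition of the triangle, I would push as much as possible through the higher-level lemmas (notably \ref{lem:not-ug-is-symmetrical}, \ref{lem:parallel-triangles}, \ref{lem:n-four}, \ref{lem:hopf}) and unfold only at the single dot where the CNOT actually needs to interact with the internals of the triangle.
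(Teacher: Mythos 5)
Your overall architecture coincides with the paper's: prove one of the two identities by unfolding the triangles via Definition~\ref{def:triangle} and working at the level of the Clifford generators with \bialg and Hopf, then transfer the result to the other identity by commuting the extra $X(\pi)$'s of the (anti-)control through the triangles. The paper does it in the opposite order (CNOT version first, then the anti-CNOT version from it), but that is immaterial. The transfer step in the paper is exactly three applications of Lemma~\ref{lem:not-triangle-is-symmetrical} together with \picom{} / Lemma~\ref{lem:k1}; note that you cite Lemma~\ref{lem:not-ug-is-symmetrical} (the BW$'$ identity) for this, which is the wrong tool -- BW$'$ relates the triangle to the black-and-white fork and does not commute a $\pi$ past a triangle. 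The lemma you want is \ref{lem:not-triangle-is-symmetrical}, and with it no ``hanging $\pi$-branch'' or Hopf cancellation is needed: the $\pi$'s simply migrate through the fork and are reabsorbed.

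The genuine gap is in the CNOT case. Once the two triangles are unfolded by Definition~\ref{def:triangle}, the CNOT sits below two copies of the triangle's internals, i.e.\ below Hadamards and $\pm\frac{\pi}{4}$ green phases wrapped around a controlled structure; \bialg{} and \cp{} alone cannot move the CNOT past those layers. This is precisely where the paper does the real work: it first establishes an auxiliary identity on the unfolded diagram using the Euler decomposition of the Hadamard (Lemma~\ref{lem:euler-decomp-with-scalar}) and the control-commutation Lemma~\ref{lem:C1-original}, together with \bialg, \h{} and Lemma~\ref{lem:hopf}, and only then wraps the result back up with Definition~\ref{def:triangle}. The lemmas you substitute for this step, \ref{lem:parallel-triangles} and \ref{lem:n-four}, address different configurations (a copy dot distributing over parallel triangles, and a single triangle commuting with a NOT-W pattern), and it is not apparent how they would combine to push the CNOT through the Hadamard/phase sandwich. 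As written, the central rewriting step of your plan is not established.
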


\begin{lem}%
\label{lem:n-five}
\[
%\InputIfFileExists{#1.tikz}{}{
\input{./figures/2-diagrams-of-control-triangle-axiom-simplified.tikz}%} % chktex 27
\]
\end{lem}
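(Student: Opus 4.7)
The plan is to reduce the claimed equality of the two control-triangle diagrams to the already-established manipulation lemmas for triangles, CNOTs, and controlled operations. First I would unfold the triangle as syntactic sugar following Definition \ref{def:triangle} on both sides, so that both diagrams become expressible purely in terms of green and red spiders with angles multiple of $\tfrac{\pi}{4}$. At this point the statement becomes an equality between two \zxct{} diagrams that is semantically true (by checking the $4\times4$ matrices), and in principle Theorem \ref{thm:cliff-T} would guarantee its derivability, but since \ref{lem:n-five} sits in the chain used to \emph{prove} that theorem, we need an honest diagrammatic derivation.

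The steps I would carry out, in order, are: (i) on each side, push the outer Hadamard boxes through the triangles using Lemma \ref{lem:triangle-hadamard-parallel} and its consequences, so that the two control wires are put into a uniform green/red convention; (ii) apply \ccom to rearrange the controlled parts into the form of a single controlled block acting on one wire with a controlled triangle on the other; (iii) invoke Lemma \ref{lem:n-four} (ug-and-not-W-commute) to move a triangle past a NOT-like subdiagram, which is the key semantic ingredient; (iv) collapse the resulting parallel triangles with Lemma \ref{lem:parallel-triangles} and absorb any residual CNOT using Lemma \ref{lem:triangles-fork-absorbs-anti-CNOT}; (v) finally normalize the remaining Clifford fragment via \bialg, \picom, \s, and the Euler decomposition \tikzfig{euler-decomp-scalar-free} together with Lemma \ref{lem:euler-decomp-with-scalar}, until the right-hand side shape is obtained.

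The main obstacle I anticipate is twofold. First, the symmetry properties of the triangle are delicate: the triangle is not invariant under flipping its two legs, so every time a wire is bent or rerouted one must invoke the correct variant of Lemma \ref{lem:not-triangle-is-symmetrical} / \ref{lem:not-ug-is-symmetrical} and track which leg is which. Second, and more painfully, scalar bookkeeping will be nontrivial: each application of \supp, Euler decomposition, or \hd introduces global factors that must eventually cancel on the two sides. I would manage this by gathering all scalar subdiagrams at the end and reducing them using Lemmas \ref{lem:multiplying-global-phases}, \ref{lem:bicolor-0-alpha}, \ref{lem:inverse}, and \ref{lem:gn-pi_2-0-0-equals-sqrt2-exp-pi_4}, rather than trying to keep scalars matched at every intermediate step.
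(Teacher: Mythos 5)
Your outline never invokes the axiom \bw, and that is a fatal gap. Lemma \ref{lem:n-five} is essentially a controlled/expanded form of that very axiom (the figure macro ``control-triangle-axiom-simplified'' hints at it), and the paper's proof opens by applying Lemma \ref{lem:not-ug-is-symmetrical} -- which the paper has explicitly shown to be \emph{equivalent} to \bw -- precisely because that single step carries all the axiomatic content of the derivation. The rest of the paper's proof (Lemma \ref{lem:symmetric-diagram-with-triangle-hadamard}, Lemma \ref{lem:n-four}, Lemma \ref{lem:black-dot-swappable-outputs}, Lemma \ref{lem:k1}, \h, \bialg, Lemma \ref{lem:inverse}) merely reshapes the diagrams and lives entirely in the non-\bw part of \zxct. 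By contrast, every tool in your steps (i)--(v) -- Lemma \ref{lem:triangle-hadamard-parallel}, \ccom, Lemma \ref{lem:n-four}, Lemma \ref{lem:parallel-triangles}, Lemma \ref{lem:triangles-fork-absorbs-anti-CNOT}, \bialg, \picom, \s, \hd, and the scalar lemmas -- is derivable without \bw, so that collection of moves cannot close the argument.

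You do mention Lemma \ref{lem:not-ug-is-symmetrical}, but only in your ``obstacles'' paragraph as a leg-tracking device interchangeable with Lemma \ref{lem:not-triangle-is-symmetrical}. These two are of a very different nature: Lemma \ref{lem:not-triangle-is-symmetrical} really is routine bookkeeping (it follows from the triangle decomposition and \bialg), whereas Lemma \ref{lem:not-ug-is-symmetrical} is a restatement of \bw and must be the load-bearing first step of any proof of Lemma \ref{lem:n-five}, not a footnote. Relatedly, labelling Lemma \ref{lem:n-four} ``the key semantic ingredient'' mislocates the weight: \ref{lem:n-four} is derived from \supp and \ccom and serves here only as a helper; the key ingredient is \bw. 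A secondary omission is Lemma \ref{lem:symmetric-diagram-with-triangle-hadamard}, which the paper uses to put the left-hand side into the shape on which \ref{lem:not-ug-is-symmetrical} and \ref{lem:n-four} can act, and which appears nowhere in your plan.
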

\vfill\null%
\end{multicols}

\subsection{Proof of Lemmas}%
\label{app:proof-of-lemmas}

\begin{proof}[Proof of Lemma~\ref{lem:2-is-sqrt2-squared}]
\def\fig{2-is-sqrt-2-squared-proof}
\begin{align*}
\input{./figures/\fig/\fig_00.tikz}
\eq{\id}\input{./figures/\fig/\fig_01.tikz}
\eq{\cp}\input{./figures/\fig/\fig_02.tikz}
\eq{}\input{./figures/\fig/\fig_03.tikz}
\eq{\h}\input{./figures/\fig/\fig_04.tikz}
\end{align*}
\end{proof}

\begin{proof}[Proof of Lemmas~\ref{lem:hopf} %,~\ref{lem:hopf},~\ref{lem:k1},~\ref{lem:inverse},~\ref{lem:multiplying-global-phases},~\ref{lem:bicolor-0-alpha},~\ref{lem:hadamard-involution},~\ref{lem:h-loop},~\ref{lem:red-pi_2-around-green-node},~\ref{lem:6-cycle-bialgebra} and~\ref{lem:control-pi-and-anti-CNOT-commute}]$~$\\
 to~\ref{lem:control-pi-and-anti-CNOT-commute}]$~$\\
Lemmas~\ref{lem:multiplying-global-phases} and~\ref{lem:bicolor-0-alpha} are proven in~\cite{simplified-stabilizer,cyclo}. The other lemmas are in the \frag2 and hence are derivable by completeness of this fragment, since the axiomatisation from~\cite{towards-minimal} can easily be recovered from Lemma~\ref{lem:2-is-sqrt2-squared}.
\end{proof}

\begin{proof}[Proof of Lemma~\ref{lem:-1-interp}]
The result is quite obvious for all rules except maybe for \e, \hd and \bw.
\begin{itemize}
\item \e:
\def\fig{bicolor_pi_4_eq_empty-minus-1}
\begin{align*}
\input{./figures/\fig/\fig_00.tikz}\eq{\h}\input{./figures/\fig/\fig_01.tikz}\eq{\h}
\input{./figures/\fig/\fig_02.tikz}\eq{\e}\input{./figures/\fig/\fig_03.tikz}
\end{align*}
\item \hd:
\def\fig{euler-decomp-scalar-free-minus-1}
\begin{align*}
\input{./figures/\fig/\fig_00.tikz}\eq{}\input{./figures/\fig/\fig_01.tikz}\eq{\ref{lem:k1}}
\input{./figures/\fig/\fig_02.tikz}\eq{\hd}\input{./figures/\fig/\fig_03.tikz}
\end{align*}
\item \bw:
\def\fig{2-diagrams-of-control-triangle-axiom-minus-1-from-simplified}
\begin{align*}
\input{./figures/\fig/\fig_00.tikz}
\eq{\ref{lem:inverse}\\\picom\\\ref{lem:multiplying-global-phases}}\input{./figures/\fig/\fig_01.tikz}
\eq{\bw}\input{./figures/\fig/\fig_02.tikz}
\eq{\picom\\\ref{lem:multiplying-global-phases}\\\ref{lem:inverse}}\input{./figures/\fig/\fig_03.tikz}
\end{align*}
\end{itemize}
Moreover, it is to be noticed that $\interp{
%\InputIfFileExists{#1.tikz}{}{
%} % chktex 27
}_{-1} = 
%\InputIfFileExists{#1.tikz}{}{
%} % chktex 27
$.
\end{proof}

%\begin{proof}[Proof of Lemma~\ref{lem:6-cycle-bialgebra}]
%Using~\ref{lem:hopf} and \bt:
%\def\fig{6-cycle-bialgebra-proof}
%\begin{align*}
%\tikzfigc{00}=\tikzfigc{01}=\tikzfigc{02}=\tikzfigc{03}
%\end{align*}
%\end{proof}

\begin{proof}[Proof of Lemma~\ref{lem:gn-pi_2-0-0-equals-sqrt2-exp-pi_4}]
\def\fig{gn-pi_2-0-0-equals-sqrt2-exp-pi_4-proof} % chktex 8
\begin{align*}
\input{./figures/\fig/\fig_00.tikz}
\eq[~]{\e}\input{./figures/\fig/\fig_01.tikz}
\eq{\h}\input{./figures/\fig/\fig_02.tikz}
\eq{\hd}\input{./figures/\fig/\fig_03.tikz}
\eq{\picom\\\ref{lem:multiplying-global-phases}}\input{./figures/\fig/\fig_04.tikz}
\eq{\supp\\\ref{lem:hopf}}\input{./figures/\fig/\fig_05.tikz}
\eq[~]{\ref{lem:multiplying-global-phases}\\\ref{lem:inverse}}\input{./figures/\fig/\fig_06.tikz}
\end{align*}
\end{proof}

\begin{proof}[Proof of Lemma~\ref{lem:green-state-pi_2-is-red-state-minus-pi_2}]
\def\fig{green-state-pi_2-is-red-state-minus-pi_2-proof}
\begin{align*}
\input{./figures/\fig/\fig_00.tikz}\eq{\h}\input{./figures/\fig/\fig_01.tikz}\eq{\hd}\input{./figures/\fig/\fig_02.tikz}\eq{\cp}\input{./figures/\fig/\fig_03.tikz}\eq{\h\\\ref{lem:gn-pi_2-0-0-equals-sqrt2-exp-pi_4}}\input{./figures/\fig/\fig_04.tikz}
\end{align*}
\end{proof}

\begin{proof}[Proof of Lemma~\ref{lem:euler-decomp-with-scalar}]
\def\fig{euler-decomp-with-scalar-proof}
\begin{align*}
\input{./figures/\fig/\fig_00.tikz}
\eq[]{\hd}\input{./figures/\fig/\fig_01.tikz}
\eq[]{\h\\\ref{lem:inverse}\\\ref{lem:multiplying-global-phases}}\input{./figures/\fig/\fig_02.tikz}
\eq[]{\ref{lem:green-state-pi_2-is-red-state-minus-pi_2}}\input{./figures/\fig/\fig_03.tikz}
\eq[]{\h}\input{./figures/\fig/\fig_04.tikz}
\end{align*}
\end{proof}

\begin{proof}[Proof of Lemma~\ref{lem:C1-original}]
\def\fig{control-commutation-2-proof}
\begin{align*}
\input{./figures/\fig/\fig_00.tikz}
\eq{\ref{lem:euler-decomp-with-scalar}\\\ref{lem:multiplying-global-phases}\\\ref{lem:inverse}}\input{./figures/\fig/\fig_01.tikz}
\eq{\cp}\input{./figures/\fig/\fig_02.tikz}
\eq[~]{\ccom}\input{./figures/\fig/\fig_03.tikz}
\eq[~]{\cp}\input{./figures/\fig/\fig_04.tikz}
\eq{\ref{lem:inverse}\\\ref{lem:multiplying-global-phases}\\\ref{lem:euler-decomp-with-scalar}}\input{./figures/\fig/\fig_05.tikz}
\end{align*}
\end{proof}

\begin{proof}[Proof of Lemma~\ref{lem:C1-bis}]
By completeness of the \frag2:
\[{
%\InputIfFileExists{#1.tikz}{}{
\input{./figures/lemma-for-equivalence-c1.tikz}%} % chktex 27
}\]
Then:
\def\fig{control-commutation-3-proof}
\begin{align*}
\input{./figures/\fig/\fig_00.tikz}
\eq{}\input{./figures/\fig/\fig_01.tikz}
\eq{\ref{lem:C1-original}}\input{./figures/\fig/\fig_02.tikz}
\eq{\ref{lem:hadamard-involution}}\input{./figures/\fig/\fig_03.tikz}
\eq{\h\\\ref{lem:2-is-sqrt2-squared}\\\ref{lem:hopf}}\input{./figures/\fig/\fig_04.tikz}
\end{align*}
\end{proof}

%\begin{proof}[Proof of Lemma~\ref{lem:new-supp}]
%First, if $\alpha$ or $\beta$ is $0\mod \pi$, the result is obvious, using \bo and~\ref{lem:k1}. Then, if say, $\beta=\pi/2$, using \so, \picom, \hd, \supp,~\ref{lem:hopf}:
%\def\fig{new-supp-proof-pi_2}
%\begin{align*}
%\tikzfigc{00}=\tikzfigc{01}=\tikzfigc{02}=\tikzfigc{03}\\=\tikzfigc{04}=\tikzfigc{05}=\tikzfigc{06}
%\end{align*}
%If $\beta=-\pi/2$, the proof is similar. Now, the remaining cases are when $\alpha$ and $\beta$ are both $\pi/4 \mod \pi/2$. If $\beta = \alpha+\pi$, the result is obvious thanks to \supp. If $\beta = \alpha +\pi/2$, using \so,~\ref{lem:C1-bis}, \bt, \bo, \picom and~\ref{lem:hopf}:
%\def\fig{new-supp-proof-alpha-plus-pi_2}
%\begin{align*}
%\tikzfigc{00}=\tikzfigc{01}=\tikzfigc{02}\\=\tikzfigc{03}=\tikzfigc{04}=\tikzfigc{05}\\=\tikzfigc{06}=\tikzfigc{07}=\tikzfigc{08}\\=\tikzfigc{09}=\tikzfigc{10}
%\end{align*}
%The proof is similar when $\beta = \alpha -\pi/2$. Finally, when $\alpha=\beta$, one can get to one of the previous cases by using the rule \picom:
%\[\fit{\tikzfig{new-supp-proof-beta-eq-alpha}}\]
%\end{proof}

\begin{proof}[Proof of Lemma~\ref{lem:supp-to-minus-pi_4}]
\def\fig{supp-to-minus-pi_4-proof}
\begin{align*}
\input{./figures/\fig/\fig_00.tikz}
\eq{\ref{lem:inverse}\\\picom\\\ref{lem:multiplying-global-phases}}\input{./figures/\fig/\fig_01.tikz}
\eq{\ref{lem:C1-bis}\\\ref{lem:inverse}}\input{./figures/\fig/\fig_02.tikz}
\eq{\ref{lem:green-state-pi_2-is-red-state-minus-pi_2}\\\ref{lem:multiplying-global-phases}}\input{./figures/\fig/\fig_03.tikz}
\eq{\ref{lem:euler-decomp-with-scalar}\\\id\\\ref{lem:multiplying-global-phases}}\input{./figures/\fig/\fig_04.tikz}\\
\eq{\id\\\ref{lem:inverse}\\\ref{lem:hopf}}\input{./figures/\fig/\fig_05.tikz}
\eq{\picom\\\ref{lem:multiplying-global-phases}}\input{./figures/\fig/\fig_06.tikz}
\eq{\ref{lem:2-is-sqrt2-squared}\\\ref{lem:inverse}}\input{./figures/\fig/\fig_07.tikz}
\end{align*}
\end{proof}

%\begin{proof}[Proof of Lemma~\ref{lem:red-pi_2-around-green-node}]
%Using \so, \bt, \hd, \h and~\ref{lem:h-loop}:
%\def\fig{lemma-red-pi_2-around-green-node-proof}
%\begin{align*}
%\tikzfigc{00}=\tikzfigc{01}=\tikzfigc{02}=\tikzfigc{03}=\tikzfigc{04}\\=\tikzfigc{05}=\tikzfigc{06}
%\end{align*}
%\end{proof}

\begin{proof}[Proof of Lemma~\ref{lem:triangle-decomp-2}]
\def\fig{triangle-decomp-2-proof}
\begin{align*}
\input{./figures/\fig/\fig_00.tikz}
\eq{\ref{def:triangle}\\\picom\\\ref{lem:multiplying-global-phases}}\input{./figures/\fig/\fig_01.tikz}
\eq{\h\\\ref{lem:hopf}}\input{./figures/\fig/\fig_02.tikz}
\eq{\ref{lem:hadamard-involution}\\\ref{lem:C1-original}}\input{./figures/\fig/\fig_03.tikz}
\eq{\h\\\ref{lem:euler-decomp-with-scalar}\\\ref{lem:multiplying-global-phases}}\input{./figures/\fig/\fig_04.tikz}\\
\eq{\ref{lem:red-pi_2-around-green-node}\\\ref{lem:hopf}}\input{./figures/\fig/\fig_05.tikz}
\eq{\ref{lem:euler-decomp-with-scalar}\\\ref{lem:multiplying-global-phases}}\input{./figures/\fig/\fig_06.tikz}
\eq{\h}\input{./figures/\fig/\fig_07.tikz}
\end{align*}
\end{proof}

\begin{proof}[Proof of Lemma~\ref{lem:not-triangle-is-symmetrical}]
\def\fig{not-triangle-is-symmetrical-proof}
\begin{align*}
\input{./figures/\fig/\fig_00.tikz}
\eq{\ref{lem:triangle-decomp-2}}\input{./figures/\fig/\fig_01.tikz}
\eq{\ref{lem:6-cycle-bialgebra}}\input{./figures/\fig/\fig_02.tikz}
\eq{}\input{./figures/\fig/\fig_03.tikz}
\eq{\ref{lem:triangle-decomp-2}}\input{./figures/\fig/\fig_04.tikz}
\end{align*}
\end{proof}

\begin{proof}[Proof of Lemma~\ref{lem:red-state-on-triangle}]
\def\fig{red-state-on-triangle-proof}
\begin{align*}
\input{./figures/\fig/\fig_00.tikz}
\eq{\ref{def:triangle}}\input{./figures/\fig/\fig_01.tikz}
\eq{\ref{lem:inverse}\\\cp}\input{./figures/\fig/\fig_02.tikz}
\eq{\ref{lem:gn-pi_2-0-0-equals-sqrt2-exp-pi_4}}\input{./figures/\fig/\fig_03.tikz}
\eq{\ref{lem:green-state-pi_2-is-red-state-minus-pi_2}\\\ref{lem:multiplying-global-phases}}\input{./figures/\fig/\fig_04.tikz}
\eq{}\input{./figures/\fig/\fig_05.tikz}
\end{align*}
\end{proof}

\begin{proof}[Proof of Lemma~\ref{lem:pi-red-state-on-triangle}]
\def\fig{pi-red-state-on-triangle-proof}
\begin{align*}
\input{./figures/\fig/\fig_00.tikz}
\eq{\ref{def:triangle}}\input{./figures/\fig/\fig_01.tikz}
\eq{\ref{lem:k1}\\\cp}\input{./figures/\fig/\fig_02.tikz}
\eq{\picom\\\ref{lem:multiplying-global-phases}}\input{./figures/\fig/\fig_03.tikz}
\eq{\cp\\\ref{lem:bicolor-0-alpha}}\input{./figures/\fig/\fig_04.tikz}
\eq{\ref{lem:2-is-sqrt2-squared}\\\ref{lem:inverse}}\input{./figures/\fig/\fig_05.tikz}
\end{align*}
\end{proof}

\begin{proof}[Proof of Lemmas~\ref{lem:red-state-on-upside-down-triangle} and~\ref{lem:pi-red-state-on-upside-down-triangle}]
The result comes naturally from~\ref{lem:not-triangle-is-symmetrical},~\ref{lem:pi-red-state-on-triangle} and~\ref{lem:red-state-on-triangle}.
\end{proof}

\begin{proof}[Proof of Lemma~\ref{lem:pi-green-state-on-upside-down-triangle}]
\def\fig{pi-green-state-on-upside-down-triangle-proof}
\begin{align*}
\input{./figures/\fig/\fig_00.tikz}
\eq[~]{\ref{def:triangle}}\input{./figures/\fig/\fig_01.tikz}
\eq[~]{\s\\\ref{lem:k1}\\\cp}\input{./figures/\fig/\fig_02.tikz}
\eq[~]{\supp\\\ref{lem:inverse}\\\ref{lem:hopf}}\input{./figures/\fig/\fig_03.tikz}
\eq[~]{\cp}\input{./figures/\fig/\fig_04.tikz}
\eq[~]{\ref{lem:gn-pi_2-0-0-equals-sqrt2-exp-pi_4}}\input{./figures/\fig/\fig_05.tikz}
\eq[~]{\ref{lem:multiplying-global-phases}\\\ref{lem:inverse}}\input{./figures/\fig/\fig_06.tikz}
\end{align*}
\end{proof}

\begin{proof}[Proof of Lemma~\ref{lem:pi-green-state-on-triangle}]
\def\fig{pi-green-state-on-triangle-proof}
\begin{align*}
\input{./figures/\fig/\fig_00.tikz}
\eq{\ref{lem:not-triangle-is-symmetrical}}\input{./figures/\fig/\fig_01.tikz}
\eq{\ref{lem:k1}\\\cp}\input{./figures/\fig/\fig_02.tikz}
\eq{\ref{lem:inverse}\\\ref{lem:pi-green-state-on-upside-down-triangle}}\input{./figures/\fig/\fig_03.tikz}
\eq{\ref{lem:k1}\\\cp\\\ref{lem:multiplying-global-phases}}\input{./figures/\fig/\fig_04.tikz}
\end{align*}
\end{proof}

\begin{proof}[Proof of Lemma~\ref{lem:triangle-hadamard-parallel}]
\def\fig{anti-control-pi-control-green-aux}
\begin{align*}
\input{./figures/\fig/\fig_00.tikz}
\eq{\h}\input{./figures/\fig/\fig_01.tikz}
\eq{\ref{lem:euler-decomp-with-scalar}\\\id\\\ref{lem:inverse}}\input{./figures/\fig/\fig_02.tikz}
\eq{\bialg}\input{./figures/\fig/\fig_03.tikz}
\eq{\picom\\\ref{lem:multiplying-global-phases}}\input{./figures/\fig/\fig_04.tikz}
\eq{\ref{lem:euler-decomp-with-scalar}\\\id}\input{./figures/\fig/\fig_05.tikz}
\end{align*}
Then:
% Je suppose que "the previous result est lem:pi-green-state-on-triangle
\def\fig{anti-control-pi-control-green-fin}
\begin{align*}
\input{./figures/\fig/\fig_00.tikz}
\eq{\ref{def:triangle}}\input{./figures/\fig/\fig_01.tikz}
\eq{}\input{./figures/\fig/\fig_02.tikz}
\eq{\ref{def:triangle}}\input{./figures/\fig/\fig_03.tikz}
\end{align*}
\end{proof}

\begin{proof}[Proof of Lemma~\ref{lem:looped-triangle}]
First:
\def\fig{looped-triangle-proof}
\begin{align*}
\input{./figures/\fig/\fig_00.tikz}
\quad\eq[]{\ref{def:triangle}}\input{./figures/\fig/\fig_01.tikz}
\eq{\bialg}\input{./figures/\fig/\fig_02.tikz}
\eq{}\input{./figures/\fig/\fig_03.tikz}
\eq{\ref{def:triangle}}\input{./figures/\fig/\fig_04.tikz}
\end{align*}
Then:
\def\fig{looped-triangle-proof-fin}
\begin{align*}
\input{./figures/\fig/\fig_00.tikz}
\eq{\ref{lem:not-triangle-is-symmetrical}}\input{./figures/\fig/\fig_01.tikz}
\eq{\ref{lem:k1}}\input{./figures/\fig/\fig_02.tikz}
\eq{}\input{./figures/\fig/\fig_03.tikz}
\eq{\ref{lem:not-triangle-is-symmetrical}\\\id}\input{./figures/\fig/\fig_04.tikz}
\end{align*}
\end{proof}

\begin{proof}[Proof of Lemma~\ref{lem:black-dot-swappable-outputs}]
\def\fig{lemma-black-dot-swappable-outputs-proof}
\begin{align*}
\input{./figures/\fig/\fig_00.tikz}
\eq[\quad]{\bialg}\input{./figures/\fig/\fig_01.tikz}
\eq[\quad]{\s\\\ref{lem:looped-triangle}}\input{./figures/\fig/\fig_02.tikz}
\end{align*}
\end{proof}

\begin{proof}[Proof of Lemma~\ref{lem:not-ug-is-symmetrical}]
%\def\fig{not-ug-is-symmetrical-proof-from-simplified}
%\begin{align*}
%\tikzfigc{00}
%\eq{\ref{def:triangle}}\tikzfigc{01}
%\eq{\ref{lem:inverse}\\\picom}\tikzfigc{02}
%\eq{\id\\\picom\\\ref{lem:multiplying-global-phases}}\tikzfigc{03}
%\eq{\bw}\tikzfigc{04}\\
%\eq{\ref{lem:k1}\\\picom\\\ref{lem:multiplying-global-phases}\\\ref{lem:inverse}}\tikzfigc{05}
%\eq{\ref{def:triangle}}\tikzfigc{06}
%\end{align*}
\def\fig{bw-equivalence}
\begin{align*}
\input{./figures/\fig/\fig_00.tikz}\eq{}\input{./figures/\fig/\fig_01.tikz}
\equi{}\hspace{-1em}\input{./figures/\fig/\fig_02.tikz}\hspace{-1.5em}\eq{}\input{./figures/\fig/\fig_03.tikz}
\equi{\picom\\\ref{lem:inverse}\\\id\\\ref{lem:k1}}\input{./figures/\fig/\fig_04.tikz}\eq{}\input{./figures/\fig/\fig_05.tikz}\\
\equi{\picom\\\ref{lem:multiplying-global-phases}\\\ref{lem:inverse}}\input{./figures/\fig/\fig_06.tikz}\eq{}\input{./figures/\fig/\fig_07.tikz}
\equi{\ref{lem:multiplying-global-phases}\\\ref{lem:inverse}}\input{./figures/\fig/\fig_08.tikz}\eq{}\input{./figures/\fig/\fig_09.tikz}
\end{align*}
\end{proof}

\begin{proof}[Proof of Lemma~\ref{lem:inverse-of-triangle}]
\def\fig{inverse-of-triangle-proof}
\begin{align*}
\input{./figures/\fig/\fig_00.tikz}
\eq{\id\\\ref{lem:not-triangle-is-symmetrical}}\input{./figures/\fig/\fig_01.tikz}
\eq{\ref{def:triangle}\\\ref{lem:k1}}\input{./figures/\fig/\fig_02.tikz}
\eq{\ref{lem:supp-to-minus-pi_4}\\\ref{lem:k1}\\\ref{lem:inverse}\\\picom\\\ref{lem:multiplying-global-phases}}\input{./figures/\fig/\fig_03.tikz}
\eq{\supp\\\ref{lem:inverse}\\\ref{lem:hopf}}\input{./figures/\fig/\fig_04.tikz}
\eq{\ref{lem:gn-pi_2-0-0-equals-sqrt2-exp-pi_4}\\\picom}\input{./figures/\fig/\fig_05.tikz}
\eq{\id\\\ref{lem:multiplying-global-phases}\\\ref{lem:inverse}}\input{./figures/\fig/\fig_06.tikz}
\end{align*}
\end{proof}

\begin{proof}[Proof of Lemma~\ref{lem:symmetric-diagram-with-triangle-hadamard}]
First:
\def\fig{symmetric-diagram-with-triangle-hadamard-proof-1}
\begin{align*}
\input{./figures/\fig/\fig_00.tikz}
\eq{\ref{lem:euler-decomp-with-scalar}}\input{./figures/\fig/\fig_01.tikz}
\eq{\ref{lem:multiplying-global-phases}\\\ref{lem:gn-pi_2-0-0-equals-sqrt2-exp-pi_4}\\\id\\\supp}\input{./figures/\fig/\fig_02.tikz}
\eq{\id\\\picom\\\ref{lem:multiplying-global-phases}\\\ref{lem:k1}}\input{./figures/\fig/\fig_03.tikz}
\eq{\ref{lem:multiplying-global-phases}\\\ref{lem:supp-to-minus-pi_4}}\input{./figures/\fig/\fig_04.tikz}\\
\eq{\ref{def:triangle}\\\picom\\\ref{lem:multiplying-global-phases}}\input{./figures/\fig/\fig_05.tikz}
\eq{\ref{def:triangle}}\input{./figures/\fig/\fig_06.tikz}
\eq{\ref{lem:not-triangle-is-symmetrical}}\input{./figures/\fig/\fig_07.tikz}
\end{align*}
Then:
\def\fig{symmetric-diagram-with-triangle-hadamard-proof-2}
\begin{align*}
\input{./figures/\fig/\fig_00.tikz}
\eq{\ref{lem:inverse-of-triangle}}\input{./figures/\fig/\fig_01.tikz}
\eq{}\input{./figures/\fig/\fig_02.tikz}
\eq{\h}\input{./figures/\fig/\fig_03.tikz}
\eq{\ref{lem:not-triangle-is-symmetrical}}\input{./figures/\fig/\fig_04.tikz}
\end{align*}
\end{proof}

\begin{proof}[Proof of Lemma~\ref{lem:anti-control-hanging-branch-and-control-alpha-commute}]
First:
\def\fig{anti-control-hanging-branch-and-control-alpha-commute-proof-aux}
\begin{align*}
\input{./figures/\fig/\fig_00.tikz}
\eq{\h}\input{./figures/\fig/\fig_01.tikz}
\eq{\ref{lem:symmetric-diagram-with-triangle-hadamard}\\\ref{lem:not-triangle-is-symmetrical}}\input{./figures/\fig/\fig_02.tikz}
\eq{\ref{def:triangle}\\\bialg\\\ref{lem:k1}}\input{./figures/\fig/\fig_03.tikz}\\
\eq{\h}\input{./figures/\fig/\fig_04.tikz}
\eq{\ref{lem:C1-original}}\input{./figures/\fig/\fig_05.tikz}
\eq{\h\\\ref{lem:hopf}}\input{./figures/\fig/\fig_06.tikz}\\
\eq{\ref{lem:green-state-pi_2-is-red-state-minus-pi_2}}\input{./figures/\fig/\fig_07.tikz}
\eq{\ref{lem:supp-to-minus-pi_4}\\\id}\input{./figures/\fig/\fig_08.tikz}
\eq{\ref{lem:inverse}\\\ref{lem:hopf}\\\id}\input{./figures/\fig/\fig_09.tikz}
\eq{\h}\input{./figures/\fig/\fig_10.tikz}
\end{align*}
then:
\def\fig{anti-control-hanging-branch-and-control-alpha-commute-proof-fin}
\begin{align*}
\input{./figures/\fig/\fig_00.tikz}
\eq{\ref{lem:inverse}}\!\!\!\!\!\input{./figures/\fig/\fig_01.tikz}
\eq{\ref{lem:inverse}\\\ref{lem:hopf}\\\id\\\ref{def:triangle}}\input{./figures/\fig/\fig_02.tikz}
\eq{\bialg\\\ref{lem:inverse}}\input{./figures/\fig/\fig_03.tikz}\\
\eq{}\input{./figures/\fig/\fig_04.tikz}
\eq{\ccom}\input{./figures/\fig/\fig_05.tikz}
\eq{}\input{./figures/\fig/\fig_06.tikz}
\end{align*}
where the penultimate diagram is the mirror of the previous one with $\gamma:=-\gamma$, so the last step consists in rolling back to the mirror of the initial diagram with $\gamma:=-\gamma$.
\end{proof}

\begin{proof}[Proof of Lemma~\ref{lem:n-four}]
\def\fig{ug-and-not-W-commute-simplified-proof-from-new-c1}
\begin{align*}
\input{./figures/\fig/\fig_00.tikz}
\eq{\id\\\ref{lem:not-triangle-is-symmetrical}}\input{./figures/\fig/\fig_01.tikz}
\eq{\ref{def:triangle}\\\ref{lem:k1}}\input{./figures/\fig/\fig_02.tikz}
\eq{\ref{lem:supp-to-minus-pi_4}\\\picom\\\ref{lem:multiplying-global-phases}}\input{./figures/\fig/\fig_03.tikz}
\eq{\bialg\\\ref{lem:euler-decomp-with-scalar}}\input{./figures/\fig/\fig_04.tikz}\\
\eq{\h}\input{./figures/\fig/\fig_05.tikz}
\eq{\h\\\hd}\input{./figures/\fig/\fig_06.tikz}
\eq{\picom\\\ref{lem:anti-control-hanging-branch-and-control-alpha-commute}}\input{./figures/\fig/\fig_07.tikz}
\eq{\hd}\input{./figures/\fig/\fig_08.tikz}
\eq{\ref{lem:euler-decomp-with-scalar}}\input{./figures/\fig/\fig_09.tikz}\\
\eq{\ref{lem:control-pi-and-anti-CNOT-commute}}\input{./figures/\fig/\fig_10.tikz}
\eq{\ref{lem:euler-decomp-with-scalar}}\input{./figures/\fig/\fig_11.tikz}
\eq{\h\\\picom\\\ref{lem:multiplying-global-phases}}\input{./figures/\fig/\fig_12.tikz}
\eq{\ref{lem:euler-decomp-with-scalar}\\\ref{lem:multiplying-global-phases}}\input{./figures/\fig/\fig_13.tikz}
\eq{\ref{lem:supp-to-minus-pi_4}\\\ref{def:triangle}\\\ref{lem:not-triangle-is-symmetrical}}\input{./figures/\fig/\fig_14.tikz}
\end{align*}
\end{proof}

%\begin{proof}[Proof of Lemma~\ref{lem:big-scalar-equation}]
%Using \so, \bo,~\ref{lem:new-supp} and \picom:
%\def\fig{new-supp-proves-big-scalar-equation}
%\begin{align*}
%\tikzfigc{00}=\tikzfigc{01}=\tikzfigc{02}\\=\tikzfigc{03}=\tikzfigc{04}
%\end{align*}
%\end{proof}

\begin{proof}[Proof of Lemma~\ref{lem:parallel-triangles}]
First:
\def\fig{parallel-triangles-are-projections-proof}
\begin{align*}
\input{./figures/\fig/\fig_00.tikz}
\eq{\ref{lem:red-pi_2-around-green-node}}\input{./figures/\fig/\fig_01.tikz}
\eq{\h}\input{./figures/\fig/\fig_02.tikz}
\eq{\picom\\\ref{lem:C1-original}\\\ref{lem:hadamard-involution}}\input{./figures/\fig/\fig_03.tikz}
\eq{\hd\\\ref{lem:green-state-pi_2-is-red-state-minus-pi_2}\\\ref{lem:inverse}}\input{./figures/\fig/\fig_04.tikz}\\
\eq{\ref{lem:inverse}\\\bialg}\input{./figures/\fig/\fig_05.tikz}
\eq{\ref{lem:inverse}\\\cp\\\picom\\\ref{lem:multiplying-global-phases}}\input{./figures/\fig/\fig_06.tikz}
\end{align*}
Then:
\def\fig{parallel-triangles-are-projections-proof-fin}
\begin{align*}
\input{./figures/\fig/\fig_00.tikz}
\eq{\ref{def:triangle}}\input{./figures/\fig/\fig_01.tikz}
\eq{}\input{./figures/\fig/\fig_02.tikz}
\eq{\ref{def:triangle}\\\ref{lem:supp-to-minus-pi_4}}\input{./figures/\fig/\fig_03.tikz}
\eq{\ref{lem:multiplying-global-phases}\\\ref{lem:inverse}}\input{./figures/\fig/\fig_04.tikz}
\end{align*}
\end{proof}

\begin{proof}[Proof of Lemma~\ref{lem:triangles-fork-absorbs-anti-CNOT}]
First:
\def\fig{ug-fork-absorbs-CNOT-proof}
\begin{align*}
\input{./figures/\fig/\fig_00.tikz}
\eq{\ref{lem:inverse}\\\ref{lem:multiplying-global-phases}\\\ref{lem:euler-decomp-with-scalar}}\input{./figures/\fig/\fig_01.tikz}
\eq{\bialg}\input{./figures/\fig/\fig_02.tikz}
\eq{\h}\input{./figures/\fig/\fig_03.tikz}\\
\eq{\ref{lem:C1-original}}\input{./figures/\fig/\fig_04.tikz}
\eq{\h\\\ref{lem:hopf}\\\ref{lem:euler-decomp-with-scalar}}\input{./figures/\fig/\fig_05.tikz}
\eq{\ref{lem:euler-decomp-with-scalar}\\\ref{lem:multiplying-global-phases}\\\ref{lem:inverse}}\input{./figures/\fig/\fig_06.tikz}
\end{align*}
Hence:
\def\fig{ug-fork-absorbs-CNOT-proof-2}
\begin{align*}
\input{./figures/\fig/\fig_00.tikz}
\eq{\ref{def:triangle}}\input{./figures/\fig/\fig_01.tikz}
\eq{}\input{./figures/\fig/\fig_02.tikz}
\eq{\ref{def:triangle}}\input{./figures/\fig/\fig_03.tikz}
\end{align*}
Then:
\def\fig{ug-fork-absorbs-anti-CNOT-proof}
\begin{align*}
\input{./figures/\fig/\fig_00.tikz}
\eq{\ref{lem:not-triangle-is-symmetrical}\\\ref{lem:k1}}\input{./figures/\fig/\fig_01.tikz}
\eq{\ref{lem:not-triangle-is-symmetrical}\\\ref{lem:k1}}\input{./figures/\fig/\fig_02.tikz}
\eq{}\input{./figures/\fig/\fig_03.tikz}
\eq{\ref{lem:not-triangle-is-symmetrical}\\\ref{lem:k1}}\input{./figures/\fig/\fig_04.tikz}
\end{align*}
\end{proof}

\begin{proof}[Proof of Lemma~\ref{lem:n-five}]
First:
\def\fig{2-diagrams-of-control-triangle-from-simplified-BW-1}
\begin{align*}
\input{./figures/\fig/\fig_00.tikz}
\eq{\id\\\ref{lem:not-ug-is-symmetrical}}\input{./figures/\fig/\fig_01.tikz}
\eq{\h}\input{./figures/\fig/\fig_02.tikz}
\eq{\ref{lem:symmetric-diagram-with-triangle-hadamard}}\input{./figures/\fig/\fig_03.tikz}
\eq{\ref{lem:k1}\\\ref{lem:not-triangle-is-symmetrical}}\input{./figures/\fig/\fig_04.tikz}
\eq{\id\\\cp\\\ref{lem:inverse}}\input{./figures/\fig/\fig_05.tikz}
\eq{\ref{lem:black-dot-swappable-outputs}\\\ref{lem:n-four}}\input{./figures/\fig/\fig_06.tikz}
\end{align*}
Moreover, from~\ref{lem:symmetric-diagram-with-triangle-hadamard}, we can easily derive:
\def\fig{2-diagrams-of-control-triangle-from-simplified-BW-2}
\begin{align*}
\input{./figures/\fig/\fig_00.tikz}
\eq{\id\\\ref{lem:symmetric-diagram-with-triangle-hadamard}}\input{./figures/\fig/\fig_01.tikz}
\eq{\h\\\ref{lem:k1}\\\ref{lem:not-triangle-is-symmetrical}}\input{./figures/\fig/\fig_02.tikz}
\end{align*}
and:
\def\fig{2-diagrams-of-control-triangle-from-simplified-BW-3}
\begin{align*}
\input{./figures/\fig/\fig_00.tikz}
\eq{\id\\\ref{lem:not-triangle-is-symmetrical}}\input{./figures/\fig/\fig_01.tikz}
\eq{\ref{lem:symmetric-diagram-with-triangle-hadamard}}\input{./figures/\fig/\fig_02.tikz}
\eq{\h}\input{./figures/\fig/\fig_03.tikz}
\end{align*}
Finally:
\def\fig{2-diagrams-of-control-triangle-from-simplified-BW-4}
\begin{align*}
\input{./figures/\fig/\fig_00.tikz}
\eq{\h}\input{./figures/\fig/\fig_01.tikz}
\eq{}\input{./figures/\fig/\fig_02.tikz}
\eq{\ref{lem:k1}\\\ref{lem:n-four}\\\ref{lem:black-dot-swappable-outputs}}\input{./figures/\fig/\fig_03.tikz}
\eq{}\input{./figures/\fig/\fig_04.tikz}\\
\eq{\bialg\\\ref{lem:inverse}}\input{./figures/\fig/\fig_05.tikz}
\eq{}\input{./figures/\fig/\fig_06.tikz}
\end{align*}
\end{proof}
%
%\begin{proof}[Proof of Lemma~\ref{lem:2-triangle-cycle}]
%\needs[2-triangle-cycle]{lem:2-triangle-cycle}{lem:not-ug-is-symmetrical,lem:supp-to-minus-pi_4,lem:euler-decomp-with-scalar,lem:k1,lem:h-loop,lem:green-state-pi_2-is-red-state-minus-pi_2}
%Using \so,~\ref{lem:not-ug-is-symmetrical},~\ref{lem:supp-to-minus-pi_4},~\ref{lem:euler-decomp-with-scalar}, \bt, \hd, \h, \ccom, \picom,~\ref{lem:k1},~\ref{lem:h-loop} and~\ref{lem:green-state-pi_2-is-red-state-minus-pi_2}:
%\[\tikzfig{2-triangle-cycle-proof-1}\]
%\[\tikzfig{2-triangle-cycle-proof-2}\]
%\[\tikzfig{2-triangle-cycle-proof-3}\]
%\end{proof}
%
%\begin{proof}[Proof of Lemma~\ref{lem:2-cycle-triangle-with-one-not}]
%\needs[2-cycle-triangle-with-one-not]{lem:2-cycle-triangle-with-one-not}{lem:looped-triangle,lem:black-dot-swappable-outputs,lem:2-triangle-cycle,lem:hopf}
%Using~\ref{lem:looped-triangle},~\ref{lem:black-dot-swappable-outputs},~\ref{lem:2-triangle-cycle} and~\ref{lem:hopf}:
%\[\tikzfig{2-cycle-triangle-with-one-not-proof}\]
%\end{proof}

\section{Proof of Propositions~\ref{prop:rules-preserved} and~\ref{prop:left-inverse-ZX}}

We first derive an easy but useful lemma for the following:
\begin{lem}%
\label{lem:arity-1-black-dot}
As shown in Section~\ref{sec:cliff-t}:
\[
%\InputIfFileExists{#1.tikz}{}{
\input{./figures/ZW-to-ZX-dot-1-2-simplified.tikz}%} % chktex 27
\]
Then:
\def\fig{W-1-0} % chktex 8
\begin{align*}
\input{./figures/\fig/\fig_00.tikz}\mapsto\input{./figures/\fig/\fig_01.tikz}
\eq{\ref{lem:inverse}\\\ref{lem:hopf}}\input{./figures/\fig/\fig_02.tikz}
\eq{\ref{lem:inverse}\\\cp}\input{./figures/\fig/\fig_03.tikz}
\eq{\ref{lem:red-state-on-triangle}}\input{./figures/\fig/\fig_04.tikz}
\eq{\ref{lem:inverse}}\input{./figures/\fig/\fig_05.tikz}
\end{align*}
\end{lem}

\subsection{Proof of Proposition~\ref{prop:rules-preserved}}%
\label{prf:rules-preserved}
We prove here that all the rules of the \zwh-Calculus are preserved by $[.]_X$.
\begin{itemize}
\item X\@:
\def\fig{rule-X-proof-no-braid}
\begin{align*}
\input{./figures/\fig/\fig_00.tikz}~~\mapsto~~\input{./figures/\fig/\fig_01.tikz}
\eq{\bialg}\input{./figures/\fig/\fig_02.tikz}
\eq{\h}\input{./figures/\fig/\fig_03.tikz}
\eq{\ref{lem:triangle-hadamard-parallel}\\\ref{lem:h-loop}}\input{./figures/\fig/\fig_04.tikz}~~\mapsfrom~~\input{./figures/\fig/\fig_05.tikz}
\end{align*}
\item $0a$, $0c$, $0d$ and $0d'$ come directly from the paradigm \emph{Only Topology Matters}.
\item $0b$:
\def\fig{rule-0b-proof-no-braid}
\begin{align*}
\input{./figures/\fig/\fig_00.tikz}~~\mapsto\input{./figures/\fig/\fig_01.tikz}
\eq{\ref{lem:k1}\\\ref{lem:not-triangle-is-symmetrical}}\input{./figures/\fig/\fig_02.tikz}
\eq{\ref{lem:black-dot-swappable-outputs}}\input{./figures/\fig/\fig_03.tikz}
\eq{\ref{lem:k1}\\\ref{lem:not-triangle-is-symmetrical}}\input{./figures/\fig/\fig_04.tikz}~\mapsfrom~~\input{./figures/\fig/\fig_05.tikz}
\end{align*}
\item $0b'$: Using the result for rule $0b$,
\def\fig{rule-0bp-proof-no-braid}
\begin{align*}
\input{./figures/\fig/\fig_00.tikz}~~\mapsto~~\input{./figures/\fig/\fig_01.tikz}
\eq{\ref{lem:black-dot-swappable-outputs}}\input{./figures/\fig/\fig_02.tikz}
\eq{}\input{./figures/\fig/\fig_03.tikz}
\eq{\ref{lem:black-dot-swappable-outputs}}\input{./figures/\fig/\fig_04.tikz}~~\mapsfrom~~\input{./figures/\fig/\fig_05.tikz}
\end{align*}
\item $1a$:
\def\fig{rule-1a-proof}
\begin{align*}
\input{./figures/\fig/\fig_00.tikz}~~\mapsto\input{./figures/\fig/\fig_01.tikz}
\eq{\bialg}\input{./figures/\fig/\fig_02.tikz}
\eq{\ref{lem:n-four}}\input{./figures/\fig/\fig_03.tikz}
\eq{\bialg}\input{./figures/\fig/\fig_04.tikz}~~\mapsfrom~~\input{./figures/\fig/\fig_05.tikz}
\end{align*}
\item $1b$:
\def\fig{rule-1b-proof}
\begin{align*}
\input{./figures/\fig/\fig_00.tikz}~~\underset{\ref{lem:arity-1-black-dot}}{\mapsto}~~\input{./figures/\fig/\fig_01.tikz}
\eq{\ref{lem:black-dot-swappable-outputs}}\input{./figures/\fig/\fig_02.tikz}
\eq{\ref{lem:inverse}\\\cp}\input{./figures/\fig/\fig_03.tikz}
\eq{\ref{lem:red-state-on-upside-down-triangle}\\\id\\\ref{lem:inverse}}\input{./figures/\fig/\fig_04.tikz}
\eq{\id}\input{./figures/\fig/\fig_05.tikz}~~\mapsfrom~~\input{./figures/\fig/\fig_05.tikz}
\end{align*}
\item $1c$, $1d$, $2a$ and $2b$ come directly from the spider rules \s and \id.
\item $3a$ is the expression of the colour-swapped version of Lemma~\ref{lem:k1}.
\item $3b$:
\def\fig{rule-3b-proof}
\begin{align*}
\input{./figures/\fig/\fig_00.tikz}\quad\mapsto\quad\input{./figures/\fig/\fig_01.tikz}
\eq[\quad]{\ref{lem:k1}}\input{./figures/\fig/\fig_02.tikz}\quad\mapsfrom\quad\input{./figures/\fig/\fig_03.tikz}
\end{align*}
\item $4$ comes from the spider rule \s.
\item $5a$: We will need a few steps to prove this equality.
%\step \label{step:one}
\def\fig{anti-control-pi-and-control-triangle-commute}
\begin{align}\label{step:one}
\input{./figures/\fig/\fig_00.tikz}
\eq[]{\s\\\bialg}\input{./figures/\fig/\fig_01.tikz}
\eq[]{\h}\input{./figures/\fig/\fig_02.tikz}
\eq[]{\ref{lem:triangle-hadamard-parallel}}\input{./figures/\fig/\fig_03.tikz}
\eq[]{\s\\\ref{lem:k1}}\input{./figures/\fig/\fig_04.tikz}
\end{align}
%\step \label{step:two}
\def\fig{CNOT-control-triangle-CNOT-is-control-transpose-triangle}
\begin{align}\label{step:two}
\input{./figures/\fig/\fig_00.tikz}
\eq[]{\ref{lem:black-dot-swappable-outputs}}\input{./figures/\fig/\fig_01.tikz}
\eq[]{\bialg}\input{./figures/\fig/\fig_02.tikz}
\eq[]{\s\\\ref{lem:triangles-fork-absorbs-anti-CNOT}}\input{./figures/\fig/\fig_03.tikz}
\eq[]{\ref{lem:k1}}\input{./figures/\fig/\fig_04.tikz}
\eq[]{\ref{lem:black-dot-swappable-outputs}\\\ref{lem:k1}\\\ref{lem:not-triangle-is-symmetrical}}\input{./figures/\fig/\fig_05.tikz}
\end{align}
%\step \label{step:three}
\def\fig{anti-control-inverse-triangle-is-control-triangle-times-inverse-triangle-ter}
\begin{align}\label{step:three}
\input{./figures/\fig/\fig_00.tikz}
\eq{\ref{lem:k1}}\input{./figures/\fig/\fig_01.tikz}
\eq{\ref{lem:n-four}\\\ref{lem:black-dot-swappable-outputs}}\input{./figures/\fig/\fig_02.tikz}
\eq{\ref{lem:not-ug-is-symmetrical}}\input{./figures/\fig/\fig_03.tikz}
\eq{\ref{lem:k1}}\input{./figures/\fig/\fig_04.tikz}
\end{align}
%\def\fig{anti-control-inverse-triangle-is-control-triangle-times-inverse-triangle}
%\begin{align}\label{step:three}
%\tikzfigc{00}
%\eq{\ref{lem:not-ug-is-symmetrical}}\tikzfigc{01}
%\eq{\ref{lem:n-four}}\tikzfigc{02}
%\eq{\ref{lem:k1}}\tikzfigc{03}
%\end{align}
%%\step \label{step:three-bis}
%\def\fig{anti-control-inverse-triangle-is-control-triangle-times-inverse-triangle-bis}
%\begin{align}\label{step:three-bis}
%\tikzfigc{00}
%\eq{\ref{lem:k1}}\tikzfigc{01}
%\eq{\ref{step:three}}\tikzfigc{02}
%\eq{\ref{lem:k1}}\tikzfigc{03}
%\end{align}
%\step \label{step:four}
\def\fig{anti-control-triangle-and-CNOT-commute}
\begin{align}\label{step:four}
\input{./figures/\fig/\fig_00.tikz}
\eq{\ref{lem:black-dot-swappable-outputs}}\input{./figures/\fig/\fig_01.tikz}
\eq{\bialg}\input{./figures/\fig/\fig_02.tikz}
\eq{\ref{lem:k1}}\input{./figures/\fig/\fig_03.tikz}\\\notag
\eq{\ref{lem:triangles-fork-absorbs-anti-CNOT}}\input{./figures/\fig/\fig_04.tikz}
\eq{\ref{lem:k1}}\input{./figures/\fig/\fig_05.tikz}
\end{align}
%\step \label{step:five}
\def\fig{anti-control-inverse-triangle-and-CNOT-commute}
\begin{align}\label{step:five}
\input{./figures/\fig/\fig_00.tikz}
\eq{\ref{lem:k1}}\input{./figures/\fig/\fig_01.tikz}
\eq{\ref{step:four}}\input{./figures/\fig/\fig_02.tikz}
\eq{\ref{lem:k1}}\input{./figures/\fig/\fig_03.tikz}
\end{align}
%\step \label{step:six}
\def\fig{control-not-ug-is-symmetrical-proof}
\begin{align}\label{step:six}
\input{./figures/\fig/\fig_00.tikz}
\eq{\ref{step:one}\\\ref{lem:black-dot-swappable-outputs}}\input{./figures/\fig/\fig_01.tikz}
\eq{\ref{lem:n-five}}\input{./figures/\fig/\fig_02.tikz}
\eq{\ref{step:three}\\\ref{lem:k1}\\\ref{lem:hopf}}\input{./figures/\fig/\fig_03.tikz}\\\notag
\eq{\ref{lem:k1}\\\ref{step:five}}\input{./figures/\fig/\fig_04.tikz}
\eq{\ref{lem:k1}\\\ref{step:three}}\input{./figures/\fig/\fig_05.tikz}
\eq{\ref{lem:hopf}\\\ref{step:two}}\input{./figures/\fig/\fig_06.tikz}
\end{align}
%\step \label{step:seven}
\def\fig{2-diagrams-of-control-triangle-proof}
\begin{align}\label{step:seven}
\input{./figures/\fig/\fig_00.tikz}
\eq{\ref{lem:k1}\\\ref{lem:not-triangle-is-symmetrical}}\input{./figures/\fig/\fig_01.tikz}
\eq{\bialg}\input{./figures/\fig/\fig_02.tikz}
\eq{}\input{./figures/\fig/\fig_03.tikz}
\eq{\bialg}\input{./figures/\fig/\fig_04.tikz}\\\notag
\eq{\ref{lem:looped-triangle}\\\ref{lem:black-dot-swappable-outputs}}\input{./figures/\fig/\fig_05.tikz}
\eq{\ref{lem:2-is-sqrt2-squared}\\\ref{lem:triangles-fork-absorbs-anti-CNOT}}\input{./figures/\fig/\fig_06.tikz}
\eq{\ref{lem:2-is-sqrt2-squared}\\\bialg}\input{./figures/\fig/\fig_07.tikz}
\eq{\ref{lem:looped-triangle}\\\ref{lem:black-dot-swappable-outputs}}\input{./figures/\fig/\fig_08.tikz}
\end{align}
Finally,
\def\fig{rule-5a-proof}
\begin{align*}
\input{./figures/\fig/\fig_00.tikz}~~\mapsto~~\input{./figures/\fig/\fig_01.tikz}
\eq{}\input{./figures/\fig/\fig_02.tikz}
\eq{\ref{lem:hopf}\\\ref{lem:2-is-sqrt2-squared}\\\ref{lem:control-pi-and-anti-CNOT-commute}}\input{./figures/\fig/\fig_03.tikz}
\eq{\ref{step:seven} \\\ref{lem:hopf}}\input{./figures/\fig/\fig_04.tikz}\\
\eq{\ref{step:six}}\input{./figures/\fig/\fig_05.tikz}
\eq{\bialg}\input{./figures/\fig/\fig_06.tikz}
\eq{\ref{step:seven}\\\ref{lem:hopf}}\input{./figures/\fig/\fig_07.tikz}~~\mapsfrom\!\!\input{./figures/\fig/\fig_08.tikz}=\input{./figures/\fig/\fig_09.tikz}
\end{align*}
\item $5b$:
\def\fig{rule-5b-proof}
\begin{align*}
\input{./figures/\fig/\fig_00.tikz}~~\underset{\ref{lem:arity-1-black-dot}}{\mapsto}~~\input{./figures/\fig/\fig_01.tikz}
\eq{\ref{lem:inverse}\\\cp}\input{./figures/\fig/\fig_02.tikz}
\eq{\ref{lem:red-state-on-triangle}}\input{./figures/\fig/\fig_03.tikz}
\eq{\ref{lem:inverse}\\\cp}\input{./figures/\fig/\fig_04.tikz}~~\underset{\ref{lem:arity-1-black-dot}}{\mapsfrom}~~\input{./figures/\fig/\fig_05.tikz}
\end{align*}
\item $5c$:
\def\fig{rule-5c-proof}
\begin{align*}
\input{./figures/\fig/\fig_00.tikz}~~\underset{\ref{lem:arity-1-black-dot}}{\mapsto}~~\input{./figures/\fig/\fig_01.tikz}
\eq{\ref{lem:2-is-sqrt2-squared}}\input{./figures/\fig/\fig_02.tikz}
\eq{\ref{lem:inverse}}\input{./figures/\fig/\fig_03.tikz}~~\mapsfrom~~\input{./figures/\fig/\fig_03.tikz}
\end{align*}
\item $5d$:
\def\fig{rule-5d-proof}
\begin{align*}
\input{./figures/\fig/\fig_00.tikz}~~\mapsto~~\input{./figures/\fig/\fig_01.tikz}
\eq{\ref{lem:hopf}}\input{./figures/\fig/\fig_02.tikz}
\eq{\id}\input{./figures/\fig/\fig_03.tikz}
\eq{\ref{lem:triangles-fork-absorbs-anti-CNOT}}\input{./figures/\fig/\fig_04.tikz}
\eq{\ref{lem:inverse}\\\ref{lem:hopf}}\input{./figures/\fig/\fig_05.tikz}\\
\eq{\ref{lem:pi-red-state-on-triangle}\\\ref{lem:inverse}}\input{./figures/\fig/\fig_06.tikz}
\eq{\ref{lem:inverse}\\\ref{lem:pi-green-state-on-upside-down-triangle}}\input{./figures/\fig/\fig_07.tikz}
\eq{\ref{lem:inverse}\\\cp}\input{./figures/\fig/\fig_08.tikz}~~\underset{\ref{lem:arity-1-black-dot}}{\mapsfrom}~~\input{./figures/\fig/\fig_09.tikz}
\end{align*}
\item $6a$: Thanks to the rule X we can get rid of \resizebox{!}{1em}{
%\InputIfFileExists{#1.tikz}{}{
\input{./figures/control-pi.tikz}%} % chktex 27
} induced by the crossing. Then,
\def\fig{rule-6a-proof-no-braid}
\begin{align*}
\input{./figures/\fig/\fig_00.tikz}~~\underset{\text{X}}{\mapsto}~~\input{./figures/\fig/\fig_01.tikz}
\eq{\bialg}\input{./figures/\fig/\fig_02.tikz}
\eq{}\input{./figures/\fig/\fig_03.tikz}
\eq{\ref{lem:parallel-triangles}}\input{./figures/\fig/\fig_04.tikz}~~\mapsfrom~~\input{./figures/\fig/\fig_05.tikz}
\end{align*}
\item $6b$ is exactly the copy rule \cp.
\item $6c$:
\def\fig{rule-6c-proof}
\begin{align*}
\input{./figures/\fig/\fig_00.tikz}~~\mapsto~~\input{./figures/\fig/\fig_01.tikz}
\eq{\ref{lem:inverse}\\\ref{lem:hopf}}\input{./figures/\fig/\fig_02.tikz}
\eq{\ref{lem:inverse}\\\cp\\\id}\input{./figures/\fig/\fig_03.tikz}
\eq{\ref{lem:red-state-on-triangle}}\input{./figures/\fig/\fig_04.tikz}~~\underset{\ref{lem:arity-1-black-dot}}{\mapsfrom}\input{./figures/\fig/\fig_05.tikz}
\end{align*}
\item $7a$:
\def\fig{rule-7a-proof-no-braid}
\begin{align*}
\input{./figures/\fig/\fig_00.tikz}~~\mapsto~~\input{./figures/\fig/\fig_01.tikz}
\eq[]{\s\\\h}\input{./figures/\fig/\fig_02.tikz}
\eq[]{\bialg}\input{./figures/\fig/\fig_03.tikz}
\eq[]{\h}\input{./figures/\fig/\fig_04.tikz}~~\mapsfrom~~\input{./figures/\fig/\fig_05.tikz}
\end{align*}
\item $7b$: using~\ref{lem:k1}, \h and \s:
\def\fig{rule-7b-proof-no-braid}
\begin{align*}
\input{./figures/\fig/\fig_00.tikz}~~\mapsto~~\input{./figures/\fig/\fig_01.tikz}
\eq{\ref{lem:k1}\\\h}\input{./figures/\fig/\fig_02.tikz}
~~\mapsfrom~~\input{./figures/\fig/\fig_04.tikz}
\end{align*}
\item R$_2$:
\def\fig{rule-r2-proof}
\begin{align*}
\input{./figures/\fig/\fig_00.tikz}~~\mapsto~~\input{./figures/\fig/\fig_01.tikz}
\eq{}\input{./figures/\fig/\fig_02.tikz}
\eq{\ref{lem:hopf}}\input{./figures/\fig/\fig_03.tikz}~~\mapsfrom~~\input{./figures/\fig/\fig_03.tikz}
\end{align*}
\item R$_3$:
\def\fig{rule-r3-proof}
\begin{align*}
\input{./figures/\fig/\fig_00.tikz}~~\mapsto~~\input{./figures/\fig/\fig_01.tikz}
\eq{}\input{./figures/\fig/\fig_02.tikz}
\eq{}\input{./figures/\fig/\fig_03.tikz}~~\mapsfrom~~\input{./figures/\fig/\fig_04.tikz}
\end{align*}
\item $iv$: using \id, \s,~\ref{lem:2-is-sqrt2-squared} and~\ref{lem:inverse},
\def\fig{rule-half-proof}
\begin{align*}
\input{./figures/\fig/\fig_00.tikz}~~\mapsto~~\input{./figures/\fig/\fig_01.tikz}
\eq{\id}\input{./figures/\fig/\fig_02.tikz}
\eq{}\input{./figures/\fig/\fig_03.tikz}
\eq{\ref{lem:2-is-sqrt2-squared}\\\ref{lem:inverse}}\input{./figures/\fig/\fig_04.tikz}
~~\mapsfrom~~\input{./figures/\fig/\fig_05.tikz}
\end{align*}

\subsection{Proof of Proposition~\ref{prop:left-inverse-ZX}}%
\label{prf:left-inverse-ZX}

%Let us write $\dblinterp{.}=\interpwx{\interpxw{.}}$.
We can show inductively that:
\[\zxct\vdash \left[[D]_W\right]_X\circ\left(
%\InputIfFileExists{#1.tikz}{}{
\input{./figures/top-composition.tikz}%} % chktex 27
\right) = D\otimes \left(
%\InputIfFileExists{#1.tikz}{}{
\input{./figures/theta.tikz}%} % chktex 27
\right)\]
which is the expression of Proposition~\ref{prop:left-inverse}.
%\item \tikzfig{ug-node}: Using \so, \st,~\ref{lem:looped-triangle},
%\[ \dblinterp{\tikzfig{ug-node}} = \interpwx{~\tikzfig{u-diagram-interpretation}~} = \tikzfig{u-diag-double-interpretation} \]
%from which we easily derive $\zxt\vdash \dblinterp{\tikzfig{ug-node}}\hspace{0em}\circ\left(\tikzfig{top-composition-1}\right) ~=~~ \tikzfig{ug-node}\hspace{0.1em}\raisebox{-0.7em}{\tikzfig{theta}}$\\
\item The result is obvious for the generators 
%\InputIfFileExists{#1.tikz}{}{
\input{./figures/empty-diagram.tikz}%} % chktex 27
, 
%\InputIfFileExists{#1.tikz}{}{
%} % chktex 27
, 
%\InputIfFileExists{#1.tikz}{}{
\input{./figures/crossing.tikz}%} % chktex 27
, 
%\InputIfFileExists{#1.tikz}{}{
%} % chktex 27
, and 
%\InputIfFileExists{#1.tikz}{}{
%} % chktex 27
.
\item 
%\InputIfFileExists{#1.tikz}{}{
%} % chktex 27
:
\def\fig{hadamard-double-interpretation-cliff-t}
\begin{align*}
\left[~
%\InputIfFileExists{#1.tikz}{}{
\input{./figures/hadamard-interpretation-2-no-braid.tikz}%} % chktex 27
~\right]_X\hspace{-1em} \eq{} \input{./figures/\fig/\fig_00.tikz}
\eq{\ref{lem:inverse}}\input{./figures/\fig/\fig_01.tikz}
\eq{\h\\\ref{lem:k1}}\input{./figures/\fig/\fig_02.tikz}
\end{align*}
and, using \s, \hd,~\ref{lem:inverse}, \h,~\ref{lem:hopf} and~\ref{lem:green-state-pi_2-is-red-state-minus-pi_2}:
\def\fig{sqrt2-double-interp-times-theta}
\begin{align*}
\input{./figures/\fig/\fig_00.tikz}
\eq{\picom\\\id\\\hd}\input{./figures/\fig/\fig_01.tikz}
\eq{\h}\input{./figures/\fig/\fig_02.tikz}
\eq{\ref{lem:inverse}\\\ref{lem:hopf}}\input{./figures/\fig/\fig_03.tikz}
\eq{\ref{lem:green-state-pi_2-is-red-state-minus-pi_2}}\input{./figures/\fig/\fig_04.tikz}
\end{align*}
Hence $ \zxct\vdash \left[\left[
%\InputIfFileExists{#1.tikz}{}{
%} % chktex 27
\right]_W\right]_X\hspace{0em}\circ\left(
%\InputIfFileExists{#1.tikz}{}{
\input{./figures/top-composition-1.tikz}%} % chktex 27
\right) ~=~~ 
%\InputIfFileExists{#1.tikz}{}{
%} % chktex 27
\hspace{0.1em}\raisebox{-0.7em}{
%\InputIfFileExists{#1.tikz}{}{
\input{./figures/theta.tikz}%} % chktex 27
}$
\item $\left[~
%\InputIfFileExists{#1.tikz}{}{
\input{./figures/gn-0-1-m-interpretation.tikz}%} % chktex 27
~~\right]_X\circ\left(
%\InputIfFileExists{#1.tikz}{}{
\input{./figures/top-composition-1.tikz}%} % chktex 27
\right) = 
%\InputIfFileExists{#1.tikz}{}{
\input{./figures/gn-0-1-m-theta.tikz}%} % chktex 27
$
\vspace{0.5em}
\item $\left[~
%\InputIfFileExists{#1.tikz}{}{
\input{./figures/gn-0-n-1-interpretation.tikz}%} % chktex 27
~~\right]_X\circ\left(
%\InputIfFileExists{#1.tikz}{}{
\input{./figures/top-composition.tikz}%} % chktex 27
\right) = 
%\InputIfFileExists{#1.tikz}{}{
\input{./figures/gn-0-n-1-theta.tikz}%} % chktex 27
$
\vspace{0.5em}
\item 
%\InputIfFileExists{#1.tikz}{}{
%} % chktex 27
: % chktex 8
\def\fig{gn-pi_4-double-interpretation}
\begin{align*}
\left[
%\InputIfFileExists{#1.tikz}{}{
\input{./figures/gn-pi_4-interpretation-2-no-braid.tikz}%} % chktex 27
\right]_X \!\!\!\!\eq{} \input{./figures/\fig/\fig_00.tikz}
\eq{\ref{lem:inverse}\\\h\\\ref{lem:k1}\\\cp}\input{./figures/\fig/\fig_01.tikz}
\end{align*}
But:
\def\fig{gn-state-pi_2-on-CRG-is-control-pi_2}
\begin{align*}
\input{./figures/\fig/\fig_00.tikz}
\eq{\ref{def:triangle}}\input{./figures/\fig/\fig_01.tikz}
\eq{\ref{lem:supp-to-minus-pi_4}\\\ref{lem:euler-decomp-with-scalar}}\input{./figures/\fig/\fig_02.tikz}
\eq{\h}\input{./figures/\fig/\fig_03.tikz}\\
\eq{\ref{lem:C1-original}}\input{./figures/\fig/\fig_04.tikz}
\eq{\h\\\cp\\\id\\\picom}\input{./figures/\fig/\fig_05.tikz}
\end{align*}
So that:
\def\fig{gn-pi_4-double-interp-times-theta}
\begin{align*}
\input{./figures/\fig/\fig_00.tikz}
\eq{\hd}\input{./figures/\fig/\fig_01.tikz}
\eq{\h\\\ref{lem:hopf}\\\ref{lem:not-triangle-is-symmetrical}}\input{./figures/\fig/\fig_02.tikz}
\eq{}\input{./figures/\fig/\fig_03.tikz}\\
\eq{\bialg}\input{./figures/\fig/\fig_04.tikz}
\eq{}\input{./figures/\fig/\fig_05.tikz}
\eq{\cp}\input{./figures/\fig/\fig_06.tikz}
\end{align*}
which means $ \zxct\vdash \left[\left[
%\InputIfFileExists{#1.tikz}{}{
%} % chktex 27
\right]_W\right]_X\circ\left(
%\InputIfFileExists{#1.tikz}{}{
\input{./figures/top-composition-1.tikz}%} % chktex 27
\right) ~=~~ 
%\InputIfFileExists{#1.tikz}{}{
%} % chktex 27
\hspace{0.1em}\raisebox{-0.7em}{
%\InputIfFileExists{#1.tikz}{}{
\input{./figures/theta.tikz}%} % chktex 27
}$
\item $D_1\circ D_2$:

\noindent
It is to be noticed that $[D_1\circ D_2]_X=[D_1]_X\circ [D_2]_X$ and $[D_1\otimes D_2]_X=[D_1]_X\otimes [D_2]_X$.

\noindent
Let us write $\omega = 
%\InputIfFileExists{#1.tikz}{}{
\input{./figures/theta.tikz}%} % chktex 27
$. Then:
\begin{align*}
\zxct\vdash \left[[D_1\circ D_2]_W\right]_X\circ(\mathbb{I}\otimes \omega)
&= \left[[D_1]_W\right]_X\circ \left[[D_2]_W\right]_X\circ(\mathbb{I}\otimes \omega)\\
&=  \left[[D_1]_W\right]_X\circ (D_2\otimes \omega) \\&= \left[[D_1]_W\right]_X\circ(\mathbb{I}\otimes \omega)\circ D_2 = (D_1\otimes\omega)\circ D_2 \\
&= (D_1\circ D_2)\otimes \omega
\end{align*}
\item $D_1\otimes D_2$:
\def\fig{interp-tensor-product-times-theta-graphical}
\begin{align*}
\zxct\vdash&~~\left[[D_1\otimes D_2]_W\right]_X\circ(\mathbb{I}\otimes \omega)
\eq{}\input{./figures/\fig/\fig_00.tikz}\\
&\eq{}\input{./figures/\fig/\fig_01.tikz}
\eq{}\input{./figures/\fig/\fig_02.tikz}\\
&\eq{}\input{./figures/\fig/\fig_03.tikz}
\eq{}\input{./figures/\fig/\fig_04.tikz}
\eq{}D_1\otimes D_2\otimes\omega
\end{align*}
By compositions, for any diagram $D$, $\zxct\vdash \left[[D]_W\right]_X\circ(\mathbb{I}\otimes \omega) = D\otimes\omega$. Then, using Lemmas~\ref{lem:bicolor-0-alpha} and~\ref{lem:inverse}:
\begin{align*}
\forall D\in \zxct,\quad
\zxct\vdash \left(\hspace{-0.2em}\scalebox{0.8}{
%\InputIfFileExists{#1.tikz}{}{
\input{./figures/bottom-composition.tikz}%} % chktex 27
}\hspace{-0.3em}\right)\circ\left[[D]_W\right]_X\circ\left(\hspace{-0.2em}\scalebox{0.8}{
%\InputIfFileExists{#1.tikz}{}{
\input{./figures/top-composition.tikz}%} % chktex 27
}\hspace{-0.3em}\right)=D
\end{align*}
\end{itemize}

%========================================

\section{Linear Diagrams}

\subsection{Proof of Lemma~\ref{lem:alphas-on-X}}%
\label{prf:alphas-on-X}
%\begin{proof}[Proof of Lemma~\ref{lem:alphas-on-X}]
\def\fig{second-matrix-gn-alpha-proof-arxiv}
\begin{align*}
\input{./figures/\fig/\fig_00.tikz}
\eq{\h\\\ref{lem:euler-decomp-with-scalar}}\input{./figures/\fig/\fig_01.tikz}
\eq{\bialg}\input{./figures/\fig/\fig_02.tikz}\\
\eq{\h}\input{./figures/\fig/\fig_03.tikz}
\eq{\ref{lem:C1-original}}\input{./figures/\fig/\fig_04.tikz}
\eq{\h}\input{./figures/\fig/\fig_05.tikz}\\
\eq{\h\\\ref{lem:hopf}}\input{./figures/\fig/\fig_06.tikz}
\eq{\ref{lem:green-state-pi_2-is-red-state-minus-pi_2}\\\picom\\\ref{lem:multiplying-global-phases}}\input{./figures/\fig/\fig_07.tikz}
\eq{\ref{lem:supp-to-minus-pi_4}\\\ref{lem:multiplying-global-phases}}\input{./figures/\fig/\fig_08.tikz}
\eq{\id\\\ref{lem:hopf}}\input{./figures/\fig/\fig_09.tikz}
\end{align*}
\qed%\end{proof}

\subsection{Details of the Proof for Corollary~\ref{cor:distribution}}%
%\begin{proof}[Corollary~\ref{cor:distribution}]
%\phantomsection
\label{prf:distribution}
We first plug the basis $\left(\rx{}, \rx{$\pi$}\right)$ in the input:
\begin{itemize}
\item \rx{}
\begin{itemize}
\item Left hand side:
\def\fig{add-axiom-2-l-0-arxiv}
\begin{align*}
\input{./figures/\fig/\fig_00.tikz}
\eq{\cp\\\ref{lem:bicolor-0-alpha}}\input{./figures/\fig/\fig_01.tikz}
\eq{\id\\\ref{lem:green-state-pi_2-is-red-state-minus-pi_2}\\\id\\\ref{lem:2-is-sqrt2-squared}\\\ref{lem:inverse}}\input{./figures/\fig/\fig_02.tikz}\\
\eq{\bialg}\input{./figures/\fig/\fig_03.tikz}
\eq{\supp\\\id}\input{./figures/\fig/\fig_04.tikz}
\eq{\ref{lem:inverse}\\\cp}\input{./figures/\fig/\fig_05.tikz}
\end{align*}
\item Right hand side:
\def\fig{add-axiom-2-r-0}
\begin{align*}
\input{./figures/\fig/\fig_00.tikz}
\eq{\vdots}\input{./figures/\fig/\fig_01.tikz}
\eq{\cp\\\id\\\ref{lem:2-is-sqrt2-squared}\\\ref{lem:inverse}}\input{./figures/\fig/\fig_02.tikz}
\end{align*}
where the first step was already developed for the left hand side.
\end{itemize}
The resulting two diagrams are equal when \rx{} is plugged.
\item \rx{$\pi$}
\begin{itemize}
\item Left hand side:
\def\fig{add-axiom-2-l-pi}
\begin{align*}
\input{./figures/\fig/\fig_00.tikz}
\eq{\picom\\\ref{lem:k1}\\\cp}\input{./figures/\fig/\fig_01.tikz}
\eq{\picom\\\ref{lem:multiplying-global-phases}}\input{./figures/\fig/\fig_02.tikz}
\eq{\ref{lem:hopf}\\\ref{lem:bicolor-0-alpha}\\\ref{lem:inverse}}\input{./figures/\fig/\fig_03.tikz}
\end{align*}
\item Right hand side:
\def\fig{add-axiom-2-r-pi}
\begin{align*}
\input{./figures/\fig/\fig_00.tikz}
\eq{\vdots}\input{./figures/\fig/\fig_01.tikz}
\eq{\picom\\\ref{lem:k1}\\\ref{lem:multiplying-global-phases}}\input{./figures/\fig/\fig_02.tikz}
\end{align*}
where again the first step was already developed for the left hand side.

Now we could have concluded directly with the help of Corollaries~\ref{cor:gen-supp} and~\ref{cor:big-scalar-equation}. For the sake of the example, though, we are going to plug our basis on, say, the left hanging branch:
\begin{itemize}
\item \rx{}
\def\fig{add-axiom-2-r-pi-0}
\begin{align*}
\input{./figures/\fig/\fig_00.tikz}
\eq{\ref{lem:inverse}\\\cp}\input{./figures/\fig/\fig_01.tikz}
\eq{\picom\\\ref{lem:multiplying-global-phases}\\\id\\\ref{lem:2-is-sqrt2-squared}}\input{./figures/\fig/\fig_03.tikz}
\end{align*}
\item \rx{$\pi$}
\def\fig{add-axiom-2-r-pi-pi}
\begin{align*}
\input{./figures/\fig/\fig_00.tikz}
\eq{\ref{lem:k1}\\\ref{lem:inverse}\\\cp}\input{./figures/\fig/\fig_01.tikz}
\eq{\picom\\\ref{lem:multiplying-global-phases}\\\id\\\ref{lem:2-is-sqrt2-squared}}\input{./figures/\fig/\fig_03.tikz}
\end{align*}
\end{itemize}
\end{itemize}
\end{itemize}
Hence, the two initial diagrams result in the same diagram when the basis is applied. Thanks to Theorem~\ref{thm:basis}, the ZX-Calculus proves the equality between the two initial diagrams.
\qed%\end{proof}

\subsection{Details of the Proof for Corollary~\ref{cor:big-scalar-equation}}%
%\begin{proof}[Corollary~\ref{cor:big-scalar-equation}]
%\phantomsection
\label{prf:big-scalar-equation}\hfill
\begin{itemize}
\item $\alpha=0$:
	\begin{itemize}
	\item Left hand side:
		\def\fig{big-scalar-equation-l-0}
		\begin{align*}
		\input{./figures/\fig/\fig_00.tikz}
		\eq{\ref{lem:inverse}\\\cp}\input{./figures/\fig/\fig_01.tikz}
		\eq{\ref{lem:2-is-sqrt2-squared}\\\ref{lem:inverse}}\input{./figures/\fig/\fig_02.tikz}
		\end{align*}
	\item Right hand side:
		\def\fig{big-scalar-equation-r-0}
		\begin{align*}
		\input{./figures/\fig/\fig_00.tikz}
		\eq{\picom\\\ref{lem:multiplying-global-phases}}\input{./figures/\fig/\fig_01.tikz}
		\end{align*}
	\end{itemize}
\item $\alpha=\pi$:
	\begin{itemize}
	\item Left hand side:
		\def\fig{big-scalar-equation-l-pi-arxiv}
		\begin{align*}
		\input{./figures/\fig/\fig_00.tikz}
		\eq{\picom\\\ref{lem:k1}\\\ref{lem:inverse}\\\cp}\input{./figures/\fig/\fig_01.tikz}
		\eq{\ref{lem:2-is-sqrt2-squared}\\\ref{lem:inverse}}\input{./figures/\fig/\fig_02.tikz}
		\end{align*}
	\item Right hand side:
		\def\fig{big-scalar-equation-r-pi}
		\begin{align*}
		\input{./figures/\fig/\fig_00.tikz}
		\eq{\picom\\\ref{lem:multiplying-global-phases}}\input{./figures/\fig/\fig_01.tikz}
		\end{align*}
	\end{itemize}
\item $\alpha=\frac{\pi}{2}$:
	\begin{itemize}
	\item Left hand side:
		\def\fig{big-scalar-equation-l-pi_2}
		\begin{align*}
		\input{./figures/\fig/\fig_00.tikz}
		\eq{\ref{lem:inverse}\\\picom}\input{./figures/\fig/\fig_01.tikz}
		\eq{\supp}\input{./figures/\fig/\fig_02.tikz}
		\eq{\id}\input{./figures/\fig/\fig_03.tikz}
		\end{align*}
	\item Right hand side:
		\def\fig{big-scalar-equation-r-pi_2-arxiv}
		\begin{align*}
		\input{./figures/\fig/\fig_00.tikz}
		\eq{\picom\\\ref{lem:multiplying-global-phases}\\\cp}\input{./figures/\fig/\fig_01.tikz}
		\eq{\supp}\input{./figures/\fig/\fig_02.tikz}
		\eq{\ref{lem:inverse}\\\cp}\input{./figures/\fig/\fig_03.tikz}
		\end{align*}
	\end{itemize}
\end{itemize}
The results are the same for three different values of $\alpha$. This is enough to get the equation in Corollary~\ref{cor:big-scalar-equation}, according to Theorem~\ref{thm:valuations}.
\qed%\end{proof}

%\subsection{Rules of the ZW-Calculus}
%\label{sec:rules-zw}
%~
%\begin{center}
%\def\scale{0.9}
%\scalebox{\scale}{\tikzfig{ZW-rule-0-no-braid}}\\~\\~\\
%\scalebox{\scale}{\tikzfig{ZW-rule-1}} \\~\\~\\
%\scalebox{\scale}{\tikzfig{ZW-rule-2-no-braid}}\\~\\~\\
%\scalebox{\scale}{\tikzfig{ZW-rule-3}} \\~\\~\\
%\scalebox{\scale}{\tikzfig{ZW-rule-4}} \\~\\~\\
%\scalebox{\scale}{\tikzfig{ZW-rule-5-no-braid}} \\~\\~\\
%\scalebox{\scale}{\tikzfig{ZW-rule-6-a-b}} \\~\\~\\
%\scalebox{\scale}{\tikzfig{ZW-rule-6-c}} \\~\\~\\
%\scalebox{\scale}{\tikzfig{ZW-rule-7-no-braid}}\\~\\~\\
%\scalebox{\scale}{\tikzfig{ZW-rule-X-no-braid}}\\~\\~\\
%\scalebox{\scale}{\tikzfig{reidmeister-3}}
%\end{center}

\section{General ZX-Calculus}

\subsection{Proof of Proposition~\ref{prop:double-interpretation-equivalence-1}}%
\label{prf:double-interp-eq}
%\begin{proof}[Proposition~\ref{prop:double-interpretation-equivalence-1}]
The result is obvious for cups, caps, single wires, empty diagrams and swaps. Moreover, if we have the result for green dots and the Hadamard gate, then we also have it for red dots by construction.\\
For green dots, since $n=\max\left(0,\left\lceil \log_2(1)\right\rceil\right)=0$, $\beta = \gamma = 0$:
\def\fig{gn-alpha-double-interp}
\begin{align*}
\input{./figures/\fig/\fig_00.tikz}
~~\mapsto~~\input{./figures/\fig/\fig_01.tikz}
~~\mapsto~~\input{./figures/\fig/\fig_02.tikz}
\eq{\ref{lem:inverse}\\\cp\\\ref{lem:2-is-sqrt2-squared}}\input{./figures/\fig/\fig_03.tikz}
\end{align*}
%\[\tikzfig{gn-alpha-double-interp}\]
For Hadamard, first notice:
\def\fig{1-over-sqrt2-ZW-to-ZX}
\begin{align*}
\input{./figures/\fig/\fig_00.tikz}
~~\mapsto~~\input{./figures/\fig/\fig_01.tikz}
\eq{\ref{lem:inverse}\\\cp\\\ref{lem:2-is-sqrt2-squared}}\input{./figures/\fig/\fig_02.tikz}
\eq{\ref{lem:inverse}\\\picom}\input{./figures/\fig/\fig_03.tikz}\\
\eq{\ref{lem:gn-pi_2-0-0-equals-sqrt2-exp-pi_4}}\input{./figures/\fig/\fig_04.tikz}
\eq{\ref{lem:multiplying-global-phases}\\\ref{lem:bicolor-0-alpha}\\\ref{lem:inverse}}\input{./figures/\fig/\fig_05.tikz}
\end{align*}
%\[\tikzfig{1-over-sqrt2-ZW-to-ZX}\]
since $n=0$, $\beta=\arccos{1/\sqrt{2}}=\pi/4$, $\gamma=\arccos{1}=0$. Finally:
\def\fig{hadamard-double-interpretation}
\begin{align*}
\input{./figures/\fig/\fig_00.tikz}\quad\mapsto\quad\input{./figures/\fig/\fig_01.tikz}\quad\mapsto\quad
\input{./figures/\fig/\fig_02.tikz}\eq[\quad]{\s\\\id\\\ref{lem:inverse}}\input{./figures/\fig/\fig_03.tikz}
\end{align*}
\qed%\end{proof}

\subsection{Lemmas for The General ZX-Calculus}

\begin{lem}%
\label{lem:prod-cos}
\[\left(\zxct+\add\right)\vdash
%\InputIfFileExists{#1.tikz}{}{
\input{./figures/add-axiom.tikz}%} % chktex 27
\]
\end{lem}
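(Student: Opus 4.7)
The identity in question, as its label \emph{prod-cos} suggests, is a diagrammatic product-of-cosines relation, and its natural home is the rewriting system $\zxct + \add$, since \add is precisely the axiom that manipulates the controlled-cosine pattern $\tikzfig{control-cos}$ analysed in Section \ref{sec:new-axioms}. My plan is to reduce the statement to one or two applications of \add, followed by routine Clifford+T rewriting.

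First I would expose the controlled-cosine sub-diagrams that appear on each side of the equation by rewriting around the red/green nodes using the Euler decomposition (Lemma \ref{lem:euler-decomp-with-scalar}), the bialgebra rule \bialg, and standard Hadamard manipulations, so that two parallel copies of $\tikzfig{control-cos}$ (one for $\alpha$, one for $\beta$) appear in exactly the configuration required by the top of \add. A single application of \add then converts this parallel product into a single controlled cosine whose argument combines $\alpha$ and $\beta$, at the cost of a normaliser factor $\piq{}$\tikzfig{normaliser-part-of-A} that I will carry along explicitly. The right-hand side is reduced in the symmetric way, so that both sides eventually share the same controlled-cosine sub-diagram with matching phase and amplitude parameters.

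Once both sides have been reduced to the same controlled-cosine normal form, the remaining residual is a purely Clifford+T diagram that is linear in $\alpha$ and $\beta$ with constants in $\tfrac{\pi}{4}\mathbb{Z}$. Its provability follows either directly from Theorem \ref{thm:lin-diag}, or more practically from Theorem \ref{thm:valuations} applied at finitely many values (for instance $\alpha,\beta\in\{0,\tfrac{\pi}{2},\pi\}$), combined with scalar clean-up via Lemmas \ref{lem:2-is-sqrt2-squared}, \ref{lem:multiplying-global-phases}, and \ref{lem:inverse}. Corollary \ref{cor:distribution} is also useful here to push phase factors past the controlled-cosine patterns before invoking \add.

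The main obstacle I expect is pure bookkeeping: the normaliser generated by \add mixes with the ambient scalars introduced by the Euler decomposition, and aligning the two sides so that \add applies in its \emph{forward} direction (rather than requiring an inverse application that would introduce fresh existentially quantified angles $\theta_3,\gamma$) requires a careful choice of where to insert Hadamards and colour flips. If this alignment proves delicate, a fallback is to invoke \add twice --- once on each side --- so that both reductions land in the same controlled-cosine shape, avoiding the need to run \add backwards.
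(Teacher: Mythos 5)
Your plan has a genuine gap that would make the main step fail. You propose to expose two controlled-cosine sub-diagrams, \emph{one for $\alpha$ and one for $\beta$}, in the configuration that \add expects, and then apply \add once to merge them. But \add is an \emph{addition} axiom: as the discussion in Section~\ref{sec:new-axioms} explains, when two controlled scalars $\begin{pmatrix}1\\x\end{pmatrix}$, $\begin{pmatrix}1\\y\end{pmatrix}$ are plugged into the middle pattern of \add, the result carries $\tfrac{1}{\sqrt 2}(x+y)$. With $x=\cos\alpha$ and $y=\cos\beta$ you would therefore obtain a controlled scalar proportional to $\cos\alpha+\cos\beta$, not $\cos\alpha\cos\beta$, which is what a lemma named \emph{prod-cos} needs. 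The missing idea is the product-to-sum identity $\cos\alpha\cos\beta=\tfrac12\bigl(\cos(\alpha-\beta)+\cos(\alpha+\beta)\bigr)$: to make \add produce a product, you must first \emph{re-parameterize} the diagram so that the two controlled cosines carry the linear angles $\alpha-\beta$ and $\alpha+\beta$, and only then apply \add.

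This also means you have the order of Theorem~\ref{thm:lin-diag} and \add reversed. In the paper's proof the linear-diagram completeness is invoked \emph{first}, to pass from a diagram linear in $(\alpha,\beta)$ to the sound, equally linear reparameterization in $(\alpha-\beta,\alpha+\beta)$; then \add is applied once, and the $\gamma$ it produces satisfies $\cos\gamma=\tfrac12(\cos(\alpha-\beta)+\cos(\alpha+\beta))=\cos\alpha\cos\beta$; finally a short scalar clean-up with Lemmas~\ref{lem:supp-to-minus-pi_4}, \ref{lem:multiplying-global-phases}, \ref{lem:bicolor-0-alpha}, \ref{lem:inverse} finishes. In your ordering, by contrast, the ``residual'' left after applying \add involves the angle $\gamma$ which depends on $\alpha,\beta$ \emph{non-linearly}, so it is not a legal instance of Theorem~\ref{thm:lin-diag} or Theorem~\ref{thm:valuations}; neither the single-application nor the double-application fallback you describe avoids this, since the obstruction is the sum-versus-product mismatch, not alignment of the pattern.
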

\begin{proof}
\def\fig{add-axiom-3-to-A1}
\begin{align*}
\left(\zxct+\add\right)\vdash~~\input{./figures/\fig/\fig_00.tikz}
\eq{Thm~\ref{thm:lin-diag}}\input{./figures/\fig/\fig_01.tikz}\\
\eq{\add}\input{./figures/\fig/\fig_02.tikz}
\eq{\ref{lem:supp-to-minus-pi_4}\\\ref{lem:multiplying-global-phases}\\\ref{lem:bicolor-0-alpha}\\\ref{lem:inverse}}\input{./figures/\fig/\fig_03.tikz}
\end{align*}
%\[\left(\zxct+\add\right)\vdash\tikzfig{add-axiom-3-to-A1}\]
where $\cos(\gamma) = \frac{1}{2}(\cos(\alpha-\beta)+ \cos(\alpha+\beta)) = \cos(\alpha)\cos(\beta)$.
\end{proof}

\begin{lem}%
\label{lem:add-bis}
We can deduce an equality similar to the rule \add:
\[\left(\zxct+\add\right)\vdash
%\InputIfFileExists{#1.tikz}{}{
\input{./figures/add-axiom-3-simplified-2.tikz}%} % chktex 27
\]
\end{lem}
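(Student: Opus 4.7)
My plan is to derive Lemma \ref{lem:add-bis} as a consequence of the rule \add together with the linearity result of Theorem \ref{thm:lin-diag}. Since the hypothesis ``similar to the rule \add'' suggests the target equation involves the same sort of non-linear trigonometric identity among its parameters (some combination like $e^{i\theta_1}\cos(\alpha)+e^{i\theta_2}\cos(\beta) = 2e^{i\theta_3}\cos(\gamma)$), the strategy is to reduce the target equation to the already-proven \add by massaging both sides with derivable equalities.

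First I would identify a diagrammatic transformation that turns the left-hand side of the new equation into (a tensor/compose with scalars of) the left-hand side of \add. Concretely, I would pre- or post-compose both sides with appropriate Clifford+T patterns --- for instance $\pi$-copy gadgets, $H$-conjugation, or applications of Lemma \ref{lem:euler-decomp-with-scalar}, Lemma \ref{lem:green-state-pi_2-is-red-state-minus-pi_2}, and Lemma \ref{lem:bicolor-0-alpha} --- so that the non-linear part of the equation is exactly the controlled-cosine pattern that \add manipulates. All of this bookkeeping between the two forms is linear in the angle parameters with constants in $\frac{\pi}{4}\mathbb{Z}$, so whenever an auxiliary identity comes up it is discharged by a direct appeal to Theorem \ref{thm:lin-diag}, as was done in the first step of the proof of Lemma \ref{lem:prod-cos}.

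Once the left-hand side of the target equation has been rewritten in the canonical shape of \add's left-hand side, I would invoke \add to replace it with the corresponding right-hand side, and then use the same family of linear identities (Clifford+T rules and the scalar lemmas \ref{lem:multiplying-global-phases}, \ref{lem:inverse}, \ref{lem:supp-to-minus-pi_4}) to transport that expression to the right-hand side of Lemma \ref{lem:add-bis}. The fact that the non-linear constraint on the angles is preserved under these transformations is guaranteed by soundness: throughout the derivation we only introduce Clifford+T pieces with known semantics, so the relation between $\alpha$, $\beta$, $\gamma$, $\theta_1$, $\theta_2$, $\theta_3$ required by \add remains the same relation required by Lemma \ref{lem:add-bis}.

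The main obstacle I anticipate is \emph{not} the non-linear step --- that is a single application of \add --- but rather finding the right sequence of Clifford+T rewrites that aligns the two diagrams so that \add is directly applicable. The rule \add is very rigid in its shape, so any discrepancy (extra $H$-boxes, different distribution of scalars, or a $\pi$-shifted angle) has to be removed first. In practice I would handle this by extracting the common controlled-cosine subdiagram (as in the analysis after Lemma \ref{lem:prod-cos}), routing the remaining part through Theorem \ref{thm:lin-diag}, and only then applying \add --- mirroring exactly the proof pattern used for Lemma \ref{lem:prod-cos}.
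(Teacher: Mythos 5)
Your high-level strategy is on the right track --- rewrite the target so \add can be invoked, pad with linear Clifford+T rewrites, use soundness and Theorem \ref{thm:lin-diag} to discharge the auxiliary identities --- but your central claim that the non-linear content ``is a single application of \add'' does not match what actually happens, and the missing ingredient is not merely bookkeeping.

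Look at what the paper does in its proof of Lemma \ref{lem:add-bis}: after an initial block of linear rewrites (\callrule{cliff-T-rules}{SUP}-derived Lemma \ref{lem:supp-to-minus-pi_4}, \picom, Lemma \ref{lem:multiplying-global-phases}), it applies Lemma \ref{lem:prod-cos} \emph{twice}, and only then applies \add directly. Lemma \ref{lem:prod-cos} is itself a non-linear cosine-product identity derived from \add, so the effective content is three applications of the non-linear axiom, not one. The two intermediate invocations are needed to rescale the cosine constraint: the proof explicitly passes through angles $\gamma'$ and $\gamma''$ satisfying $\cos(\gamma')=\sqrt{2}\cos(\gamma)$ and $\cos(\gamma'')=2\cos(\gamma)$, which is exactly what Lemma \ref{lem:prod-cos} provides (multiplying cosines by $\cos(\pi/4)$). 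This rescaling is where your argument breaks down: such a change to the non-linear parameter relation is \emph{not} achievable by the linear rewrites that Theorem \ref{thm:lin-diag} justifies, because the intermediate angles $\gamma'$, $\gamma''$ are not in $\frac{\pi}{4}\mathbb{Z}$ and their relation to $\gamma$ is trigonometric, not affine. Soundness tells you the semantics is preserved along the chain, but it does not tell you that the syntactic rescaling is derivable --- that is precisely what Lemma \ref{lem:prod-cos} (and through it, \add) is doing. So your plan to ``route the remaining part through Theorem \ref{thm:lin-diag}, and only then apply \add'' once would leave you stuck at a diagram whose scalar parameter is off by a factor of $\sqrt{2}$ or $2$ in the cosine, with no linear rule to fix it. The correct route requires you to first establish and then repeatedly use the cosine-product lemma as a rescaling tool before the final \add step.
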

\begin{proof}
\def\fig{add-axiom-3-simp-2-proof}
\begin{align*}
\left(\zxct+\add\right)\vdash~~\input{./figures/\fig/\fig_00.tikz}
\eq{\ref{lem:supp-to-minus-pi_4}\\\picom\\\ref{lem:multiplying-global-phases}}\input{./figures/\fig/\fig_01.tikz}
\eq{\ref{lem:prod-cos}}\input{./figures/\fig/\fig_02.tikz}\\
\eq{\ref{lem:prod-cos}}\input{./figures/\fig/\fig_03.tikz}
\eq{\picom\\\ref{lem:multiplying-global-phases}}\input{./figures/\fig/\fig_04.tikz}
\eq{\add}\input{./figures/\fig/\fig_05.tikz}
\end{align*}
%\[\left(\zxct+\add\right)\vdash\tikzfig{add-axiom-3-simp-2-proof}\]
where $\cos(\gamma')=\frac{\cos(\gamma)}{\cos(\frac{\pi}{4})}=\sqrt{2}\cos(\gamma)$ and $\cos(\gamma'')=\sqrt{2}\cos(\gamma')=2\cos(\gamma)$. We end up with the right part of the rule \add, and applying the rule with $\cos(\gamma'')$ gives the wanted condition on the angles.
\end{proof}

\begin{lem}%
\label{lem:white-dot-to-zx-generalised-form}~\\
Let $\rho\in\mathbb{R}+$. Then, for any $n_1,n_2\geq\max\left(0,\left\lceil \log_2(\rho)\right\rceil\right)$:
\[\left(\zxct+\add\right)\vdash~~\scalebox{0.91}{
%\InputIfFileExists{#1.tikz}{}{
\input{./figures/ZW-white-dot-to-ZX-generalised.tikz}%} % chktex 27
}\]
\end{lem}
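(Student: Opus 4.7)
\smallskip

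\noindent The plan is to prove the lemma by induction on $|n_2-n_1|$, the base case $n_2 = n_1$ being trivial, so that I only need to justify a single step $n \leadsto n+1$. Concretely, fixing a decomposition $\rho e^{i\theta} = 2^n\cos(\beta)e^{i\theta} = 2^{n+1}\cos(\beta')e^{i\theta}$ with $n \geq \max(0,\lceil\log_2\rho\rceil)$, the constraint forces $\cos(\beta) = 2\cos(\beta')$ (note that both $\beta,\beta'\in[0,\pi/2]$ are well defined since $\rho \leq 2^n$). The rest of the proof is a scalar identity: after observing that the two diagrams produced by $[\cdot]_X$ share the same skeleton and the same phase part $e^{i\theta}$, the only difference is that one has an extra copy of the $\sqrt 2$ scalar attached and the cosine parameter $\beta$ replaced by $\beta'$.

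\smallskip

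\noindent It then suffices to show, as a provable scalar identity in $\zxct+\add$, that
\[ \tikzfig{control-cos} \text{ with parameter }\beta \;=\; \sqrt 2 \cdot \tikzfig{control-cos}\text{ with parameter }\beta'\]
(each side read as the corresponding ZX-scalar), where $\cos(\beta) = 2\cos(\beta')$. First I would rewrite the RHS as the top half of rule \add instantiated at $\alpha=\beta=\beta'$ and $\theta_1=\theta_2=\theta$, which produces two parallel copies of the controlled cosine with parameter $\beta'$, plugged into the small normaliser pattern \tikzfig{normaliser-part-of-A}. Applying \add then collapses this into a single controlled cosine with some parameter $\gamma$ satisfying $2e^{i\theta}\cos(\gamma) = 2e^{i\theta}\cos(\beta')+2e^{i\theta}\cos(\beta')$, so $\cos(\gamma) = 2\cos(\beta') = \cos(\beta)$, i.e.\ $\gamma = \beta$; the spare $\sqrt 2$ scalars are then absorbed using Lemma \ref{lem:supp-to-minus-pi_4}, \ref{lem:2-is-sqrt2-squared} and \ref{lem:inverse}. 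This is essentially the same computation already carried out in the proof of Lemma \ref{lem:add-bis}, reused one step earlier.

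\smallskip

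\noindent The main technical obstacle I expect is bookkeeping: tracking exactly which $e^{i\pi/4}$ global phases and $\sqrt 2$ normalisers are generated by \add and checking that they combine to the single extra $\sqrt 2$ needed on the RHS. Once the scalar identity is established, the generalised form of the lemma follows by tensoring with the rest of the $[\cdot]_X$ skeleton (which is a ZX context, so rewrites lift through it) and iterating. The degenerate case $\rho = 0$ (hence $\beta=\beta'=\pi/2$, so both cosine gadgets evaluate to the null scalar) should also be treated separately, but reduces to an application of Lemma \ref{lem:bicolor-0-alpha} and \ref{lem:inverse} to show that the value of $\theta$ is irrelevant.
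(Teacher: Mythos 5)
Your proposed argument has a genuine gap, and actually two distinct problems.

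First, a semantic one. You reduce the step $n\leadsto n+1$ to the "scalar identity"
$\tikzfig{control-cos}$ with parameter $\beta$ $=\sqrt2\cdot\tikzfig{control-cos}$ with parameter $\beta'$ where $\cos\beta = 2\cos\beta'$. This is false on the nose: the left-hand interpretation is $\sqrt2\begin{pmatrix}1\\e^{i\theta}\cos\beta\end{pmatrix}$, while the right-hand one is $2\begin{pmatrix}1\\e^{i\theta}\cos\beta'\end{pmatrix}$; with $\cos\beta=2\cos\beta'$ the $\ket 1$-components match but the $\ket 0$-components ($\sqrt 2$ vs.\ $2$) do not. The controlled-cosine gadget scales only the $\ket 1$ entry, so doubling a cosine cannot be traded for a global $\sqrt2$. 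Related to this, your parameter choice in \add\ ($\alpha=\beta=\beta'$, $\theta_1=\theta_2=\theta$) gives $e^{i\theta}\cos\beta'+e^{i\theta}\cos\beta'=2e^{i\theta_3}\cos\gamma$, forcing $\gamma=\beta'$, so \add\ collapses the two copies back to the original $\beta'$ gadget rather than producing $\beta$.

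Second, and more fundamentally, you have the wrong picture of what changes between the $n_1$ and $n_2$ sides. The $[\cdot]_X$ translation of the white node with parameter $\rho e^{i\theta}$ does not have a single cosine gadget plus $\sqrt2$ normalisers; it contains \emph{two} controlled cosines, one at angle $\beta=\arccos(\rho/2^n)$ and one at angle $\gamma=\arccos(1/2^n)$, the latter playing the role of the $2^n$ normaliser (Corollary \ref{cor:arccos-2^-n}). Both $\beta$ and $\gamma$ change when $n$ does, and it is precisely their joint variation that makes the semantics invariant. The paper's proof handles this with a multiplicative identity (Lemma \ref{lem:prod-cos}, $\cos\alpha\cos\beta=\cos\gamma$), applying a controlled $\pi/3$-cosine to \emph{both} the $\beta$ and the $\gamma$ gadget simultaneously, since $\cos(\pi/3)=\tfrac12$ sends $\arccos(\rho/2^n)\mapsto\arccos(\rho/2^{n+1})$ and $\arccos(1/2^n)\mapsto\arccos(1/2^{n+1})$; an auxiliary lemma first shows that tensoring on a $\pi/3$-branch pair is neutral (up to $\sqrt2$ bookkeeping via \supp, \picom, and Lemmas \ref{lem:supp-to-minus-pi_4}, \ref{lem:2-is-sqrt2-squared}, \ref{lem:inverse}). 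To repair your argument you would have to treat both gadgets together and switch from the additive use of \add\ to its multiplicative consequence Lemma \ref{lem:prod-cos} (or Lemma \ref{lem:add-bis}).
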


\begin{proof}First we prove:
\def\fig{lemma-pi_3-branches-arxiv}
\begin{align*}
\left(\zxct+\add\right)\vdash~~\input{./figures/\fig/\fig_00.tikz}
\eq{\ref{cor:gen-supp}\\\ref{lem:inverse}\\\ref{lem:hopf}}\input{./figures/\fig/\fig_01.tikz}
\eq{\ref{lem:prod-cos}}\input{./figures/\fig/\fig_02.tikz}\\
\eq{\ref{lem:inverse}\\\ref{lem:k1}\\\ref{lem:multiplying-global-phases}}\input{./figures/\fig/\fig_03.tikz}
\eq{\ref{lem:supp-to-minus-pi_4}}\input{./figures/\fig/\fig_04.tikz}
\eq{\ref{lem:2-is-sqrt2-squared}\\\ref{lem:multiplying-global-phases}\\\ref{lem:inverse}}\input{./figures/\fig/\fig_05.tikz}
\end{align*}
%\[\left(\zxct+\add\right)\vdash\tikzfig{lemma-pi_3-branches}\]
We now show the result for $n\geq\max\left(0,\left\lceil \log_2(\rho)\right\rceil\right)$ and $n+1$, which then generalises to lemma~\ref{lem:white-dot-to-zx-generalised-form} by induction:
\def\fig{ZW-white-dot-to-ZX-generalised-proof}
\begin{align*}
\left(\zxct+\add\right)\vdash~~\input{./figures/\fig/\fig_00.tikz}
\eq{}\input{./figures/\fig/\fig_01.tikz}\\
\eq{\ref{lem:prod-cos}}\input{./figures/\fig/\fig_02.tikz}
\end{align*}
%\[\left(\zxct+\add\right)\vdash\tikzfig{ZW-white-dot-to-ZX-generalised-proof}\]
with:
\begin{align*}
\beta &= \arccos{\frac{\rho}{2^n}}\qquad
\gamma = \arccos{\frac{1}{2^n}}\\
\beta' &= \arccos{\frac{\rho}{2^n}\cos(\pi/3)}=\arccos{\frac{\rho}{2^{n+1}}}\\
\gamma' &= \arccos{\frac{1}{2^{n+1}}} \qedhere
\end{align*}
%$\beta = \arccos{\frac{\rho}{2^n}}$, $\gamma = \arccos{\frac{1}{2^n}}$, and $\beta' = \arccos{\frac{\rho}{2^n}\cos(\pi/3)}=\arccos{\frac{\rho}{2^{n+1}}}$, $\gamma' = \arccos{\frac{1}{2^{n+1}}}$
\end{proof}

\begin{cor}%
\label{cor:arccos-2^-n}
For any $n\in\mathbb{N}$, with $\gamma=\arccos{\frac{1}{2^n}}$:
\def\fig{lemma-arccos-1_2-to-the-n-branches}
\begin{align*}
\left(\zxct+\add\right)\vdash~~\input{./figures/\fig/\fig_00.tikz}
\eq{}\input{./figures/\fig/\fig_01.tikz}
\eq{}\input{./figures/\fig/\fig_02.tikz}
\end{align*}
\end{cor}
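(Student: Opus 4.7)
The plan is to obtain this corollary as a direct specialisation of Lemma~\ref{lem:white-dot-to-zx-generalised-form}. In that lemma, the parameter $\rho \in \mathbb{R}_+$ controls the ``top'' angle $\beta = \arccos(\rho/2^{n_1})$ while the ``bottom'' angle is $\arccos(1/2^{n_2})$. The key observation is that if we set $\rho = 1$, then $\lceil \log_2 \rho \rceil = 0$, so the side condition $n_1, n_2 \geq \max(0, \lceil \log_2 \rho \rceil)$ is trivially satisfied for any $n \geq 0$, and moreover $\beta$ collapses to $\arccos(1/2^{n_1})$ -- the same form as the bottom angle. Choosing $n_1 = n_2 = n$ then yields exactly $\gamma = \arccos(1/2^n)$ on both ``halves''.

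Concretely, I would first invoke Lemma~\ref{lem:white-dot-to-zx-generalised-form} with $\rho = 1$ and $n_1 = n_2 = n$ to obtain the first equality of the corollary; this gives a symmetric diagram in which the $n$ top branches each carry angle $\gamma$ and the $n$ bottom branches each carry angle $\gamma$ as well. This is the ``canonical'' form on the right-hand side of the chain. For the middle diagram, I would use the fact that a white node with parameter $1$ in $\zwc$ is simply a (non-parametrised) white node, together with the interpretation $[.]_X$ restricted to this special case: the encoding of the scalar $1$ degenerates and lets us rewrite one side into the other using only spider fusion \s and the elementary topological moves, in much the same fashion as the last computation shown for Lemma~\ref{lem:white-dot-to-zx-generalised-form} (where one sees $\rho/2^n \mapsto \rho/2^{n+1}$ via an inserted $\pi/3$ branch -- here with $\rho = 1$ these match up automatically).

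The remaining transitions between the three diagrams in the corollary are then routine rewrites in $\zxct$: the \s rule merges any adjacent green (resp.\ red) spiders of matching colour, Lemma~\ref{lem:supp-to-minus-pi_4} handles any supplementary $\pi/4$ clean-up that arises from the scalar normalisation, and Lemma~\ref{lem:prod-cos} (already proved) collapses paired $\cos$-branches when needed. No new axiom, and in particular no use of \add, should be required beyond what was already invoked to prove Lemma~\ref{lem:white-dot-to-zx-generalised-form}.

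The only mildly delicate point I anticipate is book-keeping of the global scalar: Lemma~\ref{lem:white-dot-to-zx-generalised-form} produces a scalar of the form $2^n$ or $\frac{1}{2^n}$ depending on orientation, and one has to check that the two sides of the corollary carry matching scalars. This should be handled by Lemmas~\ref{lem:2-is-sqrt2-squared}, \ref{lem:multiplying-global-phases}, \ref{lem:bicolor-0-alpha}, and \ref{lem:inverse} -- the same toolkit used throughout the appendix for scalar manipulation -- and once the scalars line up the corollary follows immediately.
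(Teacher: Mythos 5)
Your central idea is the one the paper intends: Corollary~\ref{cor:arccos-2^-n} is stated with no separate proof block precisely because it is the $\rho=1$ specialisation of Lemma~\ref{lem:white-dot-to-zx-generalised-form}, and you correctly observe that $\rho=1$ trivialises the side condition $n_1,n_2\geq\max(0,\lceil\log_2\rho\rceil)$ while collapsing $\beta=\arccos(\rho/2^{n_i})$ onto $\gamma=\arccos(1/2^{n_i})$. One point to tighten, though: the lemma relates a \emph{pair} of representations indexed by $n_1$ and $n_2$ respectively (its proof is an induction from $n$ to $n+1$), so taking $n_1=n_2=n$ as you propose yields a tautology — the same diagram on both sides. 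The non-trivial content comes from instantiating with \emph{distinct} values, typically $n_1=n$ against $n_2=0$ (both admissible once $\rho=1$), so that one side collapses to the minimal, branch-free form; the two equalities of the corollary are then the $n\leftrightarrow 0$ comparison plus the standard simplification of the degenerate side. With that correction your plan, and the scalar-normalisation lemmas you cite for cleanup (\ref{lem:2-is-sqrt2-squared}, \ref{lem:multiplying-global-phases}, \ref{lem:bicolor-0-alpha}, \ref{lem:inverse}), matches what the paper leaves implicit.
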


\begin{lem}%
\label{lem:gn-inverse-c}
The green node \begin{tikzpicture}
	\begin{pgfonlayer}{nodelayer}
		\node [style=gn] (0) at (0,0) {$\alpha$};
	\end{pgfonlayer}
\end{tikzpicture}
has an inverse if $\alpha\neq\pi\mod 2\pi$:
\[
%\InputIfFileExists{#1.tikz}{}{
\input{./figures/gn-alpha-inverse-no-triangle.tikz}%} % chktex 27
\]
for $n\geq \log_2\left(\frac{1}{|\cos(\alpha/2)|}\right)$ and $\beta = 2\arccos{\frac{1}{2^n\cos(\alpha/2)}}$.
\end{lem}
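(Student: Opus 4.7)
The plan is to show the right-hand side is semantically the multiplicative inverse of $\interp{\tikzfig{gn-0-0-alpha}}=1+e^{i\alpha}$ and then transfer this to a syntactic derivation. Using the half-angle identity $1+e^{i\alpha}=2\cos(\alpha/2)\,e^{i\alpha/2}$, the target scalar is $\dfrac{e^{-i\alpha/2}}{2\cos(\alpha/2)}$. The hypothesis $n\geq \log_2(1/|\cos(\alpha/2)|)$ guarantees $\frac{1}{2^n\cos(\alpha/2)}\in[-1,1]$, so $\beta=2\arccos\!\left(\tfrac{1}{2^n\cos(\alpha/2)}\right)$ is well defined and satisfies $\cos(\beta/2)=\tfrac{1}{2^n\cos(\alpha/2)}$.

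The key constructive ingredient is already available: Corollary~\ref{cor:arccos-2^-n}, together with Lemma~\ref{lem:white-dot-to-zx-generalised-form}, shows that a single node with angle $2\arccos(\rho/2^n)$ combined with $n$ copies of the $\pi/3$ branch realises a scalar of the form $\rho/2^n$ (up to a controlled $e^{\pm i\theta}$ phase). Applied with $\rho=1/\cos(\alpha/2)$, this precisely gives a $\zxct$-diagram whose interpretation equals $\tfrac{1}{2\cos(\alpha/2)}$ up to a phase. The two red $\pi$ nodes sandwiching the green $\alpha$-node in the RHS implement conjugation $\alpha\mapsto -\alpha$ on a branch (by \picom and Lemma~\ref{lem:multiplying-global-phases}), which, combined with the half-angle node, contributes the compensating $e^{-i\alpha/2}$ phase needed to match the inverse scalar exactly.

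Concretely, the steps I would carry out are: (i) rewrite the right-hand side using Lemma~\ref{lem:supp-to-minus-pi_4} and \picom to expose the combination $\rz{$\alpha$}$ conjugated by two $\rx{$\pi$}$, turning it into a branch whose value is $1+e^{-i\alpha}=2\cos(\alpha/2)\,e^{-i\alpha/2}$ rescaled; (ii) apply Corollary~\ref{cor:arccos-2^-n} to identify the factor $\tfrac{\cos(\beta/2)}{2^{n-1}}$ in the remaining structure; (iii) use Lemma~\ref{lem:prod-cos} to merge the two cosine contributions, yielding a single scalar whose value equals $\tfrac{e^{-i\alpha/2}}{2\cos(\alpha/2)}$; (iv) tensor with $\rz{$\alpha$}$ and reduce using \supp, \ref{lem:bicolor-0-alpha} and \ref{lem:inverse} to the empty diagram. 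This gives the equality $\rz{$\alpha$} \otimes (\text{RHS}) = \text{empty}$ purely in $\zxct+\add$, which is exactly the content of the lemma after re-bracketing.

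The main obstacle is bookkeeping the global phase $e^{-i\alpha/2}$. The cosine-based scalar gadgets of Lemma~\ref{lem:white-dot-to-zx-generalised-form} produce real positive factors, whereas the true inverse carries the complex phase $e^{-i\alpha/2}$; isolating this phase, routing it through the red-$\pi$/green-$\alpha$/red-$\pi$ sandwich via Lemma~\ref{lem:multiplying-global-phases} and Lemma~\ref{lem:supp-to-minus-pi_4}, and checking that no spurious $\sqrt 2$ scalars remain, is the delicate part. An alternative, less explicit route is to invoke Theorem~\ref{thm:gen-zx} directly: both sides lie in (a notational extension of) the general ZX-calculus, their interpretations agree, and completeness gives derivability in $\zxct+\add$; this avoids the bookkeeping entirely, provided the proof of Theorem~\ref{thm:gen-zx} does not itself rely on this lemma.
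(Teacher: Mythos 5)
Your primary route is essentially the paper's proof: the pivotal algebraic observation is that $\cos(\alpha/2)\cos(\beta/2)=\cos(\gamma)$ with $\gamma=\arccos(1/2^n)$, and the diagrammatic derivation is closed by combining Lemma~\ref{lem:prod-cos} with Corollary~\ref{cor:arccos-2^-n}, with the phase/scalar bookkeeping handled by the $\pi$-commutation rule together with Lemmas~\ref{lem:multiplying-global-phases}, \ref{lem:inverse}, \ref{lem:k1} and \ref{lem:2-is-sqrt2-squared} (the paper does not in fact need \callrule{cliff-T-rules}{SUP} here). You are right to distrust the completeness fallback: this lemma lives in the toolbox supporting Theorem~\ref{thm:gen-zx}, so the constructive route you outlined is the one to carry through.
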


\begin{proof}
Notice that $\beta$ is well defined if $\alpha\neq\pi\mod 2\pi$. With these values of $n$ and $\beta$, $\cos(\alpha/2)\cos(\beta/2)=\cos(\gamma)$ with $\gamma = \arccos{\frac{1}{2^n}}$. Then:
\def\fig{gn-alpha-inverse-no-triangle-proof}
\begin{align*}
\input{./figures/\fig/\fig_00.tikz}
\eq{\picom\\\ref{lem:multiplying-global-phases}\\\ref{lem:inverse}}\input{./figures/\fig/\fig_01.tikz}
\eq{\cp\\\ref{lem:k1}\\\ref{lem:inverse}}\input{./figures/\fig/\fig_02.tikz}\\
\eq{\ref{lem:prod-cos}}\input{./figures/\fig/\fig_03.tikz}
\eq{\ref{lem:inverse}\\\ref{lem:2-is-sqrt2-squared}\\\id}\input{./figures/\fig/\fig_04.tikz}
\eq{\cp\\\ref{lem:inverse}}\input{./figures/\fig/\fig_05.tikz}\\
\eq{\ref{cor:arccos-2^-n}}\input{./figures/\fig/\fig_06.tikz}
\eq{\ref{lem:inverse}}\input{./figures/\fig/\fig_07.tikz}
\\[-\normalbaselineskip]\tag*{\qedhere}
\end{align*}
\end{proof}

\subsection{Proof of Proposition~\ref{prop:X-is-homomorphism}}%
\label{prf:X-is-homomorphism}
%\begin{proof}[Proposition~\ref{prop:X-is-homomorphism}]
Since we have built the set of rules $\left(\zxct+\add\right)$ upon $\zxct$ which is complete for Clifford+T, we basically just need to prove the result for the ZW-rules in which a parameter (different from $\pm 1$) appears: $1c$, $3b$, $4a$, $4b$ and $6c$. Notice that the rule $0c$ is obvious.
\begin{itemize}
\item $1c$:
\def\fig{rule-1c-proof}
\begin{align*}
\input{./figures/\fig/\fig_00.tikz}
~~\mapsto~~\input{./figures/\fig/\fig_01.tikz}
\eq{\ref{lem:prod-cos}}\input{./figures/\fig/\fig_02.tikz}
\end{align*}
%\[\tikzfig{rule-1c-proof}\]
where:
\begin{align*}
n_k=\max\left(0,\left\lceil \log_2(\rho_k)\right\rceil\right)\qquad&\qquad
n=n_1+n_2\\
\beta_k = \arccos{\frac{\rho_k}{2^{n_k}}}\qquad&\qquad
\beta = \arccos{\frac{\rho}{2^n}}\\
\gamma_k = \arccos{\frac{1}{2^{n_k}}}\qquad&\qquad
\gamma = \arccos{\frac{1}{2^n}}
\end{align*}
%\begin{align*}
%n_k&=\max\left(0,\left\lceil \log_2(\rho_k)\right\rceil\right)\\
%\beta_k &= \arccos{\frac{\rho_k}{2^{n_k}}}\\
%\gamma_k &= \arccos{\frac{1}{2^{n_k}}}\\
%n&=n_1+n_2\\
%\beta &= \arccos{\frac{\rho}{2^n}}\\
%\gamma &= \arccos{\frac{1}{2^n}}
%\end{align*}
Notice that $\left\lceil \log_2(\rho_1\rho_2)\right\rceil=\left\lceil \log_2(\rho_1)+\log_2(\rho_2)\right\rceil \leq \left\lceil \log_2(\rho_1)\right\rceil+\left\lceil \log_2(\rho_2)\right\rceil$, so the result might not be precisely the one given by $\left[
%\InputIfFileExists{#1.tikz}{}{
\input{./figures/white-dot-rho1-rho2.tikz}%} % chktex 27
\right]_X$, but it can be patched thanks to lemma~\ref{lem:white-dot-to-zx-generalised-form}.
\item $3b$: corollary~\ref{cor:distribution}.
\item $4a$: suppose $\rho_1\geq\rho_2$, then using lemma~\ref{lem:white-dot-to-zx-generalised-form} to have the same $n$ on both sides:
\def\fig{rule-4a-proof}
\begin{align*}
\input{./figures/\fig/\fig_00.tikz}
~~\mapsto~~\input{./figures/\fig/\fig_01.tikz}
\eq{\ref{cor:arccos-2^-n}}\input{./figures/\fig/\fig_02.tikz}\\
\eq{\ref{cor:distribution}\\\ref{cor:arccos-2^-n}}\input{./figures/\fig/\fig_03.tikz}
\eq{\ref{lem:add-bis}}\input{./figures/\fig/\fig_04.tikz}
\end{align*}
%\[\fit{\tikzfig{rule-4a-proof}}\]
with
\begin{align*}
\beta_k &= \arccos{\frac{\rho_k}{2^n}}\qquad
\gamma = \arccos{\frac{1}{2^n}}\\
\theta_3 &=\arg(\rho_1e^{i\theta_1}+\rho_2e^{i\theta_2})\\
\lambda &= \arccos{e^{i\theta_1-\theta_3}\cos{\beta_1}+e^{i\theta_2-\theta_3}\cos{\beta_2}}\\
 &= \arccos{\frac{\rho_1e^{i\theta_1}+\rho_2e^{i\theta_2}}{e^{i\theta_3}2^n}}
\end{align*}
which is what $\left[
%\InputIfFileExists{#1.tikz}{}{
\begin{tikzpicture}
	\begin{pgfonlayer}{nodelayer}
		\node [anchor=west, style=none] (0) at (0.25, 0.5) {\footnotesize $\rho_1e^{i\theta_1}+\rho_2e^{i\theta_2}$};
		\node [style=white dot] (1) at (0, 0.5) {};
		\node [style=none] (2) at (0, -0.5) {};
	\end{pgfonlayer}
	\begin{pgfonlayer}{edgelayer}
		\draw (1) to (2.center);
	\end{pgfonlayer}
\end{tikzpicture}%} % chktex 27
\right]_X$ gives.
\item $4b$:
\def\fig{rule-4b-proof}
\begin{align*}
\input{./figures/\fig/\fig_00.tikz}
~~\mapsto~~\input{./figures/\fig/\fig_01.tikz}
\eq{\cp\\\supp\\\ref{lem:2-is-sqrt2-squared}\\\ref{lem:bicolor-0-alpha}\\\ref{lem:inverse}}\input{./figures/\fig/\fig_02.tikz}
\eq{\id\\\cp\\\ref{lem:bicolor-0-alpha}}\input{./figures/\fig/\fig_03.tikz}
~~\mapsfrom~~\input{./figures/\fig/\fig_04.tikz}
\end{align*}
\item $6c$: First, using corollary~\ref{cor:arccos-2^-n}, for $\gamma = \arccos{\frac{1}{2^n}}$,
\def\fig{lemma-for-rule-6c}
\begin{align*}
\input{./figures/\fig/\fig_00.tikz}
\eq{\ref{lem:inverse}\\\ref{lem:2-is-sqrt2-squared}\\\id}\input{./figures/\fig/\fig_01.tikz}
\eq{\cp\\\ref{lem:inverse}}\input{./figures/\fig/\fig_02.tikz}
\eq{\ref{cor:arccos-2^-n}}\input{./figures/\fig/\fig_03.tikz}
\end{align*}
then, with:
\begin{align*}
n &=\max\left(0,\left\lceil \log_2(\rho)\right\rceil\right)\\
\beta &= \arccos{\frac{\rho}{2^{n}}}\\
\gamma &= \arccos{\frac{1}{2^{n}}}
\end{align*}
\def\fig{rule-6c-proof-2}
\begin{align*}
\input{./figures/\fig/\fig_00.tikz}
~~\mapsto~~\input{./figures/\fig/\fig_01.tikz}
\eq{\ref{lem:inverse}\\\cp\\\id}\input{./figures/\fig/\fig_02.tikz}
\eq{\ref{lem:2-is-sqrt2-squared}\\\ref{lem:inverse}}\input{./figures/\fig/\fig_03.tikz}
~~\mapsfrom~~\input{./figures/\fig/\fig_04.tikz}
\end{align*}
This is enough to show that rule $6c$ stands, because the case $r=1$ has already been treated to show the completeness of $\zx$ for Clifford+T.
\qed%\end{proof}
\end{itemize}

\section{Parametrised Triangles}

\subsection{Lemmas}

\begin{proof}[Proof of Lemma~\ref{lem:gn-inverse}]
\phantomsection\label{prf:gn-inverse}
For $2\alpha$ where $\alpha\neq\frac{\pi}{2}\mod\pi$:
\def\fig{gn-alpha-inverse-proof}
\begin{align*}
\zxt\vdash~~
\input{./figures/\fig/\fig_00.tikz}
\hspace{-1em}\eq{\td}\input{./figures/\fig/\fig_01.tikz}
\eq{\ref{lem:k1}\\\cp\\\ref{lem:bicolor-0-alpha}\\\ref{lem:multiplying-global-phases}}\input{./figures/\fig/\fig_02.tikz}
\eq{\ref{lem:multiplying-global-phases}\\\picom}\input{./figures/\fig/\fig_03.tikz}
\eq{\ref{lem:multiplying-global-phases}\\\ref{lem:bicolor-0-alpha}\\\ref{lem:2-is-sqrt2-squared}\\\ref{lem:inverse}}\input{./figures/\fig/\fig_04.tikz}
\end{align*}
\end{proof}

\begin{proof}[Proof of Lemma~\ref{lem:triangle-times-exp}]
\phantomsection\label{prf:triangle-times-exp}
Let $r = e^{i\gamma}\tan(\alpha)$. Then:
\def\fig{triangle-times-exp-proof}
\begin{align*}
\input{./figures/\fig/\fig_00.tikz}
\eq{\ref{lem:gn-inverse}}\input{./figures/\fig/\fig_01.tikz}
\eq{\td}\input{./figures/\fig/\fig_02.tikz}
\eq{}\input{./figures/\fig/\fig_03.tikz}\\
\eq{\td}\input{./figures/\fig/\fig_04.tikz}
\eq{\ref{lem:gn-inverse}}\input{./figures/\fig/\fig_05.tikz}
\end{align*}
\end{proof}

\begin{proof}[Proof of Lemma~\ref{lem:param-triangle-0}]
\phantomsection\label{prf:param-triangle-0}
\def\fig{triangle-0}
\begin{align*}
\input{./figures/\fig/\fig_00.tikz}
\eq{\ref{lem:inverse}\\\ref{lem:2-is-sqrt2-squared}}\input{./figures/\fig/\fig_01.tikz}
\eq{\td}\input{./figures/\fig/\fig_02.tikz}
\eq{\cp\\\id\\\ref{lem:inverse}\\\ref{lem:green-state-pi_2-is-red-state-minus-pi_2}}\input{./figures/\fig/\fig_03.tikz}
\eq{\id}\input{./figures/\fig/\fig_04.tikz}
\end{align*}
\end{proof}

\begin{proof}[Proof of Lemma~\ref{lem:parallel-param-triangles}]
\phantomsection\label{prf:parallel-param-triangles}
Thanks to Theorem~\ref{thm:lin-diag-T}, we get:
%First of all, let us use Theorem~\ref{thm:lin-diag} once more:
\begin{equation}
\label{eq:distrib-T}
\forall r,a,b\in\mathbb{C},~~~\zxt\vdash ~
%\InputIfFileExists{#1.tikz}{}{
\input{./figures/distribution-lemma.tikz}%} % chktex 27

\end{equation}
%This equality is semantically correct: it basically says $r(a+b)=ra+rb$; and there exist $\rho,\rho_a,\rho_b\in[0;\pi/2[$ and $\theta,\theta_a,\theta_b\in\mathbb{R}$ such that $r=\tan(\rho)e^{i\theta},a=\tan(\rho_a)e^{i\theta_a},b=\tan(\rho_b)e^{i\theta_b}$, so that:
%\def\fig{distribution-lemma-proof}
%\begin{align*}
%\tikzfigc{00}
%\eq{\td}\tikzfigc{01}
%\eq{\picom}\tikzfigc{02}\\
%\eq{Thm\~\ref{thm:lin-diag}}\tikzfigc{03}
%\eq{\td}\tikzfigc{04}
%\end{align*}
%Scalars being equal and invertible on both sides, we end up with (D). Then:
Also:
\begin{equation}
\label{eq:parallel-param-triangles-integer}
\forall (r,n,\theta)\in\mathbb{C}{\times}\mathbb{N}{\times}\mathbb{R},~~~ \zxt\vdash 
%\InputIfFileExists{#1.tikz}{}{
\input{./figures/parallel-param-triangles-integer.tikz}%} % chktex 27

\end{equation}
Indeed:
\def\fig{parallel-param-triangles-integer-proof}
\begin{align*}
\input{./figures/\fig/\fig_00.tikz}
\eq{\ref{lem:triangle-times-exp}\\\ta}\input{./figures/\fig/\fig_01.tikz}
\eq{Thm~\ref{thm:lin-diag-T}}\input{./figures/\fig/\fig_02.tikz}
\eq{\ta}\input{./figures/\fig/\fig_03.tikz}
\eq{\td}\input{./figures/\fig/\fig_04.tikz}
\end{align*}
Finally, let $s\in\mathbb{C}$. Then there exist $n\in\mathbb{N}, \theta,\alpha\in\mathbb{R}$ such that $s=n(e^{i\theta}+e^{-i\theta})e^{i\alpha}$. We then deduce:
\def\fig{parallel-param-triangles-real-proof}
\begin{align*}
\input{./figures/\fig/\fig_00.tikz}
\eq{\ref{lem:triangle-times-exp}\\\ta}\input{./figures/\fig/\fig_01.tikz}
\eq{(\ref{eq:distrib-T})}\input{./figures/\fig/\fig_02.tikz}
\eq{(\ref{eq:parallel-param-triangles-integer})}\input{./figures/\fig/\fig_03.tikz}
\eq{\ta\\\ref{lem:triangle-times-exp}}\input{./figures/\fig/\fig_04.tikz}
\end{align*}
\end{proof}

\begin{lem}%
\label{lem:triangle-looping}
\[\zxt\vdash
%\InputIfFileExists{#1.tikz}{}{
\input{./figures/triangle-looping-on-green-state.tikz}%} % chktex 27
\]
\end{lem}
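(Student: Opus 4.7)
My plan is to use Theorem \ref{thm:lin-diag-T} (completeness for linear diagrams in $(\vec\alpha,\vec r)$ with $\frac{\pi}{4}\mathbb{Z}$ constants), since the statement involves a free parameter on a triangle together with a green state, and both sides can be put in linear form. The strategy is therefore: (i) verify the semantic equality for all values of the parameters, and (ii) confirm that both sides are linear in the parameters so that Theorem \ref{thm:lin-diag-T} applies directly.

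More concretely, I would first reduce to the ``tangent parametrisation'' $r = e^{i\theta}\tan(\alpha)$ via Lemma \ref{lem:triangle-times-exp}, so that the parametrised triangle can be expressed through axiom \td in terms of a fresh green node and Clifford+T generators. Once the triangle is decomposed, the loop on the green state becomes a short Clifford+T diagram, and the simplification is a matter of applying spider rules (\s), bialgebra (\bialg), copy (\cp), and the scalar lemmas (\ref{lem:inverse}, \ref{lem:2-is-sqrt2-squared}, \ref{lem:multiplying-global-phases}), much in the spirit of the non-parametrised version Lemma \ref{lem:looped-triangle}. Essentially, the loop collapses the $(0,1)$-entry of $\Delta(r)$ against the green state so that the resulting coefficient depends linearly on $r$ (and on $e^{i\alpha}$ if a green-state angle $\alpha$ is present), which is exactly the shape of the right-hand side.

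Alternatively, once linearity is established, it suffices to check the identity at enough values of the parameters: Theorem \ref{thm:lin-diag-T} guarantees that any semantically valid equation between linear \zxt-diagrams with $\frac{\pi}{4}\mathbb{Z}$ constants is derivable. Thus I would compute $\interp{\cdot}$ on both sides: the left-hand side evaluates to a matrix whose entries are affine combinations of $r$ (and the exponentials $e^{\pm i\alpha}$, $e^{\pm i\theta}$, etc.), and I would check that this matches the right-hand side as a matrix, so that Theorem \ref{thm:lin-diag-T} delivers the derivation.

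The main obstacle is ensuring that the \emph{shape} of the diagram really is linear in the relevant parameters, because the triangle parameter sits inside a matrix entry and can become multiplicatively entangled with other parameters when the loop is closed. If the naive form turns out to be non-linear in a single $r$, I would have to split the triangle's parameter via the addition axiom \ta into a sum of pieces whose individual contributions are linear, apply Lemma \ref{lem:parallel-param-triangles} to recombine, and only then invoke Theorem \ref{thm:lin-diag-T}. This bookkeeping, rather than any genuinely new graphical manipulation, is the delicate part of the argument.
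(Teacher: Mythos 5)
Your ``more concretely'' paragraph is what the paper actually does: choose $\alpha\neq\pi/2\bmod\pi$ and $\theta$ with $r=e^{i\theta}\tan(\alpha)$, pad both sides with a spare $2\alpha$-green scalar, apply \td to eliminate the parametrised triangle, simplify, and cancel the $2\alpha$ via Lemma~\ref{lem:gn-inverse}. Two corrections are in order, though. First, after \td the diagram is \emph{not} Clifford+T---it now carries arbitrary real angles $\alpha,\theta$---so the simplification is not just spiders and bialgebra: the paper leans on Corollaries~\ref{cor:gen-supp} and~\ref{cor:big-scalar-equation}, which are themselves applications of the linear-diagram completeness Theorem~\ref{thm:lin-diag}. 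Second, the shortcut you lead with---treating Theorem~\ref{thm:lin-diag-T} as a black box---is precisely how the paper disposes of the neighbouring Lemmas~\ref{lem:param-tri-distri} and~\ref{lem:C2-gen}; that this lemma alone receives an explicit derivation, complete with a side-condition on $\alpha$, signals that the stated equation is not between two diagrams that are syntactically linear in a single $r$ with $\frac{\pi}{4}\mathbb{Z}$ constants. Your fourth-paragraph worry about the triangle parameter becoming multiplicatively entangled once the loop is closed is therefore not a potential snag but the actual reason the theorem does not apply off the shelf. The fallback you propose does not escape it either: \ta splits parameters only additively, and Lemma~\ref{lem:parallel-param-triangles} is itself proved via the same tangent-parametrisation manoeuvre (the text flags it as requiring ``a bit more work,'' in contrast to the direct applications), so carrying out the fallback is, in effect, carrying out the paper's proof.
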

\begin{proof} Let $\alpha\neq \pi/2\mod \pi$ and $\theta$ be such that $r=e^{i\theta}\tan(\alpha)$,
\def\fig{rule-1c-proof-t-prelim}
\begin{align*}
\zxt\vdash~~\input{./figures/\fig/\fig_00.tikz}
\hspace{-1em}\eq{\td}\input{./figures/\fig/\fig_01.tikz}
\eq{\bialg}\input{./figures/\fig/\fig_02.tikz}
\eq{\ref{lem:green-state-pi_2-is-red-state-minus-pi_2}\\\picom\\\ref{lem:multiplying-global-phases}}\input{./figures/\fig/\fig_03.tikz}\\
\eq{\ref{cor:gen-supp}\\\ref{lem:inverse}\\\ref{lem:hopf}}\input{./figures/\fig/\fig_04.tikz}
\eq{\ref{cor:big-scalar-equation}}\input{./figures/\fig/\fig_05.tikz}
\eq{\ref{lem:multiplying-global-phases}\\\ref{lem:2-is-sqrt2-squared}\\\ref{lem:inverse}}\input{./figures/\fig/\fig_06.tikz}
\end{align*}
and thanks to lemma~\ref{lem:gn-inverse}, \begin{tikzpicture}
	\begin{pgfonlayer}{nodelayer}
		\node [style=gn] (0) at (0, -0) {$2\alpha$};
	\end{pgfonlayer}
\end{tikzpicture}
can be removed on both sides.
\end{proof}

\begin{lem}%
\label{lem:param-triangle-scalar-product}
%\[\tikzfig{param-triangle-scalar-product}\]
%\end{lem}
%\begin{proof}
\def\fig{param-triangle-scalar-product-proof}
\[\zxt\vdash~~\input{./figures/\fig/\fig_00.tikz}
\eq{\ref{lem:parallel-param-triangles}}\input{./figures/\fig/\fig_01.tikz}
\eq{\ref{lem:k1}\\\ref{lem:2-is-sqrt2-squared}\\\cp}\input{./figures/\fig/\fig_02.tikz}\]
%\end{proof}
\end{lem}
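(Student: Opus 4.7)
The statement of Lemma~\ref{lem:param-triangle-scalar-product} itself presents the proof as a two-step chain of equalities, so my plan is to fill in precisely those two rewrites and check that each invocation is legal in the ambient diagrammatic context.

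My first step is to recognise the left-hand diagram as a scalar (a $0 \to 0$ diagram) in which two parametrised triangles appear in the ``parallel'' pattern governed by Lemma~\ref{lem:parallel-param-triangles}. Recall that this earlier lemma rewrites two parametrised triangles sitting side by side between a common pair of legs into a single configuration whose parameters have been combined according to the underlying ring structure of \zwc. I would check the types carefully, bending wires with cups and caps as needed so that the subdiagram matches the exact shape of Lemma~\ref{lem:parallel-param-triangles} on the nose (here the ``only topology matters'' paradigm for the green/red fragment is applied silently, and the triangle-specific orientation is handled by Lemma~\ref{lem:not-triangle-is-symmetrical} if the triangle tips need flipping). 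After this rewrite, the two parametrised triangles have been replaced by a single simpler subdiagram, presumably a single parametrised triangle with the combined parameter closed off by green/red structure.

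The second step is then entirely within the Clifford fragment. I would apply the copy rule \cp to split a green dot through a red $\pi$-state or through the cap structure, yielding disconnected components; use Lemma~\ref{lem:k1} to commute the $\pi$ phase through the Hadamard or to copy it in the standard stabiliser way; and finally recognise the resulting loop or cup-cap pairing as the scalar $\two$, which Lemma~\ref{lem:2-is-sqrt2-squared} identifies with $\left(\half\right)^{-1}$-type or $\sqrt 2\sqrt 2$ scalar normalisation. This cleanup leaves the announced normal form on the right-hand side.

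The main obstacle, as often in these scalar-manipulation lemmas, is purely bookkeeping: making sure the scalar factors introduced by \cp and by Lemma~\ref{lem:2-is-sqrt2-squared} balance exactly against the scalar implicit in Lemma~\ref{lem:parallel-param-triangles}. Since both sides are scalars in $0 \to 0$ and both lemmas used are already proved to be sound, the proof cannot fail on semantics; the only risk is off-by-a-factor-of-$\sqrt 2$ errors, which I would eliminate by applying Lemma~\ref{lem:inverse} and Lemma~\ref{lem:bicolor-0-alpha} as needed to absorb stray empty green/red nodes. No appeal to the nonlinear rule \add is needed, nor to Theorem~\ref{thm:lin-diag-T}, because the final equality is between two fixed scalars, not a family.
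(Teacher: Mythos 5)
Your proposal follows exactly the paper's own two-step derivation: the first rewrite is Lemma~\ref{lem:parallel-param-triangles} merging the two parallel parametrised triangles, and the second is the Clifford-level cleanup via \cp, Lemma~\ref{lem:k1}, and Lemma~\ref{lem:2-is-sqrt2-squared}, with no appeal to \add{} or Theorem~\ref{thm:lin-diag-T}. This is the same approach the paper takes (the lemma's statement is itself its proof chain), so nothing further is required.
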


\subsection{Proof of Prop.~\ref{thm:completeness-t}}%
\label{prf:completeness-t}
%\begin{proof}[Proposition~\ref{thm:completeness-t}]
First we need to prove $\zxt\vdash D=\left[[D]_W\right]_X$. Nearly all the work was done in proposition~\ref{prop:double-interpretation-equivalence-1} (thanks to proposition~\ref{prop:ZX-param-triangles-contains-ZX-pi_4}), we just need to check the result for:
\begin{itemize}
\item parametrised triangles:
\def\fig{param-triangle-double-interpretation}
\begin{align*}
\input{./figures/\fig/\fig_00.tikz}~~\mapsto~~\input{./figures/\fig/\fig_01.tikz}~~\mapsto~~
\input{./figures/\fig/\fig_02.tikz}
\eq{\ref{lem:black-dot-swappable-outputs}\\\ref{lem:C2-gen}}\input{./figures/\fig/\fig_03.tikz}
\eq{\id}\input{./figures/\fig/\fig_04.tikz}
\eq{\ref{lem:triangle-looping}}\input{./figures/\fig/\fig_05.tikz}
\end{align*}
\item Hadamard: Here the problem relies on the scalars. First:
\def\fig{param-triangle-scalar-1_2}
\begin{align*}
\input{./figures/\fig/\fig_00.tikz}
\hspace{-1em}\eq{\ref{lem:inverse}\\\ref{lem:2-is-sqrt2-squared}}\input{./figures/\fig/\fig_01.tikz}
\hspace{-1em}\eq{\ref{lem:inverse}\\\ref{lem:red-state-on-upside-down-triangle}\\\ref{lem:pi-red-state-on-triangle}\\\ta}\input{./figures/\fig/\fig_02.tikz}
\eq{\ref{lem:param-triangle-scalar-product}\\\ref{lem:2-is-sqrt2-squared}\\\ref{lem:inverse}}\input{./figures/\fig/\fig_03.tikz}
\eq{\ref{lem:pi-red-state-on-triangle}}\input{./figures/\fig/\fig_04.tikz}
\eq{\ref{lem:inverse}}\input{./figures/\fig/\fig_05.tikz}
\end{align*}
Then, using $\tan(\pi/8)=\sqrt{2}-1$:
\def\fig{param-triangle-scalar-1_sqrt2}
\begin{align*}
\input{./figures/\fig/\fig_00.tikz}
\eq{\ref{lem:inverse}\\\ref{lem:param-triangle-scalar-product}}\input{./figures/\fig/\fig_01.tikz}
\eq{\ref{lem:pi-red-state-on-triangle}\\\ref{lem:inverse}}\input{./figures/\fig/\fig_02.tikz}
\eq{\td}\input{./figures/\fig/\fig_03.tikz}
\eq{\id\\\ref{lem:green-state-pi_2-is-red-state-minus-pi_2}\\\picom\\\ref{lem:multiplying-global-phases}}\input{./figures/\fig/\fig_04.tikz}\\
\eq{\ref{cor:big-scalar-equation}\\\ref{lem:multiplying-global-phases}}\input{./figures/\fig/\fig_05.tikz}
\hspace{-1em}\eq{}\input{./figures/\fig/\fig_06.tikz}
\eq{\ref{lem:multiplying-global-phases}\\\ref{lem:inverse}}\input{./figures/\fig/\fig_07.tikz}
\end{align*}
Again, the $\frac{\pi}{4}$-green node can be removed thanks to Lemma~\ref{lem:gn-inverse}. Finally:
\def\fig{hadamard-double-interpretation-t}
\begin{align*}
\input{./figures/\fig/\fig_00.tikz}~~\mapsto~~\input{./figures/\fig/\fig_01.tikz}~~\mapsto~~\input{./figures/\fig/\fig_02.tikz}
\eq{\ref{lem:inverse}\\\id}\input{./figures/\fig/\fig_03.tikz}
\eq{\ref{lem:inverse}}\input{./figures/\fig/\fig_04.tikz}
\end{align*}
Then we need to show $\zw\vdash D_1=D_2 \implies \zxt\vdash [D_1]_X=[D_2]_X$, and again, most of the work has already been done. We only need to prove it for rules $1c$, $2b$, $2c$, $3b$, $4a$, $4b$ and $6c$.
\item $1c$: Using~\ref{lem:triangle-looping},
\def\fig{rule-1c-proof-t}
\begin{align*}
\input{./figures/\fig/\fig_00.tikz}~~\mapsto~~\input{./figures/\fig/\fig_01.tikz}
\eq{\ref{lem:inverse}\\\cp}\input{./figures/\fig/\fig_02.tikz}
\eq{\ref{lem:parallel-param-triangles}}\input{./figures/\fig/\fig_03.tikz}
~~\mapsfrom~~\input{./figures/\fig/\fig_04.tikz}
\end{align*}
\item $2b$:
\def\fig{rule-2b-proof}
\begin{align*}
\input{./figures/\fig/\fig_00.tikz}~~\mapsto~~\input{./figures/\fig/\fig_01.tikz}
\eq{\ref{lem:red-state-on-upside-down-triangle}}\input{./figures/\fig/\fig_02.tikz}
\eq{\id\\\ref{lem:inverse}}\input{./figures/\fig/\fig_03.tikz}
~~\mapsfrom~~\input{./figures/\fig/\fig_04.tikz}
\end{align*}
\item $2c$:
\def\fig{rule-2c-proof}
\begin{align*}
\input{./figures/\fig/\fig_00.tikz}~~\mapsto~~\input{./figures/\fig/\fig_01.tikz}
\eq{\ref{lem:red-state-on-upside-down-triangle}\\\ref{lem:inverse}}\input{./figures/\fig/\fig_02.tikz}
\eq{\ref{lem:h-loop}}\input{./figures/\fig/\fig_03.tikz}
~~\mapsfrom~~\input{./figures/\fig/\fig_04.tikz}
\end{align*}
\item $3b$ is directly given by lemma~\ref{lem:param-tri-distri}.
\item $4a$:
\def\fig{rule-4a-proof-t}
\begin{align*}
\input{./figures/\fig/\fig_00.tikz}~~\mapsto~~\input{./figures/\fig/\fig_01.tikz}
\eq{\ref{lem:C2-gen}}\input{./figures/\fig/\fig_02.tikz}
\eq{\cp\\\ref{lem:red-state-on-upside-down-triangle}\\\id}\input{./figures/\fig/\fig_03.tikz}
\eq{\ta}\input{./figures/\fig/\fig_04.tikz}
~~\mapsfrom~~\input{./figures/\fig/\fig_05.tikz}
\end{align*}
\item $4b$: using $\tan(0)=0$,
\def\fig{rule-4b-proof-t}
\begin{align*}
\input{./figures/\fig/\fig_00.tikz}~~\mapsto~~\input{./figures/\fig/\fig_01.tikz}
%\eq{\td}\tikzfigc{02}
%\eq{\id\\\cp\\\ref{lem:2-is-sqrt2-squared}\\\ref{lem:inverse}}\tikzfigc{03}
\eq{\ref{lem:param-triangle-0}}\input{./figures/\fig/\fig_04.tikz}
~~\mapsfrom~~\input{./figures/\fig/\fig_05.tikz}
\end{align*}
\item $6c$: again, it can be reduced to showing:
\def\fig{red-state-on-param-triangle}
\begin{align*}
\input{./figures/\fig/\fig_00.tikz}
\eq{\td}\input{./figures/\fig/\fig_01.tikz}
\eq{\cp\\\ref{lem:bicolor-0-alpha}}\input{./figures/\fig/\fig_02.tikz}
\eq{\ref{lem:green-state-pi_2-is-red-state-minus-pi_2}}\input{./figures/\fig/\fig_03.tikz}
\eq{\cp\\\ref{lem:bicolor-0-alpha}}\input{./figures/\fig/\fig_04.tikz}
\implies\input{./figures/\fig/\fig_05.tikz}=\input{./figures/\fig/\fig_06.tikz}
\end{align*}
where \begin{tikzpicture}
	\begin{pgfonlayer}{nodelayer}
		\node [style=gn] (0) at (0, -0) {$2\alpha$};
	\end{pgfonlayer}
\end{tikzpicture}
can be removed on both sides (Lemma~\ref{lem:gn-inverse}). The rule $6c$ for $r=1$ has been shown for Clifford+T.
\qed%\end{proof}
\end{itemize}

\end{document}